\documentclass[10pt,a4paper, fleqn]{article}
\usepackage{etex}
\usepackage[utf8]{inputenc}
\usepackage[TS1,T1]{fontenc}
\usepackage{xspace}

\usepackage[a4paper,tmargin=3truecm,bmargin=3truecm,hmargin=1.7truecm]{geometry}


\usepackage{tikz-cd}
\usetikzlibrary{external}

\usepackage[english]{babel}

\usepackage[bitstream-charter, cal=cmcal]{mathdesign}

\usepackage{enumitem}
\newlist{enum-hypothesis}{enumerate}{1}
\setlist[enum-hypothesis]{label=(\arabic*),itemsep=0pt, parsep=0pt}

\setlist[enumerate,1]{label=\arabic*., ref=\arabic*, topsep=1pt, itemsep=2pt, parsep=0pt, leftmargin=1.5em, itemindent=0em, labelsep=0.2em, labelwidth=1.3em}
\setlist[enumerate,2]{label=\alph*., ref=\theenumi.\alph*, topsep=1pt, itemsep=2pt, parsep=0pt, leftmargin=0.5em, itemindent=0em, labelsep=0.2em, labelwidth=1.5em}
\setlist[enumerate,3]{label=\roman*., ref=\theenumii.\roman*, topsep=1pt, itemsep=2pt, parsep=0pt, leftmargin=0.5em, itemindent=0em, labelsep=0.2em, labelwidth=1.2em}

\usepackage{array}   
\newcolumntype{R}{>{\raggedleft\arraybackslash$}p{1.5em}<{$}} 
\usepackage{booktabs, dcolumn}
\usepackage{graphicx, subfig}


\usepackage{pbox}

\usepackage{amsmath,mathtools,slashed,mathdots}
\usepackage[thmmarks,amsmath]{ntheorem}
\usepackage{bm}


\newtheorem{theorem}{Theorem}[section]
\newtheorem{proposition}[theorem]{Proposition}
\newtheorem{lemma}[theorem]{Lemma}
\newtheorem{corollary}[theorem]{Corollary}

\newtheorem{definition}[theorem]{Definition}

\theoremsymbol{$\square$}
\theoremstyle{plain}
\theorembodyfont{\upshape}
\newtheorem{remark}[theorem]{Remark}
\theoremsymbol{$\Diamond$}

\theoremsymbol{}
\theoremstyle{break}
\theorembodyfont{\slshape}
\newtheorem{hypothesis}[theorem]{Hypothesis}

\theoremheaderfont{\scshape}
\theorembodyfont{\upshape} 
\theoremstyle{nonumberplain}
\theoremseparator{} 
\theoremsymbol{\rule{1.3ex}{1.3ex}} 
\newtheorem{proof}{Proof}


\DeclareSymbolFont{largesymbols}{OMX}{cmex}{m}{n}

\usepackage[square,numbers,compress,merge]{natbib}
\bibliographystyle{plainnat}

\usepackage{hyperref}

\hypersetup{
plainpages=false,
colorlinks=true,
linkcolor=black, 
anchorcolor=black, 
citecolor=black, 
urlcolor=black, 
menucolor=black, 
filecolor=black, 
bookmarksopen=true,
bookmarksnumbered=true}

\hypersetup{
pdfauthor={Thierry Masson, Gaston Nieuviarts},
pdftitle={Lifting Bratteli Diagrams between Krajewski Diagrams: Spectral Triples, Spectral Actions, and AF algebras},
pdfsubject={},
pdfcreator={pdflatex},
pdfproducer={pdflatex},
pdfkeywords={}
}



\newcommand{\bbC}{\mathbb{C}}

\newcommand{\bbR}{\mathbb{R}}

\newcommand\bbZ{\mathbb{Z}}
\newcommand{\bbbone}{{\text{\usefont{U}{bbold}{m}{n}\char49}}}
\newcommand{\bbbzero}{{\text{\usefont{U}{bbold}{m}{n}\char48}}} \newcommand{\hbbbone}{\widehat{\bbbone}}

\newcommand{\calA}{\mathcal{A}}
\newcommand{\calB}{\mathcal{B}}

\newcommand{\calH}{\mathcal{H}}
\newcommand{\calK}{\mathcal{K}}
\newcommand{\calL}{\mathcal{L}}
\newcommand{\calM}{\mathcal{M}}
\newcommand{\calN}{\mathcal{N}}
\newcommand{\calO}{\mathcal{O}}

\newcommand{\calT}{\mathcal{T}}
\newcommand{\calU}{\mathcal{U}}


\newcommand{\kT}{\mathfrak{T}}

\newcommand{\bi}{\mathbf{i}}
\newcommand{\bn}{\mathbf{n}}
\newcommand{\bem}{{\mathbf{m}}}
\newcommand{\bOmega}{\mathbf{\Omega}}
\newcommand{\bmu}{\bm{\mu}}
\newcommand{\bomega}{\bm{\omega}}

\newcommand\halgA{\widehat{\algA}}
\newcommand\halgB{\widehat{\algB}}

\newcommand\hphi{\hat{\phi}}

\newcommand\hD{\widehat{D}}

\newcommand\hw{\hat{w}}
\newcommand{\hA}{\hat{A}}
\newcommand{\hB}{\hat{B}}
\newcommand{\hGammaB}{\widehat{\Gamma}_\algB}

\newcommand{\Bpsi}{\overline{\psi}}

\newcommand{\BK}{\overline{K}}
\newcommand{\BL}{\overline{L}}
\newcommand{\BA}{\overline{A}}
\newcommand{\Bxi}{\bar{\xi}}
\newcommand{\Beta}{\bar{\eta}}
\newcommand{\Be}{\bar{e}}

\newcommand{\Bp}{\bar{p}}
\newcommand{\Bq}{\bar{q}}
\newcommand{\BU}{\bar{U}}


\newcommand{\Tpsi}{\widetilde{\psi}}

\newcommand{\TU}{\widetilde{U}}
\newcommand{\TGamma}{\widetilde{\Gamma}}
\newcommand{\Tu}{\tilde{u}}
\newcommand{\Tv}{\tilde{v}}

\newcommand{\BTU}{\overline{\widetilde{U}}}
\newcommand{\BTu}{\bar{\tilde{u}}}


\newcommand{\smallpmatrix}[1]{\left( \begin{smallmatrix} #1 \end{smallmatrix} \right)}
\newcommand{\defeq}{\vcentcolon=} 

\DeclareMathOperator{\tsum}{\textstyle\sum}
\DeclareMathOperator{\toplus}{\textstyle\oplus}

\DeclareMathOperator{\ad}{ad}
\DeclareMathOperator{\Aut}{Aut}


\DeclareMathOperator{\Id}{Id}

\DeclareMathOperator{\Ker}{Ker}	 

\DeclareMathOperator{\Ran}{Ran}	 
\DeclareMathOperator{\Span}{Span}

\DeclareMathOperator{\tr}{tr}	   
\DeclareMathOperator{\Tr}{Tr}	   


\DeclarePairedDelimiter\norm{\lVert}{\rVert}

\newcommand{\algA}{\calA}
\newcommand{\algB}{\calB}
\newcommand{\algAp}{\calA'}
\newcommand{\algBp}{\calB'}

\newcommand{\modM}{\calM}
\newcommand{\modN}{\calN}

\newcommand{\grast}{\bullet} 

\newcommand{\vm}{\sigma} 
\newcommand{\Bvm}{\overline{\sigma}} 
\newcommand{\wm}{\tau} 
\newcommand{\Jim}{\kappa} 
\newcommand{\JimA}[1][]{\Jim_{\algA #1}} %
\newcommand{\JimB}[1][]{\Jim_{\algB #1}} %
\newcommand{\hJim}{\widehat{\Jim}} %
\newcommand{\hJimA}[1][]{\widehat{\Jim}_{\algA #1}} %
\newcommand{\spm}{\mathbf{T}} 
\newcommand{\spe}{\mathbf{t}} 
\newcommand{\hspm}{\mathbf{T}} 
\newcommand{\Mu}{u} 

\newcommand{\dd}{\text{\textup{d}}}
\newcommand{\ddU}{\text{\textup{d}}_U}

\newcommand{\bddU}{\text{\textbf{d}}_U}

\newcommand{\phiMod}[1][]{\phi_{\text{\textup{Mod}}#1}}

\newcommand{\hs}{\calH} 
\newcommand{\hsK}{\calK} 
\newcommand{\piD}{\pi_{D}}
\newcommand{\hhs}{\widehat{\hs}} 
\newcommand{\hsA}[1][]{\hs_{\algA #1}} 
\newcommand{\hsB}[1][]{\hs_{\algB #1}} 
\newcommand{\hsAp}[1][]{\hs_{\algAp #1}} 
\newcommand{\hsBp}[1][]{\hs_{\algBp #1}} 
\newcommand{\hsiA}[1][]{\hs_{\algA #1}} 
\newcommand{\hsiB}[1][]{\hs_{\algB #1}} 
\newcommand{\phiH}[1][]{\phi_{\hs #1}} 

\newcommand{\hphiH}[1][]{\hphi_{\hs #1}}

\newcommand{\DhA}[1][]{D_{\halgA #1}}
\newcommand{\DhB}[1][]{D_{\halgB #1}}
\newcommand{\piDhA}{\pi_{\DhA}}
\newcommand{\piDhB}{\pi_{\DhB}}
\newcommand{\hshA}[1][]{\hs_{\halgA #1}} 
\newcommand{\hshB}[1][]{\hs_{\halgB #1}} 

\newcommand{\ThshA}[1][]{\widetilde{\hs}_{\halgA #1}}
\newcommand{\ThshB}[1][]{\widetilde{\hs}_{\halgB #1}}

\newcommand{\phiA}{\phi_\algA}
\newcommand{\phiB}{\phi_\algB}

\newcommand{\piA}[1][]{\pi_{\algA #1}}
\newcommand{\piAp}[1][]{\pi_{\algAp #1}}
\newcommand{\piDA}{\pi_{\DA}}
\newcommand{\piB}[1][]{\pi_{\algB #1}}
\newcommand{\piBp}[1][]{\pi_{\algBp #1}}
\newcommand{\piDB}{\pi_{\DB}}

\newcommand{\bbboneA}{\bbbone_\algA}
\newcommand{\bbboneB}{\bbbone_\algB}
\newcommand{\uA}[1][]{{u_{\algA #1}}}
\newcommand{\uB}[1][]{{u_{\algB#1}}}
\newcommand{\UA}{{U_\algA}}
\newcommand{\UB}{{U_\algB}}

\newcommand{\proj}{\pi}
\newcommand{\projhs}{\proj^{\hs}}
\newcommand{\projA}{\proj^{\algA}}
\newcommand{\projB}{\proj^{\algB}}
\newcommand{\projHA}{\proj^{\hs_\algA}}
\newcommand{\projHB}{\proj^{\hs_\algB}}

\newcommand{\inj}{\iota}
\newcommand{\injhs}{\inj_{\hs}}
\newcommand{\injA}{\inj_{\algA}}
\newcommand{\injB}{\inj_{\algB}}
\newcommand{\injHA}{\inj_{\hs_\algA}}

\newcommand{\DA}[1][]{D_{\algA #1}}
\newcommand{\DB}[1][]{D_{\algB #1}}
\newcommand{\DAp}[1][]{D_{\algAp #1}}
\newcommand{\DBp}[1][]{D_{\algBp #1}}
\newcommand{\JA}[1][]{J_{\algA #1}}
\newcommand{\JB}[1][]{J_{\algB #1}}
\newcommand{\JAp}[1][]{J_{\algAp #1}}
\newcommand{\JBp}[1][]{J_{\algBp #1}}
\newcommand{\JhA}[1][]{J_{\halgA #1}}
\newcommand{\JhB}[1][]{J_{\halgB #1}}
\newcommand{\gammaA}[1][]{\gamma_{\algA #1}}
\newcommand{\gammaB}[1][]{\gamma_{\algB #1}}
\newcommand{\gammahA}[1][]{\gamma_{\halgA #1}}
\newcommand{\gammahB}[1][]{\gamma_{\halgB #1}}
\newcommand{\gammaAp}[1][]{\gamma_{\algAp #1}}
\newcommand{\gammaBp}[1][]{\gamma_{\algBp #1}}
\newcommand{\GammaA}[1][]{\Gamma_{\algA #1}}
\newcommand{\TGammaA}[1][]{\TGamma_{\algA #1}}
\newcommand{\GammaB}[1][]{\Gamma_{\algB #1}}
\newcommand{\epsilonA}{\epsilon_\algA}
\newcommand{\epsilonB}{\epsilon_\algB}
\newcommand{\epsilonhA}{\epsilon_{\halgA}}
\newcommand{\epsilonhB}{\epsilon_{\halgB}}

\newcommand{\DM}{D_M}
\newcommand{\JM}{J_M}
\newcommand{\gammaM}{\gamma_M}

\newcommand{\TNIC}{{\small TNIC}}

\newcounter{mnotecount}[section]
\renewcommand{\themnotecount}{\thesection.\arabic{mnotecount}}
\newcommand{\mnote}[1]%
{\protect{\stepcounter{mnotecount}}${}^{\text{\footnotesize$\bullet$\themnotecount}}$%
\reversemarginpar%
\marginpar{\raggedleft\footnotesize$\bullet$\themnotecount: #1}}

\newlength{\mnotewidth}
\setlength{\mnotewidth}{3cm}



\numberwithin{equation}{section}
\allowdisplaybreaks

\tolerance=100000
\pdfsuppresswarningpagegroup=1


\begin{document}
\renewcommand\figurename{Fig.}


{
\makeatletter\def\@fnsymbol{\@arabic}\makeatother 
\title{Lifting Bratteli Diagrams between Krajewski Diagrams: \\Spectral Triples, Spectral Actions, and $AF$ algebras}
\author{T. Masson, G. Nieuviarts\\
{\small Centre de Physique Théorique}%
\\
\small{Aix Marseille Univ, Université de Toulon, CNRS, CPT, Marseille, France}\\[2ex]
}
\date{}
\maketitle
}

\begin{abstract}
In this paper, we present a framework to construct sequences of spectral triples on top of an inductive sequence defining an $AF$-algebra. One aim of this paper is to lift arrows of a Bratteli diagram to arrows between Krajewski diagrams. The spectral actions defining Non-commutative Gauge Field Theories associated to two spectral triples related by these arrows are compared (tensored by a commutative spectral triple to put us in the context of Almost Commutative manifolds). This paper is a follow up of a previous one in which this program was defined and physically illustrated in the framework of the derivation-based differential calculus, but the present paper focuses more on the mathematical structure without trying to study the physical implications.
\end{abstract}


\tableofcontents

\section{Introduction}
\label{sec introduction}

In this paper, we continue the investigation started in \cite{MassNieu21q} on a new approach to propose a framework in Non-Commutative Gauge Field Theory (NCGFT) to construct “unifying theories”. This framework relies on Approximately Finite dimensional ($AF$) $C^*$-algebras (see \cite{Blac06a, Davi96a, RordLarsLaus00a} for instance). As explained in \cite{MassNieu21q} (to which we refer for more details), the idea is to take advantage of two features of $AF$-algebras. On the one hand, it is a direct limit of finite-dimensional $C^*$-algebra, which are, up to isomorphisms, finite sum of matrix algebras over $\bbC$. So one has a way to “approximate” an infinite dimensional algebra by finite dimensional structures. On the other hand, NCGFTs have been investigated on algebras of the type $C^\infty(M) \otimes \algA$ (“Almost Commutative Manifolds”) where $\algA$ is a finite dimension algebra and $(M,g)$ is a Riemannian spin manifold  equipped with its canonical spectral triple. These NCGFTs are naturally of Yang-Mills-Higgs type, and the proposition of a reconstruction of the Standard Model of Particles Physics (in \cite{ChamConnMarc07a} for instance, see also \cite{DungSuij12w} and \cite{Suij15a} for reviews and references) shows the relevance and interest of this approach to Gauge Field Theories (GFT). 

In the past, tentatives have been proposed to extend the framework of Almost Commutative algebras in order to go beyond the Standard Model of Particles Physics, see for instance \cite{Step06e, Step07z, MarcSuij14c}. One can consider that the present work is part of this line of inquiry, taking a different route. Namely, let $\algA = \overline{\cup_{n\geq 0} \algA_n}$ be an $AF$ algebra, where $\algA_n$ are finite dimensional algebras. It is convenient to describe $\algA$ as the direct limit $\algA = \varinjlim \algA_n$ of the inductive sequence of the finite dimensional algebras $\{ (\algA_n, \phi_{n,m}) \, / \,  0 \leq n < m \}$ where $\phi_{n,m} : \algA_n \to \algA_{m}$ are one-to-one unital homomorphisms that satisfy the composition property $\phi_{m,p} \circ \phi_{n,m} = \phi_{n,p}$ for any $0 \leq n < m < p$. From these relations, one needs only to describe the homomorphisms $\phi_{n,n+1} : \algA_n \to \algA_{n+1}$. For any $n \geq 0$, let us introduce an odd (resp. even) real spectral triple $(\algA_n, \hs_n, D_n, J_n)$ (resp. $(\algA_n, \hs_n, D_n, J_n, \gamma_n)$). The purpose of the present paper is to define a good notion of compatibility inherited from the maps  $\phi_{n,n+1}$ between these spectral triples such that one can consider the sequence $\{ (\algA_n, \hs_n, D_n, J_n) \}_{n \geq 0}$ (resp. $\{ (\algA_n, \hs_n, D_n, J_n, \gamma_n) \}_{n \geq 0}$) as a finite dimensional approximation of a limiting spectral triple $(\algA, \hs, D, J)$ (resp. $(\algA, \hs, D, J, \gamma)$) on $\algA$. Thanks to the compatibility condition that is required between two successive spectral triples in this sequence, their spectral actions can be compared so that a “Limiting Non-Commutative Gauge Field Theory” on $\algA$ can be considered (at least in a formal sense) which is approximated by finite dimensional NCGFTs.

From a physical point of view, one can consider our contribution as a proposal for a general framework to elaborate NCGFTs in a GUT-like way. The usual GUT are based on a large gauge group from which, after applying successive Spontaneous Symmetry Breaking Mechanisms (SSBM), one gets a smaller gauge group corresponding to the desired phenomenology. In our approach, we consider two finite dimensional algebras $\algA$ and $\algB$, corresponding, in the usual NCGFT approach, to two gauge groups (modulo the tensor product with the canonical spectral triple of a compact Riemannian spin manifold). Let us denote by NCGFT$_\algA$ and NCGFT$_\algB$ the corresponding NCGFTs. If $\phi : \algA \to \algB$ is a one-to-one homomorphism, then NCGFT$_\algB$ has a larger gauge group than NCGFT$_\algA$, which provides a GUT-like situation, and the former may contain more degrees of freedom than the latter. In order to be able to compare these two NCGFTs, we introduce a constraint at the level of the two spectral triples in the form of a notion of “$\phi$-compatibility”. This notion of $\phi$-compatibility is proposed for generic algebras in Sect.~\ref{sec general situations}, but it reveals its true richness for $AF$-algebras, see Sect.~\ref{sec AF algebras}.

As a matter of fact, two notions of $\phi$-compatibility are proposed: a so-called $\phi$-compatibility (Def.~\ref{def phi compatibility operators}) and a so-called strong $\phi$-compatibility (Def.~\ref{def strong phi compatibility operators}), which is stronger, as its name suggests. The strong $\phi$-compatibility is more natural from a mathematical point of view, and it has indeed been used in the literature (see for instance \cite{ChriIvan06a}, \cite{Lai13a}, \cite{FlorGhor19p}). For instance, in Prop.~\ref{prop strong and not strong phi compatibility} we show how it is compatible with composition of operators and with adjointness, in Prop.~\ref{prop KO dim strong phi compatibility} we show that it constrains the $KO$-dimensions to be the same, and in Prop.~\ref{prop unitary equi triple and (st) phi comp} we show how it is compatible with unitary equivalence of real spectral triples. But, strong $\phi$-compatibility is too restrictive from a physical point of view, since, for instance, it imposes that the Dirac operator $\DB$ cannot couple inherited and new degrees of freedom at the level of $\algB$. From a physical point of view, $\phi$-compatibility looks more natural since it is based on constraints on inherited degrees of freedom only, so that, for instance, it allows the Dirac operator $\DB$ to couple inherited and new degrees of freedom.

\smallskip
As mentioned in \cite{MassNieu21q}, we are not aware of any empirical fact suggesting that such a radical new approach could be suitable for Particles Physics. Nevertheless the study of this mathematical framework reveals some relevant and compelling structures, and we feel that the forecasted phenomenological investigations will make appear nice ways to explore different kinds of unifications.

In \cite{MassNieu21q}, we investigated this framework using derivation-based noncommutative geometry, and we exhibited interesting results from the point of view of the SSBM. In the present paper, we focus on the spectral triple approach. One main result of the paper is the description of what can be called a “lifting” of arrows in a Bratteli diagram (which characterizes the given $AF$-algebra) to arrows between Krajewski diagrams which describe finite dimensional real spectral triples. Another result is the possible comparison between successive spectral actions defined by the spectral triples introduced in a compatible way in the sequence $\{ (\algA_n, \hs_n, D_n, J_n) \}_{n \geq 0}$ or $\{ (\algA_n, \hs_n, D_n, J_n, \gamma_n) \}_{n \geq 0}$. This comparison permits to get an idea of the physical content of the “unifying” NCGFT that could be formally considered in the limit.

\medskip
The paper is organized as follows. In Sect.~\ref{sec spectral triples and GFT}, we recall some main facts about NCGFTs and spectral triples. Since this is a well-known subject, we focus on the structures that will be used later in the paper, in particular the universal differential calculus. In Sect.~\ref{sec normal form finite real spectral triples}, we recall the classification of finite (real) spectral triples using Krajewski diagrams. We outline the steps of this classification in detail since some intermediate results that lead to these diagrams will be used later. In Sect.~\ref{sec one step in the sequence}, we describe how to lift arrows in a Bratteli diagram to arrows between Krajewski diagrams. This results leads to the construction of a sequence of NCGFTs on top of an $AF$-algebra. We focus mainly on the “one step structure”.  Finally, in Sect.~\ref{sec spectral actions AF AC manifold}, we show how spectral actions are related in this sequence, taking one step in this sequence as an illustration. We show, in a formal way, that the spectral action at one step in the sequence is part of the spectral action for the next step. It is out of the scope of this paper to construct realistic models. The physical implications of the corresponding NCGFT limit will not be discussed in details in this paper: only some general results related to other works will be presented.

\section{Spectral Triples and Gauge Fields Theories}
\label{sec spectral triples and GFT}

In this section, we recall some main facts about the construction of Gauge Fields Theories from Spectral Triples. We also take the opportunity to introduce notations for further developments. We refer to \cite{ConnMarc08b, Suij15a, Mass12a} for further details.

Let $(\algA, \hs, D)$ be a spectral triple and denote by $\pi : \algA \to \calB(\hs)$ the representation on the Hilbert space $\hs$. This makes $\hs$ a left $\algA$-module. We will always suppose that $\algA$ is unital, with unit $\bbbone$.  In the following, we will not need to consider the analytic axioms since we consider only finite dimensional algebras and representations.

An even spectral triple $(\algA, \hs, D, \gamma)$ is equipped with a $\bbZ_2$-grading linear map $\gamma$ on $\hs$ such that $\gamma^\dagger = \gamma$, $\gamma^2 = 1$, $\gamma D + D \gamma = 0$ ($D$ is odd), $\gamma \pi(a) = \pi(a)\gamma$ for any $a \in \algA$ ($\algA$ is even). The grading $\gamma$ induces a decomposition $\hs = \hs^{+} \oplus \hs^{-}$ according to the eigenvalues $\pm 1$ of $\gamma$. Spectral triples without such a grading are referred to as odd spectral triples.

A real spectral triple $(\algA, \hs, D, J)$ is equipped with a map $J :\hs \to \hs$ which is an anti-unitary operator: $\langle J \psi_1, J \psi_2 \rangle = \langle \psi_2, \psi_1 \rangle$ for any $\psi_1, \psi_2 \in \hs$ such that $[a, J b^* J^{-1}] = 0$ (commutant property) and $[[D, a], J b^\ast J^{-1}] = 0$ (first-order condition) for any $a, b \in \algA$. The map $\hs \times \algA \to \hs$ defined by $(\xi, a) \mapsto J a^\ast J^{-1} \xi$ defines a right module structure on $\hs$ so that $\hs$ is a $\algA$-bimodule. We denote by $a^\circ$ the element in the opposite algebra $\algA^\circ$ which corresponds to $a \in \algA$.\footnote{$\algA \simeq \algA^\circ$ as vector spaces by the formal map $\algA \ni a \mapsto a^\circ \in \algA^\circ$ and the new product in $\algA^\circ$ is $a^\circ b^\circ \defeq (b a)^\circ$.} Then, using $a^\circ \mapsto J a^\ast J^{-1}$ as a left representation of $\algA^\circ$, $\hs$ becomes a left $\algA \otimes \algA^\circ$-module. We will frequently write $a^\circ \psi =  J a^\ast J^{-1} \psi = \psi a$ for any $a \in \algA$ and $\psi \in \hs$. We define $\algA^{e} \defeq \algA \otimes \algA^\circ$. An even real spectral triple is an uplet $(\algA, \hs, D, J, \gamma)$ with $\gamma$ as before. Notice then that $\gamma a^\circ = a^\circ \gamma$ for any $a \in \algA$, and so $\gamma$ commutes with the left representation of $\algA^{e}$ on $\hs$. 

In the odd and even cases, the $KO$-dimensions $n \mod 8$ are given in Table~\ref{table KO dimensions}, where the numbers $\epsilon, \epsilon', \epsilon'' = \pm 1$ are defined by the requirements $J^2 = \epsilon$, $J D = \epsilon ' D J$, and $J \gamma = \epsilon'' \gamma J$ (in the even case). When $J^2 = -1$ and $\hs$ is finite dimensional, then its dimension is even (see \cite[Lemma~3.8]{Suij15a} for instance).

\begin{table}
\centering
\begin{tabular}{RRRRRRRRR}
\toprule
n & 0 & 1 & 2 & 3 & 4 & 5 & 6 & 7 \\
\midrule
\epsilon & 1 & 1 & -1 & -1 & -1 & -1 & 1 & 1 \\
\epsilon' & 1 & -1 & 1 & 1 & 1 & -1 & 1 & 1 \\
\epsilon'' & 1 & & -1 & & 1 & & -1 & \\
\bottomrule
\end{tabular}
\caption{$KO$-dimensions of real spectral triples.}
\label{table KO dimensions}
\end{table}

\medskip
Two spectral triples $(\algA, \hs, D)$ and $(\algA', \hs', D')$ are unitary equivalent when there exists a unitary operator $U : \hs \rightarrow \hs'$ and an algebra isomorphism $\phi : \algA \rightarrow \algA'$ such that $\pi' \circ \phi = U \pi U^{-1}$, $D' = U D U^{-1}$, $J' = U J U^{-1}$, and $\gamma' = U \gamma U^{-1}$, whenever the operators $J$, $J'$, $\gamma$ and $\gamma'$ exist. 

A symmetry of a spectral triple is a unitary equivalence between two spectral triples such that $\hs' = \hs$, $\algA' = \algA$, and $\pi' = \pi$, so that $U : \hs \to \hs$ and $\phi \in \Aut(\algA)$, \textit{i.e.} a symmetry acts only on $D$, $J$ and $\gamma$. In the following, we will only consider automorphisms $\phi$ which are $\algA$-inner, that is, there is a unitary $u \in \calU(\algA)$ such that $\phi_u(a) = u a u^\ast$. This unitary in $\algA$ defines the unitary $U = \pi(u) J \pi(u) J^{-1} : \hs \rightarrow \hs$, which can be interpreted as the conjugation with $\pi(u)$ for the bimodule structure. A straightforward computation shows that $U$ leaves $J$ and $\gamma$ invariant, and the Dirac operator $D$ is modified as $D^u = D + \pi(u) [D, \pi(u)^\ast] + \epsilon' J\left( \pi(u) [D, \pi(u)^\ast] \right) J^{-1}$. The usual way to look at this relation is to interpret the commutator with $D$ as a differential: this expression tells us that $D$ is modified by the addition of two inhomogeneous terms of the form ``$u \dd u^{-1}$''. Notice that, depending on the sign of $\epsilon'$, these two inhomogeneous terms produce a commutator or an anticommutator (from the point of view of the bimodule structure on $\hs$).

By definition, gauge transformations are inner symmetries of a spectral triple. In order to compensate for the inhomogeneous terms, we can use the same trick as in ordinary gauge field theory: add to the first order differential operator $D$ a gauge potential. To do that, we need a convenient notion of noncommutative connections. Let us consider the universal differential calculus $(\Omega^\grast_U(\algA), \ddU)$. A noncommutative connection is defined as a $1$-form $\omega = \sum_i a^{0}_i \ddU a^{1}_i \in \Omega^1_U(\algA)$ (finite sum). Elements in the vector spaces $\Omega^n_U(\algA)$ can be represented as bounded operators on $\hs$ by
\begin{equation*}
\piD \big(\tsum_i a^{0}_i \ddU a^{1}_i \cdots \ddU a^{n}_i \big) \defeq \tsum_i \pi(a^{0}_i) [ D, \pi(a^{1}_i)] \cdots [ D, \pi(a^{n}_i)].
\end{equation*}
Notice that the map $\piD$ is not a representation of the graded differential algebra $\Omega^\grast_U(\algA)$. In particular, $\ddU$ is not represented by the commutator $[D, -]$ as a differential. The representation $\piD$ can also be used to represent $n$-forms on the right module structure of $\hs$ by $\tsum_i a^{0}_i \ddU a^{1}_i \cdots \ddU a^{n}_i \mapsto J \piD\left(\tsum_i a^{0}_i \ddU a^{1}_i \cdots \ddU a^{n}_i \right) J^{-1}$. The map $\piD$ may have a non trivial kernel, this is why we will prefer to use $\omega \in \Omega^1_U(\algA)$ instead of $\piD(\omega)$ in some forthcoming constructions.

Given $D$ and $\omega \in \Omega^1_U(\algA)$, one defines the operator $D_\omega \defeq D + \piD(\omega) + \epsilon' J \piD(\omega) J^{-1}$. By a gauge transformation $u \in \calU(\algA)$, $D_\omega$ is transformed into
\begin{align*}
(D_\omega)^u 
&= D + \pi(u) \piD(\omega) \pi(u)^\ast + \pi(u) [D, \pi(u)^\ast] 
\\
&\hspace{1cm} + \epsilon' J \pi(u) \piD(\omega) \pi(u)^\ast J^{-1} + \epsilon' J \pi(u) [D, \pi(u)^\ast] J^{-1}.
\end{align*}
This relation can be written as $D_{\omega^u}$, where $\omega^u \in \Omega^1_U(\algA)$ is a gauge transformation of $\omega$ defined as $\omega^u \defeq u \omega u^\ast + u \ddU u^\ast$.

\medskip
In the following, we will need a convenient presentation of the differential graded algebra $(\Omega^\grast_U(\algA), \ddU)$. We follow the presentation in \cite{Mass95a}.\footnote{We owe this presentation to Michel~Dubois-Violette.} For any $n\geq 0$, let $\calT^n \algA \defeq \algA^{\otimes^{n+1}}$ and let $\calT^\grast \algA = \toplus_{n\geq 0} \calT^n \algA$. This is a graded algebra for the product $\calT^n \algA \otimes \calT^{n'} \algA \to \calT^{n+n'} \algA$ defined by $(a^0 \otimes \cdots \otimes a^n) (a'^0 \otimes \cdots \otimes a'^{n'}) \defeq a^0 \otimes \cdots \otimes a^n a'^0 \otimes \cdots \otimes a'^{n'}$. In particular, $\calT^\grast \algA$ is a bimodule over $\algA = \calT^0 \algA$. Define $\ddU : \calT^n \algA \to \calT^{n+1} \algA$ as 
\begin{align*}
\ddU (a^0 \otimes \cdots \otimes a^n) =
& \bbbone \otimes a^0 \otimes \cdots \otimes a^n
\\
& + \tsum_{p=1}^{n} (-1)^p a^0 \otimes \cdots \otimes a^{p-1} \otimes \bbbone \otimes a^{p} \otimes \cdots \otimes a^n
\\
& + (-1)^{n+1} a^0 \otimes \cdots \otimes a^n \otimes \bbbone.
\end{align*}
Then $\ddU$ is a derivation of degree $1$ on the graded algebra $\calT^\grast \algA$ such that $\ddU^2 = 0$. Notice that $\ddU(a) = \bbbone \otimes a - a \otimes \bbbone$ on $\calT^0 \algA$. It is convenient to introduce the maps $i^p_{\bbbone}(a^0 \otimes \cdots \otimes a^n) \defeq a^0 \otimes \cdots \otimes a^{p-1} \otimes \bbbone \otimes a^p \otimes \cdots \otimes a^n$ for any $p = 0, \dots, n+1$, with the convention that for $p=0$, the tensor factor $\bbbone$ is added before $a^0$ (for $p=n+1$, it is added after $a^n$). Then $\ddU = \tsum_{p=0}^{n+1} (-1)^p \,  i^p_{\bbbone} : \calT^n \algA \to \calT^{n+1} \algA$.

Let $\mu : \calT^1 \algA \to \calT^0 \algA$ be the multiplication map $a^0 \otimes a^1 \mapsto a^0 a^1$, and define $\Omega^1_U(\algA) \defeq \ker \mu \subset \calT^1 \algA$. Then $\ddU$ maps $\calT^0 \algA$ into $\Omega^1_U(\algA)$ and $\Omega^1_U(\algA)$ is generated, as a bimodule on $\algA$, by the $\ddU a$'s for $a \in \algA$.\footnote{If $\tsum_{i} a_i^0 \otimes a_i^1 \in \calT^1 \algA$ is such that $\mu(\tsum_{i} a_i^0 \otimes a_i^1) = \tsum_{i} a_i^0 a_i^1 = 0$, then $\tsum_{i} a_i^0 \otimes a_i^1 = \tsum_{i} a_i^0 (\bbbone \otimes a_i^1 -  a_i^1 \otimes \bbbone) = \tsum_{i} a_i^0 \ddU a_i^1$.} Let $\Omega^0_U(\algA) \defeq \algA$ and $\Omega^n_U(\algA) \defeq \Omega^1_U(\algA) \otimes_{\algA} \cdots \otimes_{\algA} \Omega^1_U(\algA)$ ($n$ times tensor product over $\algA$) for any $n \geq 2$ and $\Omega^\grast_U(\algA) \defeq \toplus_{n \geq 0} \Omega^n_U(\algA)$. Equivalently, $\Omega^\grast_U(\algA)$ is the graded sub-algebra of $\calT^\grast \algA$ generated by $\Omega^0_U(\algA)$ and $\Omega^1_U(\algA)$. One can then check that $\Omega^n_U(\algA) \subset \calT^n \algA$ is generated by the $a^0 \ddU a^1 \cdots \ddU a^n$ for $a^0, \dots, a^n \in \algA$, so that $\ddU$ restricts to maps $\Omega^n_U(\algA) \to \Omega^{n+1}_U(\algA)$, and then $(\Omega^\grast_U(\algA), \ddU)$ is a graded differential sub-algebra of $(\calT^\grast \algA, \ddU)$.

Let us consider the case $\algA = \toplus_{i=1}^{r} \algA_i$, where $\algA_i$ are unital algebras with units $\bbbone_{\algA_i}$. It will be useful in later discussions to use explicit presentations of $(\calT^\grast \algA, \ddU)$ and $(\Omega^\grast_U(\algA), \ddU)$ constructed as follows. Let 
\begin{align*}
\kT^0\algA \defeq \left\{
\smallpmatrix{ a_1 & 0 & \cdots & 0 \\
0 & a_2 & \cdots & 0 \\
\vdots & & \ddots &  \\
0 & 0 & \cdots & a_r}
\mid 
a = \toplus_{i=1}^{r} a_i \in \algA \right\}.
\end{align*}
For any $n \geq 1$ and any $1 \leq i_0, \dots, i_n \leq r$, let us introduce the notation $\algA^\otimes_{i_0, \dots, i_{n}} \defeq \algA_{i_0} \otimes \cdots \otimes \algA_{i_n}$. Now, let $\kT^n_{i_1, \dots, i_{n-1}}\algA$ be the set of matrices with entries in $\algA^\otimes_{i, i_1, \dots, i_{n-1}, j}$ at row $i$ and column $j$. This can be schematically visualized as
\begin{align*}
\begin{pmatrix} 
	\algA^\otimes_{1, i_1, \dots, i_{n-1}, 1} 
	& \algA^\otimes_{1, i_1, \dots, i_{n-1}, 2} 
	& \cdots 
	& \algA^\otimes_{1, i_1, \dots, i_{n-1}, r}
\\
	\algA^\otimes_{2, i_1, \dots, i_{n-1}, 1} 
	& \algA^\otimes_{2, i_1, \dots, i_{n-1}, 2} 
	& \cdots 
	& \algA^\otimes_{2, i_1, \dots, i_{n-1}, r}
\\
	\vdots & \vdots &  & \vdots 
\\
	\algA^\otimes_{r, i_1, \dots, i_{n-1}, 1} 
	& \algA^\otimes_{r, i_1, \dots, i_{n-1}, 2} 
	& \cdots 
	& \algA^\otimes_{r, i_1, \dots, i_{n-1}, r}
\end{pmatrix}
\end{align*}
where the first and last algebras in the tensor products will play a crucial role in the following. Combining the products 
\begin{align*}
\algA^\otimes_{i, i_1, \dots, i_{n-1}, k}
\otimes 
\algA^\otimes_{k, j_1, \dots, j_{n'-1}, j}  
\to 
\algA^\otimes_{i, i_1, \dots, i_{n-1}, k, j_1, \dots, j_{n'-1}, j},
\end{align*} 
defined by the product in $\algA_{k}$, and the usual rules for matrix multiplications, one gets products
\begin{align*}
\kT^n_{i_1, \dots, i_{n-1}}\algA \otimes \kT^{n'}_{j_1, \dots, j_{n'-1}}\algA
\to \toplus_{k=1}^{r} \kT^{n+n'}_{i_1, \dots, i_{n-1}, k, j_1, \dots, j_{n'-1}}\algA .
\end{align*} 
Let us introduce
\begin{align*}
\kT^n \algA \defeq \toplus_{i_1, \dots, i_{n-1} = 1}^{r} \kT^n_{i_1, \dots, i_{n-1}}\algA
\quad \text{and} \quad 
\kT^\grast \algA \defeq \toplus_{n \geq 0} \kT^n \algA
\end{align*}
then $\kT^\grast \algA$ is a graded algebra for the global product induced by the products defined above. Explicitly, for $\toplus_{i_1, \dots, i_{n-1} =1}^{r} \big( a^0_{i_0} \otimes a^1_{i_1} \otimes \cdots \otimes a^{n-1}_{i_{n-1}} \otimes a^n_{i_n} \big)_{i_0, i_n = 1}^{r} \in \kT^n \algA$ and $\toplus_{j_1, \dots, j_{n'-1} =1}^{r} \big( b^0_{j_0} \otimes b^1_{j_1} \otimes \cdots \otimes b^{n'-1}_{j_{n'-1}} \otimes b^{n'}_{j_{n'}} \big)_{j_0, j_{n'} = 1}^{r} \in \kT^{n'} \algA$, their product in $\kT^{n+n'} \algA$ is
\begin{align}
\label{eq product kT n np}
\toplus_{\substack{i_1, \dots, i_{n-1}, i_n,\\ j_1, \dots, j_{n'-1} =1}}^{r}
\big(
a^0_{i} \otimes a^1_{i_1} \otimes \cdots \otimes a^{n-1}_{i_{n-1}} \otimes a^n_{i_n} 
b^0_{i_n} \otimes b^1_{j_1} \otimes \cdots \otimes b^{n'-1}_{j_{n'-1}} \otimes b^{n'}_{j}
\big)_{i, j=1}^{r}
\end{align}

Let $\bmu$ be the component-wise product on $\kT^1 \algA$. Since multiplications by elements in $\algA_i$ and $\algA_j$ are zero for $i \neq j$, the resulting matrix is diagonal, and so one gets a natural map $\bmu : \kT^1 \algA \to \kT^0 \algA$. Let $\bOmega^1_U(\algA) \defeq \ker \bmu \subset \kT^1 \algA$ and $\bOmega^\grast_U(\algA) \subset \kT^\grast \algA$ be the graded sub-algebra generated by $\bOmega^0_U(\algA) \defeq \kT^0\algA$ and $\bOmega^1_U(\algA)$. For any $p = 0, \dots, n+1$, define $\bi^p_{\bbbone} : \kT^n_{i_1, \dots, i_{n-1}}\algA \to \toplus_{k=0}^{r} \kT^{n+1}_{i_1, \dots, i_{p-1}, k, i_{p}, \dots, i_{n-1}}\algA$ by inserting $\bbbone = \toplus_{k=1}^{r} \bbbone_{\algA_k}$ component-wise, \textit{i.e.} $\bi^p_{\bbbone} = \toplus_{k=1}^{r} \bi^p_{\bbbone_{\algA_k}}$ with obvious notations. Then one can define $\bddU \defeq \tsum_{p=0}^{n+1} (-1)^p \,  \bi^p_{\bbbone} : \kT^n \algA  \to \kT^{n+1} \algA$.

\begin{proposition}
The map $\bddU$ is a differential on $\kT^\grast \algA$ and there is an isomorphism $t : \calT^\grast \algA \to \kT^\grast \algA$ of graded differential algebras which induces an isomorphism of the graded differential (sub)algebras $\Omega^\grast_U(\algA)$ and $\bOmega^\grast_U(\algA)$.
\end{proposition}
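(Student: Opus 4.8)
The plan is to build $t$ directly as an identification of the underlying graded vector spaces, and then check in turn that it is multiplicative, that it intertwines the two candidate differentials, and that it restricts well to the universal forms. First I would note that expanding $\algA = \toplus_{i=1}^r \algA_i$ in each tensor slot gives $\calT^n\algA = \algA^{\otimes^{n+1}} = \toplus_{i_0,\dots,i_n=1}^r \algA^\otimes_{i_0,\dots,i_n}$, whereas by its very definition $\kT^n\algA = \toplus_{i_1,\dots,i_{n-1}=1}^r \kT^n_{i_1,\dots,i_{n-1}}\algA$ with each summand $\kT^n_{i_1,\dots,i_{n-1}}\algA$ equal, as a vector space, to $\toplus_{i_0,i_n} \algA^\otimes_{i_0,i_1,\dots,i_{n-1},i_n}$ (the entry in row $i_0$, column $i_n$). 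So in each degree both sides are $\toplus_{i_0,\dots,i_n} \algA^\otimes_{i_0,\dots,i_n}$, and I would \emph{define} $t$ to be this identification: it sends a homogeneous elementary tensor $a^0\otimes\cdots\otimes a^n$ with $a^k\in\algA_{i_k}$ to the element of $\kT^n_{i_1,\dots,i_{n-1}}\algA$ whose only nonzero entry is $a^0\otimes\cdots\otimes a^n$ in position $(i_0,i_n)$, and then extend by linearity; it is a degree-preserving linear bijection by construction, and on degree $0$ it is the evident isomorphism $\algA = \calT^0\algA \to \kT^0\algA$ onto diagonal matrices.

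Next I would check that $t$ is a morphism of graded algebras, testing on homogeneous elementary tensors. On the $\calT$ side, $(a^0\otimes\cdots\otimes a^n)(b^0\otimes\cdots\otimes b^{n'}) = a^0\otimes\cdots\otimes a^{n-1}\otimes(a^n b^0)\otimes b^1\otimes\cdots\otimes b^{n'}$, which vanishes unless $a^n$ and $b^0$ lie in the same block $\algA_k$, the product then being computed in $\algA_k$. On the $\kT$ side, the single entry of $t(a^0\otimes\cdots\otimes a^n)$ sits in column $i_n$ and that of $t(b^0\otimes\cdots\otimes b^{n'})$ in row $j_0$, so matrix multiplication produces a nonzero result precisely when $i_n = j_0 =: k$, and it equals the same tensor with $a^n b^0$ taken in $\algA_k$ — which is exactly formula~\eqref{eq product kT n np}. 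Hence both products vanish in the same case and otherwise give the same elementary tensor in the same summand, so $t(ab) = t(a)t(b)$.

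Then I would establish $t\circ\ddU = \bddU\circ t$ by comparing the alternating sums $\ddU = \tsum_{p=0}^{n+1}(-1)^p\, i^p_{\bbbone}$ and $\bddU = \tsum_{p=0}^{n+1}(-1)^p\, \bi^p_{\bbbone}$ term by term, i.e. showing $t\circ i^p_{\bbbone} = \bi^p_{\bbbone}\circ t$. On $a^0\otimes\cdots\otimes a^n$, $i^p_{\bbbone}$ inserts $\bbbone = \toplus_{k=1}^r\bbbone_{\algA_k}$ in slot $p$, producing $\tsum_{k=1}^r a^0\otimes\cdots\otimes\bbbone_{\algA_k}\otimes\cdots\otimes a^n$ whose $k$-th term is homogeneous with new slot-$p$ index $k$, and $t$ spreads this over the summands $\kT^{n+1}_{\dots,k,\dots}\algA$; on the other side $\bi^p_{\bbbone} = \toplus_{k=1}^r\bi^p_{\bbbone_{\algA_k}}$ inserts $\bbbone_{\algA_k}$ into the single matrix entry and lands in $\kT^{n+1}_{i_1,\dots,i_{p-1},k,i_p,\dots,i_{n-1}}\algA$, and the two agree summand by summand. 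Once this is known, $\bddU = t\circ\ddU\circ t^{-1}$, so $\bddU$ is at once a degree-$1$ derivation with $\bddU^2 = 0$, i.e. a differential. Finally, on $\calT^1\algA$ the same kind of computation gives $\bmu\circ t = t\circ\mu$ (one multiplies $a^0 a^1$, nonzero only when $i_0 = i_1$, which lands on the diagonal), so $t(\ker\mu) = \ker\bmu$, that is $t(\Omega^1_U(\algA)) = \bOmega^1_U(\algA)$; since $t$ is an algebra isomorphism sending $\Omega^0_U(\algA) = \algA$ onto $\bOmega^0_U(\algA) = \kT^0\algA$, it carries the graded subalgebra generated by degrees $0$ and $1$, namely $\Omega^\grast_U(\algA)$, isomorphically onto the one generated by $\bOmega^0_U(\algA)$ and $\bOmega^1_U(\algA)$, namely $\bOmega^\grast_U(\algA)$, and the restricted isomorphism still commutes with the differentials.

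The main obstacle is pure bookkeeping: one must pin down how the distinguished outer indices $i_0$ (row) and $i_n$ (column) on the $\kT$ side interact with the list of middle indices under the insertion maps $\bi^p_{\bbbone}$, especially at the boundary slots $p=0$ and $p=n+1$ where an old outer index slides into the middle list (and the new index becomes the row, resp. column, index), and similarly check the degenerate product cases ($n$ or $n'$ equal to $0$). Conceptually there is nothing to prove — $t$ is the identity under a relabelling of the same elementary tensors — but the index conventions must be stated precisely for the term-by-term matching above to be literally correct.
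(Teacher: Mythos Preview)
Your proposal is correct and follows essentially the same approach as the paper's own proof: define $t$ as the natural identification coming from distributing the direct sum $\algA = \toplus_i \algA_i$ over the tensor slots, then verify term by term that $t$ intertwines the products, the insertion maps $i^p_{\bbbone}$ and $\bi^p_{\bbbone}$, and the multiplication maps $\mu$ and $\bmu$. The paper states these verifications more tersely (``by construction'', ``a straightforward computation''), whereas you spell out the matrix-row/column matching and flag the boundary slots $p=0,\,n{+}1$ explicitly, but the argument is the same.
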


\begin{proof}
For $n=0$, one defines $t(\toplus_{i=1}^{r} a_i) = \smallpmatrix{ a_1 & \cdots & 0 \\
 & \ddots &  \\
0 & \cdots & a_r} \in \kT^0\algA$ for any $\toplus_{i=1}^{r} a_i \in \algA$. For $n \geq 1$, consider any $a^{0} \otimes \cdots \otimes a^{n} \in \calT^n \algA$ with $a^{p} = \toplus_{i=1}^{r} a^{p}_i$ where $a^{p}_i \in \algA_i$. Expanding the tensor products along these direct sums, one gets a sum of terms of the form $a^{0}_{i_0} \otimes \cdots \otimes a^{n}_{i_n} \in \algA^\otimes_{i_0, \dots, i_{n}}$ that we assemble as elements in $\kT^n_{i_1, \dots, i_{n-1}}\algA$. This defines the map $t : \calT^n \algA \to \kT^n \algA$, which, for any $n \geq 0$, is by construction an isomorphism of vector spaces. A straightforward computation shows that the product on $\kT^\grast \algA$ is such that $t$ is a homomorphism of graded algebras.

By construction of $\bi^p_{\bbbone}$, one has $t \circ i^p_\bbbone = \bi^p_{\bbbone} \circ t$, so that $\bddU$ is a differential on $\kT^\grast \algA$ and $t$ is an isomorphism of differential algebras.

Finally, the map $\bmu$ has been defined such that $t \circ \mu = \bmu \circ t$ so that $t$ identifies $\Omega^1_U(\algA)$ with $\bOmega^1_U(\algA)$, and so $\Omega^\grast_U(\algA)$ with $\bOmega^\grast_U(\algA)$.
\end{proof}

Notice that, with $\hbbbone \defeq t(\bbbone \otimes \bbbone) \in \kT^1 \algA$, one has $\bddU t(a) = [ \hbbbone, t(a)]$ for any $a \in \algA$.

\medskip
We now suppose that there is an orthogonal decomposition of the Hilbert space $\hs = \toplus_{i=1}^{r} \hs_i$ such that the representation decomposes along $\pi = \toplus_{i=1}^{r} \pi_i$ where $\pi_i$ is a representation of $\algA_i$ on $\hs_i$: for any $\psi = \toplus_{i=1}^{r} \psi_i \in \hs$ and $a = \toplus_{i=1}^{r} a_i \in\algA$, $\pi(a) \psi = \toplus_{i=1}^{r} \pi_i(a_i) \psi_i$. Then the Dirac operator $D$ decomposes as a $r \times r$ matrix of operators $D_{j}^{i} : \hs_i \to \hs_j$. 

We propose to write the representation $\piD$ as follows. Consider any $\bomega \in \bOmega^n_U(\algA) \subset \kT^n \algA$ which decomposes along a sum of typical terms $\toplus_{i_1, \dots, i_{n-1} =1}^{r} \big( a^0_{i} \otimes a^1_{i_1} \otimes \cdots \otimes a^{n-1}_{i_{n-1}} \otimes a^n_{j} \big)_{i, j = 1}^{r} \in \kT^n \algA$. Then $\piD(\bomega)$ is the $r \times r$ matrix of operators 
\begin{align}
\label{eq piD sum algebras}
\piD(\bomega)_i^j = 
\!\!\!\!\!\sum_{ \substack{ \text{all terms at the}\\ \text{$(i,j)$ entry in $\bomega$} } } \!\!\!\!\!
\tsum_{i_1, \dots, i_{n-1} = 1}^{r} 
a^{0}_{i} D_{i}^{i_1} a^{1}_{i_1}  D_{i_1}^{i_2} \cdots D_{i_{n-2}}^{i_{n-1}} a^{n-1}_{i_{n-1}} D_{i_{n-1}}^{j} a^{n}_{j} : 
\hs_j \to \hs_i
\end{align}
Notice that, since $\bomega \in \bOmega^n_U(\algA)$, these sums define bounded operators because only commutators $[D, a]$ could appear in $\piD(\bomega)$ (this is not necessarily the case for a generic element in $\kT^n \algA$).

\section{Normal Forms of Finite Real Spectral Triples}
\label{sec normal form finite real spectral triples}

In this section we recall all the important facts about finite real spectral triples that will be needed later. In particular their classification by Krajewski diagrams \cite{Kraj98e} (see also \cite{Suij15a}, in which a sketch of this classification is given). All the missing proofs of the results presented below are given in \cite{Nieu22a} with the same notations.

\smallskip
Many results rely on the following  well-known technical result, which results from the existence of cyclic vectors in $\bbC^n$ for the matrix multiplication:
\begin{lemma}
\label{lemma technical module endo reduction}
For any $n \geq 1$ and any vector space $V$, a linear map $\Psi : \bbC^n \otimes V \to \bbC^n \otimes V$ such that $\Psi(a \xi \otimes v) = a \Psi(\xi \otimes v)$ for any $a \in M_n(\bbC)$, $\xi \in \bbC^n$ and $v \in V$, reduces to a linear map $\varphi : V \to V$ such that $\Psi(\xi \otimes v) = \xi \otimes \varphi(v)$.
\end{lemma}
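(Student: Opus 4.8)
The plan is to exploit the standard fact that for the defining representation of $M_n(\bbC)$ on $\bbC^n$, every nonzero vector is cyclic, and more precisely that one can manufacture rank-one operators. First I would fix the canonical basis $(e_k)_{k=1}^n$ of $\bbC^n$ and, for $1 \le k, \ell \le n$, let $E_{k\ell} \in M_n(\bbC)$ be the matrix units, so $E_{k\ell} e_m = \delta_{\ell m} e_k$ and $E_{kk}$ is the projection onto $\bbC e_k$. The idea is to use these to transport the $\algA$-linearity of $\Psi$ into a statement about each tensor slot separately.

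The key steps, in order. \emph{Step 1: reduce to the first basis vector.} For any $k$ and any $v \in V$, write $e_k \otimes v = E_{k1}(e_1 \otimes v)$ using $E_{k1} e_1 = e_k$; then $\Psi(e_k \otimes v) = E_{k1}\,\Psi(e_1 \otimes v)$ by the hypothesis. So $\Psi$ is entirely determined by its restriction to $e_1 \otimes V$. \emph{Step 2: show $\Psi(e_1 \otimes v)$ lies in $e_1 \otimes V$.} Apply the hypothesis with $a = E_{11}$: since $E_{11} e_1 = e_1$, we get $\Psi(e_1 \otimes v) = \Psi(E_{11} e_1 \otimes v) = E_{11}\,\Psi(e_1 \otimes v)$, and $E_{11}$ acting on the first tensor factor is the projection onto $e_1 \otimes V$ (using the obvious identification $\bbC^n \otimes V \cong \bigoplus_m e_m \otimes V$). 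Hence $\Psi(e_1 \otimes v) = e_1 \otimes \varphi(v)$ for a unique $\varphi(v) \in V$; linearity of $\Psi$ in $v$ makes $\varphi : V \to V$ linear. \emph{Step 3: assemble.} Combining Steps 1 and 2, $\Psi(e_k \otimes v) = E_{k1}(e_1 \otimes \varphi(v)) = e_k \otimes \varphi(v)$, and since the $e_k \otimes v$ span $\bbC^n \otimes V$ and both sides are linear, $\Psi(\xi \otimes v) = \xi \otimes \varphi(v)$ for all $\xi \in \bbC^n$, $v \in V$, as claimed. (If one wants a cyclic-vector phrasing instead of matrix units: pick a cyclic vector, express everything in terms of its $M_n(\bbC)$-orbit, and note that $\Psi$ is forced to act trivially on the $\bbC^n$ leg.)

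I do not expect a genuine obstacle here — the statement is elementary once one has the matrix units. The only point requiring a little care is the identification $\bbC^n \otimes V \cong \bigoplus_{m=1}^n e_m \otimes V$ and the fact that $a \in M_n(\bbC)$ acts on $\bbC^n \otimes V$ as $a \otimes \Id_V$, so that $E_{11}$ really is the projection used in Step 2; this is why finite-dimensionality of $\bbC^n$ (not of $V$) is what matters, and no topology on $V$ is needed. One should also remark that uniqueness of $\varphi$ is immediate from $\Psi$ being a well-defined map, so the reduction $\Psi \leftrightarrow \varphi$ is in fact a bijection between $M_n(\bbC)$-linear endomorphisms of $\bbC^n \otimes V$ and linear endomorphisms of $V$ — a remark worth keeping in mind for the later applications to Dirac operators and the operators $J$, $\gamma$ on finite spectral triples.
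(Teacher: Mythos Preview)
Your proof is correct and matches the paper's indication: the paper does not give a detailed argument but simply notes that the lemma ``results from the existence of cyclic vectors in $\bbC^n$ for the matrix multiplication,'' and your matrix-unit argument is precisely the explicit form of that cyclic-vector idea (with $e_1$ as the cyclic vector). Nothing to add.
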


\subsection{Finite Spectral Triples}
\label{subsec finite spectral triples}

A spectral triple $(\algA, \hs, D)$ is said to be finite if $\algA$ is a finite dimensional involutive $\bbC$-algebra and $\hs$ is a finite dimensional Hilbert space on which $\algA$ is represented. 
The faithful representation $\pi$ of $\algA$ on $\hs$ will be omitted when no confusion is possible. By the Wedderburn Theorem, the algebra is of the form $\algA = \toplus_{i=1}^{r} M_{n_i}(\bbC)$. In the following, we will write $\algA_i = M_{n_i} = M_{n_i}(\bbC)$ since no other matrix algebras will be considered. Let $\inj^i : \algA_i \to \algA$  be the canonical inclusion and $\proj_i : \algA \to \algA_i$ be the canonical projection.

Consider the set $\Lambda \defeq \{ \bn_1, \dots, \bn_r \}$ of irreducible representations (irreps) of $\algA$, where $\bn_i$ is a short notation that designates at the same time the integer $n_i$ defining the irrep (on the space $\bbC^{n_i}$) and the integer $i$ (the same that appears in $\algA = \toplus_{i=1}^{r} M_{n_i}$). $\Lambda$ is completely defined by $\algA$ and, reciprocally, $\algA = \toplus_{i=1}^{r} M_{n_i}$ can be recovered from $\Lambda$.  Denote by $\hs_{\bn_i} \defeq \bbC^{n_i}$ the irreducible representations (irreps) of the $\algA_i$'s, and so of $\algA$. 

The Hilbert space $\hs$ can be decomposed into orthogonal components $\hhs_{\bn_i} \defeq \inj^i(\algA_i) \hs$, so that $\hs = \toplus_{i=1}^{r} \hhs_{\bn_i}$. Define $\injhs^i : \hhs_{\bn_i} \to \hs$ and $\projhs_i : \hs \to \hhs_{\bn_i}$ the natural inclusions and (orthogonal) projections. Then there are integers $\mu_i$, the multiplicities of the irreps, such that $\hhs_{\bn_i} \simeq \hs_{\bn_i} \otimes \bbC^{\mu_i} = \bbC^{n_i} \otimes \bbC^{\mu_i}$. So, up to unitary equivalence, the Hilbert space $\hs$ can be decomposed as $\hs \simeq \toplus_{i=1}^{r} \bbC^{n_i} \otimes \bbC^{\mu_i}$ and we now suppose that a unitary map has been chosen such that $\hhs_{\bn_i} = \bbC^{n_i} \otimes \bbC^{\mu_i}$.\footnote{For sake of completeness, let us mention that the scalar product of this decomposition is the usual one:  $\langle \psi, \psi' \rangle_{\hs} = \tsum_{i=1}^{r} \langle \xi_i, \xi'_i \rangle_{\bbC^{n_i}} \langle \vm_i, \vm'_i \rangle_{\bbC^{\mu_i}}$ for any $\psi = \toplus_{i=1}^{r} \xi_i \otimes \vm_i$ (and the same for $\psi'$) where $\xi_i \in \bbC^{n_i}$ and $\vm_i \in \bbC^{\mu_i}$.} If one requires a faithful representation of $\algA$, then $\mu_i \geq 1$ for all $i$.

In the even case, one has:
\begin{lemma}
\label{lemma gamma non real}
$\gamma$ decomposes along a family of linear maps $\ell_{i} : \bbC^{\mu_{i}} \to  \bbC^{\mu_{i}}$ such that $\gamma(\xi_i \otimes \vm_i) = \xi_i \otimes \ell_i(\vm_i)$ for any $\xi_i \otimes \vm_i \in \bbC^{n_{i}} \otimes \bbC^{\mu_{i}}$. This family satisfies $\ell_{i}^\dagger = \ell_{i}$ and $\ell_{i}^2 = 1$.
\end{lemma}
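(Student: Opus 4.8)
The plan is to exploit the commutation $\gamma\,\pi(a)=\pi(a)\,\gamma$ of the evenness axiom twice over: first to force $\gamma$ to preserve each isotypic component of $\hs$, and then to apply the module-reduction Lemma~\ref{lemma technical module endo reduction} componentwise. First I would localize $\gamma$. For each $i$ set $p_i\defeq\pi\big(\inj^i(\bbbone_{\algA_i})\big)\in\calB(\hs)$; since $\inj^i(\bbbone_{\algA_i})$ is a central idempotent of $\algA$, the operator $p_i$ is the orthogonal projection of $\hs$ onto $\hhs_{\bn_i}$ (this is exactly the origin of the decomposition $\hs=\toplus_{i=1}^{r}\hhs_{\bn_i}$). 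Applying the evenness axiom to $a=\inj^i(\bbbone_{\algA_i})$ gives $\gamma\, p_i=p_i\,\gamma$, so $\gamma$ leaves every $\hhs_{\bn_i}$ invariant and decomposes as $\gamma=\toplus_{i=1}^{r}\gamma_i$ with $\gamma_i\defeq p_i\,\gamma\, p_i\in\calB(\hhs_{\bn_i})$.

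Next I would reduce each $\gamma_i$. Using the chosen identification $\hhs_{\bn_i}=\bbC^{n_i}\otimes\bbC^{\mu_i}$, the representation acts by $\pi_i(a_i)(\xi_i\otimes\vm_i)=a_i\xi_i\otimes\vm_i$ for $a_i\in\algA_i=M_{n_i}$. Restricting $\gamma\,\pi(a)=\pi(a)\,\gamma$ to the $i$-th block with $a=\inj^i(a_i)$ yields $\gamma_i(a_i\xi_i\otimes\vm_i)=a_i\,\gamma_i(\xi_i\otimes\vm_i)$ for all $a_i\in M_{n_i}$, $\xi_i\in\bbC^{n_i}$, $\vm_i\in\bbC^{\mu_i}$. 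Lemma~\ref{lemma technical module endo reduction}, applied with $n=n_i$, $V=\bbC^{\mu_i}$ and $\Psi=\gamma_i$, then produces a linear map $\ell_i:\bbC^{\mu_i}\to\bbC^{\mu_i}$ with $\gamma_i(\xi_i\otimes\vm_i)=\xi_i\otimes\ell_i(\vm_i)$, that is, $\gamma=\toplus_{i=1}^{r}\big(\Id_{\bbC^{n_i}}\otimes\ell_i\big)$, which is the claimed decomposition.

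Finally, the two algebraic identities on $\gamma$ descend to the $\ell_i$. With the tensor-product scalar product on $\hhs_{\bn_i}$ recalled in the footnote above, $\big(\Id_{\bbC^{n_i}}\otimes\ell_i\big)^\dagger=\Id_{\bbC^{n_i}}\otimes\ell_i^\dagger$ and $\big(\Id_{\bbC^{n_i}}\otimes\ell_i\big)^2=\Id_{\bbC^{n_i}}\otimes\ell_i^2$; hence $\gamma^\dagger=\gamma$ forces $\Id_{\bbC^{n_i}}\otimes\ell_i^\dagger=\Id_{\bbC^{n_i}}\otimes\ell_i$ and $\gamma^2=1$ forces $\Id_{\bbC^{n_i}}\otimes\ell_i^2=\Id_{\bbC^{n_i}}\otimes\Id_{\bbC^{\mu_i}}$. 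Evaluating both on $\xi_i\otimes\vm_i$ for a fixed nonzero $\xi_i$ and arbitrary $\vm_i$ gives $\ell_i^\dagger=\ell_i$ and $\ell_i^2=1$. I do not expect a genuine obstacle here: the whole argument is bookkeeping, and the only step that deserves a line of justification is the first one, namely that commutation of $\gamma$ with the representation already implies that $\gamma$ is block diagonal with respect to $\hs=\toplus_i\hhs_{\bn_i}$, so that the technical lemma may be invoked block by block.
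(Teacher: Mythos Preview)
Your proof is correct and follows exactly the approach the paper sets up: the paper states Lemma~\ref{lemma technical module endo reduction} immediately before this result precisely so that it can be applied block-by-block once $\gamma$ has been localized to each isotypic component via commutation with the central idempotents. The paper itself omits the proof (deferring it to \cite{Nieu22a}), but your argument is the natural and intended one.
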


Let us consider any orthonormal basis $\{ \vm_{i}^{p} \}_{1 \leq p \leq \mu_{i}}$ of $\bbC^{\mu_{i}}$. Then, for any $1 \leq i \leq r$, let $\Gamma^{(0)}_{\bn_i} \defeq \{ (i, p) \mid 1 \leq p \leq \mu_{i} \}$, and for any $v = (i,p) \in \Gamma^{(0)}_{\bn_i}$, define $\lambda : \Gamma^{(0)}_{\bn_i} \to \Lambda$ as $\lambda(v) \defeq \bn_i$. Notice that $\mu_{i} = \# \Gamma^{(0)}_{\bn_i}$. For any $v \in \Gamma^{(0)}_{\bn_i}$, we then define
\begin{align*}
\hs_{v} \defeq \Span \{ \xi_i \otimes \vm_{i}^{p} \mid \xi_i \in \bbC^{n_i} \} \simeq \hs_{\bn_i}
\end{align*}
In the even case, we require the basis $\{ \vm_{i}^{p} \}_{1 \leq p \leq \mu_{i}}$ to be eigenvectors of $\ell_{i}$ with eigenvalues $s_{i}^{p} = \pm 1$. Then $\gamma$ restricts to the multiplication by $s_{i}^{p}$ on $\hs_{v}$ with $v = (i,p)$. We define $s (v) = s_{i}^{p}$ for any $v$.

The map $\lambda$ is extended in an obvious way on the set $\Gamma^{(0)} \defeq \cup_{i=1}^{r} \Gamma^{(0)}_{\bn_i}$ and there is an orthogonal decomposition of $\hs$ into irreps $\hs = \toplus_{v \in \Gamma^{(0)}} \hs_{v}$. Let $e = (v_1, v_2) \in \Gamma^{(0)} \times \Gamma^{(0)}$, then the Dirac operator decomposes along maps $D_{e} : \hs_{v_1} \to \hs_{v_2}$. With $\Be \defeq (v_2, v_1)$, $D^\dagger = D$ is equivalent to $D_{\Be} = D_{e}^\dagger$. In the even case, $\gamma D = - D \gamma$ implies that $s(v_2) D_e = - s(v_1) D_e$, so that $D_{e}$ is non-zero only when $s(v_2) = - s(v_1)$.

\bigskip
The previous decomposition of the spectral triples $(\algA, \hs, D)$ or $(\algA, \hs, D, \gamma)$ can be summarized using a decorated graph $\Gamma$, a so-called Krajewski Diagram, together with $\Lambda$:
\begin{enumerate}
\item The set of vertex $\Gamma^{(0)}$ of the graph is equipped with a map $\lambda : \Gamma^{(0)} \to \Lambda$. By a slight abuse of notation, the map $\lambda$ will sometimes be used in the compact notation $\bbC^{\lambda(v)} = \bbC^{n_i}$. We will also use the map $i(v) \defeq i$ for $\lambda(v) = \bn_i$.

\item For any vertex $v \in \Gamma^{(0)}$, define $\hs_v \defeq \hs_{\lambda(v)} = \bbC^{\lambda(v)}$. The element $\lambda(v) \in \Lambda$ is a decoration of the vertex $v$.

\item For any $\bn_i \in \Lambda$, define $\Gamma^{(0)}_{\bn_i} \defeq \{ v \in \Gamma^{(0)} \mid \lambda(v) = \bn_i \} = \lambda^{-1}(\bn_i)$ and $\mu_{i} \defeq \# \Gamma^{(0)}_{\bn_i}$.

\item In the even case, a second decoration is the assignment of a grading map $s(v) = \pm 1$.

\item For every $e = (v_1, v_2) \in \Gamma^{(0)} \times \Gamma^{(0)}$, let $\Be \defeq (v_2, v_1)$.

\item The space $\Gamma^{(1)} \subset \Gamma^{(0)} \times \Gamma^{(0)}$ of edges of the graph are couples $e = (v_1, v_2)$ such that:
\begin{enumerate}
\item  there is a non-zero linear map $D_e : \hs_{v_1} \to \hs_{v_2}$ such that $D_{\Be} = D_e^\dagger : \hs_{v_2} \to \hs_{v_1}$. 
\item $s(v_2) = - s(v_1)$ in the even case;
\end{enumerate}
Then $D_e$ defines a decoration of $e$.
\end{enumerate}

\medskip
Given such a Krajewski Diagram, one can construct a spectral triple up to unitary equivalence.

\subsection{Finite Real Spectral Triples}

Let us now consider (odd) finite (resp. even) real spectral triples $(\algA, \hs, D, J)$ (resp. $(\algA, \hs, D, J, \gamma)$). The Hilbert space $\hs$ is then a bimodule over $\algA = \toplus_{i=1}^{r} M_{n_i}$, or equivalently a left $\algA^{e}$-module, with $\algA^{e} = \toplus_{{i}, {j} = 1}^{r} M_{n_{i}} \otimes M_{n_{j}}^\circ$. This implies that $\hs$ decomposes into orthogonal components $\hhs_{\bn_i \bn_j} \defeq \inj^{i}(\algA_{i}) \inj^{j}(\algA_{j})^\circ \hs$, so that $\hs = \toplus_{{i}, {j}=1}^{r} \hhs_{\bn_i \bn_j}$.
 
 Denote by ${\bbC^m}^\top$ (${}^\top$ for transpose) the $m$-dimensional $\bbC$-vector space of row vectors, which is a natural right $M_m$-module, and denote by $\bbC^{m \circ}$ its corresponding left $M_m^\circ$-module.\footnote{$\bbC^m \simeq \bbC^{m \circ}$ as column vectors by the formal map $\bbC^m \ni \xi \mapsto \xi^\circ \in \bbC^{m \circ}$ and, for any $a \in M_m$ and $\xi \in \bbC^m$, $a^\circ \xi^\circ \defeq (\xi^\top a)^\top$. \label{footnote circ action}} Let us recall the following result:
\begin{lemma}
For any integers $n, m \geq 1$, the irreducible left $M_n \otimes M_m^\circ$-representations are isomorphic to $\bbC^n \otimes \bbC^{m \circ}$.
\end{lemma}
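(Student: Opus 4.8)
The plan is to classify the irreducible left $M_n \otimes M_m^\circ$-modules by using the standard identification of $M_n \otimes M_m^\circ$ with a matrix algebra and then invoking the Wedderburn/Schur theory. First I would observe that the opposite algebra $M_m^\circ$ is itself isomorphic to $M_m$ via the transpose map $a \mapsto a^\top$, since $(ab)^\top = b^\top a^\top$. Hence $M_n \otimes M_m^\circ \simeq M_n \otimes M_m \simeq M_{nm}(\bbC)$, where the last isomorphism is the usual tensor product of matrix algebras (Kronecker product). Since $M_{nm}(\bbC)$ is simple, it has, up to isomorphism, a unique irreducible left module, namely $\bbC^{nm}$, with multiplicity equal to $nm$ in the regular representation. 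So the irreducible left $M_n \otimes M_m^\circ$-module is unique up to isomorphism and has dimension $nm$.

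Next I would exhibit $\bbC^n \otimes \bbC^{m\circ}$ as a concrete model of this irreducible module and check that the action is indeed irreducible. The left action is defined by $(a \otimes b^\circ)(\xi \otimes \eta^\circ) \defeq (a\xi) \otimes (b^\circ \eta^\circ)$, where, following footnote~\ref{footnote circ action}, $b^\circ \eta^\circ = (\eta^\top b)^\top$. One checks this is a well-defined algebra action: the only nontrivial point is that $b \mapsto (\eta \mapsto (\eta^\top b)^\top)$ is an \emph{anti}-homomorphism $M_m \to \End(\bbC^m)$, hence a \emph{homomorphism} $M_m^\circ \to \End(\bbC^m)$, which is exactly what is needed. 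The dimension of $\bbC^n \otimes \bbC^{m\circ}$ is $nm$, matching the dimension of the unique irrep, so it suffices to show it has no proper nonzero invariant subspace. This follows because $\bbC^n$ is irreducible under $M_n$ (Schur), $\bbC^{m\circ}$ is irreducible under $M_m^\circ$ (transpose of the irreducibility of $\bbC^m$ under $M_m$), and the tensor product of the unique irreps of two simple algebras is the unique irrep of the tensor product algebra — alternatively, one can argue directly using cyclic vectors: any nonzero $\sum_k \xi_k \otimes \eta_k^\circ$ can be brought by the action of rank-one matrices to a simple tensor $\xi \otimes \eta^\circ$ with $\xi, \eta \neq 0$, and then to any chosen $\xi' \otimes \eta'^\circ$, so every nonzero vector is cyclic.

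Finally, uniqueness up to isomorphism is immediate from the simplicity of $M_n \otimes M_m^\circ$ (every simple algebra over $\bbC$ has exactly one irreducible module up to isomorphism), which is what the statement asserts. The main obstacle — really the only subtle point — is getting the handedness conventions right: one must be careful that $M_m^\circ$ acts on $\bbC^{m\circ}$ on the \emph{left} (so that $\bbC^n \otimes \bbC^{m\circ}$ is genuinely a left $M_n \otimes M_m^\circ$-module and not a bimodule in disguise), and that the transpose really converts the naturally right $M_m$-module ${\bbC^m}^\top$ into a left $M_m^\circ$-module; footnote~\ref{footnote circ action} already records the precise formula, so this amounts to a careful but routine bookkeeping check. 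No analytic input is needed since everything is finite dimensional.
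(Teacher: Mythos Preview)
Your argument is correct and is the standard one for this well-known fact. Note that the paper does not actually supply a proof of this lemma: it is introduced with ``Let us recall the following result'' and then used without further justification, so there is no paper proof to compare against. Your route via the transpose isomorphism $M_m^\circ \simeq M_m$, the identification $M_n \otimes M_m \simeq M_{nm}(\bbC)$, and Wedderburn's uniqueness of the irreducible module for a simple $\bbC$-algebra is exactly the expected argument, and your care with the left/right conventions from footnote~\ref{footnote circ action} is appropriate.
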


Let $\mu_{i j}$ be the multiplicity of the irrep $\hs_{\bn_i \bn_j} \defeq \bbC^{n_{i}} \otimes \bbC^{n_{j} \circ}$ of $M_{n_{i}} \otimes M_{n_{j}}^\circ$ and so of $\algA^{e}$, in $\hs$. Then one has $\hhs_{\bn_i \bn_j} \simeq \hs_{\bn_i \bn_j} \otimes \bbC^{\mu_{i j}}  \simeq \bbC^{n_{i}} \otimes \bbC^{\mu_{i j}} \otimes \bbC^{n_{j} \circ}$, so that $\hs \simeq \toplus_{{i}, {j} = 1}^{r} \bbC^{n_{i}} \otimes \bbC^{\mu_{i j}} \otimes \bbC^{n_{j} \circ}$. In the following, we suppose that a unitary map has been chosen such that $\hhs_{\bn_i \bn_j} = \bbC^{n_{i}} \otimes \bbC^{\mu_{i j}} \otimes \bbC^{n_{j} \circ}$. \footnote{The factor $\bbC^{\mu_{i j}}$ has been positioned in the middle to put forward the bimodule structure. In the proof of Prop.~\ref{prop diagonalization spm for KO dims} it will be convenient to change this convention.}

Denote by $J_0$ the anti-unitary operator on $\bbC^n \otimes \bbC^\mu \otimes {\bbC^m}^\circ$ defined by $\xi \otimes \vm \otimes \eta^\circ \mapsto \Bxi \otimes \Bvm \otimes \Beta^\circ$ where $\Bxi$ is the entrywise complex conjugated vector (the same for $\Bvm$ and $\Beta^\circ$). Then $J_0$ extends naturally to $\hs$ as an anti-unitary operator which preserves each summand $\hhs_{\bn_i \bn_j}$ and one has $J_0^{-1} = J_0$.\footnote{Notice that $J_0$ depends on the canonical basis for the vector spaces $\bbC^n$, $\bbC^\mu$ and $\bbC^m$. But any fixed orthonormal basis could have been used.} We will use the natural notation $J_0(\psi) = \Bpsi$ for any $\psi \in \hs$. Define $K \defeq J J_0$, \textit{i.e.} $J = K J_0$. For any $a = \toplus_{i=1}^{r} a_i \in \algA$, define $a^\top = \toplus_{i=1}^{r} a_i^\top$ where $a_i^\top = J_0 a_i^\ast J_0$ is the transpose of $a_i \in M_{n_i}$. For any operator $A$ on $\hs$, define $\BA \defeq J_0 A J_0$ (if $A$ is written as a matrix, $\BA$ is the entrywise complex conjugate matrix, whence the notation). The following result gives an explicit description of $J$ that will be used in Sect.~\ref{sec one step in the sequence}:

\begin{proposition}
\label{prop J K L}
$K$ is a unitary operator on $\hs$ such that $K \BK = \BK K = \epsilon$. For any $1 \leq {i}, {j} \leq r$, $K(\hhs_{\bn_i \bn_j}) = \hhs_{\bn_j \bn_i}$, so that $\hhs_{\bn_i \bn_j}$ and $\hhs_{\bn_j \bn_i}$ have the same dimension, \textit{i.e.} they correspond to the same multiplicity $\mu_{i j} = \mu_{{j} {i}}$.

There is a linear map $L_{i j} : \bbC^{\mu_{i j}} \to \bbC^{\mu_{{j} {i}}}$ satisfying $L_{i j}^\dagger = L_{i j}^{-1}$ and $L_{{j} {i}} \BL_{i j} = \BL_{{j} {i}} L_{i j} = \epsilon$, such that, for any $\xi_{i} \otimes \vm_{i j} \otimes \eta_{j}^\circ \in \hhs_{\bn_i \bn_j}$, $K(\xi_{i} \otimes \vm_{i j} \otimes \eta_{j}^\circ) = \eta_{j} \otimes L_{i j}(\vm_{i j}) \otimes \xi_{i}^\circ$.

For any $\xi_{i} \otimes \vm_{i j} \otimes \eta_{j}^\circ \in \hs_{\bn_i \bn_j}$, one has $J(\xi_{i} \otimes \vm_{i j} \otimes \eta_{j}^\circ) = \Beta_{j} \otimes L_{i j}(\Bvm_{i j}) \otimes \Bxi_{i}^\circ$.
\end{proposition}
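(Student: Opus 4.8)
The plan is to reduce the whole statement to properties of the \emph{linear} unitary $K=JJ_0$ and its interplay with the representation. Throughout, write $\pi^\circ(a^\circ)\defeq J\pi(a)^\ast J^{-1}$ for the induced left action of $\algA^\circ$, and let $e_i\in\algA$ be the unit of the $i$-th block $M_{n_i}$, so that the orthogonal projection onto $\hhs_{\bn_i\bn_j}$ is $\pi(e_i)\,\pi^\circ(e_j^\circ)$. Only the anti-unitarity of $J$, the commutant property (through $\pi^\circ$) and $J^2=\epsilon$ will be needed; neither $D$ nor the first-order condition enters. First, $K$ is unitary as a composition of two anti-unitaries, and since $J_0^2=1$ and $J^2=\epsilon$ is a real central scalar one reads off directly $K\BK=(JJ_0)(J_0JJ_0J_0)=J^2=\epsilon$ and $\BK K=(J_0JJ_0J_0)(JJ_0)=J_0J^2J_0=\epsilon$.

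The core of the argument is the pair of intertwining identities
\begin{equation}
K\,\pi(a)\,K^{-1}\;=\;\pi^\circ\big((a^\top)^\circ\big),
\qquad
K\,\pi^\circ(b^\circ)\,K^{-1}\;=\;\pi(b^\top)
\qquad(a,b\in\algA).
\tag{$\star$}
\end{equation}
For the first, $K\pi(a)K^{-1}=J\,(J_0\pi(a)J_0)\,J^{-1}=J\,\pi(\bar a)\,J^{-1}$, where $\bar a$ is the entry-wise conjugate (which lies in $\algA$ since each block is a full matrix algebra), and then $J\pi(c)J^{-1}=\pi^\circ\big((c^\ast)^\circ\big)$ is immediate from the definition of $\pi^\circ$, so $K\pi(a)K^{-1}=\pi^\circ\big(((\bar a)^\ast)^\circ\big)=\pi^\circ\big((a^\top)^\circ\big)$. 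The second additionally uses $J^2=\epsilon$: $K\pi^\circ(b^\circ)K^{-1}=J\,(J_0\pi^\circ(b^\circ)J_0)\,J^{-1}=J\,\pi^\circ\big((\bar b)^\circ\big)\,J^{-1}=J^2\,\pi\big((\bar b)^\ast\big)\,J^{-2}=\pi(b^\top)$. Applying $(\star)$ to the $e_i$ (for which $e_i^\top=e_i$) shows that $K$ conjugates the projection $\pi(e_i)\pi^\circ(e_j^\circ)$ onto $\hhs_{\bn_i\bn_j}$ into $\pi(e_j)\pi^\circ(e_i^\circ)$, the projection onto $\hhs_{\bn_j\bn_i}$; as $K$ is invertible, $K(\hhs_{\bn_i\bn_j})=\hhs_{\bn_j\bn_i}$, and equating dimensions ($n_in_j\mu_{ij}=n_jn_i\mu_{ji}$) gives $\mu_{ij}=\mu_{ji}$.

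Knowing $\mu_{ij}=\mu_{ji}$, I would introduce the unitary "outer flip" $\Phi_{ij}\colon\hhs_{\bn_i\bn_j}\to\hhs_{\bn_j\bn_i}$, $\xi\otimes\vm\otimes\eta^\circ\mapsto\eta\otimes\vm\otimes\xi^\circ$. A short check (using the footnote description of the $M_m^\circ$-action on $\bbC^{m\circ}$) shows $\Phi_{ij}$ satisfies the very same relations $(\star)$ restricted to $\hhs_{\bn_i\bn_j}$, hence $\Phi_{ij}^{-1}\circ K$ is an endomorphism of $\hhs_{\bn_i\bn_j}=\bbC^{n_i}\otimes\bbC^{\mu_{ij}}\otimes\bbC^{n_j\circ}$ commuting with the $M_{n_i}$-action on the first factor and the $M_{n_j}^\circ$-action on the third. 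Two applications of Lemma~\ref{lemma technical module endo reduction} (the second after identifying $\bbC^{n_j\circ}$ with the standard $M_{n_j}$-module) then force $\Phi_{ij}^{-1}K=\Id_{n_i}\otimes L_{ij}\otimes\Id_{n_j}$ for a unique $L_{ij}\in\End(\bbC^{\mu_{ij}})$, which is exactly the asserted normal form $K(\xi_i\otimes\vm_{ij}\otimes\eta_j^\circ)=\eta_j\otimes L_{ij}(\vm_{ij})\otimes\xi_i^\circ$. Unitarity of $K$ and $\Phi_{ij}$ gives $L_{ij}^\dagger=L_{ij}^{-1}$; inserting the normal form into $K\BK=\BK K=\epsilon$ — noting that $\BK=J_0KJ_0$ has the same shape with $L_{ij}$ replaced by $\BL_{ij}$, since $J_0$ acts block-diagonally by conjugation — yields $L_{ji}\BL_{ij}=\BL_{ji}L_{ij}=\epsilon$. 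Finally $J=KJ_0$ together with $J_0(\xi_i\otimes\vm_{ij}\otimes\eta_j^\circ)=\Bxi_i\otimes\Bvm_{ij}\otimes\Beta_j^\circ$ immediately gives $J(\xi_i\otimes\vm_{ij}\otimes\eta_j^\circ)=\Beta_j\otimes L_{ij}(\Bvm_{ij})\otimes\Bxi_i^\circ$.

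I expect the main obstacle to be the second identity in $(\star)$: the computations are all short, but one must be scrupulous about the anti-linearity of $J$ and about the three distinct involutions on $\algA$ ($a\mapsto a^\ast$, $a\mapsto a^\top$, $a\mapsto\bar a$), and in particular notice that — unlike the first identity, which is purely algebraic — this one genuinely relies on $J^2=\epsilon$. Everything downstream is routine bookkeeping plus the Schur-type Lemma~\ref{lemma technical module endo reduction}.
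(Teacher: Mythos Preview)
Your argument is correct. The paper does not actually include a proof of this proposition --- it defers all missing proofs in Sect.~\ref{sec normal form finite real spectral triples} to \cite{Nieu22a} --- but your approach is the natural one and almost certainly coincides with the intended proof: extract the linear unitary $K=JJ_0$, show it intertwines the left and right actions via the transpose (your identities $(\star)$), conclude $K(\hhs_{\bn_i\bn_j})=\hhs_{\bn_j\bn_i}$, and then apply the Schur-type Lemma~\ref{lemma technical module endo reduction} to $\Phi_{ij}^{-1}K$ to isolate $L_{ij}$ on the multiplicity factor. One small point worth making explicit: when you write $J_0\,\pi^\circ(b^\circ)\,J_0=\pi^\circ((\bar b)^\circ)$, this is legitimate because the decomposition $\hhs_{\bn_i\bn_j}=\bbC^{n_i}\otimes\bbC^{\mu_{ij}}\otimes\bbC^{n_j\circ}$ has already been fixed so that the $\algA^e$-representation is in standard form on both outer factors, and $J_0$ is defined relative to that basis --- so the identity is a direct entrywise computation and does not circularly rely on knowing $J$.
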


In the even case, the following result gives an explicit description of $\gamma$:
\begin{proposition}
\label{prop gamma to ell}
In the even case, there is a family of linear maps $\ell_{i j} : \bbC^{\mu_{i j}} \to \bbC^{\mu_{i j}}$ such that $\gamma(\xi_{i} \otimes \vm_{i j} \otimes \eta_{j}^\circ) = \xi_{i} \otimes \ell_{i j}(\vm_{i j}) \otimes \eta_{j}^\circ$ for any $\xi_{i} \otimes \vm_{i j} \otimes \eta_{j}^\circ \in \hhs_{\bn_i \bn_j}$. This family satisfies $\ell_{i j}^\dagger = \ell_{i j}$ and $\ell_{i j}^2 = 1$.
\end{proposition}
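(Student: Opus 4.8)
The plan is to imitate the proof of Lemma~\ref{lemma gamma non real}, now working on the bimodule isotypic components and invoking the reduction lemma twice. The starting observation is that, in the even real case, $\gamma$ commutes with the left representation of $\algA^e = \algA \otimes \algA^\circ$ on $\hs$ (noted just after the definition of even real spectral triples). In particular $\gamma$ commutes with each $\inj^i(\algA_i)$ and with each $\inj^j(\algA_j)^\circ$, hence it preserves every summand $\hhs_{\bn_i\bn_j} = \inj^i(\algA_i)\inj^j(\algA_j)^\circ\hs$ of the decomposition $\hs = \toplus_{i,j} \hhs_{\bn_i\bn_j}$. So it is enough to analyse the restriction $\gamma_{ij} \defeq \gamma|_{\hhs_{\bn_i\bn_j}}$, an endomorphism of $\hhs_{\bn_i\bn_j} = \bbC^{n_i}\otimes\bbC^{\mu_{ij}}\otimes\bbC^{n_j\circ}$ commuting with both the left $M_{n_i}$-action and the left $M_{n_j}^\circ$-action.

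Next, I would apply Lemma~\ref{lemma technical module endo reduction} to the left $M_{n_i}$-action on the first tensor factor: with $V = \bbC^{\mu_{ij}}\otimes\bbC^{n_j\circ}$ one has $\gamma_{ij}\big((a\,\xi_i)\otimes w\big) = a\,\gamma_{ij}(\xi_i\otimes w)$ for all $a\in M_{n_i}$, so the lemma produces a linear map $\Psi_{ij} : V \to V$ with $\gamma_{ij} = \Id_{\bbC^{n_i}}\otimes\Psi_{ij}$. It remains to reduce the $\bbC^{n_j\circ}$ factor. Under the linear identification $\bbC^{n_j\circ}\simeq\bbC^{n_j}$ sending $\xi^\circ$ to the corresponding column vector, the left $M_{n_j}^\circ$-action $a^\circ\xi^\circ = (\xi^\top a)^\top$ becomes $\xi\mapsto a^\top\xi$; that is, $\bbC^{n_j\circ}$ is the standard $M_{n_j}$-module precomposed with the transpose anti-automorphism. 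After flipping the tensor factors so that $\bbC^{n_j\circ}$ occupies the first slot, $\Psi_{ij}$ intertwines this $M_{n_j}$-action, and since the image of the transpose map is again all of $M_{n_j}$ acting irreducibly on $\bbC^{n_j}$, Lemma~\ref{lemma technical module endo reduction} applies a second time and yields $\ell_{ij} : \bbC^{\mu_{ij}}\to\bbC^{\mu_{ij}}$ with $\Psi_{ij} = \ell_{ij}\otimes\Id_{\bbC^{n_j\circ}}$ (after undoing the flip). Composing the two reductions gives $\gamma(\xi_i\otimes\vm_{ij}\otimes\eta_j^\circ) = \xi_i\otimes\ell_{ij}(\vm_{ij})\otimes\eta_j^\circ$.

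Finally, the algebraic properties of $\ell_{ij}$ are inherited from those of $\gamma$. Since $\gamma^2 = 1$ acts as $\xi_i\otimes\ell_{ij}^2(\vm_{ij})\otimes\eta_j^\circ$ on all of $\hhs_{\bn_i\bn_j}$, we get $\ell_{ij}^2 = 1$. Since $\gamma^\dagger = \gamma$ and the scalar product on $\hhs_{\bn_i\bn_j}$ is the tensor product of the standard Hermitian products on $\bbC^{n_i}$, $\bbC^{\mu_{ij}}$ and $\bbC^{n_j\circ}$, the identity $\langle\gamma\psi,\psi'\rangle = \langle\psi,\gamma\psi'\rangle$ collapses to $\langle\ell_{ij}\vm,\vm'\rangle = \langle\vm,\ell_{ij}\vm'\rangle$ on $\bbC^{\mu_{ij}}$, i.e. $\ell_{ij}^\dagger = \ell_{ij}$. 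The only delicate point is the second reduction step: one must check that passing to $\algA^\circ$ and to transposed matrices leaves intact the irreducibility hypothesis behind Lemma~\ref{lemma technical module endo reduction}, which it does because $a\mapsto a^\top$ is a bijective anti-homomorphism of $M_{n_j}$, so its image is the whole matrix algebra and still acts irreducibly on $\bbC^{n_j}$. Everything else is a routine transcription of the non-real argument in Lemma~\ref{lemma gamma non real}.
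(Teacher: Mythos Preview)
Your proof is correct and follows precisely the approach the paper sets up: the paper does not write out a proof for this proposition (it defers to \cite{Nieu22a}), but your argument is the natural bimodule analogue of Lemma~\ref{lemma gamma non real}, applying Lemma~\ref{lemma technical module endo reduction} twice --- once for the left $M_{n_i}$-action and once for the $M_{n_j}^\circ$-action --- exactly as the paper's machinery anticipates. Your handling of the second reduction via the transpose anti-automorphism is the right way to bring the $\algA^\circ$-action into the form required by the lemma.
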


Let us now describe, in the following two propositions, the key constructions which lead to the classification of finite real spectral triples. The content of these two propositions will be useful in Sect.~\ref{sec AF algebras}.

\begin{proposition}
\label{prop basis odd case}
Consider the odd case situation. 

For $1 \leq i \neq j  \leq r$, there is an orthonormal basis $\{ \vm_{i j}^{p} \}_{1 \leq p \leq \mu_{i j}}$ of $\bbC^{\mu_{i j}}$ such that $\vm_{j i}^{p} = L_{i j}(\Bvm_{i j}^{p})$ and $\vm_{i j}^{p} = \epsilon\, L_{j i}(\Bvm_{j i}^{p})$ for any $i<j$ and any $1 \leq p \leq \mu_{j i} = \mu_{i j}$. 

For $i = j$ and $\epsilon = 1$ ($KO$-dimensions $1$ and $7$), there is an orthonormal basis $\{ \vm_{ii}^{p} \}_{1 \leq p \leq \mu_{ii}}$ of $\bbC^{\mu_{ii}}$ such that $\vm_{ii}^{p} = L_{ii}(\Bvm_{ii}^{p})$.

For $i = j$ and $\epsilon = -1$ ($KO$-dimensions $3$ and $5$), $\mu_{ii}$ is even and there is an orthonormal basis $\{ \vm_{ii}^{p} \}_{1 \leq p \leq \mu_{ii}}$ of $\bbC^{\mu_{ii}}$ such that $\vm_{ii}^{2a} = L_{ii}(\Bvm_{ii}^{2a-1})$ and $\vm_{ii}^{2a-1} = \epsilon\, L_{ii}(\Bvm_{ii}^{2a})$ for any $a = 1, \dots, \mu_{ii}/2$.
\end{proposition}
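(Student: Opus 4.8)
The plan is to repackage the action of $J$ on each multiplicity space into a single anti-unitary operator and then invoke the standard dichotomy between real and quaternionic structures. For each pair $(i,j)$, set $C_{ij}\defeq L_{ij}\circ J_0:\bbC^{\mu_{ij}}\to\bbC^{\mu_{ji}}$, so that $C_{ij}(\vm)=L_{ij}(\Bvm)$. Since $L_{ij}$ is unitary (Prop.~\ref{prop J K L}) and $J_0$ is anti-unitary, $C_{ij}$ is anti-unitary; and the identities $L_{ji}\BL_{ij}=\BL_{ji}L_{ij}=\epsilon$ of Prop.~\ref{prop J K L} become
\begin{equation*}
C_{ji}\circ C_{ij}=\epsilon\,\Id_{\bbC^{\mu_{ij}}},\qquad C_{ij}\circ C_{ji}=\epsilon\,\Id_{\bbC^{\mu_{ji}}}.
\end{equation*}
In particular, when $i=j$ the operator $C_{ii}$ is an anti-unitary operator on $\bbC^{\mu_{ii}}$ with $C_{ii}^2=\epsilon\,\Id$, that is, a \emph{real structure} when $\epsilon=1$ and a \emph{quaternionic structure} when $\epsilon=-1$. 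The three cases of the statement correspond exactly to $i\neq j$, to $(i=j,\ \epsilon=1)$, and to $(i=j,\ \epsilon=-1)$ for this data.

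For $i\neq j$, using $\mu_{ij}=\mu_{ji}$ (Prop.~\ref{prop J K L}), pick any orthonormal basis $\{\vm_{ij}^{p}\}_{1\le p\le\mu_{ij}}$ of $\bbC^{\mu_{ij}}$ and \emph{define} $\vm_{ji}^{p}\defeq C_{ij}(\vm_{ij}^{p})=L_{ij}(\Bvm_{ij}^{p})$. Anti-unitarity of $C_{ij}$ makes $\{\vm_{ji}^{p}\}$ an orthonormal basis of $\bbC^{\mu_{ji}}$, which is the first desired relation; the second is the one-line computation $L_{ji}(\Bvm_{ji}^{p})=C_{ji}(\vm_{ji}^{p})=(C_{ji}\circ C_{ij})(\vm_{ij}^{p})=\epsilon\,\vm_{ij}^{p}$, so $\vm_{ij}^{p}=\epsilon\,L_{ji}(\Bvm_{ji}^{p})$ since $\epsilon^2=1$.

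For $i=j$ and $\epsilon=1$: the real subspace $V^{+}\defeq\{\vm\in\bbC^{\mu_{ii}}\mid C_{ii}(\vm)=\vm\}$ carries a real-valued restriction of the Hermitian product, because for $\vm,\vm'\in V^{+}$ one has $\langle\vm,\vm'\rangle=\langle C_{ii}\vm,C_{ii}\vm'\rangle=\langle\vm',\vm\rangle$; moreover $\bbC^{\mu_{ii}}=V^{+}\oplus iV^{+}$, so $\dim_{\bbR}V^{+}=\mu_{ii}$. Applying Gram--Schmidt inside $V^{+}$ yields an orthonormal basis $\{\vm_{ii}^{p}\}$ of $V^{+}$, hence of $\bbC^{\mu_{ii}}$, with $\vm_{ii}^{p}=C_{ii}(\vm_{ii}^{p})=L_{ii}(\Bvm_{ii}^{p})$. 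For $i=j$ and $\epsilon=-1$: for any unit $e$, anti-unitarity and $C_{ii}^2=-\Id$ give $\langle e,C_{ii}e\rangle=\langle C_{ii}(C_{ii}e),C_{ii}e\rangle=\langle -e,C_{ii}e\rangle=-\langle e,C_{ii}e\rangle$, so $e\perp C_{ii}e$; moreover the orthogonal complement of $\Span_{\bbC}\{e,C_{ii}e\}$ is $C_{ii}$-invariant (same two ingredients). Iterating over a $C_{ii}$-invariant orthogonal decomposition shows that $\mu_{ii}$ is even and produces an orthonormal basis $e_1,C_{ii}e_1,\dots,e_{\mu_{ii}/2},C_{ii}e_{\mu_{ii}/2}$; setting $\vm_{ii}^{2a-1}\defeq e_a$ and $\vm_{ii}^{2a}\defeq C_{ii}(e_a)=L_{ii}(\Bvm_{ii}^{2a-1})$ gives the first relation, while $L_{ii}(\Bvm_{ii}^{2a})=C_{ii}^{2}(e_a)=-e_a=\epsilon\,\vm_{ii}^{2a-1}$ gives the second.

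The only genuine care needed is the bookkeeping of domains and codomains of the $C_{ij}$ (they swap $\bbC^{\mu_{ij}}$ and $\bbC^{\mu_{ji}}$ for $i\neq j$, and are endomorphisms only for $i=j$), and the verification that orthogonal complements of $C_{ii}$-stable planes stay $C_{ii}$-stable in the quaternionic case — the step responsible for the parity statement on $\mu_{ii}$. Everything else is formal manipulation of the relations recorded in Prop.~\ref{prop J K L}, and indeed cases (b) and (c) could be phrased uniformly by saying that $C_{ii}$ turns $\bbC^{\mu_{ii}}$ into a module over $\bbR$ or $\bbH$ respectively, though the hands-on argument above is the shortest route.
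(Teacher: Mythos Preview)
Your proof is correct and follows essentially the same route that the paper sketches (with full details deferred to \cite{Nieu22a} and \cite{Suij15a}): for $i\neq j$ one picks an arbitrary orthonormal basis on one side and transports it via $L_{ij}\circ J_0$, and for $i=j$ one builds the basis iteratively from the properties of $L_{ii}$. Your repackaging through the anti-unitary $C_{ij}\defeq L_{ij}J_0$ and the explicit appeal to the real/quaternionic dichotomy is a clean way to organise the argument, and it makes transparent exactly where the parity constraint on $\mu_{ii}$ for $\epsilon=-1$ comes from.
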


\begin{proposition}
\label{prop basis even case}
Consider the even case situation. 

For $1 \leq i \neq j \leq r$, there is an orthonormal basis $\{ \vm_{i j}^{p} \}_{1 \leq p \leq \mu_{i j}}$ of $\bbC^{\mu_{i j}}$ of eigenvectors of $\ell_{i j}$ with eigenvalues $s_{i j}^{p} = \pm 1$ such that $\vm_{j i}^{p} = L_{i j}(\Bvm_{i j}^{p})$ and $\vm_{i j}^{p} = \epsilon\, L_{j i}(\Bvm_{j i}^{p})$ for any $i<j$, and $s_{j i}^{p} = \epsilon'' s_{i j}^{p}$.

For $i = j$, $\epsilon = 1$, and $\epsilon'' = 1$ ($KO$-dimension $0$), there is an orthonormal basis $\{ \vm_{ii}^{p} \}_{1 \leq p \leq \mu_{ii}}$ of $\bbC^{\mu_{ii}}$ of eigenvectors of $\ell_{ii}$ with eigenvalues $s_{i}^{p} = \pm 1$ such that $\vm_{ii}^{p} = L_{ii}(\Bvm_{ii}^{p})$.

For $i = j$ and $\epsilon = -1$ ($KO$-dimensions $2$ and $4$), or $\epsilon = 1$ and $\epsilon'' = -1$ ($KO$-dimension $6$), $\mu_{ii}$ is even and there is an orthonormal basis $\{ \vm_{ii}^{p} \}_{1 \leq p \leq \mu_{ii}}$ of $\bbC^{\mu_{ii}}$ of eigenvectors of $\ell_{ii}$ with eigenvalues $s_{i}^{p} = \pm 1$ such that $\vm_{ii}^{2a} = L_{ii}(\Bvm_{ii}^{2a-1})$, $\vm_{ii}^{2a-1} = \epsilon\, L_{ii}(\Bvm_{ii}^{2a})$, and $s_{i}^{2a} = \epsilon'' s_{i}^{2a-1}$ for any $a = 1, \dots, \mu_{ii}/2$.
In $KO$-dimensions $2$ and $6$, one can choose the basis such that $s_{i}^{2a} = +1$ and $s_{i}^{2a-1} = -1$.\end{proposition}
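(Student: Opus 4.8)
The plan is to follow the strategy behind Proposition~\ref{prop basis odd case}, but to choose the bases of the multiplicity spaces $\bbC^{\mu_{ij}}$ so that they simultaneously diagonalise the grading maps $\ell_{ij}$ of Proposition~\ref{prop gamma to ell}. First I would do the bookkeeping that records how the relation $J\gamma = \epsilon''\gamma J$ (which defines $\epsilon''$) couples the $L_{ij}$ of Proposition~\ref{prop J K L} to the $\ell_{ij}$. Writing $J = KJ_0$ and multiplying this relation on the right by $J_0$ (using $J_0^2 = 1$) gives $K\overline{\gamma} = \epsilon''\gamma K$ with $\overline{\gamma} = J_0\gamma J_0$; evaluating both sides on the summand $\hhs_{\bn_i\bn_j}$, where $\overline{\gamma}$ acts on the multiplicity factor as the entrywise conjugate $\overline{\ell_{ij}}$, $K$ acts there as $L_{ij}$ and sends the summand onto $\hhs_{\bn_j\bn_i}$, and $\gamma$ acts on $\hhs_{\bn_j\bn_i}$ as $\ell_{ji}$, yields $L_{ij}\overline{\ell_{ij}} = \epsilon''\ell_{ji}L_{ij}$, equivalently $\ell_{ji}L_{ij} = \epsilon''L_{ij}\overline{\ell_{ij}}$. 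This is used together with $L_{ij}^\dagger = L_{ij}^{-1}$ and $L_{ji}\BL_{ij} = \BL_{ji}L_{ij} = \epsilon$ from Proposition~\ref{prop J K L}.

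For $i\neq j$ the argument is short. For $i<j$, pick an orthonormal eigenbasis $\{\vm_{ij}^p\}_{1\le p\le\mu_{ij}}$ of the self-adjoint involution $\ell_{ij}$, with eigenvalues $s_{ij}^p = \pm 1$, and set $\vm_{ji}^p \defeq L_{ij}(\Bvm_{ij}^p)$. Since $\vm\mapsto L_{ij}(\overline{\vm})$ is anti-unitary and $\mu_{ij} = \mu_{ji}$, the vectors $\vm_{ji}^p$ form an orthonormal basis of $\bbC^{\mu_{ji}}$; the identity $\ell_{ji}L_{ij} = \epsilon''L_{ij}\overline{\ell_{ij}}$ shows they are eigenvectors of $\ell_{ji}$ with $s_{ji}^p = \epsilon''s_{ij}^p$, and $L_{ji}\BL_{ij} = \epsilon$ gives the remaining relation $\vm_{ij}^p = \epsilon\,L_{ji}(\Bvm_{ji}^p)$.

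For $i = j$ the real-structure content enters through the anti-unitary map $C_i\colon\bbC^{\mu_{ii}}\to\bbC^{\mu_{ii}}$, $C_i\vm \defeq L_{ii}(\overline{\vm})$, for which the relations above give $C_i^2 = \epsilon$ and $C_i\ell_{ii} = \epsilon''\ell_{ii}C_i$. If $\epsilon'' = +1$, then $C_i$ commutes with $\ell_{ii}$ and restricts to each of its two eigenspaces as an anti-unitary operator with square $\epsilon$; invoking the standard normal forms --- an orthonormal basis of $C_i$-fixed vectors when $\epsilon = 1$ ($KO$-dimension $0$), an orthonormal basis of the form $\{w_a, C_iw_a\}$, which forces that eigenspace (hence $\mu_{ii}$) to be even, when $\epsilon = -1$ ($KO$-dimension $4$) --- and listing these bases consecutively over the two eigenspaces produces the claimed basis, with $s_i^{2a} = s_i^{2a-1}$ in dimension $4$. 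If $\epsilon'' = -1$ ($KO$-dimensions $2$ and $6$), then $C_i$ anticommutes with $\ell_{ii}$ and so interchanges its $+1$ and $-1$ eigenspaces; these then have equal dimension, $\mu_{ii}$ is even, and taking an orthonormal basis $\{w_a\}$ of one of them and putting $\vm_{ii}^{2a-1} \defeq w_a$, $\vm_{ii}^{2a} \defeq C_iw_a$ gives an orthonormal basis with $\vm_{ii}^{2a} = L_{ii}(\Bvm_{ii}^{2a-1})$, $\vm_{ii}^{2a-1} = \epsilon\,L_{ii}(\Bvm_{ii}^{2a})$ (using $C_i^2 = \epsilon$) and $s_i^{2a} = \epsilon''s_i^{2a-1}$; starting instead from the $(-1)$-eigenspace of $\ell_{ii}$ yields the final assertion, that one may arrange $s_i^{2a} = +1$ and $s_i^{2a-1} = -1$.

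The hard part is really only the opening step --- fixing the sign $\epsilon''$ in $L_{ij}\overline{\ell_{ij}} = \epsilon''\ell_{ji}L_{ij}$ and the square $C_i^2 = \epsilon$ --- together with matching each $KO$-class to the correct elementary normal form (a real structure with an orthonormal basis of fixed vectors, a quaternionic structure with symplectic pairs, or a pair of $C_i$-intertwined $\ell_{ii}$-eigenspaces). Once these are in place, everything else is the routine verification that the transported vectors are orthonormal and obey the stated identities, just as in the odd case.
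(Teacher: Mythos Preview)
Your proposal is correct and follows precisely the approach the paper sketches (deferring full details to \cite{Nieu22a} and \cite{Suij15a}): for $i<j$ one picks an eigenbasis of $\ell_{ij}$ and transports it by $\vm \mapsto L_{ij}(\Bvm)$, while for $i=j$ one runs the ``iterative construction'' the paper alludes to, which is exactly your use of the anti-unitary $C_i = L_{ii}\circ\overline{(\,\cdot\,)}$ with $C_i^2=\epsilon$ and $C_i\ell_{ii}=\epsilon''\ell_{ii}C_i$ together with the standard real/quaternionic normal forms on each $\ell_{ii}$-eigenspace. Your opening derivation of $L_{ij}\overline{\ell_{ij}} = \epsilon''\ell_{ji}L_{ij}$ from $J\gamma=\epsilon''\gamma J$ is the key bookkeeping step, and the rest is the routine verification the paper leaves implicit.
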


The proofs of Prop.~\ref{prop basis odd case} and \ref{prop basis even case} can be found with the present notations in \cite{Nieu22a}. They are adapted from \cite{Suij15a}. Let us just mention some points that will be used later (see proof of Prop.~\ref{prop diagonalization spm for KO dims} in Sect.~\ref{sec AF algebras}). For $1 \leq i < j  \leq r$, the orthonormal basis $\{ \vm_{i j}^{p} \}_{1 \leq p \leq \mu_{i j}}$ of $\bbC^{\mu_{i j}}$ can be chosen with few constraints, and we construct from it the basis $\{ \vm_{j i}^{p} \defeq L_{i j}(\Bvm_{i j}^{p}) \}_{1 \leq p \leq \mu_{j i}}$ of $\bbC^{\mu_{j i}}$. In that construction, some free choices can be made for later purpose. For $i = j$, the proof relies on an iterative construction of the basis $\{ \vm_{i i}^{p} \}_{1 \leq p \leq \mu_{i i}}$ of $\bbC^{\mu_{i i}}$ using properties of the map $L_{ii}$ (and $\ell_{i i}$ in the even case). Here again, some free choices are allowed at all steps.

\smallskip
We are now in position to use these results to decompose in a suitable way the Hilbert space $\hs$ into irreps. We already know that $\hs = \toplus_{{i}, {j}=1}^{r} \hhs_{\bn_i \bn_j}$ and that $\hhs_{\bn_i \bn_j} = \bbC^{n_{i}} \otimes \bbC^{\mu_{i j}} \otimes \bbC^{n_{j} \circ}$. Using the orthonormal basis $\{ \vm_{i j}^{p} \}_{1 \leq p \leq \mu_{i j}}$ of $\bbC^{\mu_{i j}}$ given in Prop.~\ref{prop basis odd case} or Prop.~\ref{prop basis even case}, let us define 
\begin{align}
\label{hsv from vmijp basis}
\hs_v \defeq \Span \{ \xi_{i} \otimes \vm_{i j}^{p} \otimes \eta_{j}^\circ \mid \xi_{i} \in \bbC^{n_{i}} \text{ and }  \eta_{j}^\circ \in \bbC^{n_{j} \circ} \} \simeq \hs_{\bn_i \bn_j}
\end{align}
 We have then the orthogonal decomposition of $\hs$ along irreps: $\hs = \toplus_{v \in \Gamma^{(0)}} \hs_v$.

Consider first the odd case. For any $1 \leq i, j \leq r$, define the set $\Gamma^{(0)}_{\bn_i \bn_j} \defeq \{ (i, p, j) \mid 1 \leq p \leq \mu_{i j} \}$, and for any $v = (i, p, j) \in \Gamma^{(0)}_{\bn_i \bn_j}$, define $\lambda, \rho : \Gamma^{(0)}_{\bn_i \bn_j} \to \Lambda$ as $\lambda(v) \defeq \bn_i$ and $\rho(v) \defeq \bn_j$. Notice that $\mu_{i j} = \# \Gamma^{(0)}_{\bn_i \bn_j}$. Define $\Jim : \Gamma^{(0)}_{\bn_i \bn_j} \to \Gamma^{(0)}_{\bn_j \bn_i}$ as $\Jim(v) \defeq (j, p, i)$ for any $v = (i, p, j)$. These maps induce an involution on $\Gamma^{(0)} \defeq \cup_{i,j=1}^{r} \Gamma^{(0)}_{\bn_i \bn_j}$ with the property $\lambda \circ \Jim = \rho$ (and so $\rho \circ \Jim = \lambda$), where $\lambda, \rho : \Gamma^{(0)} \to \Lambda$ are defined in an obvious way. This involution encodes some properties of the family of maps $L_{i j}$, and so of the map $J : \hs_v \to \hs_{\Jim(v)}$ for any $v \in \Gamma^{(0)}$. 

In the even case, the basis in Prop.~\ref{prop basis even case} are composed of eigenvectors of $\gamma$, and by construction, $\gamma$ is the multiplication by $\pm 1$ on every $\hs_v$. We define a grading decoration of $v$ as $s(v)  = \pm 1$, which is the eigenvalue of the associated eigenvector. Notice then that $s \circ \Jim = \epsilon'' s$ as can be checked in Prop.~\ref{prop basis even case}. The grading decoration $s$ fully determines $\gamma$.

The Dirac operator decomposes along the orthogonal subspaces $\hs_v$ as $D_e : \hs_{v_1} \to \hs_{v_2}$ where we define $e \defeq (v_1, v_2) \in  \Gamma^{(0)} \times  \Gamma^{(0)}$. With $\Be \defeq (v_2, v_1)$, $D^\dagger = D$ is equivalent to $D_{\Be} = D_e^\dagger$. Moreover, the first-order condition imposes some restrictions on the $e = (v_1, v_2)$ such that $D_e \neq 0$ (see below). Let $\Jim(e) \defeq (\Jim(v_1), \Jim(v_2))$. Then the relation $J D = \epsilon' D J$ implies that $D_{e}$  and $D_{\Jim(e)}$ are related by $J$ and $\epsilon'$ (an explicit expression is given below). In particular, they are both zero or non-zero at the same time. In the even case, the relation $\gamma D = - D \gamma$ implies that $D_{e}$ is non-zero only when $s(v_2) = - s(v_1)$.

\bigskip
Let us abstract the construction using a decorated graph $\Gamma$, together with $\Lambda$ and the $KO$-dimension $d$. 
\begin{enumerate}
\item The set of vertex $\Gamma^{(0)}$ of the graph is equipped with two maps $\lambda, \rho : \Gamma^{(0)} \to \Lambda \times \Lambda$, that we write as a single map $\pi_{\lambda\rho} \defeq \lambda \times \rho$, and define $i(v) \defeq i$ and $j(v) \defeq j$ for $\pi_{\lambda\rho}(v) = (\bn_i, \bn_j)$.

\item There is an involution $\Jim : \Gamma^{(0)} \to \Gamma^{(0)}$ such that $\lambda \circ \Jim = \rho$ and such that $\Jim(v) = v$ when $\lambda(v) = \rho(v)$ in $KO$-dimensions $0$, $1$, and $7$. 

\item For any vertex $v \in \Gamma^{(0)}$ with $\pi_{\lambda\rho}(v) = (\bn_i, \bn_j)$, define $\hs_v \defeq \hs_{\lambda(v) \rho(v)} = \bbC^{\lambda(v)} \otimes \bbC^{\rho(v) \circ} = \bbC^{n_{i}} \otimes \bbC^{n_{j} \circ}$. The element $(\bn_i, \bn_j) \in \Lambda \times \Lambda$ is a decoration of the vertex $v$. 

\item Define $\Gamma^{(0)}_{\bn_i \bn_j} \defeq \{ v \in \Gamma^{(0)} \mid \pi_{\lambda\rho}(v) = (\bn_i, \bn_j) \} = \pi_{\lambda\rho}^{-1}(\bn_i, \bn_j)$ and $\mu_{ij} \defeq \# \Gamma^{(0)}_{\bn_i \bn_j}$.

\item  Define $\hJim_{v} : \hs_v \to \hs_{\Jim(v)}$ as $\hJim_{v}(\xi^{(v)} \otimes \eta^{(v)\circ}) = \eta^{(v)} \otimes \xi^{(v)\circ}$ for any $\xi^{(v)} \in \bbC^{\lambda(v)}$ and $\eta^{(v)\circ} \in \bbC^{\rho(v)\circ}$. Notice that $\hJim_{\Jim(v)} \circ \hJim_{v} = \Id_{\hs_v}$.

\item If the $KO$-dimension is even, a second decoration of each vertex is the assignment of a grading map $s(v) = \pm 1$ such that $s \circ \Jim = \epsilon'' s$.

\item If the $KO$-dimension is $2$, $3$, $4$, $5$, or $6$, then $\mu_{ii}$ is even and another decoration of each vertex $v \in\Gamma^{(0)}_{\bn_i \bn_i}$  is the parity $\chi(v) = 0, 1$ such that $\chi(\Jim(v)) = 1 - \chi(v)$, so that half of the vertices in $\Gamma^{(0)}_{\bn_i \bn_i}$ are decorated by the value $0$ (resp. $1$).

\item For any $v \in \Gamma^{(0)}$, define
\begin{align}
\label{eq epsilon(v, d)}
\epsilon(v, d) \defeq 
\begin{cases}
1 & \text{for $i(v) < j(v)$},\\
\epsilon & \text{for $i(v) > j(v)$},\\
1 & \text{for $i(v) = j(v)$ and $d = 0, 1, 7$},\\
\epsilon^{\chi(v)} & \text{for $i(v) = j(v)$ and $d = 2, 3, 4, 5, 6$.}
\end{cases}
\end{align}
One can check that $\epsilon(v, d) \epsilon(\Jim(v), d) = \epsilon$ for any $v \in \Gamma^{(0)}$.

\item For every $e = (v_1, v_2) \in \Gamma^{(0)} \times \Gamma^{(0)}$, let $\Be \defeq (v_2, v_1)$ and $\Jim(e) \defeq (\Jim(v_1), \Jim(v_2))$.

\item The space $\Gamma^{(1)} \subset \Gamma^{(0)} \times \Gamma^{(0)}$ of edges of the graph are couples $e = (v_1, v_2)$ such that:
\begin{enumerate}
\item $\lambda(v_1) = \lambda(v_2)$ or $\rho(v_1) = \rho(v_2)$ (or both);
\item $s(v_2) = - s(v_1)$ in the even case;
\item  there is a non-zero linear map $D_e : \hs_{v_1} \to \hs_{v_2}$ such that:
\begin{enumerate}
\item $D_{\Be} = D_e^\dagger : \hs_{v_2} \to \hs_{v_1}$;

\item $D_{\Jim(e)} = \epsilon' \epsilon(v_1, d) \epsilon(v_2, d)\, \hJim_{v_2} J_0 D_e J_0 \hJim_{\Jim(v_1)} : \hs_{\Jim(v_1)} \to \hs_{\Jim(v_2)}$;
\label{item DJim(e)}

\item For $\lambda(v_1) = \lambda(v_2)$ and $\rho(v_1) \neq \rho(v_2)$, $D_e = \bbbone_{n_{i_1}} \otimes D_{R,e}$ with $D_{R,e} : \bbC^{n_{j_1}\circ} \to \bbC^{n_{j_2}\circ}$; \label{item De = 1 DRe}

\item For $\lambda(v_1) \neq \lambda(v_2)$ and $\rho(v_1) = \rho(v_2)$,  $D_e = D_{L,e} \otimes \bbbone_{n_{j_1}}$ with  $D_{L,e} : \bbC^{n_{i_1}} \to \bbC^{n_{i_2}}$. \label{item De = DLe 1}
\end{enumerate}
Then $D_e$ defines a decoration of $e$.

\end{enumerate}
\end{enumerate}

\medskip
For any $\xi_{i} \otimes \eta_{j}^\circ \in \hs_{v_1}$, it is convenient to write $D_e (\xi_{i_1} \otimes \eta_{j_1}^\circ) = D_{L,e}^{(1)} \xi_{i_1} \otimes D_{R,e}^{(2)} \eta_{j_1}^\circ$ as a sumless Sweedler-like notation\footnote{This notation is usual for computations on coalgebras.} where there is an implicit summation over finite families of operators $D_{L,e}^{(1)} : \bbC^{n_{i_1}} \to \bbC^{n_{i_2}}$ and $D_{R,e}^{(2)} : \bbC^{n_{j_1}\circ} \to \bbC^{n_{j_2}\circ}$. In the previous points~\ref{item De = 1 DRe} and \ref{item De = DLe 1}, this decomposition is explicitly given (summation reduced to a single term).

One can see $\Gamma^{(0)}$ as a set of points on top of the points $\Lambda \times \Lambda$, where the (down) projection is $\pi_{\lambda\rho}$. Each point in $\Gamma^{(0)}_{\bn_i \bn_j} = \pi_{\lambda\rho}^{-1}(\bn_i, \bn_j)$ is a copy of the irrep $\hs_{\bn_i \bn_j}$: we can look at $v$ as an element of the “fiber” $\Gamma^{(0)}_{\bn_i \bn_j}$ on top of $(\bn_i, \bn_j)$. The edges in $\Gamma^{(1)}$, once projected in $\Lambda \times \Lambda$, connect points horizontally, vertically  or self-connect a (projected) point. A convenient representation of $\Gamma$ is then a $3$-dimensional set of points decorated by some values (as seen above) and linked by decorated lines, see Fig.~\ref{fig krajewskiDiagram}.

\begin{figure}
{\centering \includegraphics[]{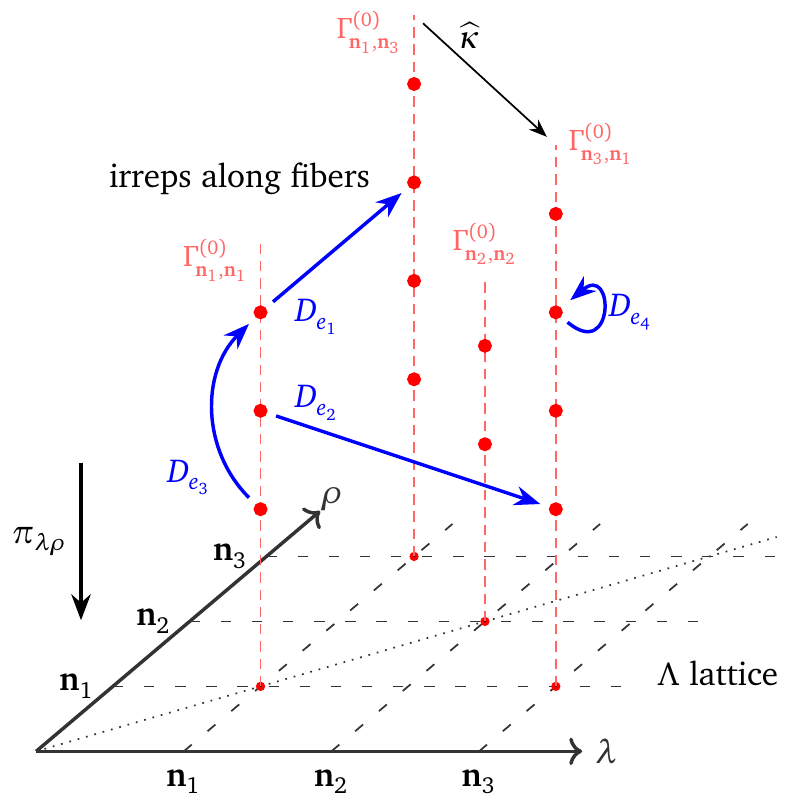}\par}
\caption{A Krajewski diagram for the algebra $M_{n_1} \oplus M_{n_2} \oplus M_{n_3}$ in the $3$-dimensional representation explained in the text. The component $D_{e_1}$ (resp. $D_{e_2}$) of the Dirac operator joins two irreps with same $\lambda$ (resp. same $\rho$); $D_{e_3}$ and $D_{e_4}$ join irreps with the same $\lambda$ and $\rho$, but $D_{e_4}$ can be non zero only in the odd case. The maps $\hJim$ realize the axial symmetry defined by the dotted line in the $\Lambda$ lattice.}
\label{fig krajewskiDiagram}
\end{figure}

These data completely determine a real (odd or even) spectral triple. A vertex $v \in \Gamma^{(0)}$ defines the irrep $\hs_v$ with multiplicity $\mu(v) \defeq \# \Gamma^{(0)}_{\pi_{\lambda\rho}(v)}$, so the Hilbert space is $\hs \defeq \toplus_{v \in \Gamma^{(0)}} \hs_v = \toplus_{i,j=1}^{r} \hhs_{\bn_i \bn_j}$ with $\hhs_{\bn_i \bn_j} = \bbC^{n_i} \otimes \bbC^{\mu_{i j}} \otimes \bbC^{n_j \circ}$.  Any operator $A$ on $\hs$ decomposes into linear maps $A_{v_2}^{v_1} : \hs_{v_1} \to \hs_{v_2}$.

It will be useful to describe the representation $\pi$ along these two decompositions. For any $a = \toplus_{i=1}^{r} a_i \in \algA$, any $v=(i,p,j)$, and any $\psi = \toplus_{v \in \Gamma^{(0)}} \psi_v = \toplus_{i,j=1}^{r} \xi_{i} \otimes \vm_{i j}^{p} \otimes \eta_{j}^\circ$ with $\psi_v \in \hs_v = \bbC^{\lambda(v)} \otimes \bbC^{\rho(v) \circ}$ and $\xi_{i} \otimes \vm_{i j}^{p} \otimes \eta_{j}^\circ \in \bbC^{n_i} \otimes \bbC^{\mu_{i j}} \otimes \bbC^{n_j \circ}$, one has $\pi(a) \psi = \toplus_{v \in \Gamma^{(0)}} a_{i(v)} \psi_v = \toplus_{i, j=1}^{r} (a_i \xi_{i}) \otimes \vm_{i j}^{p} \otimes \eta_{j}^\circ$ where $a_{i(v)} \psi_v$ is the multiplication of the matrix $a_{i(v)}$ on the left factor of $\bbC^{\lambda(v)} \otimes \bbC^{\rho(v) \circ}$ and $a_i \xi_{i}$ is the usual matrix multiplication on $\bbC^{n_i}$. In other words, the decomposition of the operator $\pi(a)$ along the $\hs_v$'s is
\begin{align}
\label{eq pi(a) decomposition along Hv}
\pi(a)_{v_2}^{v_1} = a_{i(v_1)} \delta_{v_2}^{v_1} : \hs_{v_1} \to \hs_{v_2}
\end{align} 
(where $\delta_{v_2}^{v_1}$ is the Kronecker symbol). In the real case, for any $a = \toplus_{i=1}^{r} a_i \in \algA$, any $b = \toplus_{j=1}^{r} b_j \in \algA$, and any $\psi_v \in \hs_v$, one has $a b^\circ \psi_v = a_{i(v)} b_{j(v)}^\circ \psi_v \in \hs_v$ ($\pi$ omitted) where $b_{j(v)}^\circ$ acts on the right factor of $\bbC^{\lambda(v)} \otimes \bbC^{\rho(v) \circ}$ (see footnote~\ref{footnote circ action}). A similar relation holds on $\bbC^{n_i} \otimes \bbC^{\mu_{i j}} \otimes \bbC^{n_j \circ}$.

In the even case, $\gamma$ is determined as the multiplication by the decoration $s(v) = \pm 1$ on $\hs_{v}$. The real operator $J$ is reconstructed by the family of maps
\begin{align*}
J_v \defeq \epsilon(v, d)\, J_0 \hJim_{v} = \epsilon(v, d)\, \hJim_{v} J_0 : \hs_v \to \hs_{\Jim(v)}
\end{align*}
or, equivalently, with $i = i(v)$ and $j = j(v)$, and any $\xi_{i} \otimes \eta_{j}^\circ \in \hs_v$, $J(\xi_{i} \otimes \eta_{j}^\circ) = \epsilon(v, d)\, \Beta_{j} \otimes \Bxi_{i}^\circ \in \hs_{\Jim(v)}$. 
In other words, $J_{v_2}^{v_1} = \epsilon(v_1, d)\delta_{\Jim(v_1)}^{v_1} \, J_0 \hJim_{v_1}$.

The Dirac operator is reconstructed by the decorations $D_{e}$ of the edges $e \in \Gamma^{(1)}$. Introduce an orthonormal basis for each $\bbC^{\mu_{ij}}$ and label all these basis vectors in the union of all the $\bbC^{\mu_{ij}}$'s as $\{ \vm_v \}_{v \in \Gamma^{(0)}}$: for any $v \in \Gamma^{(0)}_{\bn_i \bn_j}$, $\vm_v$ is an element of an orthonormal basis of $\bbC^{\mu_{ij}}$. We use the identification $\hs_v = \Span \{ \xi \otimes \vm_v \otimes \eta^\circ \mid \xi \in \bbC^{\lambda(v)}, \eta^\circ \in \bbC^{\mu(v) \circ} \}$. For any $e = (v_1, v_2) \in \Gamma^{(1)}$ with $v_1 \in \Gamma^{(0)}_{\bn_{i_1} \bn_{j_1}}$ and $v_2 \in \Gamma^{(0)}_{\bn_{i_2} \bn_{j_2}}$,  define $\hD_{e} : \hhs_{\bn_{i_1} \bn_{j_1}} \to \hhs_{\bn_{i_2} \bn_{j_2}}$, for any $\xi \in \bbC^{n_{i_1}}$ and $\eta^\circ \in \bbC^{n_{j_1}\circ}$, as
\begin{align*}
\hD_{e} (\xi \otimes \vm_{v} \otimes \eta^\circ)
&=
\begin{cases}
0 & \text{if $v \neq v_1$} \\
( D_{L,e}^{(1)} \xi) \otimes \vm_{v_2} \otimes (D_{R,e}^{(2)} \eta^\circ) & \text{if $v = v_1$}.
\end{cases}
\end{align*}
Then $D$ is completely given as a matrix with entries $\hD_{e}$ in the decomposition $\hs = \toplus_{i,j=1}^{r} \hhs_{\bn_i \bn_j}$.

\medskip
One can write a specific version of \eqref{eq piD sum algebras} for the decomposition $\hs = \toplus_{v \in \Gamma^{(0)}} \hs_v$ in terms of the operators $D_e$ for $e \in \Gamma^{(1)}$. For any $\bomega \in \bOmega^n_U(\algA) \subset \kT^n \algA$ which decomposes along a sum of typical terms $\toplus_{i_1, \dots, i_{n-1} =1}^{r} \big( a^0_{i} \otimes a^1_{i_1} \otimes \cdots \otimes a^{n-1}_{i_{n-1}} \otimes a^n_{j} \big)_{i, j = 1}^{r} \in \kT^n \algA$ and any $v_0, v_n \in \Gamma^{(0)}$, one has
\begin{align}
\label{eq piD AF algebras}
\piD(\bomega)_{v_0}^{v_n} = 
\!\!\!\!\!\!\!\sum_{ \substack{ \text{all terms at the}\\ \text{$(i(v_0), i(v_n))$ entry in $\bomega$} } } \!\!\!\!\!\!\!
\tsum_{v_1, \dots, v_{n-1} \in \Gamma^{(0)}}
a^{0}_{i(v_0)} D_{(v_1, v_0)} a^{1}_{i(v_1)}  D_{(v_2, v_1)} \cdots
 a^{n-1}_{i(v_{n-1})} D_{(v_n, v_{n-1})} a^{n}_{i(v_n)} : 
\hs_{v_n} \to \hs_{v_0}
\end{align}
In this formula, one supposes $D_{(v_{i+1}, v_i)} = 0$ when $(v_{i+1}, v_i) \not\in \Gamma^{(1)}$.

\section{Lifting one step of the defining inductive sequence}
\label{sec one step in the sequence}

In this section, we study the “lifting” to spectral triples of a one-to-one homomorphism $\phi : \algA \to \algB$. As explained in Sect.~\ref{sec introduction}, the main idea, which is central in our paper, is to define a notion of $\phi$-compatibility for the structures defining spectral triples $(\algA, \hsA, \DA, \JA, \gammaA)$ and $(\algB, \hsB, \DB, \JB, \gammaB)$ on top of $\algA$ and $\algB$. This construction, applied in Sect.~\ref{sec AF algebras} to $AF$-algebras, can be interpreted as a lift of arrows in a Bratteli diagram to arrows between Krajewski diagrams.

\subsection{General situations}
\label{sec general situations}

The first structure to consider are  the Hilbert spaces $\hsA$ and $\hsB$, that we can consider as left modules on $\algA$ and $\algB$ via their corresponding representations that are not explicitly written in the following. Similarly to \cite[Definition~15]{MassNieu21q}, we introduce the following definition:

\begin{definition}
\label{def phi phiH}
A morphism (bounded linear map) of Hilbert spaces $\phiH : \hsA \to \hsB$ is $\phi$-compatible if $\phiH(a \psi) = \phi(a) \phiH(\psi)$ for any $a \in \algA$ and $\psi \in \hsA$ (the representations $\piA$ and $\piB$ are omitted in this relation).
\end{definition}

In this definition, we suppose that $\phiH$ is only a bounded linear map: we look at the category of Hilbert spaces as a dagger category. But in order to get some useful results, we will assume later that $\phiH$ is an isometry (see Sect.~\ref{subsec normalized phiH map}).

Given the morphism $\phiH : \hsA \to \hsB$, one can decompose $\hsB$ as $\hsB = \phiH(\hsA) \oplus \phiH(\hsA)^{\perp}$ in a unique $\phiH$-dependent way, where $\phiH(\hsA) = \Ran(\phiH)$ is the range of $\phiH$. This implies that any operator $B$ on $\hsB$ can be decomposed as $B = \smallpmatrix{ B_{\phi}^{\phi} & B_{\phi}^{\perp} \\ B_{\perp}^{\phi} & B_{\perp}^{\perp} }$ with obvious notations, for instance $B_{\phi}^{\perp} : \phiH(\hsA)^{\perp} \to \phiH(\hsA)$. In this orthogonal decomposition, one has $B^\dagger = \smallpmatrix{ B_{\phi}^{\phi \dagger} & B_{\perp}^{\phi \dagger} \\ B_{\phi}^{\perp \dagger} & B_{\perp}^{\perp \dagger} }$. 

\begin{definition}[$\phi$-compatibility of operators]
\label{def phi compatibility operators}
Given two operators $A$ on $\hsA$ and $B$ on $\hsB$, we say that they are $\phi$-compatible if $\phiH( A \psi) = B_{\phi}^{\phi} \phiH( \psi)$ for any $\psi \in \hsA$ (equality in $\phiH(\hsA)$).
\end{definition}

This definition makes sense since both sides belong to $\phiH(\hsA)$. Notice that, by an abuse of notation, we use the terminology “$\phi$-compatibility” but this notion depends on the couple of maps $(\phi, \phiH)$.

One can define a stronger $\phi$-compatibility between $A$ and $B$:
\begin{definition}[Strong $\phi$-compatibility of operators]
\label{def strong phi compatibility operators}
Given two operators $A$ on $\hsA$ and $B$ on $\hsB$, we say that they are strong $\phi$-compatible if $\phiH( A \psi) = B \phiH( \psi)$ for any $\psi \in \hsA$ (equality in $\hsB$).
\end{definition}

\begin{remark}
 Notice that these two $\phi$-compatibility conditions imply that $\Ker \phiH \subset \Ker \phiH \circ A$, since, if $\psi \in \Ker \phiH$, then $0 = B_{\phi}^{\phi} \phiH( \psi) = \phiH( A \psi)$ in the first case, and similarly in the second case. A sufficient condition for this to hold for every $A$ is to require $\phiH$ to be one-to-one.
 \end{remark}
 
 \begin{remark}
 \label{rmk piA(a) st phi comp piB(phi(a))}
Definition~\ref{def phi phiH} implies that $\piA(a)$ and $\piB(\phi(a))$ are strong $\phi$-compatible for any $a \in \algA$.
 \end{remark}
 
The following Proposition gives other consequences of the two definitions, where diagonality refers to the previously defined $2\times 2$ matrix decomposition.

\begin{proposition}\phantom{A}
\label{prop strong and not strong phi compatibility}
\begin{enumerate}
\item $\phi$-compatibility and strong $\phi$-compatibility are stable under sums of operators.\label{item phi comp and st phi comp sums}

\item Compositions of strong $\phi$-compatible operators are strong $\phi$-compatible (this is not necessarily true for $\phi$-compatible operators). \label{item st phi comp composition}

\item If $A$ on $\hsA$ and $B$ on $\hsB$ are strong $\phi$-compatible then $B_{\perp}^{\phi} = 0$. \label{item st phi comp Bperpphi = 0}

\item Strong $\phi$-compatibility implies $\phi$-compatibility. \label{item st phi comp implies phi comp}

\item If $B_{\perp}^{\phi} = 0$, the $\phi$-compatibility implies the strong $\phi$-compatibility. \label{item Bperpphi = 0 and phi comp implies st phi comp}

\item When $B$ is self-adjoint, strong $\phi$-compatibility implies that $B$ is diagonal.\label{item B self-adjoint str phi comp}

\item If $A$ on $\hsA$ and $B$ on $\hsB$ are strong $\phi$-compatible and $A$ and $B$ are unitaries, then $A^\dagger$ and $B^\dagger$  are strong $\phi$-compatible and $B$ is diagonal.\label{item A B unitaries str phi comp}

\item For any $a \in \algA$, the operator $\piB \circ \phi(a)$ on $\hsB$ reduces to a diagonal matrix $\piB \circ \phi(a) = \smallpmatrix{ \piB \circ \phi(a)_{\phi}^{\phi} & 0 \\ 0 & \piB \circ \phi(a)_{\perp}^{\perp} }$.\label{item piBphi(a) diagonal}
\end{enumerate}
\end{proposition}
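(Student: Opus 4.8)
The plan is to prove the eight items mostly in sequence, since several of them build on earlier ones, and to lean heavily on the $2\times 2$ block decomposition induced by $\hsB = \phiH(\hsA)\oplus\phiH(\hsA)^\perp$.

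\textbf{Items \ref{item phi comp and st phi comp sums} and \ref{item st phi comp composition}.} Stability under sums is immediate from linearity: if $\phiH(A_k\psi) = (B_k)_\phi^\phi\,\phiH(\psi)$ for $k=1,2$ then adding gives the claim, and likewise in the strong case. For composition of strong $\phi$-compatible pairs $(A_1,B_1)$, $(A_2,B_2)$, I would write $\phiH(A_1 A_2\psi) = B_1\,\phiH(A_2\psi) = B_1 B_2\,\phiH(\psi)$, using strong compatibility twice. The parenthetical remark that this fails for ordinary $\phi$-compatibility needs no proof, but I might note informally that $(B_1)_\phi^\phi (B_2)_\phi^\phi$ need not equal $(B_1 B_2)_\phi^\phi$ because of the off-diagonal block $(B_2)_\perp^\phi$.

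\textbf{Items \ref{item st phi comp Bperpphi = 0}, \ref{item st phi comp implies phi comp}, \ref{item Bperpphi = 0 and phi comp implies st phi comp}.} For \ref{item st phi comp Bperpphi = 0}: strong compatibility says $B\,\phiH(\psi) \in \phiH(\hsA)$ for all $\psi$, i.e. $B$ maps $\phiH(\hsA)$ into itself, which is exactly $B_\perp^\phi = 0$ (the block $\phiH(\hsA)\to\phiH(\hsA)^\perp$ vanishes). Then \ref{item st phi comp implies phi comp} follows: if $B\,\phiH(\psi) \in \phiH(\hsA)$ then $B_\phi^\phi\,\phiH(\psi) = B\,\phiH(\psi)$, so the $\phi$-compatibility equality holds. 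Conversely for \ref{item Bperpphi = 0 and phi comp implies st phi comp}, if $B_\perp^\phi = 0$ then for $\xi \in \phiH(\hsA)$ one has $B\xi = B_\phi^\phi \xi$, hence $\phiH(A\psi) = B_\phi^\phi\,\phiH(\psi) = B\,\phiH(\psi)$, which is strong compatibility.

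\textbf{Items \ref{item B self-adjoint str phi comp}, \ref{item A B unitaries str phi comp}, \ref{item piBphi(a) diagonal}.} For \ref{item B self-adjoint str phi comp}: by \ref{item st phi comp Bperpphi = 0}, $B_\perp^\phi = 0$; taking adjoints in the block decomposition, $B = B^\dagger$ forces $B_\phi^\perp = (B_\perp^\phi)^\dagger = 0$ as well, so $B$ is block-diagonal. For \ref{item A B unitaries str phi comp}: again $B_\perp^\phi = 0$, so $B$ restricted to $\phiH(\hsA)$ is an isometry into $\phiH(\hsA)$; since $B$ is unitary on $\hsB$ and preserves $\phiH(\hsA)$, it also preserves $\phiH(\hsA)^\perp$, giving diagonality of $B$. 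To get strong $\phi$-compatibility of $A^\dagger$ and $B^\dagger$: from $\phiH A = B\phiH$ (as maps $\hsA\to\hsB$) and unitarity, $\phiH A = B\phiH \Rightarrow \phiH = B\phiH A^\dagger \Rightarrow B^\dagger\phiH = \phiH A^\dagger$, which is exactly the strong $\phi$-compatibility of $(A^\dagger, B^\dagger)$; here I use $A$ unitary so $A^{-1} = A^\dagger$ and $B$ unitary so $B^{-1} = B^\dagger$. Finally \ref{item piBphi(a) diagonal}: by Remark~\ref{rmk piA(a) st phi comp piB(phi(a))}, $\piA(a)$ and $\piB(\phi(a))$ are strong $\phi$-compatible; since $\phi$ is a $*$-homomorphism, $\piB(\phi(a))$ is a sum of a self-adjoint and an anti-self-adjoint operator (write $a = \tfrac{1}{2}(a+a^*) + \tfrac{1}{2}(a-a^*)$), each of which is strong $\phi$-compatible with the corresponding piece of $\piA(a)$ by \ref{item phi comp and st phi comp sums}; applying \ref{item B self-adjoint str phi comp} to the self-adjoint part and its $i$-multiple to the other part shows each is diagonal, hence $\piB(\phi(a))$ is diagonal. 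Alternatively, and more cleanly, I can argue directly: $\piB(\phi(a))$ preserves $\phiH(\hsA)$ by Definition~\ref{def phi phiH}, and $\phiH(\hsA)$ is a $\piB(\phi(\algA))$-submodule, so its orthogonal complement is also invariant because $\phi(\algA)$ is a $*$-subalgebra and the representation $\piB$ is a $*$-representation; this gives block-diagonality of every $\piB(\phi(a))$ at once.

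\textbf{Main obstacle.} The only genuinely delicate point is item \ref{item piBphi(a) diagonal}: one must invoke that $\phi$ is a $*$-homomorphism and $\piB$ a $*$-representation to conclude that $\phiH(\hsA)^\perp$ is $\piB\circ\phi$-invariant — this does not follow from Definition~\ref{def phi phiH} alone (which only gives invariance of $\phiH(\hsA)$ itself), and it is exactly the reflection property of orthogonal complements of invariant subspaces under unitary/$*$-representations. Everything else is bookkeeping with the block decomposition.
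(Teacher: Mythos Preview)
Your proof is correct and follows essentially the same block-decomposition approach as the paper; items \ref{item phi comp and st phi comp sums}--\ref{item B self-adjoint str phi comp} and \ref{item piBphi(a) diagonal} match the paper's arguments almost verbatim (your ``$*$-submodule'' route for \ref{item piBphi(a) diagonal} is just a repackaging of the paper's use of $\piB\circ\phi(a^*) = \piB\circ\phi(a)^\dagger$). One small point on item \ref{item A B unitaries str phi comp}: your inference ``$B$ unitary and $B(\phiH(\hsA))\subset\phiH(\hsA)$ implies $B(\phiH(\hsA)^\perp)\subset\phiH(\hsA)^\perp$'' is only automatic in finite dimensions; the paper (and your own subsequent computation $B^\dagger\phiH=\phiH A^\dagger$) instead first establishes strong $\phi$-compatibility of $(A^\dagger,B^\dagger)$, then applies item \ref{item st phi comp Bperpphi = 0} to both $B$ and $B^\dagger$ to get $B_\perp^\phi=0$ and $B_\phi^\perp=(B^\dagger)_\perp^{\phi\,\dagger}=0$ --- simply reorder your two steps and the argument is clean in any dimension.
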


\begin{proof}
Point~\ref{item phi comp and st phi comp sums} is obvious by linearity of the compatibility conditions and the matrix decompositions. For point~\ref{item st phi comp composition}, let $A_1, A_2$ be two operators on $\hsA$ and $B_1, B_2$ two operators on $\hsB$ which are strong $\phi$-compatible with $A_1$ and $A_2$ respectively. Then for any $\psi \in \hsA$, one has $\phiH(A_1 A_2 \psi) = B_1 \phiH(A_2 \psi) = B_1 B_2 \phiH(\psi)$ so that $A_1 A_2$ is strong $\phi$-compatible with $B_1 B_2$. For $\phi$-compatibility, this line of reasoning is not possible in general.

One can identify $\phiH( \psi)$ with $\smallpmatrix{\phiH( \psi) \\ 0} \in \phiH(\hsA) \oplus \phiH(\hsA)^{\perp} = \hsB$ (resp. $\phiH( A \psi)$ with $\smallpmatrix{\phiH( A \psi) \\ 0}$), so that $B_{\phi}^{\phi} \phiH( \psi)$ identifies with $\smallpmatrix{ B_{\phi}^{\phi} \phiH( \psi) \\ 0 }$ while $B \phiH( \psi)$ identifies with $\smallpmatrix{ B_{\phi}^{\phi} \phiH( \psi) \\ B_{\perp}^{\phi} \phiH( \psi) }$. The $\phi$-compatibility condition implies that the map $B_{\phi}^{\phi} : \phiH(\hsA) \to \phiH(\hsA)$ is completely determined by $A$ and $\phiH$, while the strong $\phi$-compatibility condition implies firstly that $\phiH( A \psi) = B_{\phi}^{\phi} \phiH( \psi)$, and secondly that $B_{\perp}^{\phi} : \phiH(\hsA) \to \phiH(\hsA)^{\perp}$ is the zero map, which is point~\ref{item st phi comp Bperpphi = 0}. So, using these results, one gets that the strong $\phi$-compatibility implies the $\phi$-compatibility condition (which only constrains the  $B_{\phi}^{\phi}$ component of $B$), which is point~\ref{item st phi comp implies phi comp}. For point~\ref{item Bperpphi = 0 and phi comp implies st phi comp}, from $B_{\perp}^{\phi} = 0$ and $\phiH( A \psi) = B_{\phi}^{\phi} \phiH( \psi)$, one gets $B \phiH( \psi) = \smallpmatrix{ B_{\phi}^{\phi} \phiH( \psi) \\ B_{\perp}^{\phi} \phiH( \psi) } = \smallpmatrix{ B_{\phi}^{\phi} \phiH( \psi) \\ 0 } = \smallpmatrix{\phiH( A \psi) \\ 0} = \phiH( A \psi)$, which is the strong $\phi$-compatibility condition.

Point~\ref{item B self-adjoint str phi comp}: if $B$ is self-adjoint, the condition $B = B^\dagger$ implies $B_{\phi}^{\perp \dagger} = B_{\perp}^{\phi} = 0$, so that $B$ is diagonal.

Point~\ref{item A B unitaries str phi comp}: if $A$ and $B$ are unitaries, then $\phiH( \psi) = \phiH( A^\dagger A \psi)$ on the one hand and $\phiH( \psi) = B^\dagger B \phiH( \psi)$ on the other hand, so that $\phiH( A^\dagger A \psi) = B^\dagger B \phiH( \psi) = B^\dagger \phiH( A \psi)$. Since $A$ is invertible, any $\psi' \in \hsA$ can be written as $\psi' = A \psi$, so that $\phiH( A^\dagger \psi) = B^\dagger \phiH( \psi)$ for any $\psi$, which proves that $A^\dagger$ and $B^\dagger$  are strong $\phi$-compatible. The strong $\phi$-compatibilities implies $B_{\perp}^{\phi} = 0$ and $(B^\dagger)_{\perp}^{\phi} = B_{\phi}^{\perp \dagger} = 0$, and so $B$ is diagonal.

Point~\ref{item piBphi(a) diagonal}: let us use the notation $\piB \circ \phi(a) = \smallpmatrix{ \piB \circ \phi(a)_{\phi}^{\phi} & \piB \circ \phi(a)_{\phi}^{\perp} \\ \piB \circ \phi(a)_{\perp}^{\phi} & \piB \circ \phi(a)_{\perp}^{\perp} }$ for any $a \in \algA$. From Definition~\ref{def phi phiH}, $\piB \circ \phi(a)$ is strong $\phi$-compatible with $\piA(a)$, so that $\piB \circ \phi(a)_{\perp}^{\phi} = 0$. Since $\piB \circ \phi(a^\ast) = \piB \circ \phi(a)^\dagger$, this implies that $\piB \circ \phi(a^\ast)_{\phi}^{\perp} = 0$ for any $a$, so that $\piB \circ \phi(a)$ reduces to a diagonal matrix.
\end{proof}

One can associate to $B = \smallpmatrix{ B_{\phi}^{\phi} & B_{\phi}^{\perp} \\ B_{\perp}^{\phi} & B_{\perp}^{\perp} }$ the operator $\widehat{B}_{\phi}^{\phi} = \smallpmatrix{ B_{\phi}^{\phi} & 0 \\ 0 & 0 }$. Then the $\phi$-compatibility between $A$ and $B$ is equivalent to the strong $\phi$-compatibility between $A$ and $\widehat{B}_{\phi}^{\phi}$.

\smallskip
\begin{definition}[$\phi$-compatibility of spectral triples]
\label{def phi compatibility of spectral triples}
Assume given a $\phi$-compatible map $\phiH : \hsA \to \hsB$.

Two odd spectral triples $(\algA, \hsA, \DA)$ and $(\algB, \hsB, \DB)$ are said to be $\phi$-compatible if $\DA$ is $\phi$-compatible with $\DB$.

Two real spectral triples $(\algA, \hsA, \DA, \JA)$ and $(\algB, \hsB, \DB, \JB)$ are said to be $\phi$-compatible if $\DA$ (resp. $\JA$) is $\phi$-compatible with $\DB$ (resp. $\JB$).

In the even case for $\algA$, one requires that $\algB$ is also even and that the grading operators $\gammaA$ and $\gammaB$ are $\phi$-compatible.
\end{definition}

Strong $\phi$-compatibility of spectral triples can be defined in an obvious way.

\begin{remark}
Notice that strong $\phi$-compatibility of spectral triples is similar to the condition (3) given in \cite[Def~2.1]{FlorGhor19p} where their couple $(\phi, I)$ corresponds to our couple $(\phi, \phiH)$. We depart from this paper where  inductive sequences of spectral triples are studied in the following way: we will restrict our analysis to the algebraic part of spectral triples since only $AF$-algebras will be considered later, so that the analytic part is quite trivial in our situation, and we will focus on gauge fields theories defined on top of spectral triples. For instance, conditions like (ST1) (about the $*$-subalgebra $\algA^\infty$) and (ST2) (about the compactness of the resolvent of the Dirac operator) in  \cite{FlorGhor19p} will not be considered here. Other papers use also this notion of strong $\phi$-compatibility, see for instance  \cite{ChriIvan06a} and \cite{Lai13a}.
But, since we are interested in accumulating “new degrees of freedom” along the inductive limit, the $\phi$-compatibility condition will be more relevant than the strong $\phi$-compatibility condition in that respect.
\end{remark}

Since $\JA$ and $\JB$ define $\algA^{e} = \algA \otimes \algA^\circ$ and $\algB^{e} = \algB \otimes \algB^\circ$ modules structures on $\hsA$ and $\hsB$, it is convenient to express $\phi$-compatibility in terms of this structure. The homomorphism $\phi$ defines a canonical homomorphism of algebras $\phi^\circ : \algA^\circ \to \algB^\circ$ by the relation $\phi^\circ(a^\circ) \defeq \phi(a)^\circ$. We then define $\phi^{e} : \algA^{e} \to \algB^{e}$ as $\phi^{e} \defeq \phi \otimes \phi^\circ$, \textit{i.e.} $\phi^{e}(a_1 \otimes a_2^\circ) = \phi(a_1) \otimes \phi^\circ(a_2^\circ)$. Let $\modM$ (resp. $\modN$) be a $\algA$-bimodule (resp. $\algB$-bimodule), which is also a $\algA^{e}$-left module ( resp. $\algB^{e}$-left module) by $(a_1 \otimes a_2^\circ) e \defeq a_1 e a_2$ for any $e \in \modM$ and $a_1, a_2 \in \algA$ (and similar relations for $\algB$ and $\modN$). Then, we say that a linear map between the two bimodules $\phiMod : \modM \to \modN$ is $\phi$-compatible if it is $\phi^{e}$-compatible between the two left modules, that is $\phiMod( (a_1 \otimes a_2^\circ) e) = \phi^{e}(a_1 \otimes a_2^\circ) \phiMod(e)$, which is equivalent to $\phiMod( a_1 e  a_2) = \phi(a_1) \phiMod(e) \phi(a_2)$.

\begin{lemma}
Suppose that $\phiH : \hsA \to \hsB$ is $\phi$-compatible as a map of left modules and that $\JA$ and $\JB$ are strong $\phi$-compatible. Then $\phiH$ is $\phi^{e}$-compatible as a map between the bimodules defined by the real operators. 
\end{lemma}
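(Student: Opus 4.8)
The plan is to unpack the definition of $\phi^e$-compatibility and reduce it, using the bimodule structure, to the already-established $\phi$-compatibility of $\phiH$ on the left and the strong $\phi$-compatibility of $\JA$ with $\JB$. Concretely, I must show $\phiH(a_1 \psi a_2) = \phi(a_1)\phiH(\psi)\phi(a_2)$ for all $a_1, a_2 \in \algA$ and $\psi \in \hsA$. Recall that the right action is $\psi a_2 = \JA a_2^\ast \JA^{-1}\psi$ on $\hsA$ and similarly on $\hsB$. So the left-hand side is $\phiH\big(\piA(a_1)\,\JA\,\piA(a_2^\ast)\,\JA^{-1}\,\psi\big)$.

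First I would handle the right action alone, i.e. show $\phiH(\psi a_2) = \phiH(\psi)\,\phi(a_2)$. Write $\phiH(\psi a_2) = \phiH\big(\JA\,\piA(a_2^\ast)\,\JA^{-1}\,\psi\big)$. Here the subtlety is that $\JA$ and $\JB$ are only assumed \emph{strong} $\phi$-compatible, so $\phiH(\JA\,\xi) = \JB\,\phiH(\xi)$ for all $\xi \in \hsA$; applying this with $\xi = \piA(a_2^\ast)\JA^{-1}\psi$ gives $\phiH(\psi a_2) = \JB\,\phiH\big(\piA(a_2^\ast)\JA^{-1}\psi\big)$. Then Definition~\ref{def phi phiH} (the $\phi$-compatibility of $\phiH$ as a left-module map) yields $\phiH\big(\piA(a_2^\ast)\eta\big) = \piB(\phi(a_2^\ast))\,\phiH(\eta) = \piB(\phi(a_2)^\ast)\,\phiH(\eta)$ for $\eta = \JA^{-1}\psi$, using $\phi(a_2^\ast) = \phi(a_2)^\ast$ since $\phi$ is a $\ast$-homomorphism. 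So far $\phiH(\psi a_2) = \JB\,\piB(\phi(a_2)^\ast)\,\phiH(\JA^{-1}\psi)$, and I still need to turn $\phiH(\JA^{-1}\psi)$ into $\JB^{-1}\phiH(\psi)$. Since $J$ is anti-unitary on a finite-dimensional space, $\JA^{-1}$ equals $\epsilon_\algA \JA$ (with $\epsilon_\algA = \pm1$); strong $\phi$-compatibility of $\JA$ with $\JB$ then gives $\phiH(\JA^{-1}\psi) = \epsilon_\algA\,\phiH(\JA\psi) = \epsilon_\algA\,\JB\,\phiH(\psi)$. If one also knows $\epsilon_\algA = \epsilon_\algB$ (which follows e.g. from Prop.~\ref{prop KO dim strong phi compatibility}, since strong $\phi$-compatibility forces equal $KO$-dimensions), this is exactly $\JB^{-1}\phiH(\psi)$, and we conclude $\phiH(\psi a_2) = \JB\,\piB(\phi(a_2)^\ast)\,\JB^{-1}\,\phiH(\psi) = \phiH(\psi)\,\phi(a_2)$.

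Second, combining with the left action is routine: applying the left-module $\phi$-compatibility of $\phiH$ once more to absorb $\piA(a_1)$, we get
\[
\phiH(a_1\psi a_2) = \piB(\phi(a_1))\,\phiH(\psi a_2) = \piB(\phi(a_1))\,\phiH(\psi)\,\phi(a_2) = \phi(a_1)\,\phiH(\psi)\,\phi(a_2),
\]
which is the claimed $\phi^e$-compatibility, since $\phi^e(a_1\otimes a_2^\circ) = \phi(a_1)\otimes\phi(a_2)^\circ$ acts precisely by $\eta \mapsto \phi(a_1)\eta\phi(a_2)$.

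The main obstacle is the bookkeeping around $\JA^{-1}$ versus $\JA$: the hypothesis is phrased for $\JA$ itself, not its inverse, so one needs the finite-dimensional relation $\JA^2 = \epsilon_\algA$ together with the fact that the two signs $\epsilon_\algA, \epsilon_\algB$ coincide under strong $\phi$-compatibility; without that matching of signs the right-action identity would only hold up to a sign $\epsilon_\algA\epsilon_\algB$. Everything else is a direct substitution using Definition~\ref{def phi phiH}, the strong $\phi$-compatibility hypothesis, and $\phi$ being a unital $\ast$-homomorphism.
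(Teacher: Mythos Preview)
Your proof is correct and follows essentially the same route as the paper's: unpack the right action as $\psi a_2 = \JA a_2^\ast \JA^{-1}\psi$, then apply strong $\phi$-compatibility of $J$ twice with left $\phi$-compatibility of $\phiH$ sandwiched in between, and finally absorb the left factor $a_1$ by one more use of Definition~\ref{def phi phiH}. You are in fact more careful than the paper, which writes $\JA a_2^\ast \JA$ in place of $\JA a_2^\ast \JA^{-1}$ and silently lets the two $\epsilon$ factors cancel; your explicit observation that one needs $\epsilon_\algA=\epsilon_\algB$ is the right bookkeeping, and this equality follows immediately from the strong $\phi$-compatibility of $J$ alone via $\epsilon_\algA\,\phiH(\psi)=\phiH(\JA^2\psi)=\JB^2\phiH(\psi)=\epsilon_\algB\,\phiH(\psi)$ (this is point~1 of Lemma~\ref{lemma JB JA st-phi-comp}, so you need not invoke the full Prop.~\ref{prop KO dim strong phi compatibility}).
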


\begin{proof}
For any $\psi \in \hsA$, $a_1, a_2 \in \algA$, by definition, one has $a_1 \psi a_2 = (a_1 \otimes a_2^\circ) \psi = a_1 \JA a_2^\ast \JA \psi$. On the one hand, since $\phiH$ is $\phi$-compatible, one has $\phiH(a_1 \psi) = \phi(a_1) \phiH(\psi)$. On the other hand, $\phiH(\psi a_2) = \phiH( \JA a_2^\ast \JA \psi) = \JB \phi(a_2)^\ast \JB \phiH(\psi) = \phiH(\psi) \phi(a_2)$.
\end{proof}

\begin{lemma}
\label{lemma JB JA st-phi-comp}
Suppose that $\JB$ is strong $\phi$-compatible with $\JA$:
\begin{enumerate}
\item $\epsilonA = \epsilonB$.

\item $\JB^{-1}$ is strong $\phi$-compatible with $\JA^{-1}$

\item $\JB$ is diagonal in its matrix decomposition.

\item If two operators $A$ on $\hsA$ and $B$ on $\hsB$ are $\phi$-compatible, then the operators $\JA A \JA^{-1}$ and $\JB B \JB^{-1}$ are $\phi$-compatible.
\end{enumerate}
\end{lemma}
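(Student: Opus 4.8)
The plan is to verify the four items in order, exploiting only the definition of strong $\phi$-compatibility together with Proposition~\ref{prop strong and not strong phi compatibility}.

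\textbf{Item 1.} Since $\JB$ is strong $\phi$-compatible with $\JA$, for any $\psi \in \hsA$ one has $\phiH(\JA \psi) = \JB \phiH(\psi)$. Apply $\JB$ on the left again: $\JB \phiH(\JA \psi) = \JB^2 \phiH(\psi) = \epsilonB \phiH(\psi)$. On the other hand, applying the strong $\phi$-compatibility relation to the vector $\JA \psi \in \hsA$ gives $\phiH(\JA (\JA \psi)) = \JB \phiH(\JA \psi)$, i.e. $\phiH(\JA^2 \psi) = \JB \phiH(\JA\psi)$. But $\JA^2 = \epsilonA$, so $\phiH(\JA^2 \psi) = \epsilonA \phiH(\psi)$. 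Combining, $\epsilonA \phiH(\psi) = \epsilonB \phiH(\psi)$ for all $\psi$; since $\phiH$ is not the zero map (it intertwines the unital representations, so $\phiH(\bbbone_\algA \psi) = \bbbone_\algB \phiH(\psi)$ and $\hsA \neq 0$), we get $\epsilonA = \epsilonB$.

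\textbf{Items 2 and 3.} Here the hypothesis that $\JA$ and $\JB$ are anti-unitary (hence invertible) lets us reuse the argument of Prop.~\ref{prop strong and not strong phi compatibility}\,\ref{item A B unitaries str phi comp} almost verbatim, replacing "unitary" by "invertible anti-unitary": from $\phiH(\JA^{-1}\JA\psi) = \JB^{-1}\JB\phiH(\psi)$ and the strong $\phi$-compatibility of $\JA, \JB$, we obtain $\phiH(\JA^{-1}(\JA\psi)) = \JB^{-1}\phiH(\JA\psi)$; since $\JA$ is onto, every $\psi' \in \hsA$ is of the form $\JA\psi$, hence $\phiH(\JA^{-1}\psi') = \JB^{-1}\phiH(\psi')$, which is Item~2. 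For Item~3, strong $\phi$-compatibility of $\JB$ with $\JA$ gives $(\JB)_{\perp}^{\phi} = 0$ by Prop.~\ref{prop strong and not strong phi compatibility}\,\ref{item st phi comp Bperpphi = 0}, and strong $\phi$-compatibility of $\JB^{-1}$ with $\JA^{-1}$ (Item~2) gives $(\JB^{-1})_{\perp}^{\phi} = 0$. Now $\JB^{-1} = \epsilonB \JB$ up to the sign (more precisely $\JB^2 = \epsilonB$ so $\JB^{-1} = \epsilonB\JB$), and anyway $\JB$ anti-unitary means $\JB^{-1} = \JB^\dagger$ in the appropriate anti-linear sense; the cleanest route is to note $\JB\JB^{-1} = \Id$ is diagonal, and with both $(\JB)_{\perp}^{\phi}$ and $(\JB^{-1})_{\perp}^{\phi}$ vanishing a short block computation of the product $\JB\JB^{-1}$ forces $(\JB)_{\phi}^{\perp}(\JB^{-1})_{\perp}^{\phi} = 0$ on the off-diagonal block, leaving $(\JB)_{\phi}^{\perp}\,(\JB^{-1})_{\perp}^{\perp}$ to control the $(\phi,\perp)$ entry; using that $\JB^{-1}$ restricted to $\phiH(\hsA)^{\perp}$ is injective into $\hsB$ one concludes $(\JB)_{\phi}^{\perp} = 0$. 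So $\JB$ is diagonal.

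\textbf{Item 4.} Given $A$ on $\hsA$ and $B$ on $\hsB$ $\phi$-compatible, we must show $\JA A \JA^{-1}$ and $\JB B \JB^{-1}$ are $\phi$-compatible, i.e. $\phiH(\JA A \JA^{-1}\psi) = (\JB B \JB^{-1})_{\phi}^{\phi}\phiH(\psi)$ for all $\psi$. By Item~2 and the strong $\phi$-compatibility of $\JA,\JB$ one has, for any $\chi \in \hsA$, $\phiH(\JA^{-1}\chi) = \JB^{-1}\phiH(\chi)$ and $\phiH(\JA\chi) = \JB\phiH(\chi)$. The subtlety is that $\phi$-compatibility of $A,B$ only controls the $(\phi,\phi)$ block, so one cannot simply push $B$ through; instead, replace $B$ by $\widehat{B}_{\phi}^{\phi}$, which is strong $\phi$-compatible with $A$ (by the remark just before Def.~\ref{def phi compatibility of spectral triples}). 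Then $\JB \widehat{B}_{\phi}^{\phi} \JB^{-1}$ is a composition of operators strong $\phi$-compatible with $\JA$, $A$, $\JA^{-1}$ respectively, so by Prop.~\ref{prop strong and not strong phi compatibility}\,\ref{item st phi comp composition} it is strong $\phi$-compatible with $\JA A \JA^{-1}$, and in particular $\phi$-compatible. It remains to check that the $(\phi,\phi)$ block of $\JB B \JB^{-1}$ agrees with that of $\JB \widehat{B}_{\phi}^{\phi}\JB^{-1}$: since $\JB$ is diagonal (Item~3), conjugation by $\JB$ respects the block decomposition, so $(\JB B \JB^{-1})_{\phi}^{\phi} = \JB|_{\phi}\, (B)_{\phi}^{\phi}\, (\JB^{-1})|_{\phi} = \JB|_{\phi}\,(\widehat{B}_{\phi}^{\phi})_{\phi}^{\phi}\,(\JB^{-1})|_{\phi} = (\JB \widehat{B}_{\phi}^{\phi}\JB^{-1})_{\phi}^{\phi}$, which closes the argument.

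\textbf{Main obstacle.} The only genuinely delicate point is Item~3 (and the block-diagonality fact it supplies, which Item~4 leans on): one must use the \emph{invertibility} of $\JB$ carefully, because strong $\phi$-compatibility by itself only kills the $(\perp,\phi)$ block, and squeezing out the $(\phi,\perp)$ block requires combining the vanishing of the $(\perp,\phi)$ blocks of both $\JB$ and $\JB^{-1}$. Everything else is a routine transcription of the unitary case in Prop.~\ref{prop strong and not strong phi compatibility}, with "anti-unitary and invertible" in place of "unitary".
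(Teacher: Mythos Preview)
Your arguments for Items~1, 2, and 4 are correct and essentially match the paper's (for Item~2 the paper simply observes $\JA^{-1} = \epsilonA\JA$ and $\JB^{-1} = \epsilonB\JB$, which is a shorter route than your surjectivity trick, but both are fine; for Item~4 your detour through $\widehat{B}_\phi^\phi$ and Prop.~\ref{prop strong and not strong phi compatibility}\,\ref{item st phi comp composition} is a valid alternative to the paper's direct block computation $(\JB B\JB^{-1})_\phi^\phi = \JB[,\phi]^{\phi} B_\phi^\phi (\JB[,\phi]^{\phi})^{-1}$).

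The gap is in Item~3. Your block computation of the $(\phi,\perp)$ entry of $\JB\JB^{-1}$ is wrong: with $\JB = \smallpmatrix{a & b \\ 0 & d}$ and $\JB^{-1} = \smallpmatrix{a' & b' \\ 0 & d'}$, the $(\phi,\perp)$ entry of the product is $a b' + b d'$, not just $b d'$. You dropped the term $a b' = (\JB)_\phi^\phi (\JB^{-1})_\phi^\perp$, and since $b' = \epsilonB b$ this term is nonzero whenever $b$ is. The resulting constraint is only $ab + bd = 0$, which does \emph{not} force $b = 0$: anti-linear $a,d$ and nonzero $b$ satisfying $ab = -bd$ exist at the level of block algebra. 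So ``$\JB^{-1}$ restricted to $\phiH(\hsA)^\perp$ is injective'' gets you nothing here.

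What is missing is precisely the anti-unitarity of $\JB$, which you mention only in passing. The paper's argument uses it directly: for $\psi_\algB \in \phiH(\hsA)$ and $\psi'_\algB \in \phiH(\hsA)^\perp$ one has $0 = \langle \psi'_\algB, \psi_\algB\rangle = \langle \JB\psi_\algB, \JB\psi'_\algB\rangle$. Since $(\JB)_\perp^\phi = 0$, $\JB\psi_\algB$ lies entirely in $\phiH(\hsA)$, and since $\JB[,\phi]^\phi$ is bijective on $\phiH(\hsA)$ (from $\JB^{-1} = \epsilonB\JB$), this forces the $\phiH(\hsA)$-component of $\JB\psi'_\algB$, namely $\JB[,\phi]^\perp(\psi'_\algB)$, to be orthogonal to all of $\phiH(\hsA)$, hence zero. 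That anti-unitary orthogonality step is the piece your argument lacks.
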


\begin{proof}
From $\JA^2 = \epsilonA$ and $\JB^2 = \epsilonB$, one gets $\epsilonA \phiH(\psi) = \phiH(\JA^2 \psi) = \JB^2 \phiH(\psi) = \epsilonB \phiH(\psi)$ for any $\psi \in \hsA$, so that $\epsilonB = \epsilonA$. From this we deduce that $\JB^{-1} = \epsilonB \JB$ is  strong $\phi$-compatible with $\JA^{-1} = \epsilonA \JA$.

Let $\JB = \smallpmatrix{ \JB[,\phi]^{\phi} & \JB[,\phi]^{\perp} \\ \JB[,\perp]^{\phi} & \JB[,\perp]^{\perp} }$. Since $\JB$ is strong $\phi$-compatible with $\JA$, we already know that $\JB[,\perp]^{\phi} = 0$. Let $\psi_\algB \in \phiH(\hsB)$ and $\psi'_\algB \in \phiH(\hsA)^{\perp}$. Then $\JB (\psi_\algB) = \smallpmatrix{ \JB[,\phi]^{\phi}(\psi_\algB) \\ 0}$ and $\JB(\psi'_\algB) = \smallpmatrix{ \JB[,\phi]^{\perp}(\psi'_\algB) \\ \JB[,\perp]^{\perp}(\psi'_\algB) }$, so that $0 = \langle \psi'_\algB, \psi_\algB \rangle_{\hsB} = \langle \JB(\psi_\algB), \JB(\psi'_\algB) \rangle_{\hsB} = \langle \JB[,\phi]^{\phi}(\psi_\algB), \JB[,\phi]^{\perp}(\psi'_\algB) \rangle_{\hsB}$. From $\JB^{-1} = \epsilonB \JB$ and $\JB[,\perp]^{\phi} = 0$, one gets that  $\JB[,\phi]^{\phi}$ is invertible with $(\JB[,\phi]^{\phi})^{-1} = (\JB^{-1})_{\phi}^{\phi} = \epsilonB \JB[,\phi]^{\phi}$, so that $\JB[,\phi]^{\phi}(\phiH(\hsA)) = \phiH(\hsA)$, which implies that $\JB[,\phi]^{\perp}(\psi'_\algB) \in \phiH(\hsA)^{\perp}$, that is, $\JB[,\phi]^{\perp}(\psi'_\algB) = 0$ for any $\psi'_\algB \in \phiH(\hsA)^{\perp}$, and so $\JB[,\phi]^{\perp} = 0$.

From $(\JB B \JB^{-1})_{\phi}^{\phi} = \JB[,\phi]^{\phi} B_{\phi}^{\phi} (\JB[,\phi]^{\phi})^{-1}$, we deduce that the operators $\JA A \JA^{-1}$ and $\JB B \JB^{-1}$ are $\phi$-compatible.
\end{proof}

\begin{lemma}
\label{lemma gammaB gammaA phi compatibiliy diagonal}
Let us consider the even case and suppose $\gammaB$ is $\phi$-compatible with $\gammaA$.
\begin{enumerate}
\item Then $\gammaB$ is diagonal in its matrix decomposition, so that strong $\phi$-compatibility and $\phi$-compatibility between $\gammaB$ and $\gammaA$ are equivalent.\label{item gammaB diagonal}
 \item Then $\phiH$ is diagonal for the matrix decomposition induced by $\hsA = \hsA^{+} \oplus \hsA^{-}$ and $\hsB = \hsB^{+} \oplus \hsB^{-}$, so that $\phiH$ restricts to maps $\hsA^{\pm} \to \hsB^{\pm}$.\label{item phiH diagonal grading}
 \end{enumerate}
\end{lemma}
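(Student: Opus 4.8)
The plan is to deduce both statements from the $2\times2$ block decomposition of operators on $\hsB$ relative to $\hsB = \phiH(\hsA) \oplus \phiH(\hsA)^{\perp}$, exploiting that $\gammaA$ and $\gammaB$ are self-adjoint involutions. Write $\gammaB = \smallpmatrix{ A & B \\ C & D }$ with $A = (\gammaB)_{\phi}^{\phi}$, $B = (\gammaB)_{\phi}^{\perp}$, $C = (\gammaB)_{\perp}^{\phi}$, $D = (\gammaB)_{\perp}^{\perp}$; here $A$ is an operator on $\phiH(\hsA) = \Ran(\phiH)$, and everything is finite dimensional so the decomposition is available.

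For point~\ref{item gammaB diagonal}, the first step is to transfer the involutivity of $\gammaA$ to the diagonal block: the $\phi$-compatibility reads $\phiH(\gammaA \psi) = A\,\phiH(\psi)$ for all $\psi\in\hsA$, so $A^{2}\phiH(\psi) = A\,\phiH(\gammaA\psi) = \phiH(\gammaA^{2}\psi) = \phiH(\psi)$, and since $\Ran(\phiH)$ is exactly the domain of $A$ this gives $A^{2} = \Id$ on $\phiH(\hsA)$. Next, $\gammaB^{\dagger} = \gammaB$ forces $C = B^{\dagger}$. Finally, the top-left entry of $\gammaB^{2} = \Id$ is $A^{2} + BC = \Id$, hence $B B^{\dagger} = BC = 0$; pairing $B B^{\dagger}\xi$ with $\xi\in\phiH(\hsA)$ gives $\norm{B^{\dagger}\xi}^{2} = 0$, so $B^{\dagger} = 0$ and therefore $B = C = 0$. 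Thus $\gammaB$ is diagonal. Since then $(\gammaB)_{\perp}^{\phi} = 0$, point~\ref{item Bperpphi = 0 and phi comp implies st phi comp} of Prop.~\ref{prop strong and not strong phi compatibility} promotes $\phi$-compatibility to strong $\phi$-compatibility, while the reverse implication is point~\ref{item st phi comp implies phi comp}; hence the two notions coincide in this case.

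For point~\ref{item phiH diagonal grading}, using the previous point, $\gammaB$ is \emph{strongly} $\phi$-compatible with $\gammaA$, i.e. $\phiH(\gammaA\psi) = \gammaB\,\phiH(\psi)$ for every $\psi\in\hsA$. Taking $\psi\in\hsA^{+}$, so that $\gammaA\psi = \psi$, yields $\gammaB\,\phiH(\psi) = \phiH(\psi)$, that is $\phiH(\psi)\in\hsB^{+}$; symmetrically $\phiH(\hsA^{-})\subseteq\hsB^{-}$. Hence $\phiH$ carries $\hsA^{\pm}$ into $\hsB^{\pm}$ and so has no off-diagonal component for the two $\bbZ_{2}$-gradings.

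The only non-automatic point is the diagonality of $\gammaB$ in point~\ref{item gammaB diagonal}: the self-adjointness shortcut of point~\ref{item B self-adjoint str phi comp} of Prop.~\ref{prop strong and not strong phi compatibility} is not available here because we assume only the weak form of $\phi$-compatibility, so one genuinely has to extract $A^{2} = \Id$ from $\gammaA^{2} = \Id$ and combine it with $\gammaB^{2} = \Id$ and self-adjointness to kill the off-diagonal blocks. Once that is done, point~\ref{item phiH diagonal grading} is immediate, and the remainder is routine block-matrix bookkeeping.
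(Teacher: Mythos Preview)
Your proof is correct and follows essentially the same route as the paper: extract $A^{2}=\Id$ from $\phi$-compatibility and $\gammaA^{2}=\Id$, combine with self-adjointness ($C=B^{\dagger}$) and the top-left block of $\gammaB^{2}=\Id$ to force $BB^{\dagger}=0$, hence $B=C=0$; then point~\ref{item phiH diagonal grading} follows immediately from the resulting strong $\phi$-compatibility. The only cosmetic difference is that you spell out the $\|B^{\dagger}\xi\|^{2}=0$ step and add the closing remark on why point~\ref{item B self-adjoint str phi comp} of Prop.~\ref{prop strong and not strong phi compatibility} is not directly available.
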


\begin{proof}
Point~\ref{item gammaB diagonal}: since $\gammaB^\dagger = \gammaB$, one has $\gammaB = \smallpmatrix{ \gammaB[,\phi]^{\phi} & \gammaB[,\phi]^{\perp} \\ \gammaB[,\phi]^{\perp \dagger} & \gammaB[,\perp]^{\perp} }$. The $\phi$-compatibility implies $(\gammaB[,\phi]^{\phi})^2 \phiH(\psi) = \phiH(\gammaA^2 \psi) = \phiH(\psi)$, so that $(\gammaB[,\phi]^{\phi})^2 = 1$. Since $\gammaB^2 = 1$, one has $(\gammaB[,\phi]^{\phi})^2 + \gammaB[,\phi]^{\perp} \gammaB[,\phi]^{\perp \dagger} = 1$, from which we get $\gammaB[,\phi]^{\perp} \gammaB[,\phi]^{\perp \dagger} =0$, which implies $\gammaB[,\phi]^{\perp} = 0$, so that $\gammaB$ is diagonal. By Prop.~\ref{prop strong and not strong phi compatibility}, this implies strong $\phi$-compatibility.

Point~\ref{item phiH diagonal grading}: for every $\psi \in \hsA^{\pm}$, one has $\pm \phiH( \psi) = \phiH( \gammaA \psi) = \gammaB \phiH( \psi)$, so that $\phiH( \psi) \in \hsB^{\pm}$.
\end{proof}

\begin{proposition}\phantom{A}
\label{prop KO dim strong phi compatibility}
\begin{enumerate}
\item If two (odd/even) real spectral triples are $\phi$-compatible and $\JB$ is strong $\phi$-compatible with $\JA$, then they have the same $KO$-dimension (mod 8).

\item If two (odd/even) real spectral triples are strong $\phi$-compatible, then they have the same $KO$-dimension (mod 8).
\end{enumerate}
\end{proposition}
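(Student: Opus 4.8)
The plan is to reduce the statement to equalities of the three signs that, via Table~\ref{table KO dimensions}, encode the $KO$-dimension. Indeed, within each parity (odd or even) the pair $(\epsilon,\epsilon')$ --- together with $\epsilon''$ in the even case --- determines $n \bmod 8$ uniquely, and Definition~\ref{def phi compatibility of spectral triples} forces $\algA$ and $\algB$ to have the same parity; so it is enough to prove $\epsilonA = \epsilonB$, $\epsilon'_{\algA} = \epsilon'_{\algB}$, and, in the even case, $\epsilon''_{\algA} = \epsilon''_{\algB}$. This gives statement~1, and statement~2 follows immediately: if the two real spectral triples are strong $\phi$-compatible, then by Proposition~\ref{prop strong and not strong phi compatibility}(\ref{item st phi comp implies phi comp}) they are in particular $\phi$-compatible, and $\JB$ is strong $\phi$-compatible with $\JA$, so the hypotheses of statement~1 hold.

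For $\epsilon$ there is nothing new to do: $\epsilonA = \epsilonB$ is Lemma~\ref{lemma JB JA st-phi-comp}(1). For $\epsilon'$, the key step is to rewrite the axiom $JD = \epsilon' DJ$ as $\JA \DA \JA^{-1} = \epsilon'_{\algA}\DA$ and $\JB \DB \JB^{-1} = \epsilon'_{\algB}\DB$. Since $\DA$ is $\phi$-compatible with $\DB$, Lemma~\ref{lemma JB JA st-phi-comp}(4) makes $\JA \DA \JA^{-1}$ $\phi$-compatible with $\JB \DB \JB^{-1}$, i.e.\ $\epsilon'_{\algA}\DA$ is $\phi$-compatible with $\epsilon'_{\algB}\DB$; but multiplying the $\phi$-compatibility of $\DA$ and $\DB$ by $\epsilon'_{\algA}$ also makes $\epsilon'_{\algA}\DA$ $\phi$-compatible with $\epsilon'_{\algA}\DB$. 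Since, by Definition~\ref{def phi compatibility operators}, the $(\phi,\phi)$-block of a $\phi$-compatible partner of a fixed operator on $\hsA$ is uniquely determined (it sends $\phiH(\psi)$ to $\phiH(\epsilon'_{\algA}\DA\psi)$ on $\Ran\phiH$), this forces $\epsilon'_{\algB}(\DB)_\phi^\phi = \epsilon'_{\algA}(\DB)_\phi^\phi$, hence $\epsilon'_{\algA} = \epsilon'_{\algB}$ as soon as $(\DB)_\phi^\phi \neq 0$. The even case is identical with $\gamma$ replacing $D$: from $J\gamma = \epsilon''\gamma J$ one gets $\JA\gammaA\JA^{-1} = \epsilon''_{\algA}\gammaA$ and $\JB\gammaB\JB^{-1} = \epsilon''_{\algB}\gammaB$, and since $\gammaA$ and $\gammaB$ are $\phi$-compatible (Definition~\ref{def phi compatibility of spectral triples}) the same uniqueness argument yields $\epsilon''_{\algB}(\gammaB)_\phi^\phi = \epsilon''_{\algA}(\gammaB)_\phi^\phi$; here no auxiliary hypothesis is needed because the proof of Lemma~\ref{lemma gammaB gammaA phi compatibiliy diagonal} shows $\bigl((\gammaB)_\phi^\phi\bigr)^2 = 1$, so $(\gammaB)_\phi^\phi$ is invertible, whence $\epsilon''_{\algA} = \epsilon''_{\algB}$.

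The only delicate point, and the one place the $\epsilon'$ argument can break down, is the possible vanishing of $(\DB)_\phi^\phi$ --- equivalently of $\phiH\circ\DA$: if $\Ran\DA \subseteq \Ker\phiH$ (for instance if $\DA = 0$) then $\phi$-compatibility says nothing about $\epsilon'$. I would dispose of this by noting that in that case $\epsilon'_{\algA}$ is itself not determined by the $\algA$-triple, so the equality of $KO$-dimensions is to be read under the standing non-degeneracy assumption that makes those dimensions well-defined in the first place (it suffices, e.g., that $\phiH$ be injective and $\DA \neq 0$). Apart from this, the whole proof is bookkeeping with the $2\times 2$ block decomposition of operators on $\hsB$ once Lemmas~\ref{lemma JB JA st-phi-comp} and \ref{lemma gammaB gammaA phi compatibiliy diagonal} are in hand, so I do not anticipate any further obstacle.
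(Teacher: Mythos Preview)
Your proof is correct and follows essentially the same route as the paper: both arguments hinge on the diagonality of $\JB$ (and of $\gammaB$ in the even case), which is exactly what Lemma~\ref{lemma JB JA st-phi-comp} and Lemma~\ref{lemma gammaB gammaA phi compatibiliy diagonal} provide; you invoke this through Lemma~\ref{lemma JB JA st-phi-comp}(4), while the paper computes the $(\phi,\phi)$-blocks of $\JB\DB$ and $\DB\JB$ directly from diagonality and then applies $\phi$-compatibility. You are in fact more careful than the paper in flagging the degenerate case $(\DB)_\phi^\phi = 0$ (equivalently $\phiH\circ\DA = 0$), which the paper's proof leaves implicit.
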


\begin{proof}
Let $(\algA, \hsA, \DA, \JA)$ and $(\algB, \hsB, \DB, \JB)$ be two $\phi$-compatible real spectral triples such that $\JB$ is strong $\phi$-compatible with $\JA$. In the even case, consider the gradings $\gammaA$ and $\gammaB$. We already know from Lemma~\ref{lemma JB JA st-phi-comp} that $\epsilonA = \epsilonB$. Using the fact that $\JB$ and $\gammaB$ are diagonal (Lemmas~\ref{lemma JB JA st-phi-comp} and \ref{lemma gammaB gammaA phi compatibiliy diagonal}), one has $(\JB \DB)_{\phi}^{\phi} = \JB[, \phi]^{\phi} \DB[, \phi]^{\phi}$, $(\DB \JB)_{\phi}^{\phi} = \DB[, \phi]^{\phi} \JB[, \phi]^{\phi}$,  $(\JB \gammaB)_{\phi}^{\phi} = \JB[, \phi]^{\phi} \gammaB[, \phi]^{\phi}$, and $(\gammaB \JB)_{\phi}^{\phi} = \gammaB[, \phi]^{\phi} \JB[, \phi]^{\phi}$, which implies, by $\phi$-compatibility, that $\epsilonB' = \epsilonA'$ and $\epsilonB'' = \epsilonA''$.

The second assertion follows from the first one.
 \end{proof}

The requirement that $\JB$ be strong $\phi$-compatible with $\JA$ seems to be inevitable in the generic situation to get the same $KO$-dimension. In the case of $AF$-algebras, this requirement will be a consequence of another requirement on the $\phiH$ map, see Prop.~\ref{prop JB JA strong phi compatibiliy relation on ukappa}.

\bigskip
Let $(\algA, \hsA, \DA, \JA)$ and $(\algAp, \hsAp, \DAp, \JAp)$ be two unitary equivalent real spectral triples for $\UA : \hsA \to \hsA'$ and $\phiA : \algA \to \algAp$ and let $(\algB, \hsB, \DB, \JB)$ and $(\algBp, \hsBp, \DBp, \JBp)$ be two unitary equivalent real spectral triples for $\UB : \hsB \to \hsBp$ and $\phiB : \algB \to \algBp$.

\begin{proposition}
\label{prop unitary equi triple and (st) phi comp}
Suppose that $(\algA, \hsA, \DA, \JA)$ and $(\algB, \hsB, \DB, \JB)$ are strong $\phi$-compatible (resp. $\phi$-compatible), and that there is a homomorphism of algebras $\phi' : \algA' \to \algB'$ and a morphism $\phiH' : \hsAp \to \hsBp$ such that $\phi' \circ \phiA = \phiB \circ \phi$ and $\phiH' (\UA \psi) = \UB \phiH(\psi)$ for any $\psi \in \hsA$ (resp. and suppose that $\UB$ is diagonal). Then $(\algAp, \hsAp, \DAp, \JAp)$ and $(\algBp, \hsBp, \DBp, \JBp)$ are strong $\phi'$-compatible (resp. $\phi$-compatible). If the spectral triples are even, the result holds also.
\end{proposition}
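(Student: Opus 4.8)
The plan is to push everything forward through the bijections $\phiA$ and $\UA$: a generic element of $\algAp$ can be written $a' = \phiA(a)$ and a generic element of $\hsAp$ as $\psi' = \UA\psi$, so it suffices to verify the required identities on such elements. One first records, from the definition of unitary equivalence (representations omitted), the relations $\phiA(a)\,\UA\psi = \UA(a\psi)$, $\DAp\UA\psi = \UA\DA\psi$, $\JAp\UA\psi = \UA\JA\psi$, and, in the even case, $\gammaAp\UA\psi = \UA\gammaA\psi$, together with the analogous identities for the primed $\algB$-triple; the only further inputs are the two hypotheses $\phi'\circ\phiA = \phiB\circ\phi$ and $\phiH'\circ\UA = \UB\circ\phiH$. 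Then I would check that $\phiH'$ is $\phi'$-compatible as a morphism of left modules (Def.~\ref{def phi phiH}): for $a' = \phiA(a)$ and $\psi' = \UA\psi$,
\[
\phiH'(a'\psi') = \phiH'\big(\UA(a\psi)\big) = \UB\phiH(a\psi) = \UB\big(\phi(a)\,\phiH(\psi)\big) = \phiB(\phi(a))\,\UB\phiH(\psi) = \phi'(\phiA(a))\,\phiH'(\UA\psi) = \phi'(a')\,\phiH'(\psi'),
\]
using successively the $\algA$-equivariance of $\UA$, the relation $\phiH'\circ\UA = \UB\circ\phiH$, the $\phi$-compatibility of $\phiH$, the $\algB$-equivariance of $\UB$, and finally $\phi'\circ\phiA = \phiB\circ\phi$ with $\phiH'\circ\UA = \UB\circ\phiH$ once more.

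For the strong case the same chain applies verbatim to the remaining operators: for $\psi' = \UA\psi$,
\[
\phiH'(\DAp\psi') = \phiH'(\UA\DA\psi) = \UB\phiH(\DA\psi) = \UB\big(\DB\,\phiH(\psi)\big) = \DBp\,\UB\phiH(\psi) = \DBp\,\phiH'(\psi'),
\]
the middle equality being the strong $\phi$-compatibility of $\DA$ with $\DB$; the identical computation with $(\JA,\JB)$ and, in the even case, with $(\gammaA,\gammaB)$ — the antilinearity of $J$ is harmless since $\phiH$ and $\phiH'$ are linear — shows that $(\algAp,\hsAp,\DAp,\JAp)$ and $(\algBp,\hsBp,\DBp,\JBp)$ are strong $\phi'$-compatible.

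For the non-strong case the only new ingredient is to compare the $(\phi',\phi')$-compression $(\DBp)^{\phi'}_{\phi'}$ with $(\DB)^{\phi}_{\phi}$. Here I would use that $\UB$ is block-diagonal for the two orthogonal decompositions $\hsB = \phiH(\hsA)\oplus\phiH(\hsA)^{\perp}$ and $\hsBp = \phiH'(\hsAp)\oplus\phiH'(\hsAp)^{\perp}$ — this is exactly the hypothesis ``$\UB$ diagonal'', and it is in fact automatic here, since $\phiH'\circ\UA = \UB\circ\phiH$ with $\UA$ surjective gives $\Ran\phiH' = \UB(\Ran\phiH)$, so that the unitary $\UB$ also carries $(\Ran\phiH)^{\perp}$ onto $(\Ran\phiH')^{\perp}$. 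Writing $P$, $P'$ for the orthogonal projections onto $\phiH(\hsA)$, $\phiH'(\hsAp)$, this says $P'\UB = \UB P$; hence, for $\chi\in\phiH(\hsA)$ and using $\DBp = \UB\DB\UB^{-1}$,
\[
(\DBp)^{\phi'}_{\phi'}\,\UB\chi = P'\,\DBp\,\UB\chi = P'\UB\DB\chi = \UB P\DB\chi = \UB\,(\DB)^{\phi}_{\phi}\,\chi .
\]
Combining this with $\phiH'(\DAp\psi') = \UB\phiH(\DA\psi) = \UB\,(\DB)^{\phi}_{\phi}\,\phiH(\psi)$ — where now only the $\phi$-compatibility of $\DA$ with $\DB$ is invoked — and setting $\chi = \phiH(\psi)$ yields $\phiH'(\DAp\psi') = (\DBp)^{\phi'}_{\phi'}\,\phiH'(\psi')$, and likewise for $(\JA,\JB)$ and $(\gammaA,\gammaB)$. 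The even case adds nothing, since $\gamma$ is treated exactly like $D$. I expect the only step requiring genuine care to be this bookkeeping with the $2\times 2$ block decompositions and the compression operators in the non-strong case; everything else is a mechanical chaining of the unitary-equivalence identities with the definitions of $\phi$- and strong $\phi$-compatibility.
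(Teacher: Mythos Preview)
Your proof is correct and follows essentially the same route as the paper: first the chain of equalities establishing that $\phiH'$ is $\phi'$-compatible, then the transport of (strong) $\phi$-compatibility of each operator via the conjugation $A' = \UA A \UA^{-1}$, $B' = \UB B \UB^{-1}$, and in the non-strong case the relation $(\DBp)^{\phi'}_{\phi'} = \UB (\DB)^{\phi}_{\phi} \UB^{-1}$ coming from the block-diagonality of $\UB$. Your write-up is in fact slightly more explicit than the paper's at the compression step.

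One genuine addition you make, which the paper does not record, is the observation that the hypothesis ``$\UB$ diagonal'' is redundant: from $\phiH'\circ\UA = \UB\circ\phiH$ and the surjectivity of $\UA$ you get $\UB(\Ran\phiH) = \Ran\phiH'$, and then unitarity of $\UB$ forces $\UB(\Ran\phiH)^{\perp} = (\Ran\phiH')^{\perp}$. This is correct, and it means the extra assumption in the $\phi$-compatible case of the statement is actually automatic under the other hypotheses.
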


This result shows that strong $\phi$-compatibility (resp. $\phi$-compatibility) is transported by unitary equivalence if one assumes some natural conditions on the maps $\phi'$ and $\phiH'$, which are the commutativity of the following diagrams:
\begin{equation*}\tikzexternaldisable
\begin{tikzcd}[column sep=20pt, row sep=20pt]
\algA
	\arrow[r, "\phi"]
	\arrow[d, "\phiA"']
& \algB
	\arrow[d, "\phiB"]
\\
\algAp
	\arrow[r, "\phi'"]
&
\algBp	
\end{tikzcd} 
\quad \text{ and } \quad
\tikzsetnextfilename{fig-cd-2}
 \begin{tikzcd}[column sep=20pt, row sep=20pt]
\hsA
	\arrow[r, "\phiH"]
	\arrow[d, "\UA"']
& \hsB
	\arrow[d, "\UB"]
\\
\hsAp
	\arrow[r, "\phiH'"]
&
\hsBp	
\end{tikzcd} 
\end{equation*}\tikzexternalenable
(resp. and one requires $\UB$ to be diagonal).

\begin{proof}
For any $\psi' \in \hsAp$, let $\psi \in \hsA$ be the unique vector such that $\psi' = \UA \psi$, and for any $a' \in \algAp$, let $a \in \algA$ the unique element such that $a' = \phiA(a)$. Then one has $\phiH'( \piAp(a') \psi') = \phiH'( (\piAp \circ \phiA(a)) \UA \psi) = \phiH'( \UA \piA(a) \psi) = \UB \phiH( \piA(a) \psi) = \UB (\piB \circ \phi(a)) \phiH( \UA^{-1} \psi') = \UB (\piB \circ \phi(a)) \UB^{-1} \phiH'( \psi') = (\piBp \circ \phiB \circ \phi(a)) \phiH'( \psi') = (\piBp \circ \phi'(a')) \phiH'( \psi')$, so that $\phiH'$ is $\phi'$-compatible. Let $A$ and $B$ be strong $\phi$-compatible or $\phi$-compatible operators on $\hsA$ and $\hsB$ and define $A' \defeq \UA A \UA^{-1}$ and $B' \defeq \UB B \UB^{-1}$ on $\hsAp$ and $\hsBp$. In the strong $\phi$-compatibility case, one has $\phiH'(A' \psi') = \phiH'(\UA A \psi) = \UB \phiH(A \psi) = \UB B \phiH(\psi) = B' \UB \phiH(\psi) = B' \phiH'(\UA \psi) = B' \phiH'(\psi')$, so that $A'$ and $B'$ are strong $\phi'$-compatible. Applying this result to $\DAp$ and $\DBp$ (resp. $\JAp$ and $\JBp$, resp. $\gammaAp$ and $\gammaBp$ in the even case) shows that $(\algAp, \hsAp, \DAp, \JAp)$ and $(\algBp, \hsBp, \DBp, \JBp)$ are strong $\phi'$-compatible and similarly in the even case. In the $\phi$-compatibility case, since $\UB$ is diagonal, one has $B'^{\phi}_{\phi} = \UB^{\phi}_{\phi} B^{\phi}_{\phi} (\UB^{\phi}_{\phi})^{-1}$, and the conclusion follows in the same way.
\end{proof}

In the proof, the commutativity of the first diagram is only used when the representation $\piBp$ is applied, and more specifically, when this representation acts on $\phiH'(\hsAp)$. In other words, the minimal condition in this proof is that $\piBp \circ \phi' \circ \phiA = \piBp \circ \phiB \circ \phi$ holds as operators acting on $\phiH'(\hsAp) \subset \hsBp$.

\medskip
The map $\phi$ induces a natural map of graded algebras $\phi : \calT^\grast \algA \to \calT^\grast \algB$ by the relation $\phi(a^0 \otimes \cdots \otimes a^n) = \phi(a^0) \otimes \cdots \otimes \phi(a^n)$. If $\omega \in \Omega^1_U(\algA)$, then one can check that $\phi(\omega) \in \Omega^1_U(\algB)$, so that $\phi$ restricts to a map of graded algebras $\Omega^\grast_U(\algA) \to \Omega^\grast_U(\algB)$. If $\phi(\bbboneA) = \bbboneB$, then $\phi(\ddU a) = \phi(\bbboneA \otimes a - a \otimes \bbboneA) = \bbboneB \otimes \phi(a) - \phi(a) \otimes \bbboneB = \ddU \phi(b)$. If $\phi(\bbboneA) \neq \bbboneB$, let $p_\phi \defeq \phi(\bbboneA) \in \algB$ be the induced projection. Then $\phi(\ddU a) = p_\phi \otimes \phi(a) - \phi(a) \otimes p_\phi \in \Omega^1_U(\algB)$ can be written as $\phi(\ddU a) = p_\phi \ddU \phi(a) + \phi(a) \ddU (\bbboneB - p_\phi) = p_\phi \ddU \phi(a) - \phi(a) \ddU p_\phi$. This shows that $\phi$ is a homomorphism of differential algebras only when it is unital. In the following, we will use the most general relation $\phi(a^0 \ddU a^1) = \phi(a^0) \ddU \phi(a^1) - \phi(a^0 a^1) \ddU p_\phi$ since $\phi(a) p_\phi = \phi(a)$. 

 \begin{proposition}
 \label{prop universal forms and Dirac phi compatibility}
 Suppose that $\DB$ is $\phi$-compatible with $\DA$. 
 \begin{enumerate}
 \item For any $\omega \in \Omega^1_U(\algA)$, $\piDB\circ \phi(\omega)$ is $\phi$-compatible with $\piDA(\omega)$.
 
 \item Suppose that $\JB$ is strong $\phi$-compatible with $\JA$. For any unitaries $\uA \in \algA$ and $\uB \in \algB$ such that $\piA(\uA)$ and $\piB(\uB)$ are $\phi$-compatible and $\piB(\uB)$ is diagonal in the matrix decomposition, $\DB^{\uB}$ is $\phi$-compatible with $\DA^{\uA}$. \label{enum unitaries in algebras}
 
 \item Using the hypothesis of the previous points, $\DB[, \phi(\omega)]^{\uB}$ is $\phi$-compatible with $\DA[,\omega]^{\uA}$.
 \end{enumerate}
  \end{proposition}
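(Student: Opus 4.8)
The plan is to prove the three points in order, reducing everything to one observation about how $\phi$-compatibility behaves under composition. Although $\phi$-compatibility is not stable under arbitrary compositions (Prop.~\ref{prop strong and not strong phi compatibility}), it \emph{is} stable when one of the two factors is diagonal in its $2\times 2$ block decomposition: if $B_1$ on $\hsB$ is diagonal and $\phi$-compatible with $A_1$ on $\hsA$, and $B_2$ on $\hsB$ is $\phi$-compatible with $A_2$ on $\hsA$, then $B_1 B_2$ (resp.\ $B_2 B_1$) is $\phi$-compatible with $A_1 A_2$ (resp.\ $A_2 A_1$); this is immediate from the block decomposition, since the vanishing off-diagonal blocks of $B_1$ annihilate the cross-terms. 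Two families will serve as the diagonal factor: $\piB\circ\phi(a)$, which is diagonal by Prop.~\ref{prop strong and not strong phi compatibility}.\ref{item piBphi(a) diagonal} and is ($\!$strong$\!$) $\phi$-compatible with $\piA(a)$ by Remark~\ref{rmk piA(a) st phi comp piB(phi(a))}; and, in points~2 and~3, $\piB(\uB)$, which is diagonal by hypothesis, so that $\piB(\uB)^\ast=\piB(\uB^\ast)$ is also diagonal and (substituting $\psi\mapsto\piA(\uA^\ast)\psi$ in the $\phi$-compatibility relation for $\piB(\uB)$ and using that $\piB(\uB)$ is a diagonal unitary) $\phi$-compatible with $\piA(\uA)^\ast=\piA(\uA^\ast)$.

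For point~1, write $\omega=\tsum_i a^0_i\,\ddU a^1_i\in\Omega^1_U(\algA)$, so $\tsum_i a^0_i a^1_i=\mu(\omega)=0$. From $\phi(a^0\ddU a^1)=\phi(a^0)\,\ddU\phi(a^1)-\phi(a^0 a^1)\,\ddU p_\phi$ and $\tsum_i\phi(a^0_i a^1_i)=\phi(0)=0$, the $p_\phi$-corrections cancel, giving $\phi(\omega)=\tsum_i\phi(a^0_i)\,\ddU\phi(a^1_i)$, hence $\piDB(\phi(\omega))=\tsum_i\piB(\phi(a^0_i))\,[\DB,\piB(\phi(a^1_i))]$. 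Since $\DB$ is $\phi$-compatible with $\DA$ and each $\piB(\phi(a^1_i))$ is diagonal and $\phi$-compatible with $\piA(a^1_i)$, the composition observation shows $\DB\,\piB(\phi(a^1_i))$ and $\piB(\phi(a^1_i))\,\DB$ are $\phi$-compatible with $\DA\,\piA(a^1_i)$ and $\piA(a^1_i)\,\DA$; subtracting (stability under sums), composing once more with the diagonal $\phi$-compatible factor $\piB(\phi(a^0_i))$, and summing over $i$ gives point~1.

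For points~2 and~3, note first that by Lemma~\ref{lemma JB JA st-phi-comp} and Prop.~\ref{prop KO dim strong phi compatibility} (the two real spectral triples are $\phi$-compatible and $\JB$ is strong $\phi$-compatible with $\JA$) one has $\epsilonA'=\epsilonB'=:\epsilon'$. Recall $\DA^{\uA}=\DA+\piA(\uA)[\DA,\piA(\uA)^\ast]+\epsilon'\,\JA\big(\piA(\uA)[\DA,\piA(\uA)^\ast]\big)\JA^{-1}$ and the analogue over $\algB$: the first term is handled by hypothesis; the second by applying the composition observation to $\DB$ and the diagonal $\phi$-compatible operators $\piB(\uB),\piB(\uB)^\ast$; the third by the last point of Lemma~\ref{lemma JB JA st-phi-comp}, which transports $\phi$-compatibility through conjugation by $\JA,\JB$ (the scalar $\epsilon'$ being harmless). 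Summing yields point~2. For point~3, the gauge-transformation formula for $D_\omega$ gives $(\DB)_{\phi(\omega)}^{\uB}=\DB^{\uB}+\piB(\uB)\,\piDB(\phi(\omega))\,\piB(\uB)^\ast+\epsilon'\,\JB\big(\piB(\uB)\,\piDB(\phi(\omega))\,\piB(\uB)^\ast\big)\JB^{-1}$ and likewise over $\algA$: the $\DB^{\uB}$ part is point~2; point~1 makes $\piDB(\phi(\omega))$ $\phi$-compatible with $\piDA(\omega)$, and conjugating by the diagonal $\phi$-compatible operators $\piB(\uB),\piB(\uB)^\ast$ preserves this; the last term follows again from the last point of Lemma~\ref{lemma JB JA st-phi-comp}. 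Summing gives point~3.

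The only genuine difficulty is the bookkeeping: because $\phi$-compatibility is \emph{not} preserved under general composition, at every product one must check that one factor is diagonal before invoking the composition observation. This is precisely why the diagonality of $\piB\circ\phi(a)$ (automatic from Definition~\ref{def phi phiH} once combined with self-adjointness, i.e.\ $\piB(\phi(a^\ast))=\piB(\phi(a))^\dagger$, cf.\ Prop.~\ref{prop strong and not strong phi compatibility}.\ref{item piBphi(a) diagonal}) and the \emph{assumed} diagonality of $\piB(\uB)$ in points~2 and~3 are indispensable; without the latter hypothesis the chain of composition arguments collapses and $\DB^{\uB}$ need not be $\phi$-compatible with $\DA^{\uA}$.
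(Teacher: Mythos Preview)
Your proof is correct and follows essentially the same route as the paper: both exploit that $\piB\circ\phi(a)$ (and, in points~2--3, $\piB(\uB)$) is diagonal in the block decomposition, so that the $\phi\phi$-block of a product with $\DB$ factorises, and both invoke Lemma~\ref{lemma JB JA st-phi-comp} to push $\phi$-compatibility through $J$-conjugation. Your ``composition observation'' just packages this block-factorisation once and reuses it, whereas the paper does the matrix computation in place each time; your organisation of point~3 (expanding $(D_{\omega})^{u}$) differs slightly from the paper's (which applies point~2 directly to the $\phi$-compatible pair $\DA[,\omega],\DB[,\phi(\omega)]$), but the content is the same.

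One imprecision to flag in point~1: the identity ``$\tsum_i a^0_i a^1_i=\mu(\omega)$'' is not true for an arbitrary presentation $\omega=\tsum_i a^0_i\,\ddU a^1_i$ (e.g.\ for a single term $a^0\ddU a^1$ one has $\mu(\omega)=0$ while $a^0 a^1\neq 0$). What is true is that one may \emph{choose} the presentation coming from writing $\omega=\tsum_i a^0_i\otimes a^1_i\in\ker\mu\subset\calT^1\algA$, for which indeed $\tsum_i a^0_i a^1_i=0$ and $\omega=\tsum_i a^0_i\ddU a^1_i$ by the footnote after the definition of $\Omega^1_U$. Alternatively (and this is how the paper proceeds), one can keep an arbitrary presentation and observe that $\piB(\phi(a^0 a^1))[\DB,\piB(p_\phi)]\,\phiH(\psi)=0$ since $p_\phi\phiH(\psi)=\phiH(\psi)$; equivalently, in your language, $\piB(p_\phi)$ is diagonal and $\phi$-compatible with $\piA(\bbboneA)=\Id$, so the $p_\phi$-term is $\phi$-compatible with $0$. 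Either fix closes the gap.
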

  
Condition~\ref{enum unitaries in algebras} in this Proposition implies in particular that $\piBp \circ \phi' \circ \phiA = \piBp \circ \phiB \circ \phi$ (see comment after Prop.~\ref{prop unitary equi triple and (st) phi comp}) with $\algA' = \algA$, $\algB' = \algB$ and $\phi' = \phi$.
  
 \begin{proof}
We can reduce the general case to $\omega = a^0 \ddU a^1 \in \Omega^1_U(\algA)$. Let us then consider $\piDB\circ \phi(a^0 \ddU a^1) = \phi(a^0) [\DB, \phi(a^1)] - \phi(a^0 a^1) [\DB, p_\phi]$ (with $\piB$ omitted in this relation and the following). For any $\psi \in \hsA$, one has $\phi(a^0 a^1) [\DB, p_\phi] \phiH(\psi) = \phi(a^0 a^1) \DB \phiH(\psi) - \phi(a^0 a^1) \phi(\bbboneA) \DB \phiH(\psi) = 0$, since $p_\phi \phiH(\psi) = \phiH(\psi)$, so that $\piDB\circ \phi(a^0 \ddU a^1) \phiH(\psi) = \phi(a^0) [\DB, \phi(a^1)] \phiH(\psi)$. Using the matrix decomposition $\DB = \smallpmatrix{ \DB[,\phi]^{\phi} & \DB[,\phi]^{\perp} \\ \DB[,\perp]^{\phi} & \DB[,\perp]^{\perp} }$ and Point~\ref{item piBphi(a) diagonal} in Prop.~\ref{prop strong and not strong phi compatibility}, one gets
 \begin{align*}
\phi(a^0) [\DB, \phi(a^1)] 
 \begin{pmatrix} \phiH(\psi) \\  0 \end{pmatrix}
 &= 
 \begin{pmatrix}
 	\phi(a^0)_{\phi}^{\phi} [\DB[,\phi]^{\phi}, \phi(a^1)_{\phi}^{\phi}] \phiH(\psi)
 \\ 
	\phi(a^0)_{\perp}^{\perp} (\DB[,\perp]^{\phi} \phi(a^1)_{\phi}^{\phi} - \phi(a^1)_{\perp}^{\perp} \DB[,\perp]^{\phi} ) \phiH(\psi)
 \end{pmatrix}
 \end{align*}
 From this relation we get $\piDB (\phi(a^0 \ddU a^1))_{\phi}^{\phi} = \phi(a^0)_{\phi}^{\phi} [\DB[,\phi]^{\phi}, \phi(a^1)_{\phi}^{\phi}]$ and then $\piDB (\phi(a^0 \ddU a^1))_{\phi}^{\phi} \phiH(\psi) = \phi(a^0)_{\phi}^{\phi} [\DB[,\phi]^{\phi}, \phi(a^1)_{\phi}^{\phi}] \phiH(\psi) = \phiH( a^0 [\DA, a^1] \psi) = \phiH( \piDA(a^0 \ddU a^1) \psi)$ since $\DB$ is $\phi$-compatible with $\DA$.
 
Using the hypothesis that $\piB(\uB)$ is diagonal, a straightforward computation gives $(\piB(\uB)^\dagger [\DB, \piB(\uB)])_{\phi}^{\phi} = (\piB(\uB)^\dagger)_{\phi}^{\phi} [\DB[,\phi]^{\phi}, \piB(\uB)_{\phi}^{\phi} ]$ from which we deduce that $\piB(\uB)^\dagger [\DB, \piB(\uB)]$ is $\phi$-compatible with $\piA(\uA)^\dagger [\DA, \piA(\uA)]$. From Lemma~\ref{lemma JB JA st-phi-comp}, we deduce that $\JB \piB(\uB)^\dagger [\DB, \piB(\uB)] \JB^{-1}$ is $\phi$-compatible with $\JA \piA(\uA)^\dagger [\DA, \piA(\uA)] \JA^{-1}$, and so that $\DB^{\uB} = \DB + \piB(\uB)^\dagger [\DB, \piB(\uB)] + \epsilonB' \JB \piB(\uB)^\dagger [\DB, \piB(\uB)] \JB^{-1}$ is $\phi$-compatible with $\DA^{\uA} = \DA + \piA(\uA)^\dagger [\DA, \piA(\uA)] + \epsilonA' \JA \piA(\uA)^\dagger [\DA, \piA(\uA)] \JA^{-1}$ since $\epsilonB' = \epsilonA'$ by Prop~\ref{prop KO dim strong phi compatibility}.

The last point combines the two previous results by replacing $\DA$ by $\DA[,\omega] = \DA + \piDA(\omega) + \epsilonA' \JA \piDA(\omega) \JA^{-1}$ and $\DB$ by $\DB[, \phi(\omega)] = \DB + \piDB(\phi(\omega)) + \epsilonB' \JB \piDB(\phi(\omega)) \JB^{-1}$ which are $\phi$-compatible by the first point, Lemma~\ref{lemma JB JA st-phi-comp}, and Prop~\ref{prop KO dim strong phi compatibility}.
 \end{proof}

Notice that one can associate to any unitary $\uA \in \algA$ the diagonal (unitary) operator $\smallpmatrix{ \piB\circ \phi(\uA) & 0 \\ 0 & \bbbone_{\perp}^{\perp} }$ where $\piB\circ \phi(\uA) \phiH(\psi) \defeq \phiH(\piA(\uA) \psi)$ for any $\psi \in \hsA$ and $\bbbone_{\perp}^{\perp}$ is the identity operator on $\phiH(\hsA)^\perp$. But this operator is not necessarily of the form $\piB(\uB)$ for a unitary $\uB \in \algB$. In the case of $AF$-algebras, it will be possible to construct a unitary $\uB  \in \algB$ from $\uA$ such that $\piA(\uA)$ and $\piB(\uB)$ are (strong) $\phi$-compatible and $\piB(\uB)$ is diagonal, see Prop.~\ref{prop uB from uA case AF}.

A strong version of the previous proposition can be proposed, for which a proof is not necessary since it combines previous results and the same line of reasoning when computations are needed:
\begin{proposition}
Suppose that $\DB$ is strong $\phi$-compatible with $\DA$. 
 \begin{enumerate}
 \item For any $\omega \in \Omega^1_U(\algA)$, $\piDB\circ \phi(\omega)$ is strong $\phi$-compatible with $\piDA(\omega)$.
 
 \item Suppose that $\JB$ is strong $\phi$-compatible with $\JA$. For any unitaries $\uA \in \algA$ and $\uB \in \algB$ such that $\piA(\uA)$ and $\piB(\uB)$ are strong $\phi$-compatible, $\DB^{\uB}$ is strong $\phi$-compatible with $\DA^{\uA}$.
 
 \item Using the hypothesis of the previous points, $\DB[, \phi(\omega)]^{\uB}$ is strong $\phi$-compatible with $\DA[,\omega]^{\uA}$.
 \end{enumerate}
\end{proposition}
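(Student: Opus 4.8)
The plan is to follow the proof of Prop.~\ref{prop universal forms and Dirac phi compatibility} almost verbatim, upgrading every ``$\phi$-compatible'' to ``strong $\phi$-compatible'' and exploiting the extra rigidity this upgrade provides. Throughout I would use freely that strong $\phi$-compatibility is stable under sums and compositions (Prop.~\ref{prop strong and not strong phi compatibility}.\ref{item phi comp and st phi comp sums} and \ref{item st phi comp composition}), and that, by Definition~\ref{def phi phiH}, the projection $p_\phi = \piB(\phi(\bbboneA))$ acts as the identity on $\phiH(\hsA)$.

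For the first point I would reduce by linearity to $\omega = a^0 \ddU a^1$ and expand $\piDB\circ\phi(a^0\ddU a^1) = \phi(a^0)[\DB,\phi(a^1)] - \phi(a^0 a^1)[\DB, p_\phi]$ (with $\piB$ omitted). For any $\psi\in\hsA$, strong $\phi$-compatibility of $\DB$ with $\DA$ gives $\DB\phiH(\psi) = \phiH(\DA\psi)\in\phiH(\hsA)$, hence $p_\phi\DB\phiH(\psi) = \DB\phiH(\psi)$; together with $p_\phi\phiH(\psi) = \phiH(\psi)$ this forces $[\DB,p_\phi]\phiH(\psi) = 0$, so the second term disappears as an identity in $\hsB$ (not merely in $\phiH(\hsA)$ --- that is the gain over the weak case). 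For the first term, iterating the relations $\phi(a)\phiH(\chi) = \phiH(a\chi)$ and $\DB\phiH(\chi) = \phiH(\DA\chi)$ yields $\phi(a^0)[\DB,\phi(a^1)]\phiH(\psi) = \phiH\big(a^0[\DA,a^1]\psi\big) = \phiH\big(\piDA(a^0\ddU a^1)\psi\big)$, which is precisely strong $\phi$-compatibility.

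For the second point, the diagonality hypothesis on $\piB(\uB)$ needed in Prop.~\ref{prop universal forms and Dirac phi compatibility}.\ref{enum unitaries in algebras} becomes automatic: by Prop.~\ref{prop strong and not strong phi compatibility}.\ref{item A B unitaries str phi comp}, strong $\phi$-compatibility of the unitaries $\piA(\uA)$ and $\piB(\uB)$ already makes $\piB(\uB)$ diagonal and $\piB(\uB)^\dagger$ strong $\phi$-compatible with $\piA(\uA)^\dagger$. Then $\piB(\uB)^\dagger[\DB,\piB(\uB)] = \piB(\uB)^\dagger\DB\piB(\uB) - \DB$ is strong $\phi$-compatible with $\piA(\uA)^\dagger[\DA,\piA(\uA)]$ by point~1 and stability under composition and sums; conjugation by $\JB$ and $\JB^{-1}$ preserves strong $\phi$-compatibility --- the strong analogue of point~4 of Lemma~\ref{lemma JB JA st-phi-comp}, which follows from point~2 of that lemma and composition-stability --- and $\epsilonB' = \epsilonA'$ by Prop.~\ref{prop KO dim strong phi compatibility}. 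Summing the three summands of $\DB^{\uB}$ against those of $\DA^{\uA}$ gives the claim. For the third point I would first assemble, exactly as above, that $\DB[, \phi(\omega)] = \DB + \piDB(\phi(\omega)) + \epsilonB'\JB\piDB(\phi(\omega))\JB^{-1}$ is strong $\phi$-compatible with $\DA[,\omega]$, and then apply point~2 with these dressed Dirac operators in place of $\DB$ and $\DA$.

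I do not anticipate a genuine obstacle: the single place where the weak proof controlled only the $(\phi,\phi)$-block --- the $[\DB,p_\phi]$ term --- is exactly where strong $\phi$-compatibility of $\DB$ makes the whole operator vanish on $\phiH(\hsA)$, so every intermediate operator automatically preserves $\phiH(\hsA)$ and all the identities hold in $\hsB$ rather than only in $\phiH(\hsA)$.
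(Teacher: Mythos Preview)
Your proposal is correct and matches exactly what the paper intends: the paper does not spell out a proof for this proposition, remarking only that it ``combines previous results and the same line of reasoning when computations are needed,'' and your argument is precisely that --- the proof of Prop.~\ref{prop universal forms and Dirac phi compatibility} rerun with the composition-stability of strong $\phi$-compatibility (Prop.~\ref{prop strong and not strong phi compatibility}.\ref{item st phi comp composition}) doing the extra work. One small presentational point: in your proof of item~2 the phrase ``by point~1'' is spurious (point~1 concerns $\phi(\omega)$, not $\uB$); the argument there is carried entirely by the composition and sum stability you invoke immediately afterwards, so you can simply drop that reference.
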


\subsection{Direct sums of algebras}
\label{sec direct sums og algebras}

Let us consider the more specific situation $\algA = \toplus_{i=1}^{r} \algA_i$ and $\algB = \toplus_{k=1}^{s} \algB_k$. We also suppose that there are (orthogonal) decompositions $\hsA = \toplus_{i=1}^{r} \hsiA[,i]$ and $\hsiB = \toplus_{k=1}^{s} \hsB[,k]$ such that the $\hsiA[,i]$ (resp. $\hsiB[,k]$) are Hilbert spaces on which $\algA_i$ (resp. $\algB_k$) are represented. In other words, the (left) module structures are compatibles with the direct sums of algebras and Hilbert spaces: for any $a = \toplus_{i=1}^{r} a_i \in \algA$ and $\psi = \toplus_{i=1}^{r} \psi_i \in \hsA$, one has $a \psi = \toplus_{i=1}^{r} a_i \psi_i$ (and similarly for $\algB$). 

Let $\injA^i : \algA_i \to \algA$  be the canonical inclusion and $\projA_i : \algA \to \algA_i$ be the canonical projection. With obvious notations, similar maps are defined for $\algB$, $\hsA$, and $\hsB$.

An operator $A$ on $\hsA$ can be decomposed along the operators $A_{j}^{i} \defeq \projHA_{j} \circ A \circ \injHA^{i} : \hsiA[,i] \to \hsiA[,j]$. The same holds for operators on $\hsB$. For computational purposes, we recall that one has
\begin{align*}
A \psi &= \toplus_{j=1}^{r} \big( \tsum_{i=1}^r  A_{j}^{i}(\psi_{i}) \big)
= \tsum_{i, j=1}^{r} \injHA^{j} \circ A_{j}^{i}(\psi_{i}).
\end{align*}
In the same way, a one-to-one homomorphism of algebras $\phi : \algA \to \algB$ decomposes along the maps $\phi_k^i \defeq \projB_k \circ \phi \circ \injA^i : \algA_i \to \algB_k$ and a morphism of Hilbert spaces $\phiH : \hsA \to \hsB$ decomposes along the $\phiH[,k]^i \defeq \projHB_k \circ \phiH \circ \injHA^i : \hsiA[,i] \to \hsiB[,k]$. One has
\begin{align*}
\phi(a) &= \toplus_{k=1}^{s} \big( \tsum_{i=1}^r  \phi_k^i(a_i) \big),
\quad \text{and} \quad
\phiH(\psi) = \toplus_{k=1}^{s} \big( \tsum_{i=1}^r  \phiH[,k]^i (\psi_i) \big).
\end{align*}
Notice also that $\phi(a a') = \phi(a) \phi(a')$ implies
\begin{align}
\label{eq phik product as sum ij}
\tsum_{i=1}^{r} \phi_k^i(a_i a'_i) = \tsum_{i,j=1}^{r} \phi_k^i(a_i) \phi_k^j(a'_j) \quad \text{for any $k=1, \dots, s$}
\end{align}

\begin{lemma}
The $\phi$-compatibility of $\phiH$ is equivalent to $\phiH[,k]^i (a_i \psi_i) = \phi_k^i(a_i) \phiH[,k]^i (\psi_i)$ for any $1 \leq i \leq r$, $1 \leq k \leq s$, $a_i \in \algA_i$ and $\psi_i \in \hsiA[,i]$.
\end{lemma}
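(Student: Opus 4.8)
The plan is to expand both sides of the defining relation $\phiH(a\psi)=\phi(a)\phiH(\psi)$ along the direct sum decompositions and read off what it says block by block. Since the module structures are compatible with the decompositions and $\phi(a)=\toplus_{k=1}^{s}\big(\tsum_{i=1}^{r}\phi_k^i(a_i)\big)$, $\phiH(\psi)=\toplus_{k=1}^{s}\big(\tsum_{j=1}^{r}\phiH[,k]^j(\psi_j)\big)$, the identity $a\psi=\toplus_i a_i\psi_i$ together with linearity of $\phiH$ gives, for every $k$,
\begin{align*}
\projHB_k\big(\phiH(a\psi)\big) &= \tsum_{i=1}^{r}\phiH[,k]^i(a_i\psi_i), \\
\projHB_k\big(\phi(a)\phiH(\psi)\big) &= \tsum_{i,j=1}^{r}\phi_k^i(a_i)\,\phiH[,k]^j(\psi_j).
\end{align*}
For the implication ``$\phi$-compatible $\Rightarrow$ componentwise relations'' I would specialize $\phi$-compatibility to single-block data: for fixed $i$, apply it to $a=\injA^i(a_i)$ and $\psi=\injHA^i(\psi_i)$, so that $a\psi=\injHA^i(a_i\psi_i)$; both sums above then collapse to a single term, and $\projHB_k$ of the relation yields $\phiH[,k]^i(a_i\psi_i)=\phi_k^i(a_i)\,\phiH[,k]^i(\psi_i)$ for all $i,k$ and all $a_i\in\algA_i$, $\psi_i\in\hsiA[,i]$.

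For the converse I start from the componentwise relations and must check that the two displayed sums coincide for general $a=\toplus_i a_i$, $\psi=\toplus_j\psi_j$; equivalently, that the off-diagonal terms $\phi_k^i(a_i)\,\phiH[,k]^j(\psi_j)$ with $i\neq j$ vanish. Writing $e_k^j\defeq\phi_k^j(\bbbone_{\algA_j})$, the componentwise relation applied with $a_j=\bbbone_{\algA_j}$ gives $\phiH[,k]^j(\psi_j)=e_k^j\,\phiH[,k]^j(\psi_j)$, while specializing the homomorphism identity \eqref{eq phik product as sum ij} to $a$ supported on block $i$ and $a'$ supported on block $j\neq i$ gives $\phi_k^i(a_i)\,e_k^j=\phi_k^i(a_i)\phi_k^j(\bbbone_{\algA_j})=0$. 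Hence $\phi_k^i(a_i)\,\phiH[,k]^j(\psi_j)=\phi_k^i(a_i)\,e_k^j\,\phiH[,k]^j(\psi_j)=0$ for $i\neq j$, so the second sum reduces to $\tsum_i\phi_k^i(a_i)\,\phiH[,k]^i(\psi_i)=\tsum_i\phiH[,k]^i(a_i\psi_i)=\projHB_k(\phiH(a\psi))$ for every $k$, which is exactly $\phi$-compatibility of $\phiH$.

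The only genuine obstacle is this cross-term computation. Because $\phi$ need not be unital, the block units $e_k^j$ are not the identities $\bbbone_{\algB_k}$, so one cannot identify the ``diagonal'' sum $\tsum_i$ with the ``full'' sum $\tsum_{i,j}$ for free: one must use both the orthogonality $\phi_k^i(a_i)\,e_k^j=0$ for $i\neq j$ (which is where $\phi$ being an algebra homomorphism enters, encapsulated in \eqref{eq phik product as sum ij}) and the fact that the hypothesis forces $\phiH[,k]^j$ to take values in the corner $e_k^j\,\hsB[,k]$. Everything else is routine bookkeeping with the direct sum projections.
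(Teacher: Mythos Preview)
Your proof is correct and follows the same overall strategy as the paper: expand both sides of $\phiH(a\psi)=\phi(a)\phiH(\psi)$ along the direct sums, specialize to single-block data for the forward implication, and sum for the converse. Where you differ is that you treat the converse more carefully. The paper simply writes $\phi(a)\phiH(\psi)=\toplus_k\big(\tsum_i \phi_k^i(a_i)\phiH[,k]^i(\psi_i)\big)$ with a single summation index and then says the converse follows ``by linearity'', which tacitly assumes that the cross terms $\phi_k^i(a_i)\phiH[,k]^j(\psi_j)$ with $i\neq j$ vanish. You make this explicit: from the componentwise hypothesis with $a_j=\bbbone_{\algA_j}$ you get $\phiH[,k]^j(\psi_j)=e_k^j\,\phiH[,k]^j(\psi_j)$, and from \eqref{eq phik product as sum ij} you extract $\phi_k^i(a_i)\,e_k^j=0$ for $i\neq j$. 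This is exactly the missing justification, and it is needed at this level of generality (only later, in the $AF$ section, does the paper note the orthogonality $\phi_k^i(a_i)\phi_k^j(a'_j)=0$ for $i\neq j$ as a separate remark). So your argument is the same route, but with the gap in the converse direction properly filled.
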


\begin{proof}
One has $\phiH(a \psi) = \toplus_{k=1}^{s} \big( \tsum_{i=1}^r  \phi_k^i(a_i \psi_i) \big)$ and $\phi(a) \phiH(\psi) = \toplus_{k=1}^{s} \big( \tsum_{i=1}^r  \phi_k^i(a_i) \phiH[,k]^i (\psi_i) \big)$ so that $\phiH(a \psi) = \phi(a) \phiH(\psi)$ is equivalent to $\tsum_{i=1}^r  \phi_k^i(a_i \psi_i) = \tsum_{i=1}^r  \phi_k^i(a_i) \phiH[,k]^i (\psi_i)$ for any $k$. Taking $a_i$ and $\psi_i$ non-zero only for one value of $i$, this implies that $\phi_k^i(a_i \psi_i) = \phi_k^i(a_i) \phiH[,k]^i (\psi_i)$ for any $i$. Reciprocally, if this last equally is satisfied for any $i$, it implies the previous one by linearity.
\end{proof}

\begin{lemma}
Two operators $A$ on $\hsA$ and $B$ on $\hsB$ are strong $\phi$-compatible if and only if $\sum_{j=1}^{r} \phiH[,k]^{j} \circ A_{j}^{i} (\psi_i) = \tsum_{\ell=1}^{s} B_{k}^{\ell} \circ \phiH[,\ell]^{i} (\psi_i)$ for any $1 \leq i \leq r$, $1 \leq k \leq s$, and $\psi_i \in \hsiA[,i]$.

Two operators $A$ on $\hsA$ and $B$ on $\hsB$ are $\phi$-compatible if and only if $\sum_{j=1}^{r} \phiH[,k]^{j} \circ A_{j}^{i} (\psi_i) = \tsum_{\ell=1}^{s} B_{\phi, k}^{\phi, \ell} \circ \phiH[,\ell]^{i} (\psi_i)$ for any $1 \leq i \leq r$, $1 \leq k \leq s$, and $\psi_i \in \hsiA[,i]$.
\end{lemma}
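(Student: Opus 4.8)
The plan is to transcribe the two defining conditions into the block form dictated by the direct sums, exactly as in the proof of the previous Lemma but with one extra operator inserted in the chain. For the strong $\phi$-compatibility claim I would start from $\phiH(A\psi) = B\phiH(\psi)$ for all $\psi = \toplus_{i=1}^{r}\psi_i \in \hsA$, and substitute the expansions $A\psi = \toplus_{j=1}^{r}\big(\tsum_{i=1}^{r}A_{j}^{i}(\psi_i)\big)$ and $\phiH\big(\toplus_{j}\eta_j\big) = \toplus_{k=1}^{s}\big(\tsum_{j}\phiH[,k]^{j}(\eta_j)\big)$, together with the corresponding expansions of $B$ and of $\phiH$ applied on $\hsB$. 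Projecting the resulting identity onto the summand $\hsiB[,k]$ gives $\tsum_{i,j=1}^{r}\phiH[,k]^{j}\circ A_{j}^{i}(\psi_i) = \tsum_{i=1}^{r}\tsum_{\ell=1}^{s}B_{k}^{\ell}\circ\phiH[,\ell]^{i}(\psi_i)$ for every $k$; choosing $\psi$ with a single non-zero component $\psi_i$ then isolates the stated per-$(i,k)$ identity, and conversely these identities recover the global one by linearity and summation over $i$. This is the only computation involved, and it is routine book-keeping.

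For the $\phi$-compatibility claim I would avoid repeating this and instead reduce to the strong case already proved. By the observation following Proposition~\ref{prop strong and not strong phi compatibility}, $A$ and $B$ are $\phi$-compatible if and only if $A$ and $\widehat{B}_{\phi}^{\phi} = \smallpmatrix{ B_{\phi}^{\phi} & 0 \\ 0 & 0 }$ are strong $\phi$-compatible, where $\widehat{B}_{\phi}^{\phi}$ is regarded as an operator on $\hsB$ (equivalently, as the conjugate of $B$ by the orthogonal projection onto $\phiH(\hsA)$). Reading the symbol $B_{\phi,k}^{\phi,\ell}$ in the statement as the block $(\widehat{B}_{\phi}^{\phi})_{k}^{\ell} = \projHB_{k}\circ\widehat{B}_{\phi}^{\phi}\circ\injHB^{\ell} : \hsiB[,\ell]\to\hsiB[,k]$, the first part of the Lemma applied with $B$ replaced by $\widehat{B}_{\phi}^{\phi}$ yields precisely the asserted equivalence.

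The one place that deserves care is that the two block decompositions of $\hsB$ — the one along $\toplus_{k=1}^{s}\hsiB[,k]$ and the one along $\phiH(\hsA)\oplus\phiH(\hsA)^{\perp}$ — are a priori unrelated, since $\phiH(\hsA)$ need not be a sub-sum of the $\hsiB[,k]$'s; in particular a symbol like ``$(B_{k}^{\ell})_{\phi}^{\phi}$'' would not make literal sense. I would sidestep this by systematically working with $\widehat{B}_{\phi}^{\phi}$ as a genuine operator on $\hsB$, so that taking its $(k,\ell)$-block is unambiguous and the reduction to the strong case is purely formal. I do not anticipate any real obstacle: once the notation $B_{\phi,k}^{\phi,\ell}$ is fixed in this way, both halves of the statement follow by the same index manipulation, with the $\phi$-compatibility half being an immediate corollary of the strong one.
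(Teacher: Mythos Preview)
Your proposal is correct and follows essentially the same route as the paper: expand both sides of the defining identity along the direct-sum blocks, project to the $k$-th summand, isolate a single $i$ by linearity, and for the $\phi$-compatible case reduce to the strong case by substituting $\widehat{B}_{\phi}^{\phi}$ for $B$. Your explicit caution about the two unrelated block decompositions of $\hsB$ (along $\toplus_k \hsiB[,k]$ versus along $\phiH(\hsA)\oplus\phiH(\hsA)^\perp$) is a point the paper glosses over more quickly, but the paper's resolution is the same as yours---work with $\widehat{B}_{\phi}^{\phi}$ as a bona fide operator on $\hsB$ and then observe that since it acts only on $\phiH(\hsA)$, its $(k,\ell)$-blocks can be taken as maps $\projHB_{\ell}\circ\phiH(\hsA)\to\projHB_{k}\circ\phiH(\hsA)$.
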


\begin{proof}
On the one hand, one has $\phiH(A \psi) = \toplus_{k=1}^{s} \big( \sum_{i,j=1}^{r} \phiH[,k]^{j} \circ A_{j}^{i} (\psi_i) \big)$ and on the other hand $B \phiH(\psi) = \toplus_{k=1}^{s} \big( \tsum_{\ell=1}^{s} \tsum_{i=1}^{r} B_{k}^{\ell} \circ \phiH[,\ell]^{i} (\psi_i) \big)$. So, the relation $\phiH(A \psi) = B \phiH(\psi)$ is equivalent to $\sum_{i,j=1}^{r} \phiH[,k]^{j} \circ A_{j}^{i} (\psi_i) = \tsum_{\ell=1}^{s} \tsum_{i=1}^{r} B_{k}^{\ell} \circ \phiH[,\ell]^{i} (\psi_i)$ for any $k$. Taking $\psi_i$ non-zero only for one value of $i$, this implies $\sum_{j=1}^{r} \phiH[,k]^{j} \circ A_{j}^{i} (\psi_i) = \tsum_{\ell=1}^{s} B_{k}^{\ell} \circ \phiH[,\ell]^{i} (\psi_i)$ for any $i$ and $k$. By linearity, this relation implies the previous one.

Concerning the $\phi$-compatibility, one can replace $B$ by $\widehat{B}_{\phi}^{\phi}$ in the previous result. Since $\widehat{B}_{\phi}^{\phi}$ acts only on $\phiH(\hsA)$, one can replace $\widehat{B}_{\phi, k}^{\phi, \ell}$ by the operators $B_{\phi, k}^{\phi, \ell} : \projHB_{\ell} \circ \phiH(\hsA) \to \projHB_{k} \circ \phiH(\hsA)$ in the final relation.
\end{proof}

\medskip
We can extend the maps $\phi_k^i$ as $\phi_{k_0, \dots, k_n}^{i_0, \dots, i_n} : \algA^\otimes_{i_0, \dots, i_{n}} \to \algB^\otimes_{k_0, \dots, k_{n}}$ by $\phi_{k_0, \dots, k_n}^{i_0, \dots, i_n}(a^0_{i_0} \otimes \cdots \otimes a^n_{i_n}) \defeq \phi_{k_0}^{i_0}(a^0_{i_0}) \otimes  \cdots \otimes \phi_{k_n}^{i_n}(a^n_{i_n})$ for any $i_0, \dots, i_n$ and $k_0, \dots, k_n$, and then we define maps $\hphi : \kT^n \algA \to \kT^n \algB$, for any $n \geq 1$, by $\toplus_{i_1, \dots, i_{n-1} = 1}^{r} \big( a^0_{i} \otimes a^1_{i_1} \otimes \cdots \otimes a^{n-1}_{i_{n-1}} \otimes a^n_{j} \big)_{i,j=1}^{r} \mapsto \toplus_{k_1, \dots, k_{n-1} = 1}^{s} \big( \tsum_{i_1, \dots, i_{n-1}=1}^{r} \phi_{k, k_1, \dots, k_{n-1}, \ell}^{i, i_1, \dots, i_{n-1}, j}( a^0_{i} \otimes a^1_{i_1} \otimes \cdots \otimes a^{n-1}_{i_{n-1}} \otimes a^n_{j}) \big)_{k,\ell=1}^{s}$, and, for $n=0$, the diagonal matrix with entries $a_i$ at $(i,i)$ is sent to the diagonal matrix with entries $\tsum_{i=1}^{r} \phi_{k}^{i}(a_i)$ at $(k,k)$. Using \eqref{eq product kT n np} and  \eqref{eq phik product as sum ij}, one can check that $\hphi : \kT^\grast \algA \to \kT^\grast \algB$ is a homomorphism of graded algebras and that $\hphi(\bOmega^1_U(\algA)) \subset \bOmega^1_U(\algB)$, so that $\hphi : \bOmega^\grast_U(\algA) \to \bOmega^\grast_U(\algB)$ is a homomorphism of graded algebras. Obviously, these properties are consequences of the general situation described in Sect.~\ref{sec general situations}.

\subsection{\texorpdfstring{$AF$-algebras}{AF-algbras}}
\label{sec AF algebras}

We consider now the special case of sums of matrix algebras, $\algA = \toplus_{i=1}^{r} M_{n_i}$ and $\algB = \toplus_{k=1}^{s} M_{m_k}$. We use similar notations to the ones in \cite{MassNieu21q}. Let us introduce the projection and injection maps $\projA_i$, $\projB_k$, $\injA^i$ and $\injB^k$. Let  $\phi : \algA = \toplus_{i=1}^{r} M_{n_i} \to \algB = \toplus_{k=1}^{s} M_{m_k}$ be a one-to-one homomorphism. It is taken in its simplest form, and we normalize it such that, for any $a = \toplus_{i=1}^{r} a_i$,
\begin{align}
\label{eq phi-k(a)}
\phi_k (a) \defeq \projB_k\circ \phi (a)
&= 
\begin{pmatrix}
a_1 \otimes \bbbone_{\alpha_{k1}} & 0 & \cdots & 0 & 0\\
0 & a_2 \otimes \bbbone_{\alpha_{k2}} & \cdots & 0 & 0\\
\vdots & \vdots & \ddots & \vdots & \vdots\\
0 & 0 & \cdots & a_r \otimes \bbbone_{\alpha_{kr}} & 0 \\
0 & 0 & \cdots & 0 & \bbbzero_{n_{0,k}}
\end{pmatrix}
\end{align}
where the integers $\alpha_{ki} \geq 0$ are the multiplicities of the inclusions of $M_{n_i}$ into $M_{m_k}$,  $\bbbzero_{n_{0,k}}$ is the $n_{0,k} \times n_{0,k}$ zero matrix such that $n_{0,k} \geq 0$ satisfies $m_k = n_{0,k} + \tsum_{i=1}^{r} \alpha_{ki} n_i$, and 
\begin{align*}
a_i \otimes \bbbone_{\alpha_{ki}}
&= \left.
\begin{pmatrix}
a_i & 0 & 0 & 0 \\
0 & a_i & 0 & 0 \\
\vdots & \vdots & \ddots & \vdots \\
0 & 0 & \cdots & a_i \\
\end{pmatrix}
\right\} \text{$\alpha_{ki}$ times.}
\end{align*}
We define the maps $\phi_{k}^{i} \defeq \phi_{k} \circ \injA^i : M_{n_i} \to M_{m_k}$, which take the explicit form
\begin{align}
\label{eq phi-k-i(ai)}
\phi_{k}^{i}(a_i)
&= \begin{pmatrix}
0  & \cdots & 0 & 0 & 0 & \cdots & 0 \\
\vdots  & \ddots & \vdots & \vdots  & \vdots & \cdots & \vdots \\
0  & \cdots & 0 & 0 & 0 & \cdots & 0 \\
0  & \cdots & 0 & a_i \otimes \bbbone_{\alpha_{ki}} & 0 & \cdots & 0 \\
0  & \cdots & 0 & 0 & 0 & \cdots & 0 \\
\vdots  & \vdots & \vdots & \vdots  & \vdots & \ddots & \vdots \\
0  & \cdots & 0 & 0 & 0 & \cdots & 0
\end{pmatrix}
\end{align}
The maps $\phi_{k}^{i}$ satisfy a stronger relation than \eqref{eq phik product as sum ij}: for any $i,j = 1, \dots, r$ and $k=1, \dots, s$, 
\begin{align*}
\phi_k^i(a_i) \phi_k^j(a'_j)
&= \begin{cases}
0 & \text{if $i \neq j$}\\
\phi_k^i(a_i a'_i) & \text{if $i = j$}.
\end{cases}
\end{align*}

If $\algA = \algA_{n}$ and $\algB = \algA_{n+1}$ for a $AF$-algebra $\varinjlim \algA_n$, then the multiplicities $\alpha_{ki}$ define the Bratteli diagram of this $AF$-algebra and vice versa. The integers $n_{0,k}$ are defined by complementarity at each step.

When $\alpha_{ki} > 0$, for $1 \leq \alpha \leq \alpha_{ki}$ we define the maps $\phi_{k, \alpha}^{i} : M_{n_i} \to M_{m_k}$ which insert $a_i$ at the $\alpha$-th entry on the diagonal of $\bbbone_{\alpha_{ki}}$ in the previous expression, so that $a_i$ appears only once on the RHS. The maps $\phi_{k}$, $\phi_{k}^{i}$, and $\phi_{k, \alpha}^{i}$ are homomorphisms of algebras and one has
\begin{align}
\phi &= \toplus_{k=1}^{s} \phi_k : \toplus_{i=1}^{r} M_{n_i} \to \toplus_{k=1}^{s} M_{m_k},
\nonumber
\\
\label{eq decompositions phi}
\phi_k &= \tsum_{i=1}^{r} \phi_{k}^{i} \circ \projA_i : \toplus_{i=1}^{r} M_{n_i} \to M_{m_k},
\\
\phi_{k}^{i} &= \tsum_{\alpha=1}^{\alpha_{ki}} \phi_{k, \alpha}^{i} : M_{n_i} \to M_{m_k}.
\nonumber
\end{align}
Notice then that $\phi_k(\bbboneA) = \tsum_{i=1}^{r} \tsum_{\alpha=1}^{\alpha_{ki}} \phi_{k, \alpha}^{i}(\bbbone_{\algA_i})$ fills the diagonal of $M_{m_k}$ with $\tsum_{i=1}^{r} \alpha_{ki} n_i$ copies of $1$ except for the last $n_{0,k}$ entries. When $n_{0,k} = 0$, one gets $\phi_k(\bbboneA) = \bbbone_{\algB_k}$, otherwise, let 
\begin{align}
\label{eq def pn0k}
p_{n_{0,k}} \defeq \bbbone_{m_k} - \phi_k(\bbboneA) \in M_{m_k}
\quad \text{ and } \quad
p_{n_0} \defeq \toplus_{k=1}^{s} p_{n_{0,k}} \in \algB.
\end{align}
The $p_{n_{0,k}}$'s are diagonal matrices with zero entries except for the last $n_{0,k}$ diagonal entries (bottom right) which are equal to $1$.

\medskip
We will use the results in Sect.~\ref{sec normal form finite real spectral triples}, in particular the diagrammatic descriptions of (odd/even) real spectral triples. Let $(\algA, \hsA, \DA, \JA)$ and $(\algB, \hsB, \DB, \JB)$ be two real spectral triples on the algebras $\algA = \toplus_{i=1}^{r} M_{n_i}$ and $\algB = \toplus_{k=1}^{s} M_{m_k}$ with $\hsA = \toplus_{v \in \GammaA^{(0)}} \hsiA[,v]$ and $\hsB = \toplus_{w \in \GammaB^{(0)}} \hsiB[,w]$. As we defined the maps $i,j$ on $\GammaA^{(0)}$, let us define the similar maps $k, \ell$ on $\GammaB^{(0)}$: for any $w \in \GammaB^{(0)}$ with $\pi_{\lambda\rho}(w) = (\bem_k, \bem_\ell)$, $k(w) \defeq k$ and $\ell(w) \defeq \ell$.

Let $\phiH : \hsA \to \hsB$ be a $\phi$-compatible linear map of bimodules ($\phi^{e}$-compatible as left modules). This map decomposes along the maps $\phiH[,w]^v : \hsiA[,v] \to \hsiB[,w]$ between irreps on both sides. For any $a = \toplus_{i=1}^{r} a_i \in \algA$, $b = \toplus_{i=1}^{r} b_i \in \algA$, and $\psi = \toplus_{v \in \GammaA^{(0)}} \psi_v$, one has $\phiH(a b ^\circ \psi) = \phi(a) \phi(b)^\circ \phiH(\psi)$ with (using \eqref{eq pi(a) decomposition along Hv}) $\phiH(a b ^\circ \psi) = \sum_{v \in \GammaA^{(0)}} \toplus_{w \in \GammaB^{(0)}} \phiH[,w]^{v} ( a_{i(v)} b_{j(v)}^\circ \psi_v)$ and $\phi(a) \phi(b)^\circ \phiH(\psi) = \sum_{v \in \GammaA^{(0)}} \toplus_{w \in \GammaB^{(0)}} \phi(a)_{k(w)} \phi(b)_{\ell(w)}^\circ \phiH[,w]^{v} (\psi_v)$. We can select a fixed $v$ and choose $\psi$ with only one non-zero component $\psi_v$. Then one gets, for any $v \in  \GammaA^{(0)}$ and any $w \in \GammaB^{(0)}$, $\phiH[,v]^{w} ( a_{i(v)} b_{j(v)}^\circ \psi_v) = \phi(a)_{k(w)} \phi(a)_{\ell(w)}^\circ \phiH[,v]^{w} (\psi_v)$. Let us now consider a fixed index $i$ and $a$ with only non-zero component at $i(v) = i$, and the same for a fixed $j$ and $b$: with $(k, \ell) = (k(w), \ell(w))$, one has
\begin{align}
\label{eq phiH w v a b psiv}
\phiH[,w]^{v}(a_i b_j^\circ \psi_v) = \phi_{k}^{i}(a_i) \phi_{\ell}^{j}(b_j)^\circ \phiH[,w]^{v}(\psi_v)
\end{align}
This relation, combined with \eqref{eq phi-k-i(ai)}, suggests to decompose $\bbC^{m_k}$ in $\hsiB[,w] = \bbC^{m_k} \otimes \bbC^{m_\ell \circ}$ as $\bbC^{m_k} =  [ \toplus_{i=1}^{r} \bbC^{n_i} \otimes \bbC^{\alpha_{ki}} ] \oplus \bbC^{n_{0,k}}$ and similarly for $\bbC^{m_\ell \circ}$ with a last term $\bbC^{n_{0,\ell}}$, so that one has the orthogonal decomposition
\begin{align}
\label{eq decomp hBw}
\hsB[,w] = \bbC^{m_k} \otimes \bbC^{m_\ell \circ} 
&= \begin{multlined}[t]
[ \toplus_{i,j = 1}^{r} \bbC^{n_i} \otimes \bbC^{\alpha_{ki}} \otimes \bbC^{\alpha_{\ell j}} \otimes \bbC^{n_j \circ} ] 
\\
\oplus [ \toplus_{i=1}^{r}  \bbC^{n_i} \otimes \bbC^{\alpha_{ki}} \otimes \bbC^{n_{0,\ell} \circ} ] 
\\
\oplus [ \toplus_{j=1}^{r} \bbC^{n_{0,k}} \otimes \bbC^{\alpha_{\ell j}} \otimes \bbC^{n_j \circ} ] 
\\
\oplus [ \bbC^{n_{0,k}} \otimes \bbC^{n_{0,\ell} \circ}].
\end{multlined}
\end{align}
For any $i,j = 1, \dots, r$ and $k, \ell = 1, \dots, s$, let us define the inclusion
\begin{align*}
I_{k,\ell}^{i,j} : \bbC^{n_i} \otimes \bbC^{\alpha_{ki}} \otimes \bbC^{\alpha_{\ell j}} \otimes \bbC^{n_j \circ} \hookrightarrow \bbC^{m_k} \otimes \bbC^{m_\ell \circ}.
\end{align*}
Notice that $I_{k,\ell}^{i,j} = I_{k}^{i} \otimes I_{\ell}^{j \circ}$ with the inclusions $I_{k}^{i} : \bbC^{n_i} \otimes \bbC^{\alpha_{ki}} \hookrightarrow \bbC^{m_k}$ and $I_{\ell}^{j \circ} : \bbC^{\alpha_{\ell j}} \otimes \bbC^{n_j \circ} \hookrightarrow \bbC^{m_\ell \circ}$.

Let $F^{(i,k), (j, \ell)}_\algA : \bbC^{n_i} \otimes \bbC^{\alpha_{ki}} \otimes \bbC^{\alpha_{\ell j}} \otimes \bbC^{n_j \circ} \to \bbC^{n_j}  \otimes \bbC^{\alpha_{\ell j}} \otimes \bbC^{\alpha_{ki}} \otimes \bbC^{n_i \circ}$ be the involution $\xi_i \otimes \vm^k_i \otimes \vm^\ell_j \otimes \eta_j^\circ \mapsto \eta_j \otimes \vm^\ell_j \otimes \vm^k_i \otimes \xi_i^\circ$ and $F^{k\ell}_\algB : \bbC^{m_k} \otimes \bbC^{m_\ell \circ} \to \bbC^{m_\ell} \otimes \bbC^{m_k \circ}$ the involution $\varphi_k \otimes \vartheta_\ell^\circ \mapsto \vartheta_\ell \otimes \varphi_k^\circ$. Then, one can check that
\begin{align}
\label{eq flips inclusions}
F^{k\ell}_\algB \circ I_{k,\ell}^{i,j}
&= I_{\ell,k}^{j,i} \circ F^{(i,k), (j, \ell)}_\algA
\end{align}
Notice that $\JA :  \hsA[,v] \to \hsA[, \JimA(v)]$ can be written as $\JA = \epsilonA(v,d_\algA) (J_0 \otimes J_0) \circ F^{i j}_\algA$ since $F^{i j}_\algA = \hJimA[,v]$ with $(i, j) = (i(v), j(v))$.

In the case of a $AF$-algebra, the inclusions $I_{k}^{i}$ (and so $I_{\ell}^{j \circ}$ and $I_{k,\ell}^{i,j}$) are defined directly from the Bratteli diagram of the algebra, that is, they depend only on the one-to-one homomorphism $\phi : \algA \to \algB$. We can now write $\phiH$ in terms of these inclusions.

Combining \eqref{eq phiH w v a b psiv} and \eqref{eq decomp hBw}, the map $\phiH[,w]^{v}$ first reduces to a map $\bbC^{n_i} \otimes \bbC^{n_j \circ} \to \bbC^{n_i} \otimes \bbC^{\alpha_{ki}} \otimes \bbC^{\alpha_{\ell j}} \otimes \bbC^{n_j \circ}$, and using a slight adaptation of Lemma~\ref{lemma technical module endo reduction}, it reduces to a linear map $\bbC \to \bbC^{\alpha_{ki}} \otimes \bbC^{\alpha_{\ell j}}$, that is, to the data of an element $u(v,w) \in \bbC^{\alpha_{ki}} \otimes \bbC^{\alpha_{\ell j}}$, such that $\phiH[,w]^{v}$ is the composition of $\bbC^{n_i} \otimes \bbC^{n_j \circ} \ni \xi_i \otimes \eta_j^\circ \mapsto \xi_i \otimes u(v,w) \otimes \eta_j^\circ \in \bbC^{n_i} \otimes \bbC^{\alpha_{ki}} \otimes \bbC^{\alpha_{\ell j}} \otimes \bbC^{n_j \circ}$ with the inclusion $I_{k,\ell}^{i,j}$. When $\alpha_{ki} = 0$ or $\alpha_{\ell j} = 0$, $\phiH[,w]^{v} = 0$.

Consequently, the $\phi$-compatible map $\phiH$ is completely determined by a family of matrices $u(v,w) \in M_{\alpha_{k i} \times \alpha_{\ell j}} \simeq \bbC^{\alpha_{k i}} \otimes \bbC^{\alpha_{\ell j}}$%
\footnote{\label{fn z=xyT}We use the following convention. Let $x, x' \in \bbC^n$ and $y, y' \in \bbC^m$. Define $\bbC^n \otimes \bbC^m \ni x \otimes y \simeq z = x y^\top \in M_{n \times m}$ and $z' = x' y'^\top \in M_{n \times m}$, so that $z(v) = \langle \bar{y}, v\rangle_{\bbC^m} x$ for any $v \in \bbC^m$. One then gets $\bar{y} \otimes \bar{x} \simeq z^\ast$ and $\tr(z^\ast z') = \langle x, x' \rangle_{\bbC^n} \langle y, y' \rangle_{\bbC^m}$, and by linearity, a similar relation holds for $z = \sum_{i} x_i y_i^\top$ and $z' = \sum_{i} x'_i y_i'^\top$. This relation will be used in the following.}
 by the previous relation, with $(i, j) = (i(v), j(v))$ and $(k, \ell) = (k(w), \ell(w))$. Notice that for $v, v' \in \GammaA^{(0)}$ such that $\pi_{\lambda\rho}(v) = \pi_{\lambda\rho}(v')$, the ranges of $\phiH[,w]^{v}$ and $\phiH[,w]^{v'}$ are at the same place in $\hsiB[,w]$ (the range of $I_{k,\ell}^{i,j}$), and $u(v,w)$ and $u(v',w)$ define a kind of relative positioning and weight between the two ranges. If $\pi_{\lambda\rho}(v) \neq \pi_{\lambda\rho}(v')$, the ranges are orthogonal in $\hsiB[,w]$ since the ranges of $I_{k,\ell}^{i,j}$ and $I_{k,\ell}^{i',j'}$ are distinct in the orthogonal decomposition \eqref{eq decomp hBw} when $(i,j) \neq (i',j')$.

\begin{remark}
For non-real spectral triples, a similar (simpler) result can be obtained: a $\phi$-compatible map $\phiH : \hsA \to \hsB$ is completely determined by the linear maps $\phiH[,w]^{v} : \hsiA[,v] = \bbC^{n_i} \to \hsiB[,w] = \bbC^{m_k}$ for $i=i(v)$ and $k=k(w)$, and so by a family of vectors $u(v,w) \in \bbC^{\alpha_{k i}}$ such that $\phiH[,w]^{v}$ is the composition of $\bbC^{n_i} \ni \xi_i \mapsto \xi_i \otimes u(v,w) \in \bbC^{n_i} \otimes \bbC^{\alpha_{ki}}$ with the inclusion $I_{k}^{i} : \bbC^{n_i} \otimes \bbC^{\alpha_{ki}} \hookrightarrow \bbC^{m_k}$.
\end{remark}

The following result summarizes the construction so far:
\begin{lemma}
\label{lemma matrices u(v,w)}
There is a family of matrices $u(v,w) \in M_{\alpha_{k i} \times \alpha_{\ell j}}$ such that, for any $v \in \GammaA^{(0)}$ and $w \in \GammaB^{(0)}$, with $(i, j) = (i(v), j(v))$ and $(k, \ell) = (k(w), \ell(w))$, one has
\begin{align}
\label{eq phiH Iklij xi eta}
\phiH[,w]^{v}( \xi_i \otimes \eta_j^\circ) 
&= I_{k,\ell}^{i,j}( \xi_i \otimes u(v,w) \otimes \eta_j^\circ)
\quad \text{for any $\xi_i \otimes \eta_j^\circ \in \hsA[, v]$.}
\end{align}
For any $v \in \GammaA^{(0)}$, any $w \in \GammaB^{(0)}$, and any $a = \toplus_{i=1}^{r} a_i \in \algA$, one has
\begin{align*}
\phi_{k}^{i}(a_i) I_{k,\ell}^{i,j} ( \xi_i \otimes u(v,w) \otimes \eta_j^\circ) 
&= I_{k,\ell}^{i,j} ( a_i \xi_i \otimes u(v,w) \otimes \eta_j^\circ)
\end{align*}
with $(i, j) = (i(v), j(v))$ and $(k, \ell) = (k(w), \ell(w))$.
 
 In the even case, if $\gammaB$ is $\phi$-compatible with $\gammaA$, then $\phiH[,w]^{v}$, and so $u(v,w)$, can be non-zero only when $s(v) = s(w)$. 
\end{lemma}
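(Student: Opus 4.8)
The plan is to read off the first two assertions from the construction carried out just before the lemma, and to obtain the third (the even case) by a short grading argument. Throughout, fix $v\in\GammaA^{(0)}$ and $w\in\GammaB^{(0)}$, write $(i,j)=(i(v),j(v))$ and $(k,\ell)=(k(w),\ell(w))$, and recall that $\hsiA[,v]\simeq\bbC^{n_i}\otimes\bbC^{n_j\circ}$ while $\hsiB[,w]=\bbC^{m_k}\otimes\bbC^{m_\ell\circ}$.

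For the matrices $u(v,w)$ and formula \eqref{eq phiH Iklij xi eta}, the starting point is \eqref{eq phiH w v a b psiv}. By the explicit block form \eqref{eq phi-k-i(ai)} of $\phi_k^i$, the operator $\phi_k^i(a_i)\,\phi_\ell^j(b_j)^\circ$ annihilates every summand of the orthogonal decomposition \eqref{eq decomp hBw} of $\hsiB[,w]$ except the one carrying the factor $\bbC^{n_i}\otimes\bbC^{\alpha_{ki}}\otimes\bbC^{\alpha_{\ell j}}\otimes\bbC^{n_j\circ}$, and on that summand it acts, through the inclusion $I_{k,\ell}^{i,j}=I_k^i\otimes I_\ell^{j\circ}$, as $a_i\otimes\Id\otimes\Id\otimes b_j^\circ$. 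Letting $a_i,b_j$ range over $M_{n_i},M_{n_j}$ in \eqref{eq phiH w v a b psiv} therefore forces $\Ran(\phiH[,w]^{v})\subseteq\Ran(I_{k,\ell}^{i,j})$ (in particular $\phiH[,w]^{v}=0$ when $\alpha_{ki}=0$ or $\alpha_{\ell j}=0$), and, transported through $I_{k,\ell}^{i,j}$, the map $\phiH[,w]^{v}$ becomes a linear map $\bbC^{n_i}\otimes\bbC^{n_j\circ}\to\bbC^{n_i}\otimes\bbC^{\alpha_{ki}}\otimes\bbC^{\alpha_{\ell j}}\otimes\bbC^{n_j\circ}$ commuting with the left $M_{n_i}$-action on the $\bbC^{n_i}$ factor and, symmetrically, with the right $M_{n_j}$-action on the $\bbC^{n_j\circ}$ factor. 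Applying a mild extension of Lemma~\ref{lemma technical module endo reduction} to the left factor, and then again to the right factor after the transpose identification of footnote~\ref{footnote circ action}, collapses this map to $\xi_i\otimes\eta_j^\circ\mapsto\xi_i\otimes u(v,w)\otimes\eta_j^\circ$ for a unique $u(v,w)\in\bbC^{\alpha_{ki}}\otimes\bbC^{\alpha_{\ell j}}\simeq M_{\alpha_{ki}\times\alpha_{\ell j}}$, which is \eqref{eq phiH Iklij xi eta}. The second displayed identity of the lemma is then immediate: substituting \eqref{eq phiH Iklij xi eta} into the special case $b_j=\bbbone_{n_j}$ of \eqref{eq phiH w v a b psiv}, both sides rewrite as $I_{k,\ell}^{i,j}(a_i\xi_i\otimes u(v,w)\otimes\eta_j^\circ)$.

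For the even case, one invokes Lemma~\ref{lemma gammaB gammaA phi compatibiliy diagonal}: if $\gammaB$ is $\phi$-compatible with $\gammaA$ then $\gammaB$ is diagonal in its $2\times2$ matrix decomposition, hence preserves $\Ran(\phiH)$ and acts there as $\gammaB[,\phi]^{\phi}$, so that $\phi$-compatibility upgrades to the equality $\gammaB\,\phiH(\psi)=\gammaB[,\phi]^{\phi}\,\phiH(\psi)=\phiH(\gammaA\psi)$ for every $\psi\in\hsA$. Since $\gammaA$ acts as the scalar $s(v)$ on $\hsiA[,v]$ and $\gammaB$ as the scalar $s(w)$ on $\hsiB[,w]$, testing this equality on a vector $\psi$ supported in $\hsiA[,v]$ and comparing the components in the orthogonal summand $\hsiB[,w]$ yields $(s(v)-s(w))\,\phiH[,w]^{v}=0$; hence $\phiH[,w]^{v}$, and therefore $u(v,w)$, vanishes whenever $s(v)\neq s(w)$.

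No serious obstacle is expected: the argument is mostly bookkeeping across the decomposition \eqref{eq decomp hBw} and the inclusions $I_{k,\ell}^{i,j}=I_k^i\otimes I_\ell^{j\circ}$, the only substantive input being the (already established) Lemma~\ref{lemma technical module endo reduction}. The point requiring a little care is that this lemma must be used in the slightly more general form allowing distinct source and target modules, and must be applied twice: once for the left $M_{n_i}$-module structure on the $\bbC^{n_i}$ factor, and once for the right $M_{n_j}$-module structure on the $\bbC^{n_j\circ}$ factor. The two applications are independent because the two actions commute, and the transposed picture of footnote~\ref{footnote circ action} is needed for the second one so that the hypothesis of Lemma~\ref{lemma technical module endo reduction} takes the stated form.
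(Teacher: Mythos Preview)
Your proof is correct and follows essentially the same approach as the paper: the first two assertions are obtained by unpacking the construction carried out in the paragraphs preceding the lemma (using \eqref{eq phiH w v a b psiv}, the block decomposition \eqref{eq decomp hBw}, and the slight extension of Lemma~\ref{lemma technical module endo reduction}), and the even case is deduced from Lemma~\ref{lemma gammaB gammaA phi compatibiliy diagonal}. The only cosmetic difference is that for the even case the paper invokes point~\ref{item phiH diagonal grading} of that lemma directly (that $\phiH$ restricts to $\hsA^{\pm}\to\hsB^{\pm}$), whereas you use point~\ref{item gammaB diagonal} (diagonality of $\gammaB$, hence strong $\phi$-compatibility) and then re-derive the eigenvalue comparison; this is the same argument presented from the other end.
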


\begin{proof}
The first statement is a summary of the previous decomposition of $\phiH$. The second relation is a direct consequence of this decomposition of $\phiH$ and \eqref{eq phiH w v a b psiv}. Notice that in the LHS, one could replace $\phi_{k}^{i}(a_i)$ by $\phi_{k}(a)$ since only the entries positioned according to $i$ in the matrix $\phi_{k}(a) \in M_{m_k}$, see \eqref{eq phi-k(a)} and \eqref{eq phi-k-i(ai)}, give non-zero contributions once applied on the range of $I_{k,\ell}^{i,j}$. In the even case, the statement is a consequence of Lemma~\ref{lemma gammaB gammaA phi compatibiliy diagonal}, which implies here that $\phiH[,w]^{v} = 0$ when $s(v) \neq s(w)$.
\end{proof}

\begin{proposition}
\label{prop st phi comp and phi comp of operators for AF}
Two operators $A$ on $\hsA$ and $B$ on $\hsB$ are strong $\phi$-compatible if and only if
\begin{align}
\label{eq A B str phi comp AF}
\tsum_{v_2 \in \GammaA^{(0)}}\phiH[,w_2]^{v_2}(A_{v_2}^{v_1} \psi_{v_1})
&=
\tsum_{w_1 \in \GammaB^{(0)}} B_{w_2}^{w_1} \phiH[,w_1]^{v_1}(\psi_{v_1})
\end{align}
for any $v_1 \in \GammaA^{(0)}$, $w_2 \in \GammaB^{(0)}$, and $\psi_{v_1} \in \hsiA[,v_1]$. They are $\phi$-compatible if and only if
\begin{align*}
\tsum_{v_2 \in \GammaA^{(0)}}\phiH[,w_2]^{v_2}(A_{v_2}^{v_1} \psi_{v_1})
&=
\tsum_{w_1 \in \GammaB^{(0)}} B_{\phi, w_2}^{\phi, w_1} \phiH[,w_1]^{v_1}(\psi_{v_1})
\end{align*}
for any $v_1 \in \GammaA^{(0)}$, $w_2 \in \GammaB^{(0)}$, and $\psi_{v_1} \in \hsiA[,v_1]$, where $B_{\phi, w_2}^{\phi, w_1} : \hsiB[,w_1] \cap \phiH(\hsA) \to \hsiB[,w_2] \cap \phiH(\hsA)$ is the decomposition of $B_{\phi}^{\phi}$ along the $\hsiB[,w] \cap \phiH(\hsA)$'s.
\end{proposition}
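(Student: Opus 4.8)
The plan is to transcribe Definitions~\ref{def strong phi compatibility operators} and~\ref{def phi compatibility operators} into the orthogonal decompositions $\hsA = \toplus_{v \in \GammaA^{(0)}} \hsiA[,v]$ and $\hsB = \toplus_{w \in \GammaB^{(0)}} \hsiB[,w]$, following essentially the pattern already used for direct sums of algebras in Sect.~\ref{sec direct sums og algebras}; the whole argument is bookkeeping with the block maps $A_{v_2}^{v_1} : \hsiA[,v_1] \to \hsiA[,v_2]$, $B_{w_2}^{w_1} : \hsiB[,w_1] \to \hsiB[,w_2]$ and $\phiH[,w]^{v} : \hsiA[,v] \to \hsiB[,w]$.

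First I would handle the strong case. For $\psi = \toplus_{v_1} \psi_{v_1} \in \hsA$, using $A\psi = \toplus_{v_2}\big(\tsum_{v_1}A_{v_2}^{v_1}\psi_{v_1}\big)$ and $\phiH(\psi) = \toplus_{w_1}\big(\tsum_{v_1}\phiH[,w_1]^{v_1}(\psi_{v_1})\big)$, I expand both sides of the defining identity $\phiH(A\psi) = B\phiH(\psi)$ in the decomposition $\hsB = \toplus_{w_2}\hsiB[,w_2]$, so that strong $\phi$-compatibility becomes equivalent to
\begin{align*}
\tsum_{v_1, v_2 \in \GammaA^{(0)}} \phiH[,w_2]^{v_2}\big(A_{v_2}^{v_1}\psi_{v_1}\big)
= \tsum_{\substack{w_1 \in \GammaB^{(0)}\\ v_1 \in \GammaA^{(0)}}} B_{w_2}^{w_1}\,\phiH[,w_1]^{v_1}(\psi_{v_1})
\end{align*}
for every $w_2 \in \GammaB^{(0)}$ and every $\psi$. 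Choosing $\psi$ with a single non-zero summand $\psi_{v_1}$ isolates \eqref{eq A B str phi comp AF}; conversely, summing \eqref{eq A B str phi comp AF} over $v_1$ and over $w_2$ rebuilds $\phiH(A\psi) = B\phiH(\psi)$ by linearity. This settles the first equivalence.

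For the $\phi$-compatible case I would invoke the observation recorded just after Prop.~\ref{prop strong and not strong phi compatibility}: $A$ and $B$ are $\phi$-compatible iff $A$ and $\widehat{B}_\phi^\phi = \smallpmatrix{ B_\phi^\phi & 0 \\ 0 & 0 }$ are strong $\phi$-compatible. Applying the first part to the pair $(A,\widehat{B}_\phi^\phi)$ yields the same identity with $B_{w_2}^{w_1}$ replaced by $(\widehat{B}_\phi^\phi)_{w_2}^{w_1}$. Since $\widehat{B}_\phi^\phi$ vanishes on $\phiH(\hsA)^\perp$ and each vector $\phiH[,w_1]^{v_1}(\psi_{v_1}) = \projHB_{w_1}\phiH(\injHA^{v_1}\psi_{v_1})$ is the $\hsiB[,w_1]$-component of an element of $\phiH(\hsA)$, the block $(\widehat{B}_\phi^\phi)_{w_2}^{w_1}$ acts on such vectors precisely as the block $B_{\phi,w_2}^{\phi,w_1}$ of $B_\phi^\phi$ read off between the components of $\phiH(\hsA)$ inside $\hsiB[,w_1]$ and $\hsiB[,w_2]$, which is the asserted formula; the converse is once more linearity.

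The only point that needs care is this last identification: $B_\phi^\phi$ is defined only on $\phiH(\hsA)$, and $\phiH(\hsA)$ need not split as the orthogonal sum of the $\hsiB[,w]\cap\phiH(\hsA)$, so ``the decomposition of $B_\phi^\phi$ along the $\hsiB[,w]\cap\phiH(\hsA)$'s'' must be read through $\widehat{B}_\phi^\phi$, exactly as in the $\widehat{B}_{\phi,k}^{\phi,\ell}$-versus-$B_{\phi,k}^{\phi,\ell}$ step of Sect.~\ref{sec direct sums og algebras}; once that convention is granted there is no real obstacle, and everything reduces to the routine component-matching above.
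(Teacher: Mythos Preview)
Your proof is correct and follows essentially the same approach as the paper: expand both sides of $\phiH(A\psi)=B\phiH(\psi)$ along the $\hsiA[,v]$ and $\hsiB[,w]$ decompositions, compare $w_2$-components, and isolate a single $v_1$ by linearity; for the $\phi$-compatible case replace $B$ by $\widehat{B}_\phi^\phi$ (equivalently $B_\phi^\phi$). Your closing remark about reading $B_{\phi,w_2}^{\phi,w_1}$ through $\widehat{B}_\phi^\phi$ is a useful clarification that the paper leaves implicit.
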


\begin{proof}
For any $\psi = \toplus_{v_1 \in \GammaA^{(0)}} \psi_{v_1} \in \hsA$, one has $A \psi = \toplus_{v_2 \in \GammaA^{(0)}} \sum_{v_1 \in \GammaA^{(0)}} A_{v_2}^{v_1} \psi_{v_1}$ and $\phiH(\psi) = \toplus_{w_1 \in \GammaB^{(0)}} \sum_{v_1 \in \GammaA^{(0)}} \phiH[,w_1]^{v_1}(\psi_{v_1})$, so that $\phiH(A \psi) = \toplus_{w_2 \in \GammaB^{(0)}} \sum_{v_1, v_2 \in \GammaA^{(0)}} \phiH[,w_2]^{v_2}( A_{v_2}^{v_1}  \psi_{v_1})$. In a similar way, $B \phiH(\psi) = \toplus_{w_2 \in \GammaB^{(0)}} \sum_{w_1 \in \GammaB^{(0)}} \sum_{v_1 \in \GammaA^{(0)}} B_{w_1}^{w_2} \phiH[,w_2]^{v_1}(\psi_{v_1})$. 

The strong $\phi$-compatibility is then equivalent to $\sum_{v_1, v_2 \in \GammaA^{(0)}} \phiH[,w_2]^{v_2}( A_{v_2}^{v_1}  \psi_{v_1}) = \sum_{w_1 \in \GammaB^{(0)}} \sum_{v_1 \in \GammaA^{(0)}} B_{w_1}^{w_2} \phiH[,w_2]^{v_1}(\psi_{v_1})$ for any $w_2 \in \GammaB^{(0)}$, and, by linearity (fixing $\psi$ with one non zero component at $v_1$), $\sum_{v_2 \in \GammaA^{(0)}} \phiH[,w_2]^{v_2}( A_{v_2}^{v_1}  \psi_{v_1}) = \sum_{w_1 \in \GammaB^{(0)}}  B_{w_1}^{w_2} \phiH[,w_2]^{v_1}(\psi_{v_1})$ for any $v_1 \in \GammaA^{(0)}$ and $w_2 \in \GammaB^{(0)}$.

The $\phi$-compatibility relation follows the same computation with $B$ replaced by $B_{\phi}^{\phi}$.
\end{proof}

\begin{lemma}
For any $a = \toplus_{i=1}^{r} a_i \in \algA$ and $b = \toplus_{k=1}^{s} b_i \in \algB$, $\piA(a)$ and $\piB(b)$ are strong $\phi$-compatible if and only if, for any $v \in \GammaA^{(0)}$, any $w \in \GammaB^{(0)}$, and any $\xi_{i(v)} \otimes \eta_{j(v)}^\circ \in \hsA[, v]$, one has
\begin{align*}
b_{k(w)} I_{k,\ell}^{i,j} ( \xi_{i(v)} \otimes u(v,w) \otimes \eta_{j(v)}^\circ) 
&= I_{k,\ell}^{i,j} ( a_{i(v)} \xi_{i(v)} \otimes u(v,w) \otimes \eta_{j(v)}^\circ) 
\end{align*}
\end{lemma}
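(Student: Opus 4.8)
The plan is to obtain this Lemma as a direct specialization of Prop.~\ref{prop st phi comp and phi comp of operators for AF}, feeding in the explicit block-diagonal form of the representations from \eqref{eq pi(a) decomposition along Hv} and the explicit form of $\phiH$ recorded in Lemma~\ref{lemma matrices u(v,w)}. So first I would take $A = \piA(a)$ on $\hsA$ and $B = \piB(b)$ on $\hsB$ in the strong $\phi$-compatibility criterion \eqref{eq A B str phi comp AF}: $\piA(a)$ and $\piB(b)$ are strong $\phi$-compatible if and only if
\begin{align*}
\tsum_{v_2 \in \GammaA^{(0)}} \phiH[,w_2]^{v_2}\big(\piA(a)_{v_2}^{v_1} \psi_{v_1}\big)
=
\tsum_{w_1 \in \GammaB^{(0)}} \piB(b)_{w_2}^{w_1}\, \phiH[,w_1]^{v_1}(\psi_{v_1})
\end{align*}
for every $v_1 \in \GammaA^{(0)}$, $w_2 \in \GammaB^{(0)}$, and $\psi_{v_1} \in \hsiA[,v_1]$.

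Next I would use \eqref{eq pi(a) decomposition along Hv}, both for the spectral triple on $\algA$ and for the one on $\algB$: this gives $\piA(a)_{v_2}^{v_1} = a_{i(v_1)} \delta_{v_2}^{v_1}$ and $\piB(b)_{w_2}^{w_1} = b_{k(w_1)} \delta_{w_2}^{w_1}$. Both sums above then collapse to a single term each, and the condition reduces to $\phiH[,w_2]^{v_1}(a_{i(v_1)} \psi_{v_1}) = b_{k(w_2)}\, \phiH[,w_2]^{v_1}(\psi_{v_1})$ for all $v_1, w_2$ and all $\psi_{v_1} \in \hsiA[,v_1]$. Since $\hsiA[,v_1] = \bbC^{n_{i(v_1)}} \otimes \bbC^{n_{j(v_1)}\circ}$ is spanned by elementary tensors $\xi_i \otimes \eta_j^\circ$ and the whole condition is linear in $\psi_{v_1}$, it suffices to test on such vectors. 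Writing $(i,j) = (i(v_1), j(v_1))$, $(k,\ell) = (k(w_2),\ell(w_2))$ and applying \eqref{eq phiH Iklij xi eta} of Lemma~\ref{lemma matrices u(v,w)}, together with $\piA(a)(\xi_i \otimes \eta_j^\circ) = a_i \xi_i \otimes \eta_j^\circ$ from \eqref{eq pi(a) decomposition along Hv}, one gets $\phiH[,w_2]^{v_1}(a_i \xi_i \otimes \eta_j^\circ) = I_{k,\ell}^{i,j}(a_i \xi_i \otimes u(v_1,w_2) \otimes \eta_j^\circ)$ and $b_{k(w_2)} \phiH[,w_2]^{v_1}(\xi_i \otimes \eta_j^\circ) = b_{k(w_2)} I_{k,\ell}^{i,j}(\xi_i \otimes u(v_1,w_2) \otimes \eta_j^\circ)$, which is exactly the asserted identity (after renaming $v_1 \to v$, $w_2 \to w$); conversely, the asserted identity for all $v,w$ and all elementary tensors reproduces the collapsed condition by linearity, hence strong $\phi$-compatibility via \eqref{eq A B str phi comp AF}.

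The argument is entirely routine once these two substitutions are made; there is no genuine obstacle. The only points requiring a little care are: (i) to use the Kronecker deltas coming from \emph{both} $\piA(a)$ and $\piB(b)$ so that the two sums in \eqref{eq A B str phi comp AF} each reduce to one term, and (ii) to remember that testing against all $\psi_{v_1} \in \hsiA[,v_1]$ may be reduced, by linearity, to testing against elementary tensors $\xi_{i(v)} \otimes \eta_{j(v)}^\circ$, which is what makes the formula of Lemma~\ref{lemma matrices u(v,w)} directly applicable.
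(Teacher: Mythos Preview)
Your proof is correct and follows essentially the same approach as the paper: insert the diagonal decompositions \eqref{eq pi(a) decomposition along Hv} for $\piA(a)$ and $\piB(b)$ into the strong $\phi$-compatibility criterion \eqref{eq A B str phi comp AF}, collapse the sums to obtain $\phiH[,w]^{v}(a_{i(v)} \psi_v) = b_{k(w)} \phiH[,w]^{v}(\psi_v)$, and then apply \eqref{eq phiH Iklij xi eta}. Your write-up simply makes explicit the Kronecker-delta collapse and the linearity reduction to elementary tensors that the paper leaves implicit.
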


\begin{proof}
Inserting \eqref{eq pi(a) decomposition along Hv} into \eqref{eq A B str phi comp AF} for $A = \piA(a)$ and $B = \piB(b)$, one gets $\phiH[,w]^{v}( a_{i(v)} \psi_v) = b_{k(w)} \phiH[,w]^{v}(\psi_v)$ for any $v \in \GammaA^{(0)}$ and $w \in \GammaB^{(0)}$. Using \eqref{eq phiH Iklij xi eta} for $\phiH[,w]^{v}$ then gives the relation.
\end{proof}

\begin{proposition}
\label{prop uB from uA case AF}
Let $\uA \in \algA$ be a unitary element and define $\uB \defeq \phi(\uA) + p_{n_{0}} \in \algB$ (see \eqref{eq def pn0k}). Then $\uB$ is a unitary element such that $\piB(\uB)$ is diagonal (in the orthogonal decomposition defined by $\phiH$) and is strong $\phi$-compatible with $\piA(\uA)$.
\end{proposition}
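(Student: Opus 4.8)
The plan is to dispose of the three assertions --- unitarity of $\uB$, diagonality of $\piB(\uB)$, and strong $\phi$-compatibility of $\piB(\uB)$ with $\piA(\uA)$ --- in that order, everything resting on a handful of elementary identities in $\algB$ that express the (non-unital) $\ast$-homomorphism character of $\phi$ together with $p_{n_0} = \bbboneB - \phi(\bbboneA)$ (see \eqref{eq def pn0k}).

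First I would record the algebra. From the explicit block form \eqref{eq phi-k(a)} the map $\phi$ is $\ast$-preserving, so $\phi(\uA)^\ast = \phi(\uA^\ast)$ and hence $\phi(\uA)^\ast\phi(\uA) = \phi(\uA)\phi(\uA)^\ast = \phi(\uA^\ast\uA) = \phi(\bbboneA)$; moreover $\phi(\bbboneA)$ is a projection, so $p_{n_0} = \bbboneB - \phi(\bbboneA)$ satisfies $p_{n_0}^2 = p_{n_0} = p_{n_0}^\ast$, and $\phi(\uA)p_{n_0} = \phi(\uA) - \phi(\uA\bbboneA) = 0 = p_{n_0}\phi(\uA)$. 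Expanding $\uB^\ast\uB = (\phi(\uA)^\ast + p_{n_0})(\phi(\uA) + p_{n_0})$ and using these identities, all cross terms vanish and one is left with $\phi(\bbboneA) + p_{n_0} = \bbboneB$; symmetrically $\uB\uB^\ast = \bbboneB$. Hence $\uB$ is unitary, and since the representations are unital, $\piA(\uA)$ and $\piB(\uB)$ are unitary operators on $\hsA$ and $\hsB$ respectively.

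For the strong $\phi$-compatibility, the point to exploit is that $\piB(p_{n_0})$ annihilates $\phiH(\hsA)$: writing any element of $\phiH(\hsA)$ as $\phiH(\psi) = \phiH(\bbboneA\psi) = \phi(\bbboneA)\phiH(\psi)$ by Definition~\ref{def phi phiH}, one gets $\piB(p_{n_0})\phiH(\psi) = \phiH(\psi) - \phi(\bbboneA)\phiH(\psi) = 0$. By Remark~\ref{rmk piA(a) st phi comp piB(phi(a))}, $\piB(\phi(\uA))$ is strong $\phi$-compatible with $\piA(\uA)$, that is $\phiH(\piA(\uA)\psi) = \piB(\phi(\uA))\phiH(\psi)$; adding the null term $\piB(p_{n_0})\phiH(\psi)$ yields $\piB(\uB)\phiH(\psi) = \phiH(\piA(\uA)\psi)$ for every $\psi \in \hsA$, which is exactly the condition of Definition~\ref{def strong phi compatibility operators}.

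Finally, diagonality of $\piB(\uB)$ for the decomposition $\hsB = \phiH(\hsA)\oplus\phiH(\hsA)^\perp$ is then immediate: since $\piA(\uA)$ and $\piB(\uB)$ are strong $\phi$-compatible and both unitary, Point~\ref{item A B unitaries str phi comp} of Prop.~\ref{prop strong and not strong phi compatibility} forces $\piB(\uB)$ to be diagonal. (Alternatively one may argue directly: $\piB(\phi(\uA))$ is diagonal by Point~\ref{item piBphi(a) diagonal} of the same proposition, and $\piB(p_{n_0}) = \Id_{\hsB} - \piB(\phi(\bbboneA))$ is a difference of diagonal operators, so their sum $\piB(\uB)$ is diagonal.) I do not anticipate any real obstacle; the only thing needing care is the bookkeeping around the non-unitality of $\phi$ --- chiefly the identities $\phi(\uA)p_{n_0} = p_{n_0}\phi(\uA) = 0$ and the fact that $p_{n_0}$ kills $\phiH(\hsA)$ --- since $\phi$ only maps $\bbboneA$ to the projection $\phi(\bbboneA)$, not to $\bbboneB$.
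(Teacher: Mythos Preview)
Your proof is correct and follows essentially the same route as the paper: show $\piB(p_{n_0})$ annihilates $\phiH(\hsA)$, combine with Remark~\ref{rmk piA(a) st phi comp piB(phi(a))} to get strong $\phi$-compatibility, then invoke Point~\ref{item A B unitaries str phi comp} of Prop.~\ref{prop strong and not strong phi compatibility} for diagonality. The one noteworthy difference is in how you justify $\piB(p_{n_0})\phiH(\psi)=0$: the paper argues via the explicit block decomposition \eqref{eq decomp hBw} of $\hsB[,w]$ (the range of $\phiH$ lands in the first summand while $p_{n_0}$ acts only on the summands involving $\bbC^{n_{0,k}}$), whereas your argument --- $\phiH(\psi)=\phiH(\bbboneA\psi)=\phi(\bbboneA)\phiH(\psi)$, hence $p_{n_0}\phiH(\psi)=0$ --- is purely algebraic and uses nothing beyond Definition~\ref{def phi phiH} and the defining relation $p_{n_0}=\bbboneB-\phi(\bbboneA)$. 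Your route is slightly cleaner and would work in the general setting of Sect.~\ref{sec general situations}, not just the $AF$ case; you also supply the unitarity verification for $\uB$ that the paper leaves implicit.
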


\begin{proof}
One already knows that $\piB \circ \phi(\uA)$ is strong $\phi$-compatible with $\piA(\uA)$ (see Remark~\ref{rmk piA(a) st phi comp piB(phi(a))}). By construction, the range of $\phiH[,w]$ is contained only in the first term in brackets (the double direct sum over $i,j$) in \eqref{eq decomp hBw}, while $\piB(p_{n_0})$ is non-trivial only on the last two terms (the ones with $\bbC^{n_{0,k}}$ as first factor). This implies that $\piB(p_{n_0}) \phiH(\psi) = 0 = \phiH(\uA \psi)$ for any $\psi \in \hsA$. 
So, one has $\piB(\uB) \phiH(\psi) = \phiH(\uA \psi)$ for any $\psi \in \hsA$, and since $\piA(\uA)$ and $\piB(\uB)$ are unitary, by Prop.~\ref{prop strong and not strong phi compatibility}, $\piB(\uB)$ is diagonal.
\end{proof}

\begin{proposition}
\label{prop JB JA strong phi compatibiliy relation on ukappa}
$\JB$ is strong $\phi$-compatible with $\JA$ if and only if
\begin{align}
\label{eq ukappa epsilon ustar}
u(\JimA(v), \JimB(w)) &= \frac{\epsilonA(v,d_\algA)}{\epsilonB(w,d_\algB)} u(v,w)^\ast
\end{align}
for any $v \in \GammaA^{(0)}$ and $w \in \GammaB^{(0)}$ where $d_\algA$ (resp. $d_\algB$) is the $KO$-dimension of $\algA$ (resp. $\algB)$.
\end{proposition}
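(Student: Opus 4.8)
The plan is to unfold the strong $\phi$-compatibility of $\JB$ with $\JA$ through the block characterization of Prop.~\ref{prop st phi comp and phi comp of operators for AF}, then to substitute the explicit forms of $\JA$ and $\JB$ on each irrep together with the description of $\phiH$ via the matrices $u(v,w)$ of Lemma~\ref{lemma matrices u(v,w)}, and finally to read off the resulting constraint on the family $u(v,w)$.

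First I would apply Prop.~\ref{prop st phi comp and phi comp of operators for AF} with $A=\JA$ and $B=\JB$. On $\hsA[,v]$ one has $\JA(\xi\otimes\eta^\circ)=\epsilonA(v,d_\algA)\,\Beta\otimes\Bxi^\circ\in\hsA[,\JimA(v)]$, so the only nonzero block $(\JA)_{v_2}^{v_1}$ occurs for $v_2=\JimA(v_1)$, and likewise the only nonzero block $(\JB)_{w_2}^{w_1}$ occurs for $w_1=\JimB(w_2)$. Hence the double sums in \eqref{eq A B str phi comp AF} collapse to single terms; writing $v:=v_1$ and $w:=\JimB(w_2)$ and using that $\JimB$ is an involution, strong $\phi$-compatibility becomes equivalent to the family of identities
\begin{align*}
\phiH[,\JimB(w)]^{\JimA(v)}\big(\JA(\xi\otimes\eta^\circ)\big)=\JB\big(\phiH[,w]^{v}(\xi\otimes\eta^\circ)\big)
\end{align*}
for all $v\in\GammaA^{(0)}$, $w\in\GammaB^{(0)}$ and $\xi\otimes\eta^\circ\in\hsA[,v]$, both sides lying in $\hsB[,\JimB(w)]$. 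Here $i(\JimA(v))=j(v)$, $j(\JimA(v))=i(v)$, $k(\JimB(w))=\ell(w)$ and $\ell(\JimB(w))=k(w)$; I abbreviate $(i,j):=(i(v),j(v))$ and $(k,\ell):=(k(w),\ell(w))$.

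Next I would evaluate the two sides of this identity. For the left-hand side, combining $\JA(\xi\otimes\eta^\circ)=\epsilonA(v,d_\algA)\Beta\otimes\Bxi^\circ$ with \eqref{eq phiH Iklij xi eta} applied at the vertices $\JimA(v),\JimB(w)$ (which carry algebra indices $(j,i)$ and $(\ell,k)$) gives $\epsilonA(v,d_\algA)\,I_{\ell,k}^{j,i}\big(\Beta\otimes u(\JimA(v),\JimB(w))\otimes\Bxi^\circ\big)$. For the right-hand side I would use \eqref{eq phiH Iklij xi eta} to get $I_{k,\ell}^{i,j}(\xi\otimes u(v,w)\otimes\eta^\circ)$, write $\JB=\epsilonB(w,d_\algB)(J_0\otimes J_0)\circ F^{k\ell}_\algB$ on $\hsB[,w]$, apply the intertwining relation \eqref{eq flips inclusions} $F^{k\ell}_\algB\circ I_{k,\ell}^{i,j}=I_{\ell,k}^{j,i}\circ F^{(i,k),(j,\ell)}_\algA$, note that $F^{(i,k),(j,\ell)}_\algA$ merely flips the two $\alpha$-factors, which under the identification of footnote~\ref{fn z=xyT} turns $u(v,w)$ into its transpose $u(v,w)^\top$, and observe that $J_0\otimes J_0$ commutes with the real coordinate inclusion $I_{\ell,k}^{j,i}$ while conjugating entrywise, so that $\overline{u(v,w)^\top}=u(v,w)^\ast$. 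This yields $\epsilonB(w,d_\algB)\,I_{\ell,k}^{j,i}\big(\Beta\otimes u(v,w)^\ast\otimes\Bxi^\circ\big)$ for the right-hand side. Since $I_{\ell,k}^{j,i}$ is injective and $\xi\in\bbC^{n_i}$, $\eta\in\bbC^{n_j}$ are arbitrary, the identity holds for all $v,w$ if and only if $\epsilonA(v,d_\algA)\,u(\JimA(v),\JimB(w))=\epsilonB(w,d_\algB)\,u(v,w)^\ast$; multiplying by $\epsilonA(v,d_\algA)$ and using $\epsilonA(v,d_\algA)^2=\epsilonB(w,d_\algB)^2=1$ (so that $\epsilonA(v,d_\algA)\epsilonB(w,d_\algB)=\epsilonA(v,d_\algA)/\epsilonB(w,d_\algB)$) rewrites this as \eqref{eq ukappa epsilon ustar}. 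The degenerate cases $\alpha_{ki}=0$ or $\alpha_{\ell j}=0$ are trivial, the relevant $\phiH$-blocks being zero.

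The step I expect to be the main obstacle is purely bookkeeping: keeping straight the swaps of the algebra indices under $\JimA$ and $\JimB$, and threading the tensor-versus-matrix identification of footnote~\ref{fn z=xyT} correctly through the flip $F^{(i,k),(j,\ell)}_\algA$ and the entrywise conjugation, so as to land precisely on $u(v,w)^\ast$ carrying the sign $\epsilonA(v,d_\algA)/\epsilonB(w,d_\algB)$ rather than its reciprocal (harmless here, since both $\epsilon$'s square to $1$). Relation \eqref{eq flips inclusions} does the genuine geometric work; everything else is routine.
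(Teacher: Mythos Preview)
Your argument is correct and follows essentially the same route as the paper's proof: reduce the strong $\phi$-compatibility of $\JB$ with $\JA$ to the blockwise identity $\phiH[,\JimB(w)]^{\JimA(v)}(\JA\psi_v)=\JB\,\phiH[,w]^{v}(\psi_v)$, compute each side via \eqref{eq phiH Iklij xi eta}, use \eqref{eq flips inclusions} together with the identification of footnote~\ref{fn z=xyT} to turn the flip-plus-conjugation into $u(v,w)^\ast$, and compare. Your write-up is a bit more explicit than the paper's (you spell out the collapse of the sums in Prop.~\ref{prop st phi comp and phi comp of operators for AF} and the two-step transpose-then-conjugate reasoning), but the substance is identical.
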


Prop.~\ref{prop KO dim AF phi compatibility} below gives a criterion on spectral triples on top of $\algA$ and $\algB$ so that $d_\algA = d_\algB$.

\begin{proof}
For any $\psi_v = \xi_i \otimes \eta_j^\circ \in \hsiA[,v]$, one has $\phiH[,\JimB(w)]^{\JimA(v)}(\JA \psi_v) = \epsilonA(v,d_\algA) \phiH[,\JimB(w)]^{\JimA(v)}(\Beta_{j} \otimes \Bxi_{i}^\circ) = \epsilonA(v,d_\algA) I_{\ell,k}^{j,i}\big( \Beta_{j} \otimes u(\JimA(v), \JimB(w)) \otimes \Bxi_{i}^\circ \big)$ and $\JB \phiH[,w]^{v}(\psi_v) = \JB \circ  I_{k,\ell}^{i,j}(\xi_i \otimes u(v,w) \otimes \eta_j^\circ) = \epsilonB(w,d_\algB) I_{\ell,k}^{j,i} \big( \Beta_{j} \otimes u(v,w)^\ast \otimes \Bxi_{i}^\circ \big)$ when one uses \eqref{eq flips inclusions} and the identification of $M_{\alpha_{k i} \times \alpha_{\ell j}}$ with $\bbC^{\alpha_{k i}} \otimes \bbC^{\alpha_{\ell j}}$ (see Footnote~\ref{fn z=xyT}). This implies the required equivalence.
\end{proof}

\begin{corollary}
\label{corr phiH and phiHkappa}
If $\JB$ is strong $\phi$-compatible with $\JA$, then, for any $v \in \GammaA^{(0)}$ and $w \in \GammaB^{(0)}$, $\phiH[,\JimB(w)]^{\JimA(v)} \neq 0$ if and only if $\phiH[,w]^{v} \neq 0$.
\end{corollary}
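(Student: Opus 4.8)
The plan is to deduce the corollary directly from the explicit parametrization of $\phiH$ by the matrices $u(v,w)$ (Lemma~\ref{lemma matrices u(v,w)}) together with the identity \eqref{eq ukappa epsilon ustar} of Proposition~\ref{prop JB JA strong phi compatibiliy relation on ukappa}.

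The first step is to observe that, for any $v \in \GammaA^{(0)}$ and $w \in \GammaB^{(0)}$, the block $\phiH[,w]^{v}$ vanishes if and only if the associated matrix $u(v,w)$ vanishes. This is immediate from \eqref{eq phiH Iklij xi eta}: the map $\phiH[,w]^{v}$ is the composition of $\xi_i \otimes \eta_j^\circ \mapsto \xi_i \otimes u(v,w) \otimes \eta_j^\circ$ with the \emph{injective} inclusion $I_{k,\ell}^{i,j}$, and the first map is zero precisely when $u(v,w) = 0$ (in the degenerate case $\alpha_{ki}=0$ or $\alpha_{\ell j}=0$ the space $\bbC^{\alpha_{ki}} \otimes \bbC^{\alpha_{\ell j}}$ is trivial, so $u(v,w)=0$ and $\phiH[,w]^{v}=0$ automatically, consistently).

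The second step invokes Proposition~\ref{prop JB JA strong phi compatibiliy relation on ukappa}: since $\JB$ is strong $\phi$-compatible with $\JA$, for all $v \in \GammaA^{(0)}$ and $w \in \GammaB^{(0)}$ one has $u(\JimA(v), \JimB(w)) = \tfrac{\epsilonA(v,d_\algA)}{\epsilonB(w,d_\algB)}\, u(v,w)^\ast$. By \eqref{eq epsilon(v, d)} the scalar $\epsilonA(v,d_\algA)/\epsilonB(w,d_\algB)$ is $\pm 1$, hence nonzero, so $u(\JimA(v), \JimB(w)) = 0$ if and only if $u(v,w)^\ast = 0$, that is, if and only if $u(v,w) = 0$. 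Chaining this with the first step, applied once at $(\JimA(v),\JimB(w))$ and once at $(v,w)$, yields $\phiH[,\JimB(w)]^{\JimA(v)} \neq 0 \iff u(v,w)\neq 0 \iff \phiH[,w]^{v} \neq 0$, which is the claim.

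There is no real obstacle here: the corollary is a bookkeeping consequence of Proposition~\ref{prop JB JA strong phi compatibiliy relation on ukappa}. The only point worth a moment's care is the consistency of matrix sizes, namely that with $(i,j)=(i(v),j(v))$ and $(k,\ell)=(k(w),\ell(w))$ one has $u(\JimA(v),\JimB(w)) \in M_{\alpha_{\ell j} \times \alpha_{k i}}$, which is exactly the size of $u(v,w)^\ast$, so that the displayed identity — and in particular the simultaneous vanishing of its two sides — makes sense.
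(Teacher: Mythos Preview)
Your proof is correct and follows essentially the same approach as the paper: both argue via the parametrization of $\phiH[,w]^{v}$ by $u(v,w)$ from Lemma~\ref{lemma matrices u(v,w)} and the identity \eqref{eq ukappa epsilon ustar}, observing that the nonzero scalar $\epsilonA(v,d_\algA)/\epsilonB(w,d_\algB)$ makes $u(\JimA(v),\JimB(w))$ vanish exactly when $u(v,w)$ does. Your version is slightly more explicit about the equivalence $\phiH[,w]^{v}=0 \iff u(v,w)=0$ and about the consistency of matrix sizes, but there is no substantive difference.
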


\begin{proof}
For any $v \in \GammaA^{(0)}$ and $w \in \GammaB^{(0)}$, with $(i, j) = (i(v), j(v))$ and $(k, \ell) = (k(w), \ell(w))$, from \eqref{eq ukappa epsilon ustar}, one has
\begin{align*}
\phiH[,\JimB(w)]^{\JimA(v)}(\xi_j \otimes \eta_i^\circ)
&= I_{\ell, k}^{j,i} (\xi_j \otimes u(\JimA(v), \JimB(w)) \otimes \eta_i^\circ)
\\
&= \frac{\epsilonA(v,d_\algA)}{\epsilonB(w,d_\algB)} I_{\ell, k}^{j,i} (\xi_j \otimes u(v,w)^\ast \otimes \eta_i^\circ).
\end{align*}
We then get the equivalence since $u(v,w)$ defines $\phiH[,w]^{v}$.
\end{proof}

\begin{figure}
{\centering
\subfloat[A Bratteli diagram]{\includegraphics[]{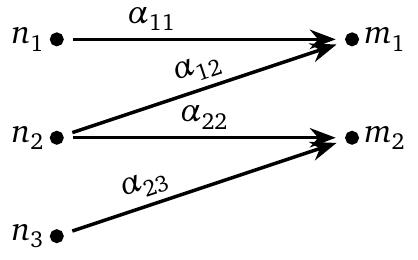}\label{fig bratteliDiagram}}\\
\subfloat[A lifting of a Bratteli diagram between two Krajewski diagrams]{\includegraphics[]{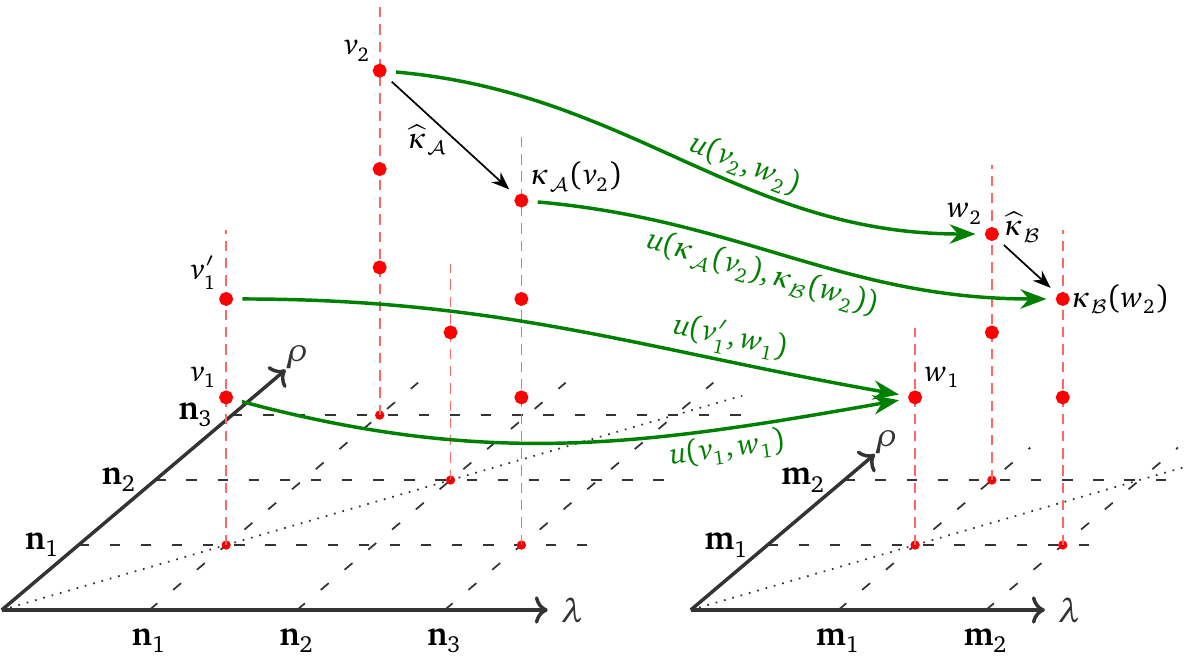}\label{fig liftingBratteliDiagram}}
\par}
\caption{Lifting of a Bratteli diagram between two Krajewski diagrams.\\
\protect\subref{fig bratteliDiagram}: An example of a Bratteli diagram for the inclusion $M_{n_1} \oplus M_{n_2} \oplus M_{n_3} \to M_{m_1} \oplus M_{m_2}$ with multiplicities $\alpha_{ki}$ for the inclusion of $M_{n_i}$ into $M_{m_k}$.\\
\protect\subref{fig liftingBratteliDiagram}: Some liftings of the maps (arrows) given in the Bratteli diagram \protect\subref{fig bratteliDiagram} as maps $\phiH[,w]^{v} : \hsA[,v] \to \hsB[,w]$, here represented as (green) arrows decorated with their defining matrices $u(v,w) \in \bbC^{\alpha_{ki}} \otimes \bbC^{\alpha_{\ell j}}$, see \eqref{eq phiH Iklij xi eta}. The configuration for the arrows $v_2 \to w_2$ and $\JimA(v_2) \to \JimB(w_2)$ is the consequence of Corollary~\ref{corr phiH and phiHkappa}. In the even case, according to Lemma~\ref{lemma matrices u(v,w)}, one should have $s(v_2) = s(w_2)$ for $u(v_2,w_2)$ to be non-zero, and similarly for other arrows. The arrows $M_{n_2} \to M_{m_1}$ and $M_{n_2} \to M_{m_2}$ in \protect\subref{fig bratteliDiagram} are not lifted in order to lighten the drawing.}
\label{fig liftingBratteliDiagramKrajewski}
\end{figure}

Using what we have constructed so far, in Fig.~\ref{fig liftingBratteliDiagramKrajewski} we show an example of the lifting of some arrows in a Bratteli diagram as arrows between two Krajewski diagrams.

For any $i,j = 1, \dots, r$, let $\{ \vm_{i j}^{p} \}_{1 \leq p \leq \mu_{i j}}$ be an orthonormal basis of $\bbC^{\mu_{i j}}$ (for instance as in Prop.~\ref{prop basis odd case} or \ref{prop basis even case}), to which we associate the irreps $\hsA[,v]$ defined as in \eqref{hsv from vmijp basis} for any $v = (i, p, j) \in \GammaA[,\bn_i \bn_j]^{(0)}$. One can then fix an orthonormal basis $\{ e_{ij, \alpha} = \xi_{i, \alpha}^{(1)} \otimes \eta_{j, \alpha}^{\circ (2)} \}_{1 \leq \alpha \leq n_i n_j}$ (sumless Sweedler-like notation) of $\bbC^{n_i} \otimes \bbC^{n_j \circ}$. Let
\begin{align*}
\TGammaA[,\bn_i \bn_j]^{(0)}
&\defeq \GammaA[,\bn_i \bn_j]^{(0)} \times \{ 1, \dots, n_i n_j \}
& \text {and}&&
\TGammaA^{(0)} &\defeq \cup_{i,j= 1}^{r} \TGammaA[,\bn_i \bn_j]^{(0)}
\end{align*}
Then for any $\Tv = (v, \alpha) \in \TGammaA[,\bn_i \bn_j]^{(0)}$, let $e_{\Tv} \defeq \xi_{i, \alpha}^{(1)} \otimes \vm_{i j}^{p} \otimes \eta_{j, \alpha}^{\circ (2)} \in \hsA[,v]$. The family $\{ e_{\Tv} \}_{\Tv \in \TGammaA^{(0)}}$ defines an orthonormal basis of $\hsA$. We define $v : \TGammaA^{(0)} \to \GammaA^{(0)}$ as $v(\Tv) = v$ for $\Tv = (v, \alpha)$. Then, for any $v' = (i, p', j) \in \GammaA[,\bn_i \bn_j]^{(0)}$, define 
\begin{align*}
\iota_{v}^{v'} 
&: \hsiA[,v'] \to \hsiA[,v] 
&\text{as}&&
\iota_{v}^{v'}(\xi_{i, \alpha}^{(1)} \otimes \vm_{i j}^{p'} \otimes \eta_{j, \alpha}^{\circ (2)}) 
&= \xi_{i, \alpha}^{(1)} \otimes \vm_{i j}^{p} \otimes \eta_{j, \alpha}^{\circ (2)}.
\end{align*}

\begin{proposition}
\label{prop scalar products}
Let $v, v' \in \GammaA^{(0)}$, $w \in \GammaB^{(0)}$, and $\psi_{v} \in \hsiA[,v]$ and $\psi'_{v'} \in \hsiA[,v']$. 

When $\pi_{\lambda\rho}(v) \neq \pi_{\lambda\rho}(v')$, one has $\langle \phiH[,w]^{v}(\psi_{v}), \phiH[,w]^{v'}(\psi'_{v'}) \rangle_{\hsiB[,w]} = 0$. 

When $\pi_{\lambda\rho}(v) = \pi_{\lambda\rho}(v')$, one has
\begin{align*}
\langle \phiH[,w]^{v}(\psi_{v}), \phiH[,w]^{v'}(\psi'_{v'}) \rangle_{\hsiB[,w]}
&=
\langle \psi_{v}, \iota_{v}^{v'}(\psi'_{v'}) \rangle_{\hsiA[,v]} \, \tr( u(v,w)^\ast u(v',w) )
\end{align*}
In particular, for any $\psi_{v} \in \hsiA[,v]$ and $\psi'_{v'} \in \hsiA[,v]$, one has
\begin{align}
\norm{\phiH[,w]^{v}(\psi_{v})}_{\hsiB[,w]} 
&= 
\norm{\psi_{v}}_{\hsiA[,v]} \norm{u(v,w)}_{F} \nonumber
\\
\langle \phiH^{v}(\psi_{v}), \phiH^{v'}(\psi'_{v'}) \rangle_{\hsiB} 
&= 
\langle \psi_{v}, \iota_{v}^{v'}(\psi'_{v'}) \rangle_{\hsiA[,v]} \Big( \tsum_{w \in \GammaB^{(0)}} \tr( u(v,w)^\ast u(v',w) )  \Big) \label{eq scalar production phiH u}
\end{align}
where $\norm{-}_{F}$ is the Frobenius norm on matrices, defined as $\norm{A}_{F}^2 \defeq \tr(A^\ast A)$. This implies that $\phiH^{v} : \hsiA[,v] \to \hsiB$ is one-to-one if and only if $\tsum_{w \in \GammaB^{(0)}} \norm{u(v,w)}^2_{F} > 0$.
\end{proposition}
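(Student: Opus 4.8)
The plan is to reduce every one of the stated identities to the rank‑one case and then invoke the isometry of the inclusions $I_{k,\ell}^{i,j}$ together with the matrix/tensor dictionary of Footnote~\ref{fn z=xyT}. Concretely, I would write $v=(i,p,j)$, $v'=(i',p',j')$, $(k,\ell)=(k(w),\ell(w))$, and identify $\hsiA[,v]$ with $\bbC^{n_i}\otimes\bbC\vm_{ij}^{p}\otimes\bbC^{n_j\circ}$ as in \eqref{hsv from vmijp basis}; since all three expressions occurring in the statement are sesquilinear in $(\psi_{v},\psi'_{v'})$ and simple tensors span, it suffices to check them on $\psi_{v}=\xi_i\otimes\vm_{ij}^{p}\otimes\eta_j^\circ$ and $\psi'_{v'}=\xi'_{i'}\otimes\vm_{i'j'}^{p'}\otimes\eta'^\circ_{j'}$.

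For the orthogonality claim ($\pi_{\lambda\rho}(v)\neq\pi_{\lambda\rho}(v')$, i.e. $(i,j)\neq(i',j')$): by \eqref{eq phiH Iklij xi eta} the vector $\phiH[,w]^{v}(\psi_{v})$ lies in the range of $I_{k,\ell}^{i,j}$ and $\phiH[,w]^{v'}(\psi'_{v'})$ in the range of $I_{k,\ell}^{i',j'}$, which are two distinct, mutually orthogonal summands of $\hsiB[,w]$ in the decomposition \eqref{eq decomp hBw}; hence the inner product vanishes. This is precisely the observation recorded just before the statement.

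For the main formula ($\pi_{\lambda\rho}(v)=\pi_{\lambda\rho}(v')$, so $(i,j)=(i',j')$): using \eqref{eq phiH Iklij xi eta} and the fact that $I_{k,\ell}^{i,j}=I_k^i\otimes I_\ell^{j\circ}$ is an isometric embedding, $\langle\phiH[,w]^{v}(\psi_{v}),\phiH[,w]^{v'}(\psi'_{v'})\rangle_{\hsiB[,w]}$ equals the inner product of $\xi_i\otimes u(v,w)\otimes\eta_j^\circ$ and $\xi'_i\otimes u(v',w)\otimes\eta'^\circ_j$ in $\bbC^{n_i}\otimes\bbC^{\alpha_{ki}}\otimes\bbC^{\alpha_{\ell j}}\otimes\bbC^{n_j\circ}$, which factorizes as $\langle\xi_i,\xi'_i\rangle_{\bbC^{n_i}}\,\langle u(v,w),u(v',w)\rangle_{\bbC^{\alpha_{ki}}\otimes\bbC^{\alpha_{\ell j}}}\,\langle\eta_j^\circ,\eta'^\circ_j\rangle_{\bbC^{n_j\circ}}$. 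By Footnote~\ref{fn z=xyT} the middle factor is $\tr(u(v,w)^\ast u(v',w))$, and the outer product equals $\langle\xi_i\otimes\vm_{ij}^{p}\otimes\eta_j^\circ,\ \xi'_i\otimes\vm_{ij}^{p}\otimes\eta'^\circ_j\rangle_{\hsiA[,v]}=\langle\psi_{v},\iota_{v}^{v'}(\psi'_{v'})\rangle_{\hsiA[,v]}$, since $\vm_{ij}^{p}$ is a unit vector and $\iota_{v}^{v'}$ is exactly the $p'\mapsto p$ relabelling. Extending by sesquilinearity yields the stated identity.

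The three consequences then follow without further work: the norm identity is the specialization $v'=v$, $\psi'_{v'}=\psi_{v}$ (so $\iota_{v}^{v}=\Id$ and $\tr(u(v,w)^\ast u(v,w))=\norm{u(v,w)}_F^2$), followed by taking square roots; summing the per‑$w$ identity over $w\in\GammaB^{(0)}$ and using that $\hsiB=\toplus_{w}\hsiB[,w]$ is orthogonal gives \eqref{eq scalar production phiH u} (all summands vanish when $\pi_{\lambda\rho}(v)\neq\pi_{\lambda\rho}(v')$); and since $\phiH^{v}$ is linear with $\norm{\phiH^{v}(\psi_{v})}_{\hsiB}^2=\tsum_{w}\norm{\phiH[,w]^{v}(\psi_{v})}^2=\norm{\psi_{v}}^2\tsum_{w}\norm{u(v,w)}_F^2$, it is one‑to‑one iff $\tsum_{w}\norm{u(v,w)}_F^2>0$. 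I do not expect a genuine obstacle; the only points demanding care are tracking the unit vector $\vm_{ij}^{p}$ through the identification $\hsiA[,v]\simeq\bbC^{n_i}\otimes\bbC^{n_j\circ}$ (invisible to norms, but the datum on which $\iota_{v}^{v'}$ acts) and applying Footnote~\ref{fn z=xyT} with the adjoints in the correct positions, so that $\tr(u(v,w)^\ast u(v',w))$ — and not a transpose or conjugate variant — is what appears.
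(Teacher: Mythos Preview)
Your proposal is correct and follows essentially the same approach as the paper's proof: reduce to simple tensors, use the explicit form \eqref{eq phiH Iklij xi eta} and the orthogonality of the ranges of the $I_{k,\ell}^{i,j}$ in \eqref{eq decomp hBw}, factorize the inner product, and invoke the matrix/tensor dictionary of Footnote~\ref{fn z=xyT} to produce the trace. If anything, you are slightly more explicit than the paper in tracking the role of the unit vector $\vm_{ij}^{p}$ and the map $\iota_{v}^{v'}$.
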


\begin{proof}
From a previous remark, the scalar product is zero when $\pi_{\lambda\rho}(v) \neq \pi_{\lambda\rho}(v')$. So, suppose $\pi_{\lambda\rho}(v) = \pi_{\lambda\rho}(v')$. Let $i = i(v) = i(v')$ and $j = j(v) = j(v')$ and consider $\psi_{v} = \xi_{i} \otimes \eta_j^\circ$ and $\psi'_{v'} = \xi'_{i} \otimes \eta_j'^\circ$, so that $\phiH[,w]^{v}(\psi_{v}) = \xi_i \otimes u(v,w) \otimes \eta_j^\circ$ and $\phiH[,w]^{v'}(\psi'_{v'}) = \xi'_i \otimes u(v',w) \otimes \eta_j'^\circ$ both in $\bbC^{n_i} \otimes M_{\alpha_{k i} \times \alpha_{\ell j}} \otimes \bbC^{n_j} \simeq \bbC^{n_i} \otimes \bbC^{\alpha_{k i}} \otimes \bbC^{\alpha_{\ell j}} \otimes \bbC^{n_j} \subset \bbC^{m_k} \otimes \bbC^{m_\ell \circ}$. Then $\langle \phiH[,w]^{v}(\psi_{v}), \phiH[,w]^{v'}(\psi'_{v'}) \rangle_{\hsiB[,w]} = \langle \xi_i, \xi'_i \rangle_{\bbC^{n_i}} \langle \eta_j, \eta'_j \rangle_{\bbC^{n_j}} \tr(u(v,w)^\ast u(v',w))$ where the trace factor is obtained from the identification of $M_{\alpha_{k i} \times \alpha_{\ell j}}$ with $\bbC^{\alpha_{k i}} \otimes \bbC^{\alpha_{\ell j}}$ and we have used the fact that $\lambda(v) = \lambda(v') = n_i$ and $\rho(v) = \rho(v') = n_j$ to write the scalar products. This implies the formula in terms of the scalar product on $\hsiA[,v]$, from which we deduce the relations on the norm in $\hsiB[,w]$ and on the scalar product in $\hsB$. This last relation implies the norms relation $\norm{\phiH^{v}(\psi_{v})}^2_{\hsiB} = \norm{\psi_{v}}^2_{\hsiA[,v]} \Big( \tsum_{w \in \GammaB^{(0)}} \norm{u(v,w)}^2_{F}  \Big)$. Then, suppose $\tsum_{w \in \GammaB^{(0)}} \norm{u(v,w)}^2_{F} > 0$: if $\psi_{v} \in \hsiA[,v]$ is such that $\phiH^{v}(\psi_{v}) = 0$, then $\norm{\psi_{v}}^2_{\hsiA[,v]} = 0$, so that $\psi_{v} = 0$, that is, $\phiH^{v}$ is one-to-one. Suppose $\tsum_{w \in \GammaB^{(0)}} \norm{u(v,w)}^2_{F} = 0$, then $\norm{\phiH^{v}(\psi_{v})}^2_{\hsiB} = 0$ for any $\psi_{v} \in \hsiA[,v]$, so that $\phiH^{v} = 0$, that is, $\phiH^{v}$ is not one-to-one.
\end{proof}

Notice that the condition $\tsum_{w \in \GammaB^{(0)}} \norm{u(v,w)}^2_{F} > 0$ for any $v \in \GammaA^{(0)}$ does not implies that $\phiH$ is one-to-one: one can consider a situation where, for $v, v' \in \GammaA^{(0)}$ such that $\pi_{\lambda\rho}(v) = \pi_{\lambda\rho}(v')$, $\psi_{v} \in \hsiA[,v]$, and $\psi'_{v'} \in \hsiA[,v']$ , $\phiH[,w]^v(\psi_{v}) + \phiH[,w]^{v'}(\psi'_{v'}) = 0 \in \hsiB[,w]$ for some $w \in \GammaB^{(0)}$.

\medskip
From \eqref{eq scalar production phiH u}, it is natural to define, for any  $v, v' \in \GammaA^{(0)}$,
\begin{align*}
\spm^{v_1, v_2} &\defeq
\begin{cases}
0 & \text{if $\pi_{\lambda\rho}(v_1) \neq \pi_{\lambda\rho}(v_2)$}\\
\tsum_{w \in \GammaB^{(0)}} \tr( u(v_1,w)^\ast u(v_2,w) ) & \text{if $\pi_{\lambda\rho}(v_1) = \pi_{\lambda\rho}(v_2)$}
\end{cases}
\end{align*}
so that \eqref{eq scalar production phiH u} can be written as $\langle \phiH^{v_1}(\psi_{v_1}), \phiH^{v_2}(\psi'_{v_2}) \rangle_{\hsiB} = \langle \psi_{v_1}, \iota_{v_1}^{v_2}(\psi'_{v_2}) \rangle_{\hsiA[,v_1]} \, \spm^{v_1, v_2}$. 

With $\pi_{\lambda\rho}(v_1) = \pi_{\lambda\rho}(v_2) = (\bn_i, \bn_j)$, one has $\spm^{v_2, v_1} = \tsum_{w \in \GammaB^{(0)}} \tr( u(v_2,w)^\ast u(v_1,w) ) = \tsum_{w \in \GammaB^{(0)}} \overline{\tr( u(v_1,w)^\ast u(v_2,w) )} = \overline{\spm^{v_1,v_2}}$, so that $(\spm^{v_1,v_2})_{v_1,v_2}$ is a Hermitian matrix, so that this matrix can be diagonalized. Recall that the labels $v_1, v_2$ depends on the choices of the orthonormal bases $\{ \vm_{i j}^{p} \}_{1 \leq p \leq \mu_{i j}}$ of the spaces $\bbC^{\mu_{i j}}$'s: this diagonalization (see proof of Prop.~\ref{prop diagonalization spm for KO dims}) is related to a change of these bases. This leads us to introduce the following Hypothesis.

\begin{hypothesis}
\label{hyp phiH diagonalization}
We suppose that $\phiH$ is one-to-one and is such that there are orthonormal bases $\{ \vm_{i j}^{p} \}_{1 \leq p \leq \mu_{i j}}$ of the spaces $\bbC^{\mu_{i j}}$ which conform to Prop.~\ref{prop basis odd case} (in the odd case) or Prop.~\ref{prop basis even case} (in the even case), and such that, for the decomposition of $\hsA$ induced by these bases, $\spm^{v_1, v_2} = \spe_{v_1} \delta^{v_1, v_2}$ when $\pi_{\lambda\rho}(v_1) = \pi_{\lambda\rho}(v_2)$, with real numbers $\spe_{v}$ such that $\spe_{\JimA(v)} = \spe_{v}$.
\end{hypothesis}

A direct consequence of this hypothesis is that $\langle \phiH^{v_1}(\psi_{v_1}), \phiH^{v_2}(\psi'_{v_2}) \rangle_{\hsiB} = 0$ for any $v_1 \neq v_2$ and  $\langle \phiH^{v}(\psi_{v}), \phiH^{v}(\psi'_{v}) \rangle_{\hsiB} = \spe_{v} \, \langle \psi_{v}, \psi'_{v} \rangle_{\hsiA[,v]}$ for any $v$. The one-to-one requirement is natural in the context of $AF$-algebras, since it generalizes the one-to-one requirement on $\phi$. On the other hand, the diagonalization requirement is not mandatory for the formal developments to come, but it will be useful to compare spectral actions for $\phi$-compatible spectral triples on $\algA$ and $\algB$ in Sect.~\ref{sec spectral actions AF AC manifold}. Moreover, this requirement is satisfied for some $KO$-dimensions:

\begin{proposition}
\label{prop diagonalization spm for KO dims}
Suppose that $\JB$ is strong $\phi$-compatible with $\JA$, and, in the even case, that $\gammaB$ is $\phi$-compatible with $\gammaA$. Then, in $KO$-dimensions $0,1,2,6,7$, the diagonalization requirement in Hypothesis~\ref{hyp phiH diagonalization} is satisfied for any $\phiH$. 
\end{proposition}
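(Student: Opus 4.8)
The plan is to diagonalise the Hermitian matrix $(\spm^{v_1,v_2})$ one fibre at a time. Since $\spm^{v_1,v_2}=0$ unless $\pi_{\lambda\rho}(v_1)=\pi_{\lambda\rho}(v_2)$, and, in the even case, the hypothesis that $\gammaB$ is $\phi$-compatible with $\gammaA$ forces $u(v,w)=0$ whenever $s(v)\neq s(w)$ by Lemma~\ref{lemma matrices u(v,w)}, hence $\spm^{v_1,v_2}=0$ whenever $s(v_1)\neq s(v_2)$, the matrix $(\spm^{v_1,v_2})$ is block diagonal over the fibres $\GammaA[,\bn_i \bn_j]^{(0)}=\pi_{\lambda\rho}^{-1}(\bn_i,\bn_j)$ and, in the even case, further over the eigenspaces of $\ell_{ij}$ inside $\bbC^{\mu_{ij}}$. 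It therefore suffices to choose, inside each $\bbC^{\mu_{ij}}$, an orthonormal basis which still conforms to Prop.~\ref{prop basis odd case} or Prop.~\ref{prop basis even case} and which diagonalises $\spm$ on that fibre, with $\spe_{\JimA(v)}=\spe_v$; reassembling the choices over all fibres then gives a basis satisfying Hypothesis~\ref{hyp phiH diagonalization}. The one computational ingredient is the identity obtained from Prop.~\ref{prop JB JA strong phi compatibiliy relation on ukappa}: substituting $w\mapsto\JimB(w)$ in the definition of $\spm$ and using \eqref{eq ukappa epsilon ustar}, $\epsilonB(w,d_\algB)^2=1$, and cyclicity of the trace, one obtains, for $v_1,v_2$ in the same fibre,
\begin{equation*}
\spm^{\JimA(v_1),\JimA(v_2)}=\epsilonA(v_1,d_\algA)\,\epsilonA(v_2,d_\algA)\,\overline{\spm^{v_1,v_2}},
\end{equation*}
which is what propagates a diagonalisation from a fibre to its $\JimA$-image.

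For $i\neq j$, say $i<j$, Prop.~\ref{prop basis odd case}/\ref{prop basis even case} leaves $\{\vm_{ij}^{p}\}$ free (in the even case constrained only to consist of $\ell_{ij}$-eigenvectors), so inside each $\ell_{ij}$-eigenspace I pick the basis so that $\spm$ becomes diagonal on $\GammaA[,\bn_i \bn_j]^{(0)}$, with real eigenvalues $\spe_v$. The basis on $\GammaA[,\bn_j \bn_i]^{(0)}$ is then forced by $\vm_{ji}^{p}=L_{ij}(\Bvm_{ij}^{p})$, and since $\epsilonA(v,d_\algA)=1$ for every $v$ with $i(v)=i<j=j(v)$, the identity above reads $\spm^{\JimA(v_1),\JimA(v_2)}=\overline{\spm^{v_1,v_2}}$; hence $\spm$ is automatically diagonal on $\GammaA[,\bn_j \bn_i]^{(0)}$ with the same eigenvalues, so $\spe_{\JimA(v)}\defeq\spe_v$ is consistent. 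This step goes through in every $KO$-dimension.

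The $KO$-dimension is felt only on the fibres with $i=j$. Write $C\colon\bbC^{\mu_{ii}}\to\bbC^{\mu_{ii}}$, $C(\vm)\defeq L_{ii}(\overline{\vm})$, which is antilinear with $C^{2}=\epsilon$ by Prop.~\ref{prop J K L}. In $KO$-dimensions $0,1,7$ one has $\JimA(v)=v$ and $\epsilonA(v,d_\algA)=1$ on $\GammaA[,\bn_i \bn_i]^{(0)}$, so the identity gives $\spm^{v_1,v_2}=\overline{\spm^{v_1,v_2}}$: $\spm$ commutes with the real structure $C$ (here $C^{2}=1$) and restricts to a symmetric operator on the real form $\{\vm : C\vm=\vm\}$, which I diagonalise by an orthonormal basis of that real form — automatically satisfying $\vm_{ii}^{p}=L_{ii}(\Bvm_{ii}^{p})$ — done inside each $\ell_{ii}$-eigenspace when the dimension is $0$. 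In $KO$-dimensions $2$ and $6$ one has $\epsilon''=-1$, so $C$ exchanges the two $\ell_{ii}$-eigenspaces $V_{+}$, $V_{-}$ of $\bbC^{\mu_{ii}}$; using the freedom in Prop.~\ref{prop basis even case} to take $s_{i}^{2a}=+1$ and $s_{i}^{2a-1}=-1$, I pick a free orthonormal basis $\{\vm_{ii}^{2a}\}$ of $V_{+}$ diagonalising $\spm|_{V_{+}}$, set $\vm_{ii}^{2a-1}\defeq\epsilon\,L_{ii}(\Bvm_{ii}^{2a})$ (an orthonormal basis of $V_{-}$, for which $\vm_{ii}^{2a}=L_{ii}(\Bvm_{ii}^{2a-1})$ follows from $L_{ii}\BL_{ii}=\epsilon$), and then the identity above, with $\epsilonA(v,d_\algA)^{2}=1$ and $\JimA(v^{2a})=v^{2a-1}$, forces $\spm|_{V_{-}}$ to be diagonal with the same eigenvalues, giving $\spe_{\JimA(v)}=\spe_v$.

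I expect the real work to be the bookkeeping on the $i=j$ fibres: checking that the basis changes diagonalising $\spm$ stay inside the restricted families of Prop.~\ref{prop basis odd case}/\ref{prop basis even case} — in particular, in the even case, that they are simultaneously compatible with $L_{ii}$ and with $\ell_{ii}$ — and that the resulting $\spe_v$ are real and $\JimA$-invariant. This is also where the restriction to $KO$-dimensions $0,1,2,6,7$ comes from: in the remaining dimensions $3,4,5$ one has $C^{2}=-1$ while $C$ preserves the grading eigenspaces (or there is no grading), so the normal forms of Prop.~\ref{prop basis odd case}/\ref{prop basis even case} tie the $i=j$ fibre basis together by a quaternionic-type pairing, and deciding whether an arbitrary $\phiH$ still admits a conforming diagonalising basis would require a separate, more delicate argument.
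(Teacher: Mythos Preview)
Your proof is correct and follows essentially the same approach as the paper. Both arguments rest on the same key identity $\spm^{\JimA(v_1),\JimA(v_2)}=\epsilonA(v_1,d_\algA)\,\epsilonA(v_2,d_\algA)\,\overline{\spm^{v_1,v_2}}$ (the paper writes it in coordinates as $\hspm^{\Bp_1,\Bp_2}_{ji}=\epsilonA(i,p_1,j)\epsilonA(i,p_2,j)\,\hspm^{p_2,p_1}_{ij}$), handle the $i\neq j$ fibres identically, and split the $i=j$ analysis by $KO$-dimension in the same way. Your phrasing via the antilinear operator $C=L_{ii}\circ\overline{(\,\cdot\,)}$ is a slightly more conceptual repackaging of what the paper does explicitly with block matrices and the change-of-basis formula $\hspm'_{ij}=U\hspm_{ij}U^\ast$: in $KO$-dimensions $0,1,7$ ``$\spm$ commutes with $C$, diagonalise on the real form'' is exactly the paper's ``$\hspm_{ii}$ is real symmetric, take $U$ orthogonal''; in $KO$-dimensions $2,6$ your transport along $C:V_{+}\to V_{-}$ is the paper's $U=\smallpmatrix{\TU & 0\\ 0 & \BTU}$.
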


\begin{proof}
Let $\{ \vm_{i j}^{p} \}_{1 \leq p \leq \mu_{i j}}$ be orthonormal bases of the spaces $\bbC^{\mu_{i j}}$ which satisfy Prop.~\ref{prop basis odd case} (in the odd case) or Prop.~\ref{prop basis even case} (in the even case). Let us first complete the notations introduced before Prop.~\ref{prop scalar products}, where we have introduced the identification $v = (i,p,j)$. With this notation, we define $\JimA(v) = (j, \Bp, i)$ where $\Bp = 1, \dots, \mu_{ji} = \mu_{ij}$ and $\bar{\Bp} = p$ (obviously, this bar is not to be confused with a complex conjugation). 

Let $\{ \wm_{k \ell}^{q} \}_{1 \leq q \leq \nu_{k\ell}}$ be orthonormal bases of the spaces $\bbC^{\nu_{k\ell}}$ where $\nu_{k\ell}$ are the multiplicity of the irreps $\hsB[, \bem_k \bem_\ell]$ in $\hsB$. These bases define the irreps $\hsB[,w]$ for $w = (k,q,\ell) \in \GammaB[,\bem_k \bem_\ell]^{(0)}$ as in \eqref{hsv from vmijp basis}. We have written the map $\phiH[,w]^{v} : \hsA[,v] \to \hsB[,w]$ in terms of a matrix $u(v,w)$. It is convenient to write $\phiH[,w]^{v}$ explicitly in terms of the bases $\{ \vm_{i j}^{p} \}$ and $\{ \wm_{k \ell}^{q} \}$. In order to avoid cumbersome notations, we use the identification $\bbC^{n_i} \otimes \bbC^{\mu_{i j}} \otimes \bbC^{n_j \circ} \simeq \bbC^{n_i} \otimes \bbC^{n_j \circ} \otimes \bbC^{\mu_{i j}}$ (resp. $\bbC^{m_k} \otimes \bbC^{\nu_{k\ell}} \otimes \bbC^{m_\ell \circ} \simeq \bbC^{m_k} \otimes \bbC^{m_\ell \circ} \otimes \bbC^{\nu_{k\ell}}$) so that $\vm_{i j}^{p}$ (resp. $\wm_{k \ell}^{q}$) will appear on the right in the tensor products. Then we can replace the notation $u(v,w)$ by the notation $\Mu_{k\ell,q}^{ij,p} \in M_{\alpha_{k i} \times \alpha_{\ell j}}$ which refers to the bases $\{ \vm_{i j}^{p} \}$ and $\{ \wm_{k \ell}^{q} \}$ for which, similarly to \eqref{eq phiH Iklij xi eta}, one has $\phiH[,w]^{v} (\xi_i \otimes \eta_j^\circ \otimes \vm_{i j}^{p}) = I_{k,\ell}^{i,j} \big( \xi_i \otimes \Mu_{k\ell,q}^{ij,p} \otimes \eta_j^\circ \big) \otimes \wm_{k \ell}^{q}$ (no summation). In the $p$ and $q$ variables, $\spm^{v_1, v_2}$ with $\pi_{\lambda\rho}(v_1) = \pi_{\lambda\rho}(v_2) = (\bn_i, \bn_j)$ then becomes $\hspm^{p_1, p_2}_{ij} = \tsum_{k, \ell, q} \tr( (\Mu^{ij,p_2}_{k\ell,q})^\ast \Mu^{ij,p_1}_{k\ell,q} )$ for $p_1, p_2 = 1, \dots, \mu_{ij}$. Notice the switch $1 \leftrightarrow 2$ which will be convenient later. Since we suppose that $\JB$ is strong $\phi$-compatible with $\JA$, by Prop.~\ref{prop JB JA strong phi compatibiliy relation on ukappa} one get \eqref{eq ukappa epsilon ustar} in terms of the new notations: $\Mu_{\ell k,\Bq}^{ji,\Bp} = \tfrac{\epsilonA(i,p,j,d_\algA)}{\epsilonB(k,\ell,q,d_\algB)} (\Mu_{k\ell,q}^{ij,p})^\ast$. This implies that
\begin{align}
\hspm^{\Bp_1, \Bp_2}_{ji} 
&=
\tsum_{\ell, k, \Bq} \tr( (\Mu^{ji,\Bp_2}_{\ell k,\Bq})^\ast \Mu^{ji,\Bp_1}_{\ell k,\Bq} ) 
= \tsum_{k, \ell, q}  \tfrac{\epsilonA(i,p_2,j,d_\algA) \epsilonA(i,p_1,j,d_\algA)}{\epsilonB(k,\ell,q,d_\algB)^2} \tr( \Mu^{ij,p_2}_{k\ell,q} (\Mu^{ij,p_1}_{k\ell,q})^\ast ) \nonumber
\\
\label{eq hspmBp1Bp2ji = hspmp2p2ij}
&= \epsilonA(i,p_1,j,d_\algA) \epsilonA(i,p_2,j,d_\algA) \hspm^{p_2, p_1}_{ij} 
\end{align}

In the following, we fix the couple $(i,j)$. Let us introduce a change of bases $\{ \vm_{i j}^{p} \}$ to $\{ \vm'^{p'}_{i j} \}$ in $\bbC^{\mu_{i j}}$, where $\vm'^{p'}_{i j} = \tsum_{p} u^{p'p} \vm_{i j}^{p}$ for a unitary matrix $U = (u^{p'p})_{p',p}$. Then a straightforward computation shows that the matrices $\Mu'^{ij,p'}_{k\ell,q}$ defined relatively to the bases $\{ \vm'^{p'}_{i j} \}$ and $\{ \wm_{k \ell}^{q} \}$ are $\Mu'^{ij,p'}_{k\ell,q} = \tsum_{p} u^{p'p} \Mu^{ij,p}_{k\ell,q}$, and the associated $\hspm'^{p'_1, p'_2}_{ij} = \tsum_{k, \ell, q} \tr( (\Mu'^{ij,p'_2}_{k\ell,q})^\ast \Mu'^{ij,p'_1}_{k\ell,q} )$ become
\begin{align*}
\hspm'^{p'_1, p'_2}_{ij}
&= \tsum_{k, \ell, q} 
\tsum_{p_1, p_2} \overline{u^{p'_2 p_2}} u^{p'_1 p_1}\, 
\tr( (\Mu^{ij,p_2}_{k\ell,q})^\ast \Mu^{ij,p_1}_{k\ell,q} )
\\
&= \tsum_{p_1, p_2} \overline{u^{p'_2 p_2}} u^{p'_1 p_1}\, \hspm^{p_1, p_2}_{ij},
\end{align*}
so that $\hspm'_{ij} = U \hspm_{ij} U^\ast$ with $U^\ast = (\overline{u^{p p'}})_{p',p}$ (here we use the switch $1 \leftrightarrow 2$ mentioned before). Since $\hspm_{ij}$ is a Hermitian matrix, there is a unitary matrix $U$ such that $\hspm'_{ij} = U \hspm_{ij} U^\ast$ is diagonal with real eigenvalues $\spe_{ij}^p = \spe_{v}$. So, for the new basis $\{ \vm'^{p'}_{i j} \}$ of $\bbC^{\mu_{i j}}$ defined by $U$, $\hspm'_{ij}$, and so $(\spm^{v_1, v_2})_{v_1,v_2}$, is diagonal.

Let us now look how this diagonalization can be performed according to the constraints in Prop.~\ref{prop basis odd case} (in the odd case) or Prop.~\ref{prop basis even case} (in the even case). The first constraint, common to Prop.~\ref{prop basis odd case} and \ref{prop basis even case}, is $\vm_{j i}^{\Bp} = \epsilonA(i, p, j ,d_\algA) L_{i j}(\Bvm_{i j}^{p})$ for any $p$, where $\epsilonA(i, p, j ,d_\algA) = \epsilonA(v ,d_\algA)$ is defined in \eqref{eq epsilon(v, d)}.

Let us first consider the case $i<j$ (for any $KO$-dimension), for which $\epsilonA(i, p, j ,d_\algA)=1$ and $\Bp = p$, so that, from \eqref{eq hspmBp1Bp2ji = hspmp2p2ij}, one has $\hspm^{p_1, p_2}_{ji} = \hspm^{p_2, p_1}_{ij}$: $\hspm_{ji}$ is the transpose of $\hspm_{ij}$. Since this result is true in any basis of $\bbC^{\mu_{i j}}$, this implies $\hspm'_{ji} = \BU  \hspm_{ji} \BU^\ast$. On the other hand, $\vm'^{p'}_{j i} =  L_{i j}(\Bvm'^{p'}_{i j}) = \tsum_{p} \overline{u^{p'p}} L_{i j}(\Bvm_{i j}^{p}) = \tsum_{p} \overline{u^{p'p}} \vm^{p}_{j i}$, so that the change of bases from $\{ \vm^{p}_{j i} \}$ to $\{ \vm'^{p'}_{j i} \}$ in $\bbC^{\mu_{j i}}$ is performed by the unitary matrix $\BU$. From these two compatible relations, one concludes that the change of basis defined by $U$ in  $\bbC^{\mu_{i j}}$ which diagonalizes $\hspm'_{ij}$ automatically induces a change of bases $\BU$ in $\bbC^{\mu_{j i}}$ which diagonalizes $\hspm'_{ji}$. Notice then that the eigenvalues $\spe_{ji}^p$ in $\hspm'_{ji}$ are the same as the eigenvalues $\spe_{ij}^p$ in $\hspm'_{ij}$, so that $\spe_{\JimA(v)} = \spe_{v}$.

Let us now consider $i=j$ in $KO$-dimensions $0,1,7$. Then, as before, $\epsilonA(i, p, i ,d_\algA)=1$ and $\Bp = p$, so that, from \eqref{eq hspmBp1Bp2ji = hspmp2p2ij}, one has  $\hspm^{p_1, p_2}_{ii} = \hspm^{p_2, p_1}_{ii}$, and we already know that $\hspm^{p_1, p_2}_{ii} = \overline{\hspm^{p_2, p_1}_{ii}}$: the matrix $\hspm_{ii}$ is a real symmetric matrix, and the diagonalizing matrix $U$ can be chosen to be an orthogonal matrix (so a real matrix). This result is compatible with the required condition $\vm_{ii}^{p} = L_{ii}(\Bvm_{ii}^{p})$ on the basis since $\vm'^{p'}_{ii} = L_{ii}(\Bvm'^{p'}_{ii}) = \tsum_{p} \overline{u^{p'p}} L_{ii}(\Bvm_{ii}^{p}) = \tsum_{p} \overline{u^{p'p}} \vm_{ii}^{p} = \tsum_{p} u^{p'p} \vm_{ii}^{p}$. Here, it is trivial that $\spe_{\JimA(v)} = \spe_{v}$ since $\JimA(v) = v$.

Finally, consider $i=j$ in $KO$-dimensions $2,3,4,5,6$. In that situation, if $p=2a$ (resp. $p=2a-1$) then $\Bp = 2a -1$ (resp. $\Bp = 2a$), and  $\epsilonA(i, 2a-1, i ,d_\algA)=1$ and  $\epsilonA(i, 2a, i ,d_\algA)=\epsilonA$. The matrix $\hspm_{ii}$ is a  block matrix $\smallpmatrix{\hspm_{ii}^{e,e} & \hspm_{ii}^{e,o} \\ \hspm_{ii}^{o,e} & \hspm_{ii}^{o,o} }$ where $o$ and $e$ stand for odd and even: for instance $\hspm_{ii}^{e,e} = \left( \hspm_{ii}^{2a_1, 2a_2} \right)$ and  $\hspm_{ii}^{e,o} = \left( \hspm_{ii}^{2a_1, 2a_2-1} \right)$ with $a_1, a_2=1, \dots, \mu_{ii}/2$. Then, from \eqref{eq hspmBp1Bp2ji = hspmp2p2ij}, one has $\hspm_{ii}^{2a_1, 2a_2} = \hspm_{ii}^{2a_2-1, 2a_1-1}$, $\hspm_{ii}^{2a_1, 2a_2-1} = \epsilonA \hspm_{ii}^{2a_2, 2a_1-1}$, and $\hspm_{ii}^{2a_1-1, 2a_2} = \epsilonA \hspm_{ii}^{2a_2-1, 2a_1}$. Considering these block matrices as matrices indexed by $a_1, a_2$, this means that $\hspm_{ii}^{e,e} = {\hspm_{ii}^{o,o}}^\top$, $\hspm_{ii}^{e,o} = \epsilonA {\hspm_{ii}^{e,o}}^\top$, and $\hspm_{ii}^{o,e} = \epsilonA {\hspm_{ii}^{o,e}}^\top$. Since $\hspm_{ii}$ is Hermitian, one also has $\hspm_{ii}^{e,e} = {\hspm_{ii}^{e,e}}^\ast$ and $\hspm_{ii}^{e,o} = {\hspm_{ii}^{o,e}}^\ast$.

In $KO$-dimensions $3,4,5$, one has $\epsilonA = -1$, so that $\hspm_{ii}^{e,o} = - {\hspm_{ii}^{e,o}}^\top = {\hspm_{ii}^{o,e}}^\ast$, which implies that $\hspm_{ii}^{e,o}$ and $\hspm_{ii}^{o,e}$ are antisymmetric matrices. We report the analysis for $KO$-dimensions $2,6$ after the following considerations.

In the even case, since $\gammaB$ is $\phi$-compatible with $\gammaA$, from Lemma~\ref{lemma matrices u(v,w)}, $u(v,w)$ is non-zero only when $s(v) = s(w)$, so that the sum defining $\spm^{v_1,v_2}$ implies $s(w) = s(v_1) = s(v_2)$. The matrix $(\spm^{v_1, v_2})_{v_1,v_2}$ is then block diagonal along the decomposition $s(v) = \pm 1$, and its diagonalization can be done by blocks: in terms of the change of bases in $\bbC^{\mu_{i j}}$, this means that the unitary $U$ introduced above which diagonalizes $\hspm_{ij}$ can be chosen to preserve the eigenspaces defined by the maps $\ell_{ij}$ in Prop~\ref{prop basis even case}. The decomposition along $s(v) = \pm 1$ is preserved by $\JimA$ since $s_{ji}^{\Bp} = \epsilonA'' s_{ij}^{p}$: so all the previous developments are compatible with this choice for $U$.

In the case  $i=j$ and $KO$-dimensions $2,6$, from Prop.~\ref{prop basis even case}, one has $s(i,2a,i) = 1$ and $s(i,2a-1,i) = -1$, so that the block decomposition $\smallpmatrix{\hspm_{ii}^{e,e} & \hspm_{ii}^{e,o} \\ \hspm_{ii}^{o,e} & \hspm_{ii}^{o,o} }$ corresponds to the block decomposition along  $s(v) = \pm 1$, and from the previous considerations, this implies that  $\hspm_{ii}^{e,o} = \hspm_{ii}^{o,e} = 0$. Since $\hspm_{ii}^{e,e}$ is Hermitean, there is a unitary matrix $\TU$ such that $\TU \hspm_{ii}^{e,e} \TU^\ast$ is diagonal, and then by transposition, $\BTU \hspm_{ii}^{o,o} \BTU^\ast$ is also diagonal with the same eigenvalues, that is, $\spe_{\JimA(v)} = \spe_{v}$. The unitary $U = \smallpmatrix{\TU & 0 \\ 0 & \BTU }$ diagonalizes $\hspm_{ii}$ and this diagonalization is compatible with the required conditions $\vm_{ii}^{2a} = L_{ii}(\Bvm_{ii}^{2a-1})$ and $\vm_{ii}^{2a-1} = \epsilonA L_{ii}(\Bvm_{ii}^{2a})$: $\TU = ( \Tu^{a',a})$ (resp. $\BTU = ( \BTu^{a',a})$) induces a change of the sub-basis $\{ \vm_{ii}^{2a} \}$ to $\{ \vm'^{2a}_{ii} \}$ (resp. $\{ \vm_{ii}^{2a-1} \}$ to $\{ \vm'^{2a-1}_{ii} \}$) with $\vm'^{2a'}_{ii} = \tsum_{a} \Tu^{a',a} \vm_{ii}^{2a}$ (resp.  $\vm'^{2a'-1}_{ii} = \tsum_{a} \BTu^{a',a} \vm_{ii}^{2a-1}$). The required condition is satisfied since then $\vm'^{2a'-1}_{ii} = \epsilonA L_{ii}(\Bvm'^{2a'}_{ii}) = \epsilonA  \tsum_{a} \BTu^{a',a} L_{ii}(\Bvm^{2a}_{ii}) = \tsum_{a} \BTu^{a',a} \vm_{ii}^{2a-1}$.
\end{proof}

\begin{remark}
We suspect that the diagonalization property proved in Prop.~\ref{prop diagonalization spm for KO dims} could be true also in $KO$-dimensions $3,4,5$. But we were unable to prove this fact. Nevertheless, the proposition fortunately covers the $KO$-dimension $6$ used in the finite part of the spectral triple for the NC version of the Standard Model of Particles Physics, see \cite{ChamConnMarc07a} and \cite{Suij15a} for instance.
\end{remark}

\begin{proposition}
\label{prop KO dim AF phi compatibility}
If two (odd/even) real spectral triples are $\phi$-compatible and $\phiH$ is such that \eqref{eq ukappa epsilon ustar} holds, then they have the same $KO$-dimension (mod 8).
\end{proposition}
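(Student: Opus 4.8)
The statement is a corollary of two results already established, namely Proposition~\ref{prop JB JA strong phi compatibiliy relation on ukappa} and Proposition~\ref{prop KO dim strong phi compatibility}. My plan is to chain them together. First, I would observe that condition~\eqref{eq ukappa epsilon ustar} is, by the content of Proposition~\ref{prop JB JA strong phi compatibiliy relation on ukappa}, nothing but the translation in terms of the matrices $u(v,w)$ of the statement ``$\JB$ is strong $\phi$-compatible with $\JA$'': that proposition asserts precisely the equivalence between the operator condition and~\eqref{eq ukappa epsilon ustar}, the $KO$-dimensions $d_\algA$ and $d_\algB$ entering only through the signs $\epsilonA(v,d_\algA)$ and $\epsilonB(w,d_\algB)$ appearing in the explicit description of $\JA$ and $\JB$. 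Hence the hypothesis of the present proposition is exactly the hypothesis ``$\JB$ strong $\phi$-compatible with $\JA$''.

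Second, I would feed this into the first assertion of Proposition~\ref{prop KO dim strong phi compatibility}: two (odd or even) real spectral triples that are $\phi$-compatible and for which $\JB$ is strong $\phi$-compatible with $\JA$ necessarily have the same $KO$-dimension mod~$8$. In the even case, the additional requirement used there, that $\gammaB$ be $\phi$-compatible with $\gammaA$, is built into the notion of $\phi$-compatibility of even spectral triples (Definition~\ref{def phi compatibility of spectral triples}), so nothing extra is needed. Combining the two steps gives $d_\algA = d_\algB$ mod~$8$, which is the claim.

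The one point I would take care to address explicitly, to reassure the reader, is the apparent circularity: $d_\algA$ and $d_\algB$ occur a priori as independent data inside~\eqref{eq ukappa epsilon ustar}. There is in fact no issue, because Proposition~\ref{prop JB JA strong phi compatibiliy relation on ukappa} is an honest equivalence of statements for any choice of these integers, and the equality $d_\algA = d_\algB$ is then obtained a posteriori from Proposition~\ref{prop KO dim strong phi compatibility}. Consequently there is no genuine obstacle in this proof; it is a short deduction, and the only work has already been done in the preceding two propositions (and, for the $AF$-algebra bookkeeping of $\epsilonA(v,d)$, in the normal-form section and in equation~\eqref{eq epsilon(v, d)}).
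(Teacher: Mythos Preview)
Your argument is correct and is in fact the natural way to proceed: Proposition~\ref{prop JB JA strong phi compatibiliy relation on ukappa} converts~\eqref{eq ukappa epsilon ustar} into strong $\phi$-compatibility of $\JB$ with $\JA$, and then item~1 of Proposition~\ref{prop KO dim strong phi compatibility} applies verbatim, since its hypothesis is precisely ``$\phi$-compatible spectral triples with $\JB$ strong $\phi$-compatible with $\JA$'' (so $\DB$ is only $\phi$-compatible there as well). The paper follows the same route but, rather than invoking Proposition~\ref{prop KO dim strong phi compatibility} as a black box, it re-derives the relation $\epsilonB' = \epsilonA'$ explicitly from the diagonality of $\JB$ and the block decomposition of $\JB \DB$ and $\DB \JB$; this is exactly the computation already carried out inside the proof of Proposition~\ref{prop KO dim strong phi compatibility}, so your shortcut loses nothing.
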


\begin{proof}
Since $\phiH$ satisfies \eqref{eq ukappa epsilon ustar}, by Prop.~\ref{prop JB JA strong phi compatibiliy relation on ukappa}, $\JB$ is strong $\phi$-compatible with $\JA$ and is diagonal. By Lemma~\ref{lemma gammaB gammaA phi compatibiliy diagonal}, $\gammaB$ is strong $\phi$-compatible with $\gammaA$ and is diagonal. The difference with Prop.~\ref{prop KO dim strong phi compatibility}, is that $\DB$ is only $\phi$-compatible with $\DA$. So, we already get $\epsilonB =  \epsilonA$ and $\epsilonB'' = \epsilonA''$: it remains to consider $\epsilonA'$ and $\epsilonB'$. 

Since $\JB$ is diagonal, one has $\JB \DB = \smallpmatrix{ \JB[,\phi]^{\phi} \DB[,\phi]^{\phi} & \JB[,\phi]^{\phi} \DB[,\phi]^{\perp} \\ \JB[,\perp]^{\perp} \DB[,\perp]^{\phi} & \JB[,\perp]^{\perp} \DB[,\perp]^{\perp} }$ and $\DB \JB = \smallpmatrix{ \DB[,\phi]^{\phi} \JB[,\phi]^{\phi} & \DB[,\phi]^{\perp} \JB[,\perp]^{\perp} \\ \DB[,\perp]^{\phi} \JB[,\phi]^{\phi} & \DB[,\perp]^{\perp} \JB[,\perp]^{\perp} }$, so that $\JB[,\phi]^{\phi} \DB[,\phi]^{\phi} = \epsilonB' \DB[,\phi]^{\phi} \JB[,\phi]^{\phi}$. Inserting this relation in the $\phi$-compatibility conditions on $\JB$ and $\DB$ implies $\epsilonB' = \epsilonA'$.
\end{proof}

From Prop.~\ref{prop st phi comp and phi comp of operators for AF}, the strong $\phi$-compatibility condition between $\DB$ and $\DA$ is equivalent to
\begin{align*}
\tsum_{v_2 \in \GammaA^{(0)}}\phiH[,w_2]^{v_2}(\DA[,(v_1, v_2)] \psi_{v_1})
&=
\tsum_{w_1 \in \GammaB^{(0)}} \DB[,(w_1,w_2)] \phiH[,w_1]^{v_1}(\psi_{v_1})
\end{align*}
for any $v_1 \in \GammaA^{(0)}$, $w_2 \in \GammaB^{(0)}$, and $\psi_{v_1} \in \hsiA[,v_1]$, and the $\phi$-compatibility condition is equivalent to
\begin{align*}
\tsum_{v_2 \in \GammaA^{(0)}}\phiH[,w_2]^{v_2}(\DA[,(v_1, v_2)] \psi_{v_1})
&=
\tsum_{w_1 \in \GammaB^{(0)}} \DB[, \phi,(w_1,w_2)]^{\phi} \phiH[,w_1]^{v_1}(\psi_{v_1})
\end{align*}
where $\DB[, \phi,(w_1,w_2)]^{\phi} : \hsiB[,w_1] \cap \phiH(\hsA) \to \hsiB[,w_2] \cap \phiH(\hsA)$. Unfortunately, from this relation, we cannot define the elementary operators $\DB[, \phi,(w_1,w_2)]^{\phi}$ in terms of the elementary operators $\DA[,(v_1, v_2)]$. Only the operators $\tsum_{w_1 \in \GammaB^{(0)}} \DB[, \phi,(w_1,w_2)]^{\phi} : \toplus_{w_1 \in \GammaB^{(0)}} \hsB[, w_1] \cap \phiH(\hsA) \to \hsB[, w_2] \cap \phiH(\hsA)$ can be recovered from the $\DA[,(v_1, v_2)]$'s.

\section{\texorpdfstring{Spectral actions for $AF$-AC Manifolds}{Spectral actions for AF-AC Manifolds}}
\label{sec spectral actions AF AC manifold}

Given a spectral action $(\algA, \hsA, \DA, \JA, \gammaA)$ for a finite dimensional algebra $\algA$ and given a compact Riemannian spin manifold $(M,g)$ equipped with its canonical spectral triple $(C^\infty(M), L^2(S), \DM, \JM, \gammaM)$, we consider the spectral triple $(\halgA \defeq C^\infty(M) \otimes \algA , \hshA \defeq L^2(S) \otimes \hsA, \DhA \defeq \DM \otimes 1 + \JM \otimes \DA, \JhA \defeq \JM \otimes \JA, \gammahA \defeq \gammaM \otimes \gammaA)$\footnote{When the $KO$-dimension for $M$ and $\algA$ produces such a spectral triple, see for instance \cite{DabrDoss11a}} over the Almost Commutative algebra $\halgA$.

Then, given two spectral triples $(\algA, \hsA, \DA, \JA, \gammaA)$ and $(\algB, \hsB, \DB, \JB, \gammaB)$ for two finite dimensional algebras $\algA$ and $\algB$, and a one-to-one homomorphism $\phi : \algA \to \algB$ such that the two spectral triple are $\phi$-compatible, with $\JB$ strong $\phi$-compatible with $\JA$, the aim of this section is to compare the spectral actions on $\halgA$ and $\halgB$ (for the same compact Riemannian spin manifold $(M,g)$). 

In order to have a good physical interpretation of the $\phi$-compatibility, in particular at the level of the fermions, we first need to introduce a “normalized” $\phiH$ map.

\subsection{\texorpdfstring{Normalized $\phiH$ map}{Normalized phiH map}}
\label{subsec normalized phiH map}

Denote by $\phiH^0 : \hsA \to \hsB$ a given one-to-one morphism as in Def.~\ref{def phi phiH}. We suppose that it satisfies Hypothesis~\ref{hyp phiH diagonalization}. Then we can choose the orthonormal bases $\{ \vm_{i j}^{p} \}_{1 \leq p \leq \mu_{i j}}$ of $\bbC^{\mu_{i j}}$ that diagonalize $(\spm^{v_1, v_2})_{v_1,v_2}$ and this implies that $\{ \phiH^0(e_{\Tv}) \}_{\Tv \in \TGammaA^{(0)}}$ is a basis of $\phiH^0(\hsA)$. For any $\Tv = (v,\alpha)$, we  can identify $\phiH^0(e_{\Tv})$ with $\phiH^{0,v}(e_{\Tv})$. Let $\Tv_1 = (v_1, \alpha_1)$ and $\Tv_2 = (v_2, \alpha_2)$. When $v_1 \neq v_2$, one has $\langle \phiH^{0,v}(e_{\Tv_1}), \phiH^{0,v}(e_{\Tv_2}) \rangle_{\hsiB} = 0$, while,  when $v = v_1 = v_2$, $\langle \phiH^{0,v}(e_{\Tv_1}), \phiH^{0,v}(e_{\Tv_2}) \rangle_{\hsiB} = \spe_{v} \, \langle \Tv_1, \Tv_2 \rangle_{\hsiA[,v]} = \spe_{v} \, \delta_{\alpha_1,\alpha_2}$. This implies that $\{ \phiH^0(e_{\Tv}) \}_{\Tv \in \TGammaA^{(0)}}$ is an orthogonal family. Since $\phiH^0$ is one-to-one and $\norm{\phiH^{0,v}(e_{\Tv})}^2 = \spe_{v}$, one has $\spe_{v} > 0$ for any $v \in \GammaA^{(0)}$.

\begin{definition}
The normalized $\phiH$ map associated to the map $\phiH^0 : \hsA \to \hsB$ which satisfies Hypothesis~\ref{hyp phiH diagonalization} is the map $\phiH : \hsA \to \hsB$ defined by
\begin{align*}
\phiH( \toplus_{v \in \GammaA^{(0)}} \psi_v) 
\defeq \tsum_{v \in \GammaA^{(0)}} \spe_{v}^{-1/2} \phiH^{0,v}(\psi_v)
\end{align*}
Using \eqref{eq phiH w v a b psiv} with $b_j = 0$, one can check that $\phiH$ satisfies Def.~\ref{def phi phiH}.
\end{definition}

The normalization has been chosen such that the family $\{ f_{\Tv} \defeq \phiH(e_{\Tv}) \}_{\Tv \in \TGammaA^{(0)}}$ is an orthonormal basis of $\phiH(\hsA)$. This basis of $\phiH(\hsA)$ can be completed with any orthonormal basis $\{ f_{\hw} \}_{\hw \in \hGammaB^{(0)}}$ of $\phiH(\hsA)^\perp$ where $\hGammaB^{(0)}$ is any index set for this basis. So, $\{ f_{\Tv} \}_{\Tv \in \TGammaA^{(0)}} \cup \{ f_{\hw} \}_{\hw \in \hGammaB^{(0)}}$ is an orthonormal basis of $\hsB$ adapted to the decomposition $\phiH(\hsA) \oplus \phiH(\hsA)^\perp$.

\medskip
We now consider the normalized $\phiH$ map in place of $\phiH^0$. The relation between the scalar products in $\hsB$ and $\hsA$ reduces to the simple relation $\langle \phiH(\psi), \phiH(\psi') \rangle_{\hsB} = \langle \psi, \psi' \rangle_{\hsA}$ for any $\psi, \psi' \in \hsA$, so that $\phiH$ is an isometry.

In the following, $\phi$-compatibility of operators will be relative to the normalized $\phiH$ map.

For an operator $B$ on $\hsB$ which is $\phi$-compatible with an operator $A$ on $\hsA$, the components $B_{\phi}^{\perp}$, $B_{\perp}^{\phi}$, and $B_{\perp}^{\perp}$ of $B$ in the $2\times 2$ matrix decomposition induced by $\hsB = \phiH(\hsA) \oplus \phiH(\hsA)^\perp$ will be called non-inherited, while the component $B_{\phi}^{\phi}$ will be called inherited. Let us use the acronym “\TNIC” for “Terms with Non-Inherited Components” in the following technical results, which are the main interest for the use of the normalized $\phiH$ map:
\begin{lemma}
\label{lemma scalar product and trace phi compatible}
For $i=1,\ldots,n$, let $B_i$ be an operator on $\hsB$ which is $\phi$-compatible with an operator $A_i$ on $\hsA$.
\begin{enumerate}
\item For any $\Tv_1, \Tv_2 \in \TGammaA^{(0)}$, one has\label{item scalar product of product}
\begin{align*}
\langle f_{\Tv_1}, B_1 \cdots B_n f_{\Tv_2} \rangle_{\hsB} 
&= \langle e_{\Tv_1}, A_1 \cdots A_n e_{\Tv_2} \rangle_{\hsA} + \text{\TNIC}
\end{align*}

\item As a consequence, one has\label{item trace of product}
\begin{align*}
\tr(B_1 \cdots B_n) = \tr(A_1 \cdots A_n) + \text{\TNIC}
\end{align*}
\end{enumerate}
\end{lemma}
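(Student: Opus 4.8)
The plan is to prove part~\ref{item scalar product of product} directly by inserting, between each pair $B_i B_{i+1}$, the orthonormal basis $\{ f_{\Tv} \}_{\Tv \in \TGammaA^{(0)}} \cup \{ f_{\hw} \}_{\hw \in \hGammaB^{(0)}}$ of $\hsB$ adapted to the decomposition $\hsB = \phiH(\hsA) \oplus \phiH(\hsA)^\perp$, and then to show that every term in the resulting expansion which uses at least one basis vector $f_{\hw}$ from $\phiH(\hsA)^\perp$ is a term with non-inherited components (hence absorbed into $\text{\TNIC}$), while the sum of the remaining terms — those using only the $f_{\Tv}$'s — reproduces $\langle e_{\Tv_1}, A_1 \cdots A_n e_{\Tv_2}\rangle_{\hsA}$. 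Part~\ref{item trace of product} then follows by taking $\Tv_1 = \Tv_2$ and summing over $\Tv_1 \in \TGammaA^{(0)}$, after first noting that the basis vectors $f_{\hw}$ contribute only non-inherited terms to the trace as well.

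**Key steps.**
First I would record the two basic facts about the normalized $\phiH$ that make this work: $\phiH$ is an isometry with $\phiH(e_{\Tv}) = f_{\Tv}$, so $\langle f_{\Tv}, f_{\Tv'}\rangle_{\hsB} = \langle e_{\Tv}, e_{\Tv'}\rangle_{\hsA}$; and, by the definition of $\phi$-compatibility (Def.~\ref{def phi compatibility operators}), $\phiH(A_i \psi) = (B_i)_\phi^\phi \phiH(\psi)$ for all $\psi \in \hsA$, equivalently $\langle f_{\Tv}, B_i f_{\Tv'}\rangle_{\hsB} = \langle e_{\Tv}, A_i e_{\Tv'}\rangle_{\hsA}$ since both $f_{\Tv}$ and $B_i f_{\Tv'}$'s inherited part live in $\phiH(\hsA)$. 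Next I would write
\begin{align*}
\langle f_{\Tv_1}, B_1 \cdots B_n f_{\Tv_2}\rangle_{\hsB}
= \!\!\sum_{g_1, \dots, g_{n-1}}\!\! \langle f_{\Tv_1}, B_1 g_1\rangle\langle g_1, B_2 g_2\rangle \cdots \langle g_{n-1}, B_n f_{\Tv_2}\rangle,
\end{align*}
where each $g_i$ runs over the full adapted basis of $\hsB$. I would then split this sum into the piece where every $g_i$ is of the form $f_{\Tw_i}$ with $\Tw_i \in \TGammaA^{(0)}$, and the complementary piece. For the first piece, each factor $\langle f_{\Tw_{i}}, B_{i+1} f_{\Tw_{i+1}}\rangle_{\hsB}$ equals $\langle e_{\Tw_i}, A_{i+1} e_{\Tw_{i+1}}\rangle_{\hsA}$ by the $\phi$-compatibility identity above, and re-summing over $\Tw_i \in \TGammaA^{(0)}$ — which is exactly a resolution of the identity on $\hsA$ since $\{e_{\Tv}\}$ is an orthonormal basis of $\hsA$ — collapses the first piece to $\langle e_{\Tv_1}, A_1 \cdots A_n e_{\Tv_2}\rangle_{\hsA}$. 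For the complementary piece, any term containing at least one $g_i = f_{\hw}$ necessarily contains a matrix element of the form $\langle f_{\hw}, B_{i+1}\,\cdot\,\rangle$ or $\langle \cdot\,, B_i f_{\hw}\rangle$, i.e.\ it picks up a component $(B)_\phi^\perp$, $(B)_\perp^\phi$, or $(B)_\perp^\perp$; by the definition of non-inherited components given just before the lemma, such a product is by definition a \TNIC, so the whole complementary piece is absorbed into $\text{\TNIC}$. This proves part~\ref{item scalar product of product}. For part~\ref{item trace of product}, I would write $\tr(B_1\cdots B_n) = \sum_{\Tv \in \TGammaA^{(0)}} \langle f_{\Tv}, B_1\cdots B_n f_{\Tv}\rangle_{\hsB} + \sum_{\hw \in \hGammaB^{(0)}}\langle f_{\hw}, B_1 \cdots B_n f_{\hw}\rangle_{\hsB}$; the first sum equals $\sum_{\Tv} \langle e_{\Tv}, A_1\cdots A_n e_{\Tv}\rangle_{\hsA} + \text{\TNIC} = \tr(A_1 \cdots A_n) + \text{\TNIC}$ by part~\ref{item scalar product of product}, and every summand in the second sum is a \TNIC\ for the same reason as above (it begins with $\langle f_{\hw}, B_1\,\cdot\,\rangle$, picking up a $(B_1)_\perp^{\cdot}$ component).

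**Main obstacle.**
The only real subtlety is bookkeeping rather than mathematics: one must be careful that the notion of \TNIC\ is stable under the operations performed, i.e.\ that a product of matrix elements of the $B_i$'s in which at least one factor is a non-inherited component of some $B_i$ is again counted as a \TNIC\ after summing over the intermediate basis vectors. This is really just the observation that the class of such terms is closed under the finite sums over $\hGammaB^{(0)}$ and $\TGammaA^{(0)}$ that appear — but it is worth stating explicitly, since the lemma is invoked later to discard precisely these contributions in the spectral action. A second minor point to handle carefully is the $n=1$ base case, where there are no intermediate insertions and the statement reduces immediately to $\langle f_{\Tv_1}, B_1 f_{\Tv_2}\rangle_{\hsB} = \langle e_{\Tv_1}, A_1 e_{\Tv_2}\rangle_{\hsA}$, which is the $\phi$-compatibility identity itself (with the off-diagonal parts of $B_1$ contributing nothing between vectors of $\phiH(\hsA)$).
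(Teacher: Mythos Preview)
Your proposal is correct and follows essentially the same approach as the paper. The paper phrases the general $n$ step in $2\times 2$ block-matrix language --- observing that the only purely inherited contribution to $(B_1\cdots B_n)_\phi^\phi$ is $B_{1,\phi}^\phi \cdots B_{n,\phi}^\phi$ and then applying the $n=1$ identity to the pair $A_1\cdots A_n$ and $B_{1,\phi}^\phi \cdots B_{n,\phi}^\phi$ --- whereas you insert resolutions of the identity between the $B_i$'s and split according to whether every intermediate vector lies in $\phiH(\hsA)$; these are two ways of writing the same decomposition, and your handling of the $n=1$ case and of part~\ref{item trace of product} matches the paper's exactly.
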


\begin{proof}
First, let us prove the relation in Point~\ref{item scalar product of product} for $n=1$. We omit the index $i$. Using the matrix decomposition $A e_{\Tv} = \tsum_{\Tv' \in \TGammaA^{(0)}} A_{\Tv}^{\Tv'} e_{\Tv'}$ along the basis $\{ e_{\Tv} \}_{\Tv \in \TGammaA^{(0)}}$, the RHS is $A_{\Tv_2}^{\Tv_1}$. For the LHS, one has $\langle f_{\Tv_1}, B f_{\Tv_2} \rangle_{\hsB} = \langle \phiH^{v_1}(e_{\Tv_1}), B_{\phi}^{\phi} \phiH^{v_2}(e_{\Tv_2}) \rangle_{\hsB} = \langle \phiH^{v_1}(e_{\Tv_1}), \phiH^{v_2}(A e_{\Tv_2}) \rangle_{\hsB} = \tsum_{\Tv \in \TGammaA^{(0)}} A_{\Tv_2}^{\Tv} \langle \phiH^{v_1}(e_{\Tv_1}), \phiH^{v}(e_{\Tv}) \rangle_{\hsB}$. This expression is zero for $v_1 \neq v$, so the summation reduces to the summation over the $\Tv = (v_1, \alpha) \in  \TGammaA^{(0)}$: $\tsum_{\Tv = (v_1, \alpha)} A_{\Tv_2}^{\Tv} \langle \phiH^{v_1}(e_{\Tv_1}), \phiH^{v_1}(e_{\Tv}) \rangle_{\hsB} = \tsum_{\Tv = (v_1, \alpha)} A_{\Tv_2}^{\Tv} \langle e_{\Tv_1}, e_{\Tv} \rangle_{\hsA} = A_{\Tv_2}^{\Tv_1}$.

Let us return to the general situation $n \geq 1$ in Point~\ref{item scalar product of product}. With $B_i = \smallpmatrix{ B_{i, \phi}^{\phi} & B_{i, \phi}^{\perp} \\ B_{i, \perp}^{\phi} & B_{i, \perp}^{\perp} }$, a straightforward computation shows that the only component in $B_1 \cdots B_n$ that contains only inherited components is in the block $(B_1 \cdots B_n)_{\phi}^{\phi}$ and it is $B_{1, \phi}^{\phi} \cdots B_{n, \phi}^{\phi}$, so that $\langle f_{\Tv_1}, B_1 \cdots B_n f_{\Tv_2} \rangle_{\hsB}  = \langle f_{\Tv_1}, B_{1, \phi}^{\phi} \cdots B_{n, \phi}^{\phi} f_{\Tv_2} \rangle_{\hsB} + \text{\TNIC}$. The proof that $\langle f_{\Tv_1}, B_{1, \phi}^{\phi} \cdots B_{n, \phi}^{\phi} f_{\Tv_2} \rangle_{\hsB} = \langle e_{\Tv_1}, A_1 \cdots A_n e_{\Tv_2} \rangle_{\hsA}$ is the same as before, with $B_{\phi}^{\phi} = B_{1, \phi}^{\phi} \cdots B_{n, \phi}^{\phi}$ and $A = A_1 \cdots A_n$ which satisfy $\phiH(A \psi) = B_{\phi}^{\phi} \phiH(\psi)$.

Point~\ref{item trace of product} is a direct consequence of Point~\ref{item scalar product of product}. By the previous argument on the product $B_1 \cdots B_n$, one has 
\begin{align*}
\tr(B_1 \cdots B_n) 
&= \tsum_{\Tv \in \TGammaA^{(0)}} \langle f_{\Tv}, B_1 \cdots B_n f_{\Tv} \rangle_{\hsB} + \tsum_{\hw \in \hGammaB^{(0)}} \langle f_{\hw}, B_1 \cdots B_n f_{\hw} \rangle_{\hsB} 
\\
&= \tsum_{\Tv \in \TGammaA^{(0)}} \langle f_{\Tv}, B_1 \cdots B_n f_{\Tv} \rangle_{\hsB} + \text{\TNIC}
\end{align*}  
and $\tsum_{\Tv \in \TGammaA^{(0)}} \langle f_{\Tv}, B_1 \cdots B_n f_{\Tv} \rangle_{\hsB}  = \tsum_{\Tv \in \TGammaA^{(0)}} \langle e_{\Tv}, A_1 \cdots A_n e_{\Tv} \rangle_{\hsA} +  \text{\TNIC} = \tr(A_1 \cdots A_n) + \text{\TNIC}$ by Point~\ref{item scalar product of product}.
\end{proof}

\begin{remark}
\label{remark non normalized}
Point~\ref{item trace of product} can be proved directly without the assumption that $\phiH$ is normalized.
\end{remark}

The notion of $\phi$-compatibility has been developed for operators on $\hsA$ and $\hsB$. We define $\phi$-compatibility for fermions as follows:
\begin{definition}
A vector $\psi_\algB$ is $\phi$-compatible with a vector $\psi_\algA$ if $\psi_\algB - \phiH(\psi_\algA) \in  \phiH(\hsA)^\perp$.
\end{definition}

Using this definition, we extend the acronym “\TNIC” (“Terms with Non-Inherited Components”) to terms which contains fermions.

\smallskip
From a physical point of view, the consequence of the normalization of $\phiH$ is that, for any $\psi \in \hsA$, one has
\begin{align*}
\norm{\phiH(\psi)}_{\hsB} = \norm{\psi}_{\hsA}
\end{align*} 
since, with $\psi = \toplus_{v \in \GammaA^{(0)}} \psi_v$, one has $\norm{\phiH(\psi)}_{\hsB}^2 = \tsum_{v \in \GammaA^{(0)}} \langle \phiH^{v}(\psi_v), \phiH^{v}(\psi_v) \rangle_{\hsB} = \tsum_{v \in \GammaA^{(0)}} \langle \psi_v, \psi_v \rangle_{\hsA} = \norm{\psi}_{\hsA}^2$. This means the $\phiH$ respects the normalization of the state vector when it is injected from $\hsA$ into $\hsB$. This state $\psi$ can be “diluted” at different “places” in $\hsB$ (different irreps) but its norm is conserved.

\subsection{Comparison of Spectral Actions}
\label{subsec comparison of spectral actions}

The purpose of this section is to compare the spectral actions for the two $\hphi$-compatible spectral triples
\begin{align*}
&(\halgA, \hshA, \DhA \defeq \DM \otimes 1 + \JM \otimes \DA, \JhA \defeq \JM \otimes \JA, \gammahA \defeq \gammaM \otimes \gammaA)
\\
&\text{and}
\\
&(\halgB, \hshB, \DhB \defeq \DM \otimes 1 + \JM \otimes \DB, \JhB \defeq \JM \otimes \JB, \gammahB \defeq \gammaM \otimes \gammaB)
\end{align*}
for the same compact Riemannian spin manifold $(M,g)$. We suppose that $\phiH : \hsA \to \hsB$ is normalized.

We extend in a natural way the map $\phi$ to a homomorphism of algebras $\hphi \defeq \Id \otimes \phi : \halgA \to \halgB$. In the same way, we denote by $\hphiH \defeq \Id \otimes \phiH : \hshA \to \hshB$ the natural extension of $\phiH$. The notion of (strong) $\hphi$-compatibility is then naturally defined from the notion of (strong) $\phi$-compatibility: an operator $\hB = B_M \otimes B_F$ on $\hshB$ is (strong) $\hphi$-compatible with an operator $\hA = A_M \otimes A_F$ on $\hshA$ if $\hphiH(\hA (\chi \otimes \psi)) = \hB_{\hphi}^{\hphi} \hphiH( \chi \otimes \psi)$ (resp. $\hphiH(\hA \chi \otimes \psi) = \hB \hphiH( \chi \otimes \psi)$) for any $\chi \otimes \psi \in \hshA$, that is, $\chi \otimes \phiH( A_F \psi) = \chi \otimes B_{F, \phi}^{\phi} \phiH(\psi)$ (resp. $\chi \otimes \phiH( A_F \psi) = \chi \otimes B_F \phiH(\psi)$).

Lemma~\ref{lemma scalar product and trace phi compatible} then extends naturally to:
\begin{lemma}
\label{lemma scalar product and trace phi compatible AC}
Let $\{ \chi_{c} \}_{c \in C}$ an orthonormal basis of $L^2(S)$. For $i=1,\ldots,n$, let $\hB_i$ be an operator on $\hshB$ which is $\hphi$-compatible with an operator $\hA_i$ on $\hshA$.
\begin{enumerate}
\item For any $\Tv_1, \Tv_2 \in \TGammaA^{(0)}$ and $c_1, c_2 \in C$, one has\label{item scalar product of product AC}
\begin{align*}
\langle \chi_{c_1} \otimes f_{\Tv_1}, \hB_1 \cdots \hB_n \chi_{c_2} \otimes f_{\Tv_2} \rangle_{\hshB} 
= \langle \chi_{c_1} \otimes e_{\Tv_1}, \hA_1 \cdots \hA_n \chi_{c_2} \otimes e_{\Tv_2} \rangle_{\hshA}
 + \text{\TNIC}
\end{align*}

\item As a consequence, one has\label{item trace of product AC}
\begin{align*}
\Tr(\hB_1 \cdots \hB_n) = \Tr(\hA_1 \cdots \hA_n) + \text{\TNIC}
\end{align*}
\end{enumerate}
\end{lemma}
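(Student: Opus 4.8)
The plan is to reduce Lemma~\ref{lemma scalar product and trace phi compatible AC} to Lemma~\ref{lemma scalar product and trace phi compatible} by exploiting the tensor-product structure on the manifold factor $L^2(S)$. The key observation is that each $\hB_i$ (being $\hphi$-compatible with $\hA_i$) decomposes along $\hshB = L^2(S) \otimes \hsB$, but crucially the map $\hphiH = \Id \otimes \phiH$ acts as the identity on the $L^2(S)$ factor, so the inherited/non-inherited decomposition on $\hshB$ induced by $\hphiH(\hshA) \oplus \hphiH(\hshA)^\perp$ is simply $L^2(S) \otimes (\phiH(\hsA) \oplus \phiH(\hsA)^\perp)$. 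Consequently, writing $\hB_i = \smallpmatrix{ \hB_{i,\hphi}^{\hphi} & \hB_{i,\hphi}^{\perp} \\ \hB_{i,\perp}^{\hphi} & \hB_{i,\perp}^{\perp} }$ in this decomposition, the only component of the product $\hB_1 \cdots \hB_n$ involving exclusively inherited components is $\hB_{1,\hphi}^{\hphi} \cdots \hB_{n,\hphi}^{\hphi}$, exactly as in the proof of Lemma~\ref{lemma scalar product and trace phi compatible}; every other term contributes to \TNIC.

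For Point~\ref{item scalar product of product AC}, I would first note that $\langle \chi_{c_1} \otimes f_{\Tv_1}, \hB_1 \cdots \hB_n\, \chi_{c_2} \otimes f_{\Tv_2} \rangle_{\hshB} = \langle \chi_{c_1} \otimes f_{\Tv_1}, \hB_{1,\hphi}^{\hphi} \cdots \hB_{n,\hphi}^{\hphi}\, \chi_{c_2} \otimes f_{\Tv_2} \rangle_{\hshB} + \text{\TNIC}$, since $\chi_{c_2} \otimes f_{\Tv_2}$ lies in $\hphiH(\hshA)$ and $\langle \chi_{c_1} \otimes f_{\Tv_1}, -\rangle$ only sees the $\hphiH(\hshA)$-component of the output. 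Then I would invoke the defining relation of $\hphi$-compatibility, $\hphiH(\hA_i(\chi \otimes \psi)) = \hB_{i,\hphi}^{\hphi}\, \hphiH(\chi \otimes \psi)$, iterated $n$ times, together with the fact that $\hphiH$ restricted to $\hshA$ is an isometry onto $\hphiH(\hshA)$ (a direct consequence of $\phiH$ being normalized, hence an isometry), to transport the scalar product back to $\hshA$: $\langle \chi_{c_1} \otimes f_{\Tv_1}, \hB_{1,\hphi}^{\hphi} \cdots \hB_{n,\hphi}^{\hphi}\, \chi_{c_2} \otimes f_{\Tv_2} \rangle_{\hshB} = \langle \hphiH(\chi_{c_1} \otimes e_{\Tv_1}), \hphiH(\hA_1 \cdots \hA_n\, \chi_{c_2} \otimes e_{\Tv_2}) \rangle_{\hshB} = \langle \chi_{c_1} \otimes e_{\Tv_1}, \hA_1 \cdots \hA_n\, \chi_{c_2} \otimes e_{\Tv_2} \rangle_{\hshA}$. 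This is verbatim the argument of Lemma~\ref{lemma scalar product and trace phi compatible} with the spectator factor $\chi_c$ carried along; the only new input is that $\{ \chi_c \otimes f_{\Tv} \}_{c \in C,\, \Tv \in \TGammaA^{(0)}}$ extends to an orthonormal basis of $\hshB$ adapted to the decomposition $\hphiH(\hshA) \oplus \hphiH(\hshA)^\perp$, which follows from tensoring the basis $\{\chi_c\}$ with the basis $\{ f_{\Tv} \} \cup \{ f_{\hw} \}$ of $\hsB$ constructed after the definition of the normalized $\phiH$ map.

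For Point~\ref{item trace of product AC}, I would expand $\Tr(\hB_1 \cdots \hB_n)$ over the orthonormal basis $\{ \chi_c \otimes f_{\Tv} \} \cup \{ \chi_c \otimes f_{\hw} \}$ of $\hshB$; the terms indexed by $\hw \in \hGammaB^{(0)}$ lie in $\hphiH(\hshA)^\perp$ and hence contribute only \TNIC, while the terms indexed by $\Tv \in \TGammaA^{(0)}$ give, by Point~\ref{item scalar product of product AC} with $\Tv_1 = \Tv_2$ and $c_1 = c_2$, the sum $\sum_{c \in C} \sum_{\Tv \in \TGammaA^{(0)}} \langle \chi_c \otimes e_{\Tv}, \hA_1 \cdots \hA_n\, \chi_c \otimes e_{\Tv} \rangle_{\hshA} + \text{\TNIC} = \Tr(\hA_1 \cdots \hA_n) + \text{\TNIC}$, since $\{ \chi_c \otimes e_{\Tv} \}$ is an orthonormal basis of $\hshA$. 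I do not expect a genuine obstacle here: the entire content is bookkeeping of the tensor factorization and the inherited/non-inherited block decomposition. The one point requiring a little care is making precise that "\TNIC" on the $\hshB$ side genuinely matches the informal notion — i.e. that each discarded term is built from a product in which at least one factor is one of $\hB_{i,\hphi}^\perp$, $\hB_{i,\perp}^\hphi$, $\hB_{i,\perp}^\perp$, or involves a basis vector $\chi_c \otimes f_{\hw}$ from $\hphiH(\hshA)^\perp$ — but this is exactly the combinatorial fact already used in Lemma~\ref{lemma scalar product and trace phi compatible}, now with the inert $L^2(S)$ factor appended, so it carries over unchanged. As in Remark~\ref{remark non normalized}, Point~\ref{item trace of product AC} in fact does not require normalization of $\phiH$, only Point~\ref{item scalar product of product AC} does.
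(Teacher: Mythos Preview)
Your proposal is correct and follows essentially the same approach as the paper, which simply states that the proof is similar to that of Lemma~\ref{lemma scalar product and trace phi compatible} with the geometrical part playing no role in the main steps. You have spelled out in detail exactly what this means: the decomposition $\hphiH(\hshA) \oplus \hphiH(\hshA)^\perp = L^2(S) \otimes \big( \phiH(\hsA) \oplus \phiH(\hsA)^\perp \big)$ reduces the block combinatorics to those of the finite part, with the $L^2(S)$ factor carried along as a spectator.
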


\begin{proof}
The proof is similar to the one of Lemma~\ref{lemma scalar product and trace phi compatible}, noticing that the geometrical part plays no role in the main steps.
\end{proof}

We follow \cite{Suij15a} to define the bosonic and the fermionic spectral actions. For any $\omega \in \Omega^1_U(\halgA)$, let consider the operator $\DhA[,\omega] = \DhA + \omega + \epsilonhA' \JhA \omega \JhA^{-1}$ where $\omega$ is used in place of $\piDhA(\omega)$. Let $f : \bbR \to \bbR$ be a positive even function. Then the bosonic spectral action is defined by
\begin{align*}
S_b[\omega]
&\defeq \Tr f(\DhA[,\omega] / \Lambda)
\end{align*}
for a real cutoff parameter $\Lambda$, and where $\Tr$ is the operator trace. We require that $f$ is such that $f(\DhA[,\omega] / \Lambda)$ is a traceclass operator.

To define the fermionic spectral action, we introduce the vector space of Grassmann vectors $\ThshA$ defined from $\hshA$, and the notation $\Tpsi \in \ThshA$ for any $\psi \in \hshA$. Then, in the even case, for any $\Tpsi \in \ThshA^+$, where $\ThshA^+$ corresponds to Grassmann vectors associated to vectors $\psi \in \hshA^+ = \ker (\gammahA - 1)$ (even elements in $\hshA$), the fermionic spectral action is
\begin{align*}
S_f[\omega, \Tpsi] 
&\defeq \langle \JhA \Tpsi, \DhA[,\omega] \Tpsi \rangle_{\ThshA}
\end{align*}

From now on, we suppose that $\dim M = 4$ and, to simplify the presentation (to focus mainly on the algebraic part of the spectral actions), we suppose that $(M,g)$ is compact and flat, so that all the Riemann tensors will be trivial in the following. 

Let us use the following notations. For any $\omega\in \Omega^1_U(\halgA)$ with $\piDhA(\omega) = \gamma^\mu \otimes A_\mu + \gamma_M \otimes \varphi $, for Hermitian operators $A_\mu$ and $\varphi$ on $C^\infty(M) \otimes \hsA$, define $B_\mu \defeq A_\mu - \JA A_\mu \JA^{-1}$ and $\Phi \defeq \DA + \varphi + \JA \varphi \JA^{-1}$, so that $\DhA[, \omega] = \DM \otimes 1 + \gamma^\mu \otimes B_\mu + \gammaM \otimes \Phi$. Let $\nabla_\mu^S$ be the spin connection on $S$, and consider the vector bundle $E = S \otimes (M \times \hsA)$ such that $L^2(E) = \hshA$, and let $\nabla_\mu^E \defeq \nabla_\mu^S \otimes 1 + 1 \otimes (\partial_\mu + i B_\mu)$ be the natural twisted connection on $E$ defined by the spectral triple, so that $\DhA[, \omega] = -i \gamma^\mu \nabla_\mu^E + \gammaM \otimes \Phi$. Finally, let $D_\mu \defeq \partial_\mu + i \ad(B_\mu)$ and $F_{\mu\nu} \defeq \partial_\mu B_\nu - \partial_\nu B_\mu + i[B_\mu, B_\nu]$. In the same way, we introduce $\omega'$, $A'_\mu$, $\varphi'$, $B'_\mu$, $\Phi'$, $E'$, $D'_\mu$, $F'_{\mu\nu}$ for the algebra $\halgB$.

Let $f_n \defeq \int_0^\infty f(x) x^{n-1} \dd x$ be the moments of $f$ for $n > 0$, then we have the general result \cite[Prop.~8.12]{Suij15a} that we have simplified to take into account the fact that the metric $g$ is Euclidean: \footnote{In particular there is no Einstein-Hilbert Lagrangian since the purely geometric part needs not be compared from $\halgA$ to $\halgB$.}
\begin{proposition}
\label{prop bosonic lagrangian}
Suppose that the $KO$-dimension of $\algA$ is even, then
\begin{align*}
\Tr f(\DhA[,\omega] / \Lambda)
\sim
\int_M \calL(B_\mu, \Phi) \, \dd^4 x + \calO(\Lambda^{-1})
\end{align*}
with
\begin{align*}
\calL(B_\mu, \Phi)
&= \calL_B(B_\mu) + \calL_\varphi(B_\mu, \Phi)
\end{align*}
where $\calL_B(B_\mu) = \frac{f(0)}{24 \pi^2} \tr( F_{\mu\nu} F^{\mu\nu})$, and, up to a boundary term,
\begin{align*}
\calL_\varphi(B_\mu, \Phi) 
&=
- \frac{2 f_2 \Lambda^2}{4 \pi^2} \tr(\Phi^2) 
+ \frac{f(0)}{8 \pi^2} \tr(\Phi^4)
+ \frac{f(0)}{8 \pi^2} \tr\big( (D_\mu \Phi) (D^\mu \Phi) \big)
\end{align*}
\end{proposition}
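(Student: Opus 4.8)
The plan is to invoke the heat-kernel expansion for the square of the perturbed Dirac operator, exactly as in the reference computation \cite[Prop.~8.12]{Suij15a}, and then simplify using the flatness of $(M,g)$. First I would write $\DhA[,\omega]$ in the form $\DhA[,\omega] = -i\gamma^\mu \nabla_\mu^E + \gammaM \otimes \Phi$ established in the paragraph above, and compute $\DhA[,\omega]^2$; since the Dirac operator of $(M,g)$ has the Lichnerowicz form and the curvature terms vanish by hypothesis, one obtains a generalized Laplacian $\DhA[,\omega]^2 = -(\nabla^{E})^2 + \mathbb{E}$ where the endomorphism $\mathbb{E}$ decomposes into a Clifford-quadratic piece built from $F_{\mu\nu}$, a piece $\gamma^\mu \gammaM \otimes (D_\mu \Phi)$ coming from cross terms, and a potential piece $1\otimes \Phi^2$. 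The sign conventions and the identity $\gammaM^2 = 1$, $\{\gamma^\mu,\gammaM\}=0$ are what make the cross term land with the right coefficient.

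Next I would apply the standard Seeley--DeWitt / Gilkey formula $\Tr f(\DhA[,\omega]/\Lambda) \sim \sum_{k\geq 0} f_{4-k}\, \Lambda^{4-k}\, a_k(\DhA[,\omega]^2)$ (with $f_0 = f(0)$ and $f_n$ the moments as defined just before the statement), keeping only the terms up to $\Lambda^0$ and discarding $\calO(\Lambda^{-1})$. With $\dim M = 4$ the relevant coefficients are $a_0$, $a_2$, $a_4$; using again that all Riemann/Ricci/scalar curvature terms vanish, $a_0$ contributes only a (discarded) cosmological constant, $a_2$ gives $\tfrac{1}{(4\pi)^2}\int_M \tr(-\mathbb{E})$ which produces the $-\tfrac{2f_2\Lambda^2}{4\pi^2}\tr(\Phi^2)$ term (the factor $2$ and the trace over the Clifford/spinor indices account for $\tr_S(1) = 4$ versus the $(4\pi)^2$), and $a_4$ gives the $\tfrac{1}{(4\pi)^2}\int_M \big(\tfrac12 \tr(\mathbb{E}^2) + \tfrac16 \tr(\Omega_{\mu\nu}\Omega^{\mu\nu})\big)$ piece, where $\Omega_{\mu\nu}$ is the curvature of $\nabla^E$, i.e.\ essentially $F_{\mu\nu}$. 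Expanding $\tr(\mathbb{E}^2)$ separates into the $\tr(\Phi^4)$ term, the $\tr\big((D_\mu\Phi)(D^\mu\Phi)\big)$ term (the cross pieces of $\mathbb{E}$ squared, after tracing the Clifford matrices which kills the $\Phi^2$--$F$ and $\Phi^2$--$(D\Phi)$ mixed contributions by the odd number of $\gamma$'s or by $\tr\gammaM=0$), and an $F_{\mu\nu}F^{\mu\nu}$ contribution that combines with the $\Omega^2$ term of $a_4$; collecting the numerical constants $\tfrac{f(0)}{(4\pi)^2}$ with the Clifford traces yields the stated $\tfrac{f(0)}{24\pi^2}\tr(F_{\mu\nu}F^{\mu\nu})$, $\tfrac{f(0)}{8\pi^2}\tr(\Phi^4)$, and $\tfrac{f(0)}{8\pi^2}\tr\big((D_\mu\Phi)(D^\mu\Phi)\big)$, up to a boundary term coming from integrating a total derivative $\partial^\mu\tr(\Phi D_\mu\Phi)$.

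Since every ingredient here is literally the content of \cite[Prop.~8.12]{Suij15a} specialized to a flat metric, the honest write-up is short: I would state that the proof is that of the cited proposition, observe that setting $R = R_{\mu\nu} = 0$ annihilates the Einstein--Hilbert term, the $R\tr(\Phi^2)$ term, the $R^2$, $R_{\mu\nu}R^{\mu\nu}$, $C_{\mu\nu\rho\sigma}C^{\mu\nu\rho\sigma}$ Gauss--Bonnet and Weyl terms, and also the nonminimal $R\,\tr(\text{stuff})$ couplings, leaving exactly $\calL_B(B_\mu)$ and $\calL_\varphi(B_\mu,\Phi)$ as displayed. The one point requiring a line of care is bookkeeping the overall normalizations: the $\tfrac{1}{24\pi^2}$ versus $\tfrac{1}{8\pi^2}$ arises from the $\tfrac16$ in Gilkey's $a_4$ combined with the $\tr_S$ of four gamma matrices against the $\tfrac12\tr(\mathbb{E}^2)$ combination, and I would simply cite \cite{Suij15a} for this rather than rederive it.

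The main obstacle, such as it is, is not mathematical difficulty but ensuring that the decomposition $\piDhA(\omega) = \gamma^\mu \otimes A_\mu + \gammaM\otimes\varphi$ is exhaustive (i.e.\ that a generic $1$-form on the almost-commutative algebra has no other Clifford components in $\dim M = 4$ with the given $KO$-dimension) and that the fluctuation $\DhA[,\omega]$ genuinely has the claimed form $\DM\otimes 1 + \gamma^\mu\otimes B_\mu + \gammaM\otimes\Phi$ after adding the $\epsilonhA'\JhA(\cdot)\JhA^{-1}$ term; this is where the definitions $B_\mu = A_\mu - \JA A_\mu \JA^{-1}$ and $\Phi = \DA + \varphi + \JA\varphi\JA^{-1}$ come from, and it is exactly the standard computation recalled in Section~\ref{sec spectral triples and GFT}, so I would treat it as known. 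Everything downstream is the heat-kernel expansion, which I would not grind through.
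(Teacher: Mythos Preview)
Your proposal is correct and matches the paper's approach exactly: the paper does not give a proof at all but simply cites \cite[Prop.~8.12]{Suij15a} and notes that the displayed formula is the specialization of that result to the flat-metric case. Your write-up is in fact more detailed than what the paper provides, since you spell out which curvature terms vanish and how the numerical coefficients arise from the Seeley--DeWitt coefficients, whereas the paper is content to state the result as a direct quotation of the reference.
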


We use the same function $f$ and the same cut-off $\Lambda$ for the spectral actions on $\halgA$ and $\halgB$.

\medskip
We suppose that $\omega \in \Omega^1_U(\halgA)$ and $\omega' \in \Omega^1_U(\halgB)$ are $\hphi$-compatible in the sense that $\piDhB(\omega')$ and $\piDhA(\omega)$ are $\hphi$-compatible. Since the family of vectors $\{ \gamma^\mu, \gamma_M \}$ is free in the Clifford algebra generated by the $\gamma^\mu$'s, this implies that $A'_\mu$ (resp. $\varphi'$) is $\hphi$-compatible with $A_\mu$ (resp. $\varphi$). The strong $\phi$-compatibility between $\JB$ and $\JA$ then implies that $B'_\mu$ (resp. $\Phi'$) is $\hphi$-compatible with $B_\mu$ (resp. $\Phi$). Notice then that $\partial_\mu \Phi'$ (resp. $\partial_\mu B'_\nu$) is $\hphi$-compatible with $\partial_\mu \Phi$ (resp. (resp. $\partial_\mu B_\nu$). \footnote{Thanks to the fact that $\phiH$ does not depend on the points in $M$.} We then have:

\begin{proposition}
Suppose that $\omega \in \Omega^1_U(\halgA)$ and $\omega' \in \Omega^1_U(\halgB)$ are $\hphi$-compatible in the previous sense. Then
\begin{align*}
\calL_{\halgB, B'}(B'_\mu)
&= \calL_{\halgA, B}(B_\mu) + \text{\TNIC}
\\
\calL_{\halgB, \varphi'}(B'_\mu, \Phi')
&= \calL_{\halgA, \varphi}(B_\mu, \Phi) + \text{\TNIC}
\end{align*}
\end{proposition}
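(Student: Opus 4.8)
The plan is to reduce the identity to a direct application of Lemma~\ref{lemma scalar product and trace phi compatible AC}, Point~\ref{item trace of product AC}, once we have checked that all the operators entering the Lagrangians $\calL_B$ and $\calL_\varphi$ are built out of $\hphi$-compatible pieces. Recall that $\calL_{\halgA,B}(B_\mu) = \frac{f(0)}{24\pi^2}\tr(F_{\mu\nu}F^{\mu\nu})$ and that $\calL_{\halgA,\varphi}(B_\mu,\Phi)$ is, up to a boundary term, a sum of traces of words in $\Phi$ and in $D_\mu\Phi$; the same holds on the $\halgB$ side with primed objects. So the first step is to record, from the discussion preceding the statement, that $B'_\mu$ is $\hphi$-compatible with $B_\mu$, that $\Phi'$ is $\hphi$-compatible with $\Phi$, that $\partial_\mu B'_\nu$ (resp. $\partial_\mu\Phi'$) is $\hphi$-compatible with $\partial_\mu B_\nu$ (resp. $\partial_\mu\Phi$), and that by Remark~\ref{rmk piA(a) st phi comp piB(phi(a))} together with Prop.~\ref{prop strong and not strong phi compatibility}.\ref{item phi comp and st phi comp sums} and the fact that $\hphi$-compatibility is a linear condition, $\ad(B'_\mu)$ acting by commutators is $\hphi$-compatible with $\ad(B_\mu)$ (since $\piB\circ\phi$ is multiplicative and diagonal on $\phiH(\hsA)$).

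The second step is to propagate $\hphi$-compatibility through the algebraic operations used to form $F'_{\mu\nu}$ and $D'_\mu\Phi'$. Here the key point is that composition of $\hphi$-compatible operators is \emph{not} in general $\hphi$-compatible (Prop.~\ref{prop strong and not strong phi compatibility}.\ref{item st phi comp composition}); what \emph{is} true is that for $\hphi$-compatible $\hB_i$ with inherited components $\hB_{i,\hphi}^{\hphi}$, the product $\hB_{1,\hphi}^{\hphi}\cdots\hB_{n,\hphi}^{\hphi}$ is exactly the inherited component one would get from the strong-$\hphi$-compatible operator associated to $\hA_1\cdots\hA_n$, and all the other terms of $\hB_1\cdots\hB_n$ carry at least one non-inherited factor. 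This is precisely the content of the bookkeeping in the proof of Lemma~\ref{lemma scalar product and trace phi compatible AC}: writing each $\hB_i$ as a $2\times2$ block matrix, the only block-$(\hphi,\hphi)$ summand with no non-inherited factor is $\hB_{1,\hphi}^{\hphi}\cdots\hB_{n,\hphi}^{\hphi}$. Hence $F'_{\mu\nu} = \partial_\mu B'_\nu - \partial_\nu B'_\mu + i[B'_\mu,B'_\nu]$ decomposes as $\widehat{(F_{\mu\nu})}_{\hphi}^{\hphi}$ plus \TNIC, and similarly $D'_\mu\Phi' = \partial_\mu\Phi' + i[B'_\mu,\Phi'] = \widehat{(D_\mu\Phi)}_{\hphi}^{\hphi} + \text{\TNIC}$; I would state these two reductions as the technical heart of the argument, noting they follow verbatim from the block-matrix computation already carried out for Lemma~\ref{lemma scalar product and trace phi compatible AC}.

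The third step is to plug these into the traces. For $\calL_B$: $\tr(F'_{\mu\nu}F'^{\mu\nu})$ is a trace of a length-two word in operators each of which is $\hphi$-compatible with the corresponding $F_{\mu\nu}$ (indeed $F'_{\mu\nu}$ itself, being the curvature of a $\hphi$-compatible connection, is $\hphi$-compatible with $F_{\mu\nu}$ up to \TNIC; alternatively one can expand $F'_{\mu\nu}$ into its three $\hphi$-compatible summands and apply multilinearity), so Lemma~\ref{lemma scalar product and trace phi compatible AC}.\ref{item trace of product AC} gives $\tr(F'_{\mu\nu}F'^{\mu\nu}) = \tr(F_{\mu\nu}F^{\mu\nu}) + \text{\TNIC}$, and since the coefficient $\frac{f(0)}{24\pi^2}$ is the same on both sides (same $f$, same $\Lambda$), $\calL_{\halgB,B'}(B'_\mu) = \calL_{\halgA,B}(B_\mu) + \text{\TNIC}$. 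For $\calL_\varphi$: each of the three terms $\tr(\Phi'^2)$, $\tr(\Phi'^4)$, $\tr\big((D_\mu\Phi')(D^\mu\Phi')\big)$ is a trace of a word whose letters are $\hphi$-compatible (with $\Phi$ resp. $D_\mu\Phi$), so Lemma~\ref{lemma scalar product and trace phi compatible AC}.\ref{item trace of product AC} applies termwise; the prefactors $-\frac{2f_2\Lambda^2}{4\pi^2}$, $\frac{f(0)}{8\pi^2}$, $\frac{f(0)}{8\pi^2}$ again coincide, and summing yields $\calL_{\halgB,\varphi'}(B'_\mu,\Phi') = \calL_{\halgA,\varphi}(B_\mu,\Phi) + \text{\TNIC}$, where the boundary terms can be absorbed since they differ by an exact form built from \TNIC (or simply discarded as in Prop.~\ref{prop bosonic lagrangian}).

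The main obstacle is precisely the second step: one must be careful that $\hphi$-compatibility does not compose, so the clean statement ``$F'_{\mu\nu}$ is $\hphi$-compatible with $F_{\mu\nu}$'' has to be read as an equality of inherited components up to \TNIC, and the safe way to organize the computation is to never compose before tracing — i.e. expand every curvature and every covariant derivative into its sum of $\hphi$-compatible monomials first, and only then apply the trace lemma, which is multilinear and handles words of $\hphi$-compatible letters directly. Once this is set up, there is nothing left but to match the (identical) numerical coefficients, and the geometric flatness assumption guarantees no further curvature-of-$M$ terms intervene.
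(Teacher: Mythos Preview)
Your proposal is correct and follows essentially the same route as the paper: recognize that each Lagrangian is a linear combination of traces of words in the basic $\hphi$-compatible building blocks ($B'_\mu$, $\partial_\mu B'_\nu$, $\Phi'$, $\partial_\mu\Phi'$), then apply Lemma~\ref{lemma scalar product and trace phi compatible AC}.\ref{item trace of product AC} termwise after expanding $F'_{\mu\nu}$ and $D'_\mu\Phi'$ multilinearly into such words. The paper compresses all of this into two sentences; your extra care in the ``second step'' about composition not preserving $\hphi$-compatibility is well-placed, and your resolution (expand first, then trace) is exactly the intended reading of the paper's phrase ``traces of polynomials of $\hphi$-compatible elements''. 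One small remark: your aside that ``$\ad(B'_\mu)$ is $\hphi$-compatible with $\ad(B_\mu)$'' via multiplicativity of $\piB\circ\phi$ is not quite right as stated, since $B_\mu$ need not be of the form $\piA(a)$; but you do not actually use this, and your final ``safe'' strategy of expanding commutators into monomials before tracing avoids the issue entirely.
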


\begin{proof}
From Prop.~\ref{prop bosonic lagrangian}, all the terms in $\calL_{\halgB, B'}(B'_\mu)$ and $\calL_{\halgB, \varphi'}(B'_\mu, \Phi')$ are traces of polynomials of $\hphi$-compatible elements. So, according to Lemma~\ref{lemma scalar product and trace phi compatible AC}, up to terms with non-inherited components, they are equal to the similar expression in terms of traces of polynomials of the corresponding elements on $\halgA$.
\end{proof}

\begin{remark}
This Proposition can be proved without the assumption on the normalization of $\phiH$, see Remark~\ref{remark non normalized}.
\end{remark}

A slight extension of Prop.~\ref{prop unitary equi triple and (st) phi comp} shows that $\omega \in \Omega^1_U(\halgA)$ and $\omega' \defeq \hphi(\omega) \in \Omega^1_U(\halgB)$ are $\hphi$-compatible. But then $\omega'$ contains only inherited degrees of freedom, and so this situation is quite trivial from a physical point of view.

\begin{proposition}
If $\Tpsi'$ is $\hphi$-compatible with $\Tpsi$, then
\begin{align*}
S_{\halgB, f}[\omega', \Tpsi'] 
&= \langle \JhB \Tpsi', \DhB[,\omega'] \Tpsi' \rangle_{\ThshB}
= \langle \JhA \Tpsi, \DhA[,\omega] \Tpsi \rangle_{\ThshA} + \text{\TNIC}
= S_{\halgA, f}[\omega, \Tpsi] + \text{\TNIC}
\end{align*}
\end{proposition}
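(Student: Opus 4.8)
The plan is to split the fermionic action of $\halgB$ into a piece built entirely from inherited data, which will coincide with the fermionic action of $\halgA$, and a remainder that is TNIC by definition. First I would set $\Tpsi'_\perp \defeq \Tpsi' - \hphiH(\Tpsi)$, which lies in $\hphiH(\hshA)^\perp$ precisely because $\Tpsi'$ is $\hphi$-compatible with $\Tpsi$. Since $\JhB$ is extended antilinearly and $\DhB[,\omega']$ linearly, the pairing $(\eta_1,\eta_2)\mapsto \langle \JhB \eta_1, \DhB[,\omega'] \eta_2\rangle_{\ThshB}$ is additive in each slot, so expanding $\Tpsi' = \hphiH(\Tpsi) + \Tpsi'_\perp$ in both arguments of $\langle \JhB\Tpsi', \DhB[,\omega']\Tpsi'\rangle_{\ThshB}$ produces four terms. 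The three terms carrying at least one factor $\Tpsi'_\perp$ involve the non-inherited component of the fermion, hence are TNIC by the extended definition, and I would discard them immediately.

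It remains to analyse the surviving term $\langle \JhB\hphiH(\Tpsi), \DhB[,\omega']\hphiH(\Tpsi)\rangle_{\ThshB}$, for which I need two compatibility inputs. On the one hand, since $\JB$ is strong $\phi$-compatible with $\JA$ (standing hypothesis of this section), $\JhB = \JM\otimes\JB$ is strong $\hphi$-compatible with $\JhA = \JM\otimes\JA$, so $\JhB\hphiH(\Tpsi) = \hphiH(\JhA\Tpsi)$. On the other hand, $\DhB[,\omega']$ is $\hphi$-compatible with $\DhA[,\omega]$: indeed $\DhB = \DM\otimes 1 + \JM\otimes\DB$ is $\hphi$-compatible with $\DhA$, $\piDhB(\omega')$ is $\hphi$-compatible with $\piDhA(\omega)$ by assumption on $\omega,\omega'$, the term $\JhB\piDhB(\omega')\JhB^{-1}$ is $\hphi$-compatible with $\JhA\piDhA(\omega)\JhA^{-1}$ by the Almost Commutative version of Lemma~\ref{lemma JB JA st-phi-comp}, and $\epsilonhB' = \epsilonhA'$ by Prop.~\ref{prop KO dim strong phi compatibility}; stability of $\hphi$-compatibility under sums (Prop.~\ref{prop strong and not strong phi compatibility}) then yields the claim. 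Consequently $\DhB[,\omega']\hphiH(\Tpsi) = \hphiH(\DhA[,\omega]\Tpsi) + \zeta$ with $\zeta \defeq (\DhB[,\omega'])_{\perp}^{\hphi}\hphiH(\Tpsi) \in \hphiH(\hshA)^\perp$.

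Putting these together, $\langle \JhB\hphiH(\Tpsi), \DhB[,\omega']\hphiH(\Tpsi)\rangle_{\ThshB} = \langle \hphiH(\JhA\Tpsi), \hphiH(\DhA[,\omega]\Tpsi)\rangle_{\ThshB} + \langle \hphiH(\JhA\Tpsi), \zeta\rangle_{\ThshB}$, and the last scalar product vanishes since $\hphiH(\JhA\Tpsi) \in \hphiH(\hshA)$ is orthogonal to $\hphiH(\hshA)^\perp$. Finally, because $\phiH$ is normalized it is an isometry, hence so is $\hphiH = \Id\otimes\phiH$, and therefore $\langle \hphiH(\JhA\Tpsi), \hphiH(\DhA[,\omega]\Tpsi)\rangle_{\ThshB} = \langle \JhA\Tpsi, \DhA[,\omega]\Tpsi\rangle_{\ThshA} = S_{\halgA, f}[\omega, \Tpsi]$. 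Collecting everything gives $S_{\halgB, f}[\omega', \Tpsi'] = \langle \JhA\Tpsi, \DhA[,\omega]\Tpsi\rangle_{\ThshA} + \text{\TNIC} = S_{\halgA, f}[\omega, \Tpsi] + \text{\TNIC}$, as claimed. (One could equivalently run the bookkeeping through Lemma~\ref{lemma scalar product and trace phi compatible AC}\ref{item scalar product of product AC} applied to the operators $\JhB$ and $\DhB[,\omega']$, but the direct two-block expansion above is cleaner.) The only genuinely substantive steps are establishing that the fluctuated Dirac operators $\DhB[,\omega']$ and $\DhA[,\omega]$ are $\hphi$-compatible and that $\JhB$ interacts correctly with $\hphiH$ on the Grassmann sector; both are immediate consequences of results already proved, so the rest is just tracking which contributions carry non-inherited components.
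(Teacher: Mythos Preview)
Your argument is correct and arrives at the same conclusion as the paper, but the route is genuinely different. The paper expands $\Tpsi'$ and $\Tpsi$ in the orthonormal bases $\{\chi_c\otimes f_{\Tv}\}\cup\{\chi_c\otimes f_{\hw}\}$ and $\{\chi_c\otimes e_{\Tv}\}$, then moves $\JhB$ across the scalar product using antiunitarity (picking up the factor $\epsilonhB=\epsilonhA$), and finally invokes Lemma~\ref{lemma scalar product and trace phi compatible AC}\ref{item scalar product of product AC} on the matrix elements $\langle \chi_{c_1}\otimes f_{\Tv_1}, \JhB\DhB[,\omega']\,\chi_{c_2}\otimes f_{\Tv_2}\rangle$. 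You instead work coordinate-free with the splitting $\hphiH(\hshA)\oplus\hphiH(\hshA)^\perp$: strong $\hphi$-compatibility pushes $\JhB$ through $\hphiH$, $\hphi$-compatibility of $\DhB[,\omega']$ isolates the $\zeta$ piece in the orthogonal complement (which you observe actually vanishes by orthogonality rather than merely being \TNIC), and the isometry of the normalized $\hphiH$ finishes the identification. Your approach is cleaner and avoids both the basis bookkeeping and the antiunitarity trick; the paper's approach has the advantage of reusing Lemma~\ref{lemma scalar product and trace phi compatible AC} verbatim, keeping the fermionic and bosonic comparisons formally parallel.
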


\begin{proof}
Since $\piDhB(\omega')$ and $\piDhA(\omega)$ are $\hphi$-compatible, $\DhB[,\omega']$ and $\DhA[,\omega]$ are $\hphi$-compatible, and since $\JB$ and $\JA$ are strong $\phi$-compatible, then $\JhB$ and $\JhA$ are strong $\hphi$-compatible.

Using previously defined notations, one can write $\Tpsi' = \tsum_{c, \Tv} \Tpsi'^{c, \Tv} \chi_{c} \otimes f_{\Tv} + \tsum_{c, \hw} \Tpsi'^{c, \Tv} \chi_{c} \otimes f_{\hw}$ and $\Tpsi = \tsum_{c, \Tv} \Tpsi^{c, \Tv} \chi_{c} \otimes e_{\Tv}$. Since $\Tpsi'$ and $\Tpsi$ are $\hphi$-compatible, one has $\Tpsi'^{c, \Tv} =\Tpsi^{c, \Tv}$ for any $c, \Tv$. So, $\langle \JhB \Tpsi', \DhB[,\omega'] \Tpsi' \rangle_{\ThshB} = \tsum_{c_1, c_2, \Tv_1, \Tv_2} \Tpsi'^{c_1, \Tv_1} \Tpsi'^{c_2, \Tv_2} \langle \JhB \chi_{c_1} \otimes f_{\Tv_1} , \DhB[,\omega'] \chi_{c_2} \otimes f_{\Tv_2} \rangle_{\ThshB} = \epsilonhB \tsum_{c_1, c_2, \Tv_1, \Tv_2} \Tpsi'^{c_1, \Tv_1} \Tpsi'^{c_2, \Tv_2} \overline{\langle \chi_{c_1} \otimes f_{\Tv_1} ,  \JhB \DhB[,\omega'] \chi_{c_2} \otimes f_{\Tv_2} \rangle_{\ThshB}}$. From Lemma~\ref{lemma scalar product and trace phi compatible AC}, one has $\langle \chi_{c_1} \otimes f_{\Tv_1} ,  \JhB \DhB[,\omega'] \chi_{c_2} \otimes f_{\Tv_2} \rangle_{\ThshB} = \langle \chi_{c_1} \otimes e_{\Tv_1} ,  \JhA \DhA[,\omega] \chi_{c_2} \otimes e_{\Tv_2} \rangle_{\ThshA} + \text{\TNIC}$ so that, since $\epsilonhB = \epsilonhA$, $\langle \JhB \Tpsi', \DhB[,\omega'] \Tpsi' \rangle_{\ThshB} = \epsilonhA \tsum_{c_1, c_2, \Tv_1, \Tv_2} \Tpsi^{c_1, \Tv_1} \Tpsi^{c_2, \Tv_2} \overline{\langle \chi_{c_1} \otimes f_{\Tv_1} ,  \JhA \DhA[,\omega] \chi_{c_2} \otimes e_{\Tv_2} \rangle_{\ThshA}} + \text{\TNIC} = \langle \JhA \Tpsi ,  \DhA[,\omega] \Tpsi \rangle_{\ThshA} + \text{\TNIC}$.
\end{proof}

\begin{remark}
Notice that the proof of this Proposition requires the assumption on the normalization of $\phiH$.
\end{remark}

\begin{remark}
Notice that the formal proofs presented in the previous Propositions, which compare the spectral action on $\halgB$ to the spectral action on $\halgA$, do not reveal the terms which mix inherited and non-inherited components. A concrete and complete computation is necessary to compare precisely the two Lagrangians. These more concrete computations will be presented elsewhere.
\end{remark}

These results can be collected to construct a sequence $\{ (\halgA_n, \hs_{\halgA_n}, D_{\halgA_n}, J_{\halgA_n}, \gamma_{\halgA_n}) \}_{n \geq 0}$) of even real spectral triples (on AC manifolds) which are $\hphi_{n,n+1}$-compatible and a sequence of their corresponding spectral actions $S_b[\omega_n] + S_f[\omega_n, \Tpsi_n]$ with a control about their inherited and non-inherited terms. Using slight modifications of Prop.~\ref{prop universal forms and Dirac phi compatibility} and \ref{prop uB from uA case AF}, a gauge transformation on $\halgA_n$ can be transported to a gauge transformation on $\halgA_{n+1}$. So, we end up with a sequence of compatible NCGFTs constructed on top of the defining sequence of an $AF$-algebra $\algA = \varinjlim \algA_n$.

\subsection{Some considerations about the limit}
\label{subsec some considerations about the limit}

The question of the “limit” of such a sequence $\{ (\algA_n, \hs_n, D_n) \}_{n \geq 0}$ of $\phi$-compatible (or strong $\phi$-compatible) spectral triples will not be discussed in details in this paper, since, as we will explain, it requires a lot more of analysis concerning in particular the involved operators. Current investigations are in progress concerning these points. In the following, we only outline some results in relation to other works.

By construction, the sequence of algebras $\algA_n$ has a limit $\algA_\infty = \bigcup_{n \geq 0} \algA_n$. When completed, this algebra is the $C^\ast$-algebra $\algA$, and $\algA_\infty$ is a natural sub-algebra of “smooth elements”. Since the morphisms $\phiH[,n,n+1]$ are isometries (thanks to the normalization assumption), the direct limit of the sequence $(\hs_{n}, \phiH[,n,n+1])$ is well-defined. Let $\hs \defeq \varinjlim \hs_{n}$ with $\phiH[,n] : \hs_{n} \to \hs$ the isometries such that $\phiH[,m] \circ \phiH[,n,m] = \phiH[,n]$ for any $n < m$. This direct limit can be constructed explicitly as follows. Let $\hsK_0 \defeq \hs_{0}$ and, for any $n \geq 1$, let $\hsK_n \defeq \hs_{n} \ominus \hs_{n-1}$ where we identify $\hs_{n-1}$ with its range in $\hs_{n}$ via $\phiH[,n-1,n]$, so that $\hs_{n} = \hsK_0 \oplus \cdots \oplus \hsK_n$. Then 
\begin{align*}
\hs 
&= \toplus_{n\geq 0} \hsK_n 
= \big\{ (\xi_n)_{n \geq 0} \mid \xi_n \in \hsK_n \text{ and } \tsum_{n\geq 0} \norm{\xi_n}^2 < \infty \big\}.
\end{align*}
This Hilbert space supports a canonical representation $\pi$ of $\algA$ (see \cite{FlorGhor19p} for instance).

A candidate for a spectral triple as a limit of $\{ (\algA_n, \hs_n, D_n) \}_{n \geq 0}$ has been constructed in \cite{FlorGhor19p} when all the Dirac operators are strongly-$\phi$-compatible. But requiring only $\phi$-compatibility needs to make use of a more subtle way to define the Dirac operator at the limit.

For instance, one could use the approach proposed in \cite{Jana95a} (see references therein) which relies on the following assumption (adapted to our finite dimensional situation): a sequence $\{ L_n \}$ of operators on the Hilbert spaces $\hs_{n}$ converges strongly and uniformly if, for any $\epsilon, \delta > 0$, there is a number $n(\epsilon, \delta)$ such that for any $n(\epsilon, \delta) \leq n < m$ and for any $\psi \in \hs_{n}$ such that $\norm{\psi}_{\hs_{n}} < \delta$, then $\norm{ (L_m \phiH[,n,m] - \phiH[,n,m] L_n) \psi}_{\hs_{m}} < \epsilon$. If the sequence $\{ L_n \}$ satisfies this convergence criteria, then, for any $\psi \in \hs_{n}$, the operator $L$ given by the equality $L \phiH[,n] \psi = \lim_{m \to \infty} \phiH[,m] L_m \phiH[,n,m] \psi$ is well defined.

Notice that the criteria given above is trivially satisfied if the $L_n$'s are strongly-$\phi$-compatible, since then $L_m \phiH[,n,m] \psi =  \phiH[,n,m] L_n \psi$ for any $\psi \in \hs_{n}$, so that, for a sequence of strong $\phi$-compatible operators, the limit always exists. As already mentioned, this is the case for the approach in \cite{FlorGhor19p}. However, the existence of this limit is an extra requirement for a sequence of $\phi$-compatible operators, for instance the sequence of Dirac operators (and the real operators as well as the grading operators). Moreover, showing that it produces a spectral triple is also a question to be investigated. This demands for an analysis that is beyond the scope of this paper. It will be the subject of forthcoming studies.

In \cite{ChriIvan06a}, the authors propose a spectral triple on $AF$-algebras for which the Dirac operator is constructed as follows, using our previous notations: one considers a suitable sequence of positive real numbers $\{ \alpha_i \}_{i \geq 0}$ and $D \defeq \tsum_{i=0}^{\infty} \alpha_i Q_i$ where $Q_i$ is the orthogonal projection on  $\hsK_i \subset \hs$. For any $n \geq 0$, let us define a spectral triple $(\algA_n, \hs_n, D_n)$ using the truncated Dirac operator $D_n \defeq \tsum_{i=0}^{n} \alpha_i Q_i$ on $\hs_{n}$. Then it is easy to verify that the $D_n$'s are strong $\phi$-compatible and the limit of this sequence is $D$.

\smallskip
Parts of this work deal with general structures to construct a sequence of spectral triples on any inductive limit of $C^\ast$-algebras. We expect that this could be relevant for applications in various domains. For instance, in \cite{Lai13a} (see also references therein), taking inspiration from Loop Quantum Gravity (LQG), a sequence of spectral triples on an inductive sequence of  $C^\ast$-algebras generated by loops on an inductive system of finite graphs is considered, using the strong $\phi$-compatibility condition. Considering the limit of this sequence, the authors obtain a candidate for a spectral triple over the space of connections as it is used in LQG. A natural question concerns the relevance of the full potential of our framework (real and/or even spectral triples, spectral actions… ) in this context, in particular, what the $\phi$-compatibility condition could bring to these constructions.

\section{Conclusion}
\label{sec conclusion}

In this paper we have presented a framework to construct sequences of spectral triples on top of inductive sequences defining $AF$-algebras. One main result concerns the structure of the map lifting arrows in Bratteli diagrams to arrows between Krajewski diagrams. We emphasize that the normalization assumption plays a key role. For instance, it can be used to show that the spectral action at each step of the sequence contains the spectral action of the previous step, as well as new terms coupling inherited and new degrees of freedom. Moreover, it permits to get a limit for these sequences, as it appears for the Hilbert spaces. Further investigations are in progress in order to construct and interpret more realistic models in this new framework and to study the possible limits of these sequences.

\section*{Aknowledgements}

We would like to thank the referee for very valuable suggestions to improve this paper.

\bibliography{bibliography}

\end{document}